\documentclass[11pt,palatino]{article}
\usepackage{amsmath,amsfonts,amssymb,amsthm}
\usepackage{fullpage}
\usepackage[ruled]{algorithm2e}

\usepackage{subfigure}
\usepackage{epstopdf}
\usepackage{graphicx}
\usepackage{cite}
\usepackage{color}
\usepackage[sans]{dsfont}
\usepackage{mathtools}
\usepackage{hyperref}
\usepackage{cleveref}
\usepackage{tcolorbox}
\usepackage{multirow}
\usepackage{xspace}
\tcbuselibrary{skins,breakable}
\tcbset{enhanced jigsaw}

\textheight 9.3in \advance \topmargin by 0in \textwidth 6.7in
\newcommand{\myparskip}{3pt}
\parskip \myparskip
\setlength\parindent{0pt}

\renewcommand{\tau}{{\mathcal T}}
\newtheorem{theorem}{Theorem}
\newtheorem{heuristic algorithm}{Heuristic Algorithm}
\newtheorem{lemma}{Lemma}[section]

\newtheorem{claim}{Claim}[section]

\newtheorem{observation}{Observation}[section]

\makeatletter\@addtoreset{section}{part}\makeatother%

\begin{document}
\bibliographystyle{alpha}

\newcommand{\eps}{\varepsilon}
\newcommand{\dist}{\operatorname{dist}}
\newcommand{\gset}{{\mathcal{G}}}
\newcommand{\hset}{{\mathcal{H}}}
\newcommand{\sset}{{\mathcal{S}}}
\newcommand{\mset}{{\mathcal{M}}}
\newcommand{\rset}{{\mathcal{R}}}
\newcommand{\pset}{{\mathcal{P}}}
\newcommand{\cutpath}{\textsc{Split}}
\newcommand{\gluepath}{\textsc{Glue}}
\newcommand{\basesizebound}{{\frac{c^*\log^2 k}{\eps^{20}}}}
\newcommand{\algmerge}{\textsc{Combine}}

\newcommand{\colnote}[3]{\textcolor{#1}{$\ll$\textsf{#2}$\gg$\marginpar{\tiny\bf \textcolor{#1}{#3}}}}
\newcommand{\rnote}[1]{\colnote{blue}{#1--Robi}{RK}}

\newcommand{\algline}{
	\rule{0.5\linewidth}{.1pt}\hspace{\fill}%
	\par\nointerlineskip \vspace{.1pt}
}
\newenvironment{tbox}{\begin{tcolorbox}[
		enlarge top by=5pt,
		enlarge bottom by=5pt,
		breakable,
		boxsep=0pt,
		left=4pt,
		right=4pt,
		top=10pt,
		boxrule=1pt,toprule=1pt,
		colback=white,
		arc=-1pt,
		]
	}
	{\end{tcolorbox}}


\newenvironment{proofof}[1]{\noindent{\bf Proof of #1.}}
{\hspace*{\fill}\stopproof}

\newenvironment{properties}[2][0]
{\renewcommand{\theenumi}{#2\arabic{enumi}}
	\begin{enumerate} \setcounter{enumi}{#1}}{\end{enumerate}\renewcommand{\theenumi}{\arabic{enumi}}}

\newif\ifnocomments


\ifnocomments

\newcommand{\znote}[1]{}

\else
\newcommand{\znote}[1]{\textcolor{red}{\sc{[ZT: #1]}}}

\fi

\ifnocomments

\newcommand{\snote}[1]{}

\else
\newcommand{\snote}[1]{\textcolor{red}{\sc{[SK: #1]}}}

\fi


\newcommand{\tG}{\textbf{G}}
\newcommand{\tH}{\textbf{H}}
\newcommand{\tE}{\textbf{E}'}
\newcommand{\tC}{\textbf{C}}
\newcommand{\tphi}{\bm{\phi}}
\newcommand{\tpsi}{\bm{\psi}}
\newcommand{\tSigma}{\bm{\Sigma}}
\newcommand{\tB}{\tilde B}
\newcommand{\dout}{D_{\mbox{\tiny{out}}}}
\newcommand{\notF}{\overline{F}}
\newcommand{\St}{Steiner Tree\xspace}
\newcommand{\ST}{Steiner Tree\xspace}

\renewcommand{\P}{\mbox{\sf P}}
\newcommand{\NP}{\mbox{\sf NP}}
\newcommand{\PCP}{\mbox{\sf PCP}}
\newcommand{\ZPP}{\mbox{\sf ZPP}}
\newcommand{\DTIME}{\mbox{\sf DTIME}}
\newcommand{\opt}{\mathsf{OPT}}
\newcommand{\optcro}{\mathsf{OPT}_{\mathsf{cr}}}
\newcommand{\optcrors}{\mathsf{OPT}_{\mathsf{cnwrs}}}
\newcommand{\set}[1]{\left\{ #1 \right\}}
\newcommand{\sse}{\subseteq}
\newcommand{\B}{{\mathcal{B}}}
\newcommand{\tset}{{\mathcal T}}
\newcommand{\uset}{{\mathcal U}}
\newcommand{\iset}{{\mathcal{I}}}
\newcommand{\nset}{{\mathcal{N}}}
\newcommand{\dset}{{\mathcal{D}}}
\newcommand{\tpset}{\tilde{\mathcal{P}}}
\newcommand{\qset}{{\mathcal{Q}}}
\newcommand{\tqset}{\tilde{\mathcal{Q}}}
\newcommand{\lset}{{\mathcal{L}}}
\newcommand{\bset}{{\mathcal{B}}}
\newcommand{\tbset}{\tilde{\mathcal{B}}}
\newcommand{\aset}{{\mathcal{A}}}
\newcommand{\cset}{{\mathcal{C}}}
\newcommand{\fset}{{\mathcal{F}}}
\newcommand{\jset}{{\mathcal{J}}}
\newcommand{\xset}{{\mathcal{X}}}
\newcommand{\wset}{{\mathcal{W}}}
\newcommand{\oset}{{\mathcal{O}}}
\newcommand{\yset}{{\mathcal{Y}}}
\newcommand{\I}{{\mathcal I}}
\newcommand{\zset}{{\mathcal{Z}}}
\newcommand{\notu}{\overline U}
\newcommand{\vol}{\operatorname{vol}}
\newcommand{\nots}{\overline S}
\newcommand{\eint}{E^{\tiny\mbox{int}}}
\newcommand{\event}{{\cal{E}}}
\newcommand{\floor}[1]{\ensuremath{\left\lfloor#1\right\rfloor}}
\newcommand{\ceil}[1]{\ensuremath{\left\lceil#1\right\rceil}}

\newcommand{\marcon}{{\mathsf{MC}}}
\newcommand{\cov}{{\mathsf{cov}}}
\newcommand{\mst}{{\mathsf{MST}}}
\newcommand{\card}[1]{|#1|}
\newcommand{\coi}{{\mathsf{COI}}}
\newcommand{\setcover}{{\textnormal{\sf SC}}}
\newcommand{\algsetcover}{{\textnormal{\sf AlgSetCover}}}
\newcommand{\stcost}{{\mathsf{ST}}}

\newcommand{\cover}{\textsf{cover}}
\newcommand{\bfs}{\textnormal{\textsf{BFS}}}
\newcommand{\pbfs}{\textnormal{\textsf{BFS}}}
\newcommand{\lv}{\textsf{lv}}
\newcommand{\tsp}{\mathsf{TSP}}
\newcommand{\gtsp}{\textsf{GTSP}}
\newcommand{\ebt}{\tset}
\newcommand{\eb}{\textsf{EB}}
\newcommand{\optmst}{\textsf{MST}}
\newcommand{\defi}{\textsf{def}}
\newcommand{\ord}{\textsf{ord}}
\newcommand{\rc}{\textnormal{\textsf{rc}}}
\newcommand{\cost}{\textnormal{\textsf{cost}}}
\newcommand{\bw}{\textsf{bw}}
\newcommand{\local}{\textsf{Local}}
\newcommand{\pseudo}{\textsf{Pseudo-IP}}
\newcommand{\vin}{v^{\textnormal{\textsf{in}}}}
\newcommand{\vout}{v^{\textnormal{\textsf{out}}}}
\newcommand{\diam}{\textsf{diam}}
\newcommand{\expect}{\mathbb{E}}
\newcommand{\proover}{\pi_{\textsf{Overwrite}}}
\newcommand{\promst}{\pi_{\textsf{MST}}}
\newcommand{\protsp}{\pi_{\textsf{TSP}}}
\newcommand{\mstest}{\textsf{MST}_{\textsf{apx}}}
\newcommand{\tspest}{\textsf{TSP}_{\textsf{apx}}}
\newcommand{\proind}{\pi_{\textsf{Index}}}
\newcommand{\ind}{\textsf{Index}}
\newcommand{\distIND}{\mathcal{D}_{\textsf{Index}}}
\newcommand{\distMST}{\mathcal{D}_{\textsf{MST}}}
\newcommand{\ic}{\textnormal{\textsf{IC}}}
\newcommand{\cc}{\textnormal{\textsf{CC}}}
\newcommand{\tvd}[2]{\ensuremath{\Delta_{\textnormal{\texttt{TV}}}(#1,#2)}}
\newcommand{\dkl}[2]{\ensuremath{D_{\textnormal{\textsf{KL}}}(#1 \| #2)}}
\newcommand{\pr}{\textsf{Pr}}

\newcommand{\hel}{h}
\newcommand{\II}{I}
\newcommand{\HH}{H}

\newcommand{\RV}[1]{\mathbf{#1}}
\newcommand{\prot}{\ensuremath{\Pi}}
\newcommand{\Prot}{\ensuremath{\Pi}}
\newcommand{\findmiss}{\sf{FindBit}}
\newcommand{\overwrite}{\sf{Overwrite}}
\newcommand{\distfind}{\mathcal{D}_{\textsf{FindBit}}}
\newcommand{\distover}{\mathcal{D}_{\textsf{Overwrite}}}
\newcommand{\temp}{\textsf{temp}}
\newcommand{\IA}{\textsf{IA}}
\newcommand{\IB}{\textsf{IB}}

\newcommand{\row}{\textsf{Row}}
\newcommand{\col}{\textsf{Col}}
\newcommand{\alg}{\ensuremath{\mathsf{Alg}}\xspace}

\newcommand{\opttsp}{\textnormal{\textsf{TSP}}}

\newcommand{\sep}{\sf{sep}}
\newcommand{\core}{\sf{core}}
\newcommand{\scut}{\sf{Shortcut}}
\newcommand{\adv}{\mathsf{adv}}
\newcommand{\lig}{\sf{light}}
\newcommand{\maxmat}{\mathsf{MM}}
\newcommand{\midd}{\mathsf{mid}}
\newcommand{\bottom}{\mathsf{bot}}
\newcommand{\topp}{\mathsf{top}}
\newcommand{\snfl}{tree\xspace}
\newcommand{\snfls}{trees\xspace}
\newcommand{\inn}{\sf in}
\newcommand{\wD}{w_{\downarrow}}
\newcommand{\wU}{w_{\uparrow}}
\newcommand{\walkcost}{\mathsf{MWC}}

\begin{titlepage}
	
	\title{Query Complexity of the Metric Steiner Tree Problem}

	\author{Yu Chen\thanks{EPFL, Lausanne, Switzerland. Email: {\tt yu.chen@epfl.ch}. Supported by ERC Starting Grant 759471. Work done while the author was a graduate student at University of Pennsylvania.}   \and Sanjeev Khanna\thanks{University of Pennsylvania, Philadelphia, PA, USA. Email: {\tt  sanjeev@cis.upenn.edu}. Supported in part by NSF awards CCF-1763514, CCF-1934876, and CCF-2008305.} \and Zihan Tan\thanks{Rutgers University, NJ, USA. Email: {\tt zihantan1993@gmail.com}. Supported by a grant to DIMACS from the Simons Foundation (820931). Work done while the author was a graduate student at University of Chicago.}} 

	\maketitle

	\thispagestyle{empty}
	\begin{abstract}
		
In the metric Steiner Tree problem, we are given an $n \times n$ metric $w$ on a set $V$ of vertices along with a set $T \subseteq V$ of $k$ terminals, and the goal is to find a tree of minimum cost that contains all terminals in $T$. 
This is a well-known NP-hard problem and much of the previous work has focused on understanding its polynomial-time approximability. In this work, we initiate a study of the query complexity of the metric Steiner Tree problem. Specifically, if we desire an $\alpha$-approximate estimate of the metric Steiner Tree cost, how many entries need to be queried in the metric $w$? For the related minimum spanning tree (MST) problem, this question is well-understood: for any fixed $\varepsilon > 0$, one can estimate the MST cost to within a $(1+\varepsilon)$-factor using only $\tilde{O}(n)$ queries, and this is known to be essentially tight. Can one obtain a similar result for Steiner Tree cost? Note that a $(2 + \varepsilon)$-approximate estimate of Steiner Tree cost can be obtained with $\tilde{O}(k)$ queries by simply applying the MST cost estimation algorithm on the metric induced by the terminals. 

Our first result shows that the Steiner Tree problem behaves in a fundamentally different manner from MST: any (randomized) algorithm that estimates the Steiner Tree cost to within a $(5/3 - \varepsilon)$-factor requires $\Omega(n^2)$ queries, even if $k$ is a constant. 
This lower bound is in sharp contrast to an upper bound of $O(nk)$ queries for computing a $(5/3)$-approximate Steiner Tree, which follows from previous work by Du and Zelikovsky.

Our second main result, and the main technical contribution of this work, is a {\em sublinear} query algorithm for estimating the Steiner Tree cost to within a {\em strictly better-than-$2$} factor. We give an algorithm that achieves this goal, with a query complexity of $\tilde{O}(n^{12/7} + n^{6/7}\cdot k)$; since $k\le n$, the algorithm performs at most $\tilde{O}(n^{13/7})=o(n^2)$ queries in the worst-case. Our estimation algorithm reduces this task to that of designing a sublinear query algorithm for a suitable set cover problem. We complement this result by showing an $\tilde{\Omega}(n + k^{6/5})$ query lower bound for any algorithm that estimates Steiner Tree cost to a strictly better than $2$ factor. Thus $\tilde{\Omega}(n^{6/5})$ queries are needed to just beat $2$-approximation when $k = \Omega(n)$; a sharp contrast to MST cost estimation where a $(1+o(1))$-approximate estimate of cost is achievable with only $\tilde{O}(n)$ queries.
	
	\end{abstract}
\end{titlepage}

\section{Introduction}

In the Steiner Tree problem, we are given a weighted (undirected) graph $G$ and a subset $T$ of vertices in $G$ called \emph{terminals}, and the goal is to compute a minimum weight connected subgraph of $G$ (a Steiner Tree)  that spans all terminals in $T$.
This is one of the most fundamental NP-hard problems~\cite{garey1979computers}, and has been studied extensively over the past several decades from the perspective of approximation algorithms \cite{gilbert1968steiner,zelikovsky199311,zelikovsky1996better,karpinski1997new,robins2005tighter,byrka2010improved,goemans2012matroids} (see also \cite{hauptmann2013compendium} for a compendium of its variants).
The current best known approximation ratio is
$\ln 4+\eps<1.39$ achieved by \cite{byrka2010improved} (see also \cite{goemans2012matroids,traub2022local}), and it has been shown \cite{chlebik2008steiner} that approximating to within a factor better than $96/95$ is NP-hard.

In this paper, we study the query complexity of the Steiner Tree problem. In particular, we consider an equivalent variant called the \emph{metric Steiner Tree} problem, where the input consists of a \emph{metric} $w$ on a set $V$ of $n$ points (equivalently, a weighted complete graph on $V$) and a subset $T\subseteq V$ of $k$ points called \emph{terminals}. We are allowed to perform weight queries between vertices in $V$\footnote{This is also known as the Dense Graph Model \cite{goldreich1998property}. In the other model, the Bounded-Degree Graph Model \cite{goldreich1997property} (where the max-degree of the input graph is bounded by $d$), it is easy to show that estimating the minimum Steiner Tree cost, or even the minimum spanning tree cost, in an $n$-vertex graph within any non-trivial factor requires $\Omega(nd)$ queries.}, and the goal is to design an algorithm for either computing a Steiner Tree with minimum cost or estimating the cost of an optimal Steiner Tree, using as few queries as possible. 

It is well-known that a minimum weight spanning tree of the metric induced by the terminals gives a $2$-approximate Steiner Tree \cite{gilbert1968steiner}. Moreover, the metric Minimum Spanning Tree (MST) cost can be estimated to within factor $(1+\eps)$ be performing $\tilde O(n/\eps^{O(1)})$ queries \cite{chazelle2005approximating,czumaj2009estimating}. Therefore, the minimum metric Steiner Tree cost can be estimated within factor $(2+\eps)$ with only $\tilde O(k/\eps^{O(1)})$ queries.
However, the query complexity of obtaining a better-than-$2$ estimation of the minimum metric Steiner Tree cost remains wide open, and so is the  query complexity of computing such a Steiner Tree. We note that this is the interesting regime of the metric Steiner Tree problem: approximating or estimating the cost to a factor better than $2$ crucially requires some knowledge of the metric incident on Steiner nodes.

\subsection{Our Results}

In this paper, we provide a comprehensive understanding on the trade-off between approximation ratio and query complexity of the metric Steiner Tree problem.

Our first result establishes a separation between the behavior the Steiner Tree problem and the MST problem. Specifically, we show that for any $\eps > 0$, any randomized algorithm to estimate Steiner Tree cost to within a $(5/3 - \eps)$-factor requires $\Omega(n^2)$ queries even if $k$ is a constant. 
Together with an upper bound of $O(nk)$ queries for computing a $(5/3)$-approximate Steiner Tree (which follows from \cite{du1995component} and \cite{zelikovsky199311}\footnote{See \Cref{apd: 5/3 upper} for an explanation.}), this result shows a phase transition in the query complexity at $(5/3)$-approximation. 
This is in contrast to the MST cost estimation problem where for any $\eps > 0$, $\tilde O(n)$ queries suffice to estimate MST cost to within a factor of $(1+\eps)$.

\begin{theorem}
\label{thm: lower-main}
For any constant $0<\eps< 2/3$, any randomized algorithm that with high probability estimates the metric Steiner Tree cost to within a factor of $(5/3-\eps)$ performs $\Omega(n^2/4^{(1/\eps)})$ queries in the worst case, even when $k$ is a constant.
\end{theorem}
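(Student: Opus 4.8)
The plan is to prove the bound via Yao's minimax principle. I would construct a distribution over inputs — all on the same $n$-vertex set with a fixed, \emph{constant} number $k=k(\eps)$ of terminals — on which every deterministic algorithm making $o(n^2/4^{1/\eps})$ queries fails, with constant probability, to return a $(5/3-\eps)$-approximate estimate of the Steiner tree cost. The distribution is an equal mixture of two families, which I will call \emph{heavy} and \emph{light}. The two families are engineered to agree exactly on all distances incident to terminals — indeed, every vertex has the same vector of distances to the $k$ terminals under both laws — so that the $O(nk)=O(n)$ queries that involve a terminal carry no information about which family we are in. All the distinguishing information lives among the $\binom{n}{2}$ Steiner--Steiner distances, and there it is hidden inside a tiny, random planted structure.

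The heavy instances are built from a base metric in which the distance between any two non-terminal vertices is set as large as the triangle inequality allows, given the committed terminal-incident distances (equivalently, every Steiner--Steiner shortest path is forced through a terminal). In such a metric there are no cheap Steiner--Steiner edges, so — after a structural reduction — the cheapest Steiner tree decomposes into ``shallow'' pieces: small stars around individual Steiner vertices, together with terminal--terminal edges. I would choose the committed distances so that the cost $H$ of the best such tree is at least $(5/3-\eps)\cdot L$, where $L$ is the weight of a particular full Steiner component $F^*$ spanning all $k$ terminals. A light instance is obtained from a heavy one by choosing, for each Steiner vertex of $F^*$, a uniformly random vertex among the $\Theta_\eps(n)$ ``decoy'' vertices carrying that Steiner vertex's exact terminal-profile, and rewiring the $O(k)$ pairwise distances corresponding to the edges of $F^*$ down to their true (small) values; this plants $F^*$ and pushes the optimum down to $L$. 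The crucial features are that a planted vertex is statistically indistinguishable from its $\Theta_\eps(n)$ decoys as far as terminal queries are concerned, and that a decoy cannot substitute for it (its Steiner--Steiner distances are still maximal), so only the genuinely planted copies realize the cheap $F^*$.

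The indistinguishability step is then routine: the $O(k)$ rewired pairs form a uniformly random subset of the $\binom{n}{2}$ non-terminal pairs, and — this is the only place the $4^{1/\eps}$ loss enters, since $k$ must be taken of order $4^{1/\eps}$ for the cost ratio to reach $5/3-\eps$ (see below) — an algorithm issuing $q=o(n^2/4^{1/\eps})$ (possibly adaptive) queries touches none of the rewired pairs with probability $1-o(1)$; conditioned on this event the query transcript has the same law under the heavy and the light distributions. Since a correct $(5/3-\eps)$-estimator must separate the two cases (as $H>(5/3-\eps)L$ by construction), no such algorithm can be correct with high probability on the mixture, and Yao's principle yields the theorem for randomized algorithms, with $k$ held constant.

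The hard part, and the real technical content, is the gadget: producing a metric on only $k=k(\eps)$ terminals such that (i) every Steiner tree that avoids small Steiner--Steiner distances costs at least $(5/3-\eps)$ times the optimum, while (ii) the optimum is realized by a single full component $F^*$ whose $O(k)$ Steiner vertices can be planted as a small random structure with shared decoy profiles, and (iii) the whole construction — base metric, $F^*$, and all decoy vertices — is a genuine metric. The obstacle is that the ratio $5/3$ is attained only in a limit: any fixed finite gadget (a star, a caterpillar, a complete binary tree, $\ldots$) gives a ratio strictly bounded away from $5/3$, so one is forced to iterate/recurse the construction, roughly multiplying the terminal set by a constant factor at each of $\Theta(1/\eps)$ levels, which is exactly what makes $k$ — and hence the loss in the query bound — grow like $4^{1/\eps}$. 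Tracking how the heavy-to-light cost ratio evolves under this recursion, while keeping the planted part of size $O(k)$ and re-verifying the triangle inequality at every level, is where the work lies.
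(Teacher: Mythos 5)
Your proposal is essentially the paper's own approach: two families of instances that agree on every terminal-incident distance, a planted full Steiner component hidden inside a uniformly random relabeling of the non-terminals, decoy vertices carrying identical terminal profiles so that terminal queries carry no information, and a recursive $\Theta(1/\eps)$-level gadget closed out via Yao's minimax principle — the paper realizes the gadget with a complete binary tree of depth $\lceil 1/\eps\rceil$ whose internal nodes are the planted Steiner vertices and whose leaves are the terminals. The one flaw is your accounting of the rewired pairs: rewiring only the $O(k)$ tree edges of $F^*$ breaks the triangle inequality, since if $a$--$b$ and $b$--$c$ are both set to small values while $a$--$c$ retains its heavy value, then $\dist(a,c)\le\dist(a,b)+\dist(b,c)$ fails; you must rewire \emph{all} $\Theta(k^2)$ pairs among the planted Steiner vertices (the paper's $\wy$ sets the full tree metric among all special vertices and verifies the triangle inequality in Claim~\ref{clm: YN metrics}). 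Correspondingly, the gadget needs only $k=\Theta(2^{1/\eps})$ terminals to push the heavy/light ratio to $5/3-\eps$ — not $4^{1/\eps}$ — and the $4^{1/\eps}$ in the query bound enters as the square $|S|^2$ of the planted special-vertex set size. Your two misestimates (too many terminals, too few rewired pairs) fortuitously cancel, so the final bound $\Omega(n^2/4^{1/\eps})$ comes out correct, but each step of the accounting is off by the same exponential factor.
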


Our proof of this result is based on constructing a pair of distributions on Steiner Tree instances whose costs differ by a $(5/3-\eps)$ factor, and yet whose metrics differ in $O_{\eps}(1)$ entries, leading to an $\Omega(n^2)$ lower bound for any fixed $\eps > 0$.

We complement the above result by showing that even if we weaken the goal to simply computing a slightly better-than-$2$ approximate Steiner Tree, the query complexity remains $\Omega(nk)$.

\begin{theorem}
\label{thm: beat-2-lower-computing}
For any constant $0<\eps< 1/3$, any randomized algorithm that outputs a $(2-\eps)$-approximate Steiner Tree performs at least $\Omega(nk)$ queries in the worst case.
\end{theorem}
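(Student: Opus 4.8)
The plan is to apply Yao's minimax principle: for every $n$ and every $k\le n$ I construct a distribution over metric Steiner Tree instances on $n$ points with $k$ terminals such that any deterministic algorithm making $o(nk)$ queries fails, with probability more than $1/2$, to output a $(2-\eps)$-approximate tree. All metrics I build will be $\{1,2\}$- or $\{\eps/4,2\}$-valued, hence automatically valid metrics. It suffices to handle the regimes $k<n/2$ and $k\ge n/2$ separately; in the latter regime $k=\Theta(n)$, so forcing $\Omega(k^2)$ queries is the same as forcing $\Omega(nk)$ queries.

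For $k<n/2$ I use a hidden-centers construction. Fix $p=\ceil{5/\eps}$ and $m=\floor{k/p}$, partition $mp$ of the terminals into groups $G_1,\dots,G_m$ of size $p$ (the remaining $<p$ terminals are inessential), sample $m$ distinct non-terminals $c_1,\dots,c_m$ uniformly at random, and set $w(c_j,t)=1$ for $t\in G_j$ and $w(u,u')=2$ for every other pair $u\neq u'$. Connecting each $G_j$ to $c_j$ and chaining the resulting stars shows $\opt = 2k-(p-2)m-2$, whereas spanning $T$ using no center costs $2(k-1)$. The key structural fact is that for any tree $T'$ spanning $T$, writing $f'$ for the number of indices $j$ such that $c_j$ is incident to an edge of $T'$,
\[\cost(T')=2\,|E(T')|-\#\{\text{weight-}1\text{ edges of }T'\}\ \ge\ 2(k+f'-1)-pf'\ =\ 2k-2-(p-2)f',\]
using $|E(T')|=|V(T')|-1\ge k+f'-1$ and that $T'$ has at most $p$ weight-$1$ edges incident to each used center. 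Combined with the value of $\opt$ and the choice $p=\ceil{5/\eps}$, a direct calculation shows any $(2-\eps)$-approximate tree must satisfy $f'\ge f_0$ for some $f_0=\Theta(\eps^2 k)\le m$.

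It remains to show $f'\ge f_0$ forces $\Omega(nk)$ queries. Under this distribution only queries $(t,v)$ with $t$ a terminal and $v$ a non-terminal are informative: such a query returns $1$ iff $v$ is the center of the group containing $t$, and every other query returns $2$. Call any query $(t,v)$ with $t\in G_j$ a query to group $j$, let $q_j$ be their number (so $\sum_j q_j\le q$), and say group $j$ is \emph{hit} if one of them returns $1$. Conditioned on any transcript in which $j$ is not hit, $c_j$ is uniform over a set $R_j$ of surviving non-terminals with $|R_j|\ge (n-k)-q_j-m$; hence each query to a ``light'' group ($q_j\le (n-k)/2$) is a hit with probability $O(1/(n-k))$, while there are at most $2q/(n-k)$ ``heavy'' groups, giving $\expect[\#\text{hit groups}]=O(q/n)$. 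A non-hit group $j$ can still contribute to $f'$ only if $T'$ accidentally contains an edge between $c_j$ and $G_j$; but a $(2-\eps)$-approximate $T'$ has only $O(k)$ edges, so all but $m/2$ groups have $O(k/m)$ non-terminals adjacent in $T'$ to their terminals, and for those the accident has probability $O(\tfrac{k/m}{n-k})$, whence $\expect[\#\text{accidental groups}]=O(\tfrac{k}{n-k})=O(1)=o(f_0)$ once $k=\omega(1/\eps^2)$ (for $k=O(1)$ a single hidden center already gives $\Omega(n)=\Omega(nk)$). Thus $\expect[f']\le O(q/n)+o(f_0)$, and Markov's inequality makes the success probability less than $1/2$ unless $q=\Omega(nf_0)=\Omega(\eps^2 nk)=\Omega(nk)$. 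For $k\ge n/2$ I instead hide a matching: sample a uniformly random near-perfect matching $M$ on the terminals, set $w(t_i,t_j)=\eps/4$ for $\{i,j\}\in M$ and $w(u,u')=2$ otherwise. Contracting matched pairs gives $\opt=\Theta(k)$, while $\cost(T')\ge 2(k-1)-2\cdot\#\{\text{matched edges of }T'\}$, so a $(2-\eps)$-approximate tree must contain $\Omega(\eps k)$ matched edges; by uniformity of $M$ each query returns $\eps/4$ with probability $O(1/k)$ and only $O(1)$ further matched edges can land in $T'$ by chance, so the same Markov argument forces $q=\Omega(\eps k^2)=\Omega(nk)$.

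I expect the main obstacle to be the ``accidental edge'' step in each construction: one must rule out that the algorithm profits from an output tree that happens to contain cheap edges it never queried. The resolution is to feed the approximation guarantee back into the analysis — a $(2-\eps)$-approximate tree is sparse, hence has few edges incident to terminals, hence few candidate positions for a hidden cheap edge, and each such position is essentially uniform in the hidden structure. A secondary point is that no single construction covers all of $k$: the hidden-center gadget needs $\Theta(k)$ spare non-terminal vertices to host its centers, so once $k$ is close to $n$ one must exploit terminal–terminal distances directly, which is exactly what the hidden-matching construction does.
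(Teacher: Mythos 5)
Your construction for the regime $k<n/2$ is essentially the same as the paper's: partition the terminals into $\Theta(\eps k)$ groups of size $\Theta(1/\eps)$, hide one ``center'' Steiner vertex per group with weight-$1$ edges to that group's terminals, and argue that any $(2-\eps)$-approximate tree must identify $\Omega(\eps^2 k)$ of these centers, each of which is a needle in a haystack of $\Omega(n)$ candidates. Where you diverge is in the analysis of why $o(nk)$ queries fail, and in the boundary cases. The paper (following the Assadi--Khanna--Li adaptive-adversary framework) first reduces to a single-representative instance $w'$, then runs a \emph{phase} argument: a phase ends when a crucial edge is discovered, a phase is ``good'' if every undiscovered terminal still has $\ge 3n/4$ unqueried Steiner candidates, a bipartite-random-matching lemma shows each good phase costs $\Omega(n)$ queries in expectation, and a counting argument bounds the number of bad phases. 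You instead directly bound $\mathbb{E}[f']$ by summing hit probabilities per group with a light/heavy split and applying Markov. Both routes work; the paper's phase decomposition is a bit more modular and reuses a clean matching lemma, while yours is more self-contained. Two places where you go further than the paper: (i) you explicitly handle the $k=\Theta(n)$ regime with a separate hidden-matching construction using $\{\eps/4,2\}$ weights, whereas the paper simply assumes $k\le n/100$ and leaves the large-$k$ extension implicit (it can be obtained by padding with zero-distance duplicate terminals); (ii) you explicitly rule out ``accidental'' unqueried crucial edges in the output tree via the sparsity of any $(2-\eps)$-approximate tree, a point the paper elides by treating ``contains a crucial edge'' and ``discovered a crucial edge'' as interchangeable. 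Both of these are genuine (if minor) improvements in rigor over the paper's exposition. One place to be careful in your write-up: after conditioning on a transcript, the centers $c_1,\dots,c_m$ are a random subset with a random bijection to groups, not independent uniform draws, so the claim that $c_j$ is ``uniform over $R_j$'' for a non-hit group needs the same symmetry argument the paper packages into its random-matching lemma — it is true, but worth saying why.
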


Our second set of results is concerned with understanding the query complexity of obtaining a strictly better-than-$2$ estimate of the Steiner Tree cost. The main technical contribution of this paper is a sublinear-query algorithm that obtains a strictly better-than-$2$ estimate of the cost, by performing $\tilde O(n^{12/7}+n^{6/7}\cdot k)$ queries (as $k\le n$, the query complexity is $\tilde O(n^{13/7})=o(n^2)$).

\begin{theorem}
\label{thm: beat-2-main}
There is an efficient randomized algorithm that with high probability estimates the metric Steiner Tree cost to within a factor of $(2-\eps_0)$ for some universal constant $\eps_0>0$, by performing $\tilde O(n^{12/7}+n^{6/7}\cdot k)$ queries.
\end{theorem}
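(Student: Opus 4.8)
The plan is to reduce cost estimation to a yes/no question about how much Steiner points can save over a terminal spanning tree. Write $M=\mst(T)$ for the cost of a minimum spanning tree of the metric restricted to $T$; this is estimable to within a $(1+\eps)$ factor with $\tilde O(k)$ queries by the known metric-MST estimator, and $M/2\le\opt\le M$. Fix a small universal constant $c$. The algorithm will either produce a (certificate of a) Steiner tree of cost at most $(1-c)M$ and report $(1-c)M$, or else report $M$; the design goal is that whenever $\opt\le(1-2c)M$ the first branch occurs. This already yields a $(2-\eps_0)$-estimate with $\eps_0=2c$: in the first branch the reported value lies in $[\opt,(2-2c)\opt]$, since $\opt\ge M/2$ and the witnessed tree has cost $\ge\opt$; in the second branch $\opt>(1-2c)M$, so $M\in[\opt,\tfrac{1}{1-2c}\opt]\subseteq[\opt,(2-2c)\opt]$ for $c$ small. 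Hence the whole problem becomes: \emph{when $\opt\le(1-2c)M$, exhibit a Steiner tree of cost $\le(1-c)M$ using $\tilde O(n^{12/7}+n^{6/7}k)$ queries}.

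Next I would set up a structural bridge from ``Steiner savings'' to set cover. Using known results on $k'$-restricted Steiner trees (Borchers--Du), a near-optimal Steiner tree may be assumed $k'$-restricted for a constant $k'=k'(\eps)$, hence decomposes into full components each spanning $O_\eps(1)$ terminals; after bucketing distances into $O(\log)$ scales, each such component is, up to a $(1+\eps)$ factor, a \emph{star}: a Steiner center $v$ together with a bounded terminal set $A=B(v,r)\cap T$ at scale $r$. I would prove that $\opt\le(1-2c)M$ holds iff there is a family of stars $\{(v_i,r_i)\}_i$ such that the Steiner tree obtained by connecting each $v_i$ to all of $A_i=B(v_i,r_i)\cap T$ and then spanning the centers together with the uncovered terminals has cost at most $(1-c')M$; the forward direction charges each savings-producing full component to a star, and the converse is exactly this construction, whose cost equals $\sum_i|A_i|r_i+\mst(\{v_i\}_i\cup(\text{uncovered terminals}))$. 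Thus it suffices to estimate, to within a constant factor, the optimum of a partial/weighted set-cover-type instance whose universe is $T$ and whose sets are the metric balls $B(v,r)\cap T$.

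The technical core is a sublinear-query estimator for this set-cover optimum, built from two complementary sampling primitives, since learning all $\Theta(nk)$ center-to-terminal distances is unaffordable. First, sample a set $T_0\subseteq T$ and query every vertex against every terminal of $T_0$; this lets us estimate the density $|B(v,r)\cap T|\approx\tfrac{k}{|T_0|}|B(v,r)\cap T_0|$ for every candidate center $v$ and scale $r$, which is reliable precisely for centers whose neighborhoods are not too small --- and this is the only way to discover a single \emph{heavy} center close to a large fraction of the terminals, which sampling centers would miss entirely. Second, sample a set $S_0$ of candidate centers and query each against all of $T$; this is used to account for the aggregate effect of the many \emph{light} centers, each close to few terminals, whose total contribution cannot be read off from a terminal sample. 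Feeding the sampled data into a greedy set-cover analysis yields the estimate; balancing $|T_0|$ and $|S_0|$ against the density threshold separating heavy from light centers and against the precision needed for a constant-factor set-cover estimate is what produces the bound $\tilde O(n^{12/7}+n^{6/7}k)$ (roughly $|T_0|=\tilde\Theta(n^{5/7})$, $|S_0|=\tilde\Theta(n^{6/7})$).

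The main obstacle, where I expect the real difficulty, is reconciling these two regimes under one budget: a heavy center forces queries directed outward from sampled terminals, while a diffuse cloud of light centers forces queries directed outward from sampled centers, and the analysis must charge \emph{both} kinds of structure correctly and simultaneously, across all $O(\log)$ scales and through the $k'$-restriction step, while keeping enough slack that the resulting estimate is both a valid upper bound on the cost of an actual Steiner tree (soundness of the $(1-c)M$ witness) and provably at most $(1-c)M$ whenever $\opt\le(1-2c)M$ (completeness). Making the charging between Steiner savings and star covers tight enough to survive the $\tilde O(1)$ factors everywhere is the delicate part.
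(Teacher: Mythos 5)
Your high-level plan matches the paper's: compute (or simulate) the terminal MST cost $M$, reduce "does $\opt\le(1-2c)M$?" to a set-cover-type question about Steiner neighborhoods, and estimate that set-cover objective with a two-regime sampling scheme (sample terminals vs.\ sample candidate centers) to get the $\tilde O(n^{12/7}+n^{6/7}k)$ bound. The parameters you name ($|T_0|\approx n^{5/7}$, $|S_0|\approx n^{6/7}$) are in the right ballpark. But there are several real gaps, one of which the paper identifies explicitly as the central obstacle and devotes an entire subroutine to.

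\textbf{The ``iff'' structural bridge via stars is not valid.} You claim $\opt\le(1-2c)M$ iff a family of single-scale stars $\{(v_i,r_i)\}$ yields a tree of cost $\le(1-c')M$, using Borchers--Du $k'$-restriction and then asserting that every constant-size full component is ``up to a $(1+\eps)$ factor, a star.'' That assertion is false: a full component in a near-optimal Steiner tree may itself be a small balanced \emph{binary tree} of Steiner vertices, and replacing such a tree by a star centered at any single Steiner node loses a constant factor. Concretely, in the binary-tree metric $\wy$ from the paper's Section~3, a four-leaf full component has tree cost $6$ while the best single-center star has cost $8$---equal to the MST cost of those four terminals. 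So the star-based cover finds \emph{zero} savings, even though the full tree saves a $1/4$ fraction. Since $M/\opt\to 2$ on this family as depth grows, your algorithm would report $M$ and be wrong. Moreover, size-$2$ balls contribute no savings (swapping one terminal--terminal edge for two half-length terminal--Steiner edges is cost-neutral), so you must systematically exclude them; the paper works with $\wset_{\ne 2}$ throughout. Both of these issues are exactly why the paper adds a separate \emph{$4$-vertex subroutine} (Step~3, plus Case~3.3 of the tree-modification analysis): it is a second, qualitatively different kind of local evidence that looks \emph{across two consecutive scales} at groups of four terminals precisely because single-scale stars miss the binary-tree savings. Your proposal has no analogue of this step, and without it the ``forward'' direction of your bridge fails.

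\textbf{The set-cover estimator is more than greedy sampling.} You write that you would ``feed the sampled data into a greedy set-cover analysis,'' but the quantity you actually need to estimate is $|U|-\setcover(U,\wset_{\ne 2})$, to within a $(O(1),\eps|U|)$-type bound (Theorem~7 of the paper gives a $(4,\eps|U|)$-estimation). The paper's proof does not use greedy; it partitions elements by frequency (your heavy/light distinction) and then reduces the low-frequency part to \emph{maximal-matching size estimation} in an auxiliary ``co-occurrence'' graph on $U$ (Lemma~4.3: a maximal matching is a $2$-approximation to $|U|-\setcover$), invoking the Behnezhad estimator. It further partitions the sets themselves by size so that the co-occurrence graph has controllable degree. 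This machinery is what actually realizes your $|T_0|$-vs-$|S_0|$ tradeoff; ``greedy'' will not give the additive-error estimate needed here, and you would also have to re-derive the $2$-approximation-by-matching fact anyway.

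\textbf{The MST structure must be simulated, not just its cost estimated.} Your reduction needs the level-$i$ connected components of the terminal MST (both to define the scales $r$ and to identify which terminals are ``uncovered''), but you only observe that the MST \emph{cost} is estimable in $\tilde O(k)$ queries. Computing the components explicitly needs $\Theta(k^2)$ queries, which already exceeds $n^{12/7}$ when $k=\Theta(n)$. The paper's Section~4.3 builds BFS-type local-exploration subroutines (using the randomized greedy MIS depth bound) that answer ``which level-$i$ component does this terminal belong to, and is it small?'' in $\tilde O(k)$ queries, and it is this that finally forces the exponents $12/7$ and $6/7$ rather than $3/2$ and $3/4$. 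Your proposal does not address this implementation gap at all.

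In short: the overview paragraph is on target, but the plan omits the $4$-vertex subroutine (which is not optional---it is the paper's answer to the exact place your ``iff'' breaks), replaces the matching-based set-cover estimator with an under-specified greedy scheme, and leaves the simulate-the-MST-hierarchy problem open.
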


At a high-level, the proof of the above theorem starts with a minimum spanning tree $\tau$ of the graph induced by terminals. Even on simple metrics such as a metric where all weights are $1$ or $2$, the cost of such a tree can be up to a factor $2$ away from the optimal Steiner Tree cost. But in this case, the optimal tree necessarily improves upon $\tau$ by using Steiner nodes to efficiently connect together many terminals. The first challenge then becomes if such opportunities can be identified only by local exploration of the metric. Our main insight is that this task can be cast as a suitable {\em set cover} problem where the objective is to estimate the universe size minus the optimal set cover size. We then design a sublinear query algorithm for estimating the value of this set cover objective, and use it to determine whether or not the optimal Steiner tree cost is close to the cost of $\tau$, or bounded away from it. The second challenge in obtaining this result is that explicit computation of the MST $\tau$ in the graph induced by the terminals requires $O(k^2)$ queries which rules out a sublinear query complexity when $k = \Omega(n)$. To get around this, we design an outer algorithm that efficiently simulates access to $\tau$ without ever explicitly computing it. The composed algorithm, achieves a strictly better-than-$2$ estimate of the Steiner Tree cost in $\tilde O(n^{12/7}+n^{6/7}\cdot k)$ queries.

The result above raises a natural question: can the task of obtaining a strictly better-than-$2$ estimate of the Steiner Tree cost be achieved with only $\tilde O(n)$ queries? Our next result rules out this possibility at least when $k$ is sufficiently large. We show that any algorithm that estimates the Steiner Tree cost to a factor strictly better than $2$, necessarily requires $\tilde \Omega(n+k^{6/5})$ queries.

\begin{theorem}
\label{thm: beat-2-lower-main}
For any constant $0<\eps< 1/3$, any randomized algorithm that with high probability estimates the metric Steiner Tree cost to within a factor of $(2-\eps)$ performs at least $\tilde \Omega(n+k^{6/5})$ queries in the worst case.
\end{theorem}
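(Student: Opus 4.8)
The plan is to establish the two terms of the bound via separate constructions and combine them, using Yao's minimax principle throughout. The $\tilde{\Omega}(n)$ term is the easy part: on $n$ vertices with $k$ terminals (take $k$ a large enough constant relative to $1/\eps$), consider the $(1,2)$-metric that either (i) has all pairwise distances equal to $2$, so that $\opt = 2(k-1)$, or (ii) contains a single planted Steiner vertex $s$ with $w(s,t)=1$ for all $t\in T$ and all other distances $2$, so that $\opt\le k$; the ratio then exceeds $2-\eps$. Since the good hub sits on one of $\Theta(n)$ candidate Steiner vertices and a uniformly random weight query reveals it only with probability $O(1/n)$, distinguishing these two distributions needs $\Omega(n)$ queries. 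After padding with far-away dummy vertices this gives the $\tilde{\Omega}(n)$ term for all $k\le n$, and it already dominates $k^{6/5}$ unless $k=\Omega(n^{5/6})$, so the remaining work is in the regime $k=\Theta(n)$.

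For the $\tilde{\Omega}(k^{6/5})$ term I would build a pair of distributions over $(1,2)$-metrics on $n=\Theta(k)$ vertices with $\opt\approx k$ in one case and $\opt\approx(2-o(1))k$ in the other, so any $(2-\eps)$-factor estimator must tell them apart. Mirroring the set-cover viewpoint that drives the algorithm, the ``cheap'' distribution plants a near-perfect cover of the terminals by Steiner-vertex neighborhoods, but this beneficial incidence structure is laid out along a sparse, locally tree-like bipartite graph between Steiner vertices and terminals whose girth is a constant tuned so that the graph carries $\Theta(k^{6/5})$ edges — by the Moore bound, relating girth $\sim 2t$ to $\Theta(k^{1+1/t})$ edges, this is exactly $t=5$, which is where the exponent $6/5=1+\tfrac15$ comes from. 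The ``expensive'' distribution is obtained by a local perturbation of this structure (for instance collapsing the useful neighborhoods onto a small common frame) that destroys every small cover and pushes $\opt$ up to $(2-o(1))k$, yet is invisible inside any bounded-radius ball of the incidence graph.

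The analysis then fixes a deterministic algorithm making $q=o(k^{6/5})$ adaptive queries and shows the two induced query--answer transcript distributions are $o(1)$ apart in total variation. I would argue this with an exposure-martingale / ``no informative answer'' argument: track the conditional law of the planted structure given the answers seen so far, and use the high-girth (locally-random-looking) property to argue that, as long as fewer than $\Theta(k^{6/5})$ queries have been made, the algorithm has not uncovered a short cycle or a heavy-degree vertex --- the only kinds of witnesses that separate the two cases --- so the next answer is governed by the same conditional distribution in both ensembles except with probability $o(1/q)$. A union bound over the $q$ queries yields the transcript bound, and hence the claimed lower bound.

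The hard part will be the construction together with this martingale estimate. One needs the two ensembles to be simultaneously (a) far apart in Steiner Tree cost, which is a global, minimum-cover-type property of the incidence structure, and (b) statistically indistinguishable under $o(k^{6/5})$ local incidence queries, which forces the structure to resemble a random sparse graph up to radius about $5$. These requirements pull in opposite directions, and reconciling them is precisely where the high-girth / Moore-bound machinery is needed; the quantitative crux is controlling adaptivity and bounding the low-probability events in which a query genuinely exposes a cycle or a heavy vertex, so that their contribution to the transcript distance stays $o(1)$ as long as the query budget is $o(k^{6/5})$.
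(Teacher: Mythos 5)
Your $\tilde\Omega(n)$ argument is correct and essentially identical to the paper's: the paper plants a single Steiner hub $v_j$ at distance $1$ from every terminal (versus an all-$2$ metric) and argues via Yao that $\Omega(n-k)$ queries are needed to locate it; this is exactly your item (i)/(ii) with padding.

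For the $\tilde\Omega(k^{6/5})$ part there is a genuine gap. You posit a sparse, high-girth bipartite Steiner--terminal incidence graph and appeal to the Moore bound to justify the exponent, but this is a different mechanism from what is actually at work, and the construction is never carried out. The paper's hard instance is not high-girth at all: it partitions the terminals into $k^{2/5}$ ``special'' groups $S_1,\dots,S_{k^{2/5}}$ of size $1/\eps$ each plus $k^{3/5}$ ``regular'' groups $T_i$ of size $k^{2/5}$ with intra-group distance $0$, partitions the Steiner vertices into $k^{3/5}$ blocks $V_i$, and in the ``No'' case matches each $V_i$ to the special terminals so that every special terminal has $k^{3/5}$ distinct weight-$1$ Steiner neighbors. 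In the ``Yes'' case one re-routes these matchings so that each group $S_j$ shares a single ``secret'' Steiner vertex at distance $1$, collapsing the spanning tree cost by nearly a factor $2$. The $6/5$ exponent then falls out of an explicit exponent balance, not a girth bound: discovering any single special terminal costs $\Omega(k)$ queries (it hides among $\Theta(k)$ terminals, and once sampled, only $1/k^{2/5}$ of its incident edges are crucial), and by the Birthday Paradox over the $k^{2/5}$ groups you need to discover $\Omega(k^{1/5})$ special terminals before two of them land in the same group --- which is the only way to witness a secret vertex. Hence $\Omega(k)\cdot\Omega(k^{1/5})=\Omega(k^{6/5})$. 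Your Moore-bound arithmetic coincidentally produces the same exponent, but the tension you yourself identify (``far apart in cost'' versus ``locally invisible'') is not resolved by girth: the paper resolves it by ensuring that the local evidence of the Yes structure requires a collision among discovered special terminals, and then makes that collision improbable below $k^{6/5}$ queries. Your martingale sketch also leaves the key quantitative step --- bounding the probability that any single query reveals a crucial edge conditioned on the transcript, which the paper does via the ``settled''/``well-discovered'' case analysis (Lemma~\ref{lem:settle-sp} and Claims~\ref{prop:DN-random}--\ref{prop:DY2}) --- entirely unaddressed, so as written the proposal does not yet constitute a proof of the $k^{6/5}$ term.
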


Our third and final set of results is concerned with understanding the query complexity of computing an $\alpha$-approximate Steiner Tree for any $\alpha\ge 2$. 
We show that $\tilde\Theta(k^{2}/\alpha)$ queries are both sufficient and necessary for this task.

\begin{theorem}
\label{thm: >2-main}
Let $\alpha\ge 2$ be any constant. Then 
\begin{itemize}
\item there exists an efficient randomized algorithm that with high probability computes an $\alpha$-approximate Steiner Tree, by performing $\tilde O(k^{2}/\alpha)$ queries; and
\item any randomized algorithm that outputs an $\alpha$-approximate Steiner Tree performs at least $\Omega(k^{2}/\alpha)$ queries in the worst case.
\end{itemize} 
\end{theorem}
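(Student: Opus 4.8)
The plan rests on the fact that the minimum spanning tree of the metric restricted to $T$ is a $2$-approximation of $\opt$, so that computing an $\alpha$-approximate Steiner tree amounts to producing a spanning tree of $T$ of cost within a factor $\Theta(\alpha)$ of $\mst(T)$ while reading only $\tilde O(k^2/\alpha)$ entries of $w$; both directions formalize this reduction.

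\textbf{Upper bound.} Partition $T$ arbitrarily into $p=\Theta(\alpha)$ parts $T_1,\dots,T_p$ of size $\Theta(k/p)$. For each $i$ query all $\binom{|T_i|}{2}$ intra-part distances and compute an exact minimum spanning tree $\tau_i$ of the induced sub-metric, at a total cost of $\sum_i\binom{|T_i|}{2}=O(k^2/p)=O(k^2/\alpha)$ queries. Then pick a representative $r_i\in T_i$ for each $i$ and connect the $\tau_i$'s either by querying all $\binom{p}{2}$ pairs among $r_1,\dots,r_p$ and taking their MST (good when $\alpha=O(k^{2/3})$) or by a star on the $r_i$'s using $p-1$ queries (for larger $\alpha$); either way this costs $O(k^2/\alpha)$ additional queries, and the output is a spanning tree of $T$, hence a feasible Steiner tree. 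For the cost bound the key inequality is $\mst(S)\le 2\,\mst(T)$ for every $S\subseteq T$: restrict the MST of $T$ to the minimal subtree spanning $S$, double its edges, take an Euler tour with shortcuts, and delete one edge. Hence each $\tau_i$ costs at most $2\,\mst(T)\le 4\,\opt$, and the connecting structure costs $O(p)\cdot\mst(T)=O(\alpha)\cdot\opt$, so the total is $O(\alpha)\cdot\opt$; tuning the constant in $p=\Theta(\alpha)$ brings it to $\alpha\,\opt$. The algorithm is deterministic and runs in polynomial time.

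\textbf{Lower bound, the instance.} Fix $\alpha$, set $s=\Theta(\alpha)$, let $V=T$ with $|T|=k$, draw a uniformly random partition of $T$ into $k/s$ groups of size $s$, and set $w(u,v)=1$ if $u,v$ are in the same group and $w(u,v)=s$ otherwise (one checks this is always a metric). Then $\opt=\mst(T)=2k-k/s-s=\Theta(k)$, realized by taking the $s-1$ unit edges inside each group and $k/s-1$ edges of weight $s$ across groups. A spanning tree of $T$ with $j$ same-group (``intra'') edges has cost at least $(k-1-j)s$, so being within $\alpha\,\opt\le 2\alpha k$ forces $j\ge k-1-\Theta(\alpha k/s)=\Omega(k)$ once the constant in $s=\Theta(\alpha)$ is large enough. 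Thus any $\alpha$-approximate tree contains $\Omega(k)$ intra edges. Since a pair the algorithm has neither queried nor forced-to-be-intra by transitivity lies in a common group with probability $\Theta(s/k)$ regardless of the transcript, a Markov bound shows that with high probability all but $o(k)$ of the output tree's intra edges are certified intra by the queries; a short counting argument on the connected components of the graph of ``same-group'' answers then shows the algorithm must have received $\Omega(k)$ such answers.

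\textbf{Lower bound, query count, and main obstacle.} It remains to show that accumulating $\Omega(k)$ ``same-group'' answers costs $\Omega(k^2/s)=\Omega(k^2/\alpha)$ queries, which is the technical heart. The clean statement is that, conditioned on any adaptive query transcript, the next query returns ``same group'' with probability $O(s/k)$ (amortized over the run), so a Wald / optional-stopping argument on the corresponding super-martingale yields the bound; combining with Yao's principle gives the theorem. The main obstacle is proving this per-query bound robustly against adaptivity: a priori an algorithm could try to bias a vertex toward a small set of group labels by spending many ``different-group'' queries on it. The resolution uses the fact that the conditional law of the random partition is invariant under relabeling the $k/s$ groups, so every vertex's label stays marginally uniform; this caps the achievable per-query success probability at $O(s/k)$ up to a constant, and an amortized/potential argument (charging each block's ``setup'' queries against the ``same-group'' answers they eventually enable) shows that settling enough groups to span $\Omega(k)$ vertices costs $\Omega(k^2/\alpha)$ queries in expectation. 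The upper bound, by contrast, is routine once the inequality $\mst(S)\le 2\,\mst(T)$ is available.
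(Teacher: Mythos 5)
Your \textbf{upper bound} is correct and takes a genuinely different route from the paper's. The paper samples a random subset $T'\subseteq T$ of size $\tilde O(k/\alpha)$, computes the MST of $T'$, and attaches each remaining terminal to its nearest sampled terminal; the analysis then argues via an Euler tour of the optimal terminal MST that with high probability every window of $\tilde O(\alpha)$ consecutive Euler-tour steps hits a sampled vertex, so the attachment edges contribute only $\tilde O(\alpha)\cdot\mst$. You instead partition $T$ arbitrarily into $p=\Theta(\alpha)$ parts of size $\Theta(k/\alpha)$, compute the exact MST of each part, and glue the pieces with a star on representatives; the analysis reduces to the clean fact $\mst(S)\le 2\,\mst(T)$ for every $S\subseteq T$ (double the induced subtree and shortcut). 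Your version is deterministic where the paper's is randomized, and it incurs no $\log n$ overhead. Both are valid proofs of the $\tilde O(k^2/\alpha)$ upper bound; yours is arguably the simpler of the two.

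Your \textbf{lower bound}, by contrast, follows the paper's path (same hard distribution: a random partition of $T=V$ into $k/\Theta(\alpha)$ groups of size $\Theta(\alpha)$, weight $1$ inside groups and $\Theta(\alpha)$ across; same reduction showing an $\alpha$-approximate tree must contain $\Omega(k)$ intra-group edges), and you correctly flag the real difficulty, but the resolution you offer is not yet a proof. The claim that ``the conditional law of the random partition is invariant under relabeling the $k/s$ groups, so every vertex's label stays marginally uniform'' breaks down exactly in the adaptive regime you worry about: after many ``different-group'' answers involving a vertex $u$, its conditional group membership is uniform only over the surviving candidate groups, and if the algorithm has also touched most members of some group $T_i$, the conditional probability that a fresh query lands inside $T_i$ can be much larger than $O(s/k)$. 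The paper handles precisely this with a phase decomposition: phases end when a new vertex becomes settled; a counting argument over consistent partitions shows a ``good'' phase has expected length $\Omega(k/\alpha)$; and two flavours of ``bad'' phase (a group whose complement has been heavily touched, and a ``well-discovered'' vertex) are each charged to $\Omega(k/\alpha)$ prior queries, so there cannot be too many of them. Your ``amortized/potential argument (charging setup queries against the same-group answers they enable)'' gestures at this bookkeeping but does not carry it out, so the lower-bound half of the proposal is incomplete as written.
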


Our results on the tradeoff between query complexity and approximation quality are summarized in \Cref{fig: results}. Together, they illustrate several interesting phase transitions in the query complexity of approximating metric Steiner Tree. The query complexity remains $\Theta(n^2)$ up to an approximation factor of $5/3$ even if $k$ is a constant and the goal is to only estimate the Steiner Tree cost. Then at $(5/3)$-approximation, it drops to $\Theta(nk)$, and it is possible to also find a $(5/3)$-approximate Steiner Tree with $\Theta(nk)$ queries. Next if the goal is to find an $\alpha$-approximate Steiner Tree, the query complexity remains $\Theta(nk)$ even as $\alpha$ approaches $2$. At this point, another phase transition occurs: for any $\alpha \ge 2$, the query complexity of finding an $\alpha$-approximate Steiner Tree becomes $\tilde{\Theta}(k^2/\alpha)$. For $\alpha$-approximate estimation of the cost for any $\alpha < 2$, we show that $\tilde{\Omega}(n^{6/5})$ queries are necessary when $k = \Omega(n)$. On the other hand, we give a $(2 - \eps_0)$-estimation algorithm that uses only  $\tilde O (n^{12/7}+n^{6/7}\cdot k)$ queries, for some universal $\eps_0>0$.

\begin{figure}[h]
	\centering
	\includegraphics[scale=0.12]{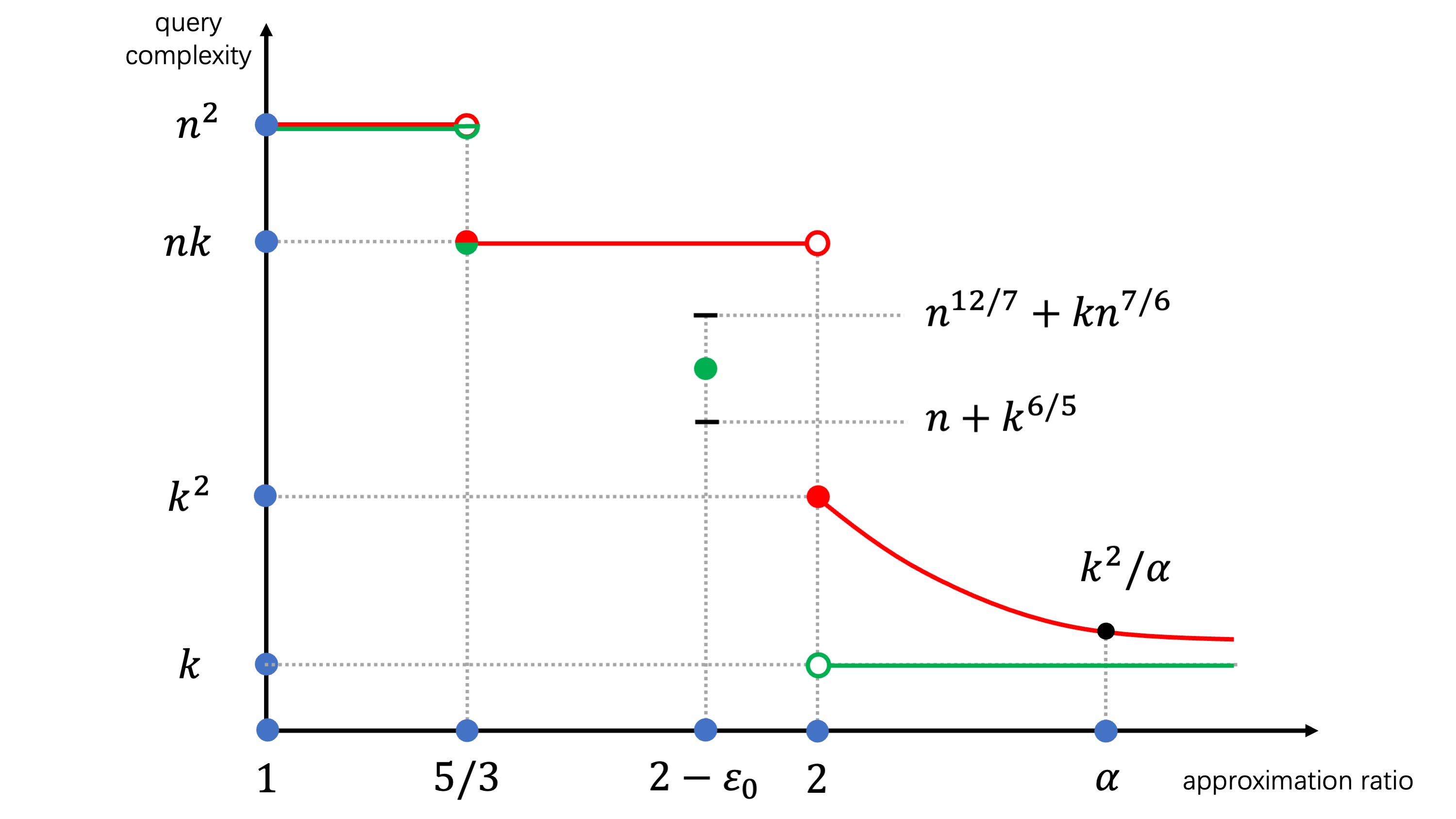}
	\caption{An illustration of the trade-off between query complexity and approximation ratio for the metric Steiner Tree problem. The red curve shows the complexity of computing a Steiner Tree, while the green curve shows the complexity of estimating the minimum metric Steiner Tree cost. 
	The upper bound at $(5/3)$-approximation follows from \cite{du1995component} and \cite{zelikovsky199311}; the bottom green curve (showing $\Theta(k)$ query complexity for $\alpha$-estimating the cost where $\alpha>2$) is due to \cite{czumaj2009estimating}, and all other curves are results of this paper. All terms are inside a $\tilde O(\cdot)$ symbol.} \label{fig: results}
\end{figure}

\paragraph{Organization.} We start with some preliminaries in \Cref{sec: prelim}. 
We then provide the proof of $(5/3-\eps)$-approximation lower bound (\Cref{thm: lower-main}) in \Cref{sec: 5/3-lower}.
We present the $(2-\eps_0)$-approximation algorithm for \Cref{thm: beat-2-main} in \Cref{sec: beat-2-upper}.
The proofs of the lower bounds in \Cref{thm: beat-2-lower-computing}, \Cref{thm: beat-2-lower-main} and \Cref{thm: >2-main} are provided in \Cref{sec: proof of beat-2-lower-computing}, \Cref{sec: beat-2-lower} and \Cref{sec: >2-main}, respectively.

\section{Preliminaries}
\label{sec: prelim}

Let $G$ be a graph and let $S$ be a subset of its vertices. We denote by $G[S]$ the subgraph of $G$ induced by $S$.
For two (not necessarily disjoint) subsets $A,B$ of vertices of $G$, we denote by $E_G(A,B)$ the set of all edges with one endpoint in $A$ and the other endpoint in $B$, and we denote by $E_G(A)$ the set of all edges with both endpoints in $A$. 
For a vertex $v\in V(G)$, we denote by $\deg_G(v)$ the degree of $v$ in $G$.
We sometimes omit the subscript $G$ in the above notations if it is clear from the context.

Let $G=(V,E,w)$ be a weighted graph where $w: E(G)\to \mathbb{R}^+$ is a weight function on edges in $G$. For a subgraph $H\subseteq G$, we define $w(H)=\sum_{e\in E(H)}w(e)$. For a pair $u,u'$ of vertices in $G$, we denote by $\dist_G(u,u')$ the shortest-path distance in $G$ between $u$ and $u'$.
In this paper, we will often consider the case where $G$ is a complete graph and $w$ satisfies the triangle inequality. That is, for all $u,u',u''\in V$, $w(u,u')\le w(u,u'')+w(u',u'')$. In this case, $w$ can be also viewed as a metric on the set $V$ of points.
For a vertex $v$ and a set $U\subseteq V$, we denote $w(v,U)=\min\set{w(v,u)\mid u\in U}$. For a pair $U,U'\subseteq V$ of sets, we denote $w(U,U')=\min\set{w(u,u')\mid u\in U,u'\in U'}$.

Let $\tau$ be a tree rooted at a vertex $r\in V(\tau)$. Let $u$ be a vertex in $\tau$. The \emph{height} of $u$ in $\tau$ is defined to be the minimum hop-distance between $u$ and any leaf in the subtree of $\tau$ rooted at $u$. For example, the height of a leaf is $0$, and the height of a parent of a leaf is $1$, etc.
 
For a weighted graph $G=(V,E,w)$ and a subset $T$ of vertices in $G$, we denote by $(G,T)$ the instance of the Steiner Tree problem where $G$ is the graph and $T$ is the set of vertices to be connected. We denote the optimal cost of a solution to this instance by $\stcost(G,T)$. When $G$ is a complete graph and $w$ is a metric on $V$, an instance of the Steiner Tree problem is also denoted by $(V,T,w)$, and the optimal cost of a solution is also denoted by $\stcost(V,T,w)$. Vertices in $T$ are called \emph{terminals}, and vertices in $V\setminus T$ are called \emph{Steiner vertices}.

Throughout the paper, we will refer to the algorithms that compute an $\alpha$-approximate Steiner Tree of cost bounded by $\alpha$-\emph{approximations}, and will refer to the algorithms that estimate the metric Steiner Tree cost to within factor $\alpha$ by $\alpha$-\emph{estimations}.

We use the following standard version of Chernoff Bound (see. e.g., \cite{dubhashi2009concentration}).

\begin{lemma}[Chernoff Bound]
	\label{lem: Chernoff}
	Let $X_1,\ldots,X_n$ be independent randon variables taking values in $\set{0,1}$. Let $X=\sum_{1\le i\le n}X_i$, and let $\mu=\mathbb{E}[X]$. Then for any $t>2e\mu$,
	\[\Pr\Big[X>t\Big]\le 2^{-t}.\]
	Additionally, for any $0\le \delta \le 1$,
	\[\Pr\Big[X<(1-\delta)\cdot\mu\Big]\le e^{-\frac{\delta^2\cdot\mu}{2}}.\]
\end{lemma}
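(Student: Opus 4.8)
The plan is to prove both tail bounds by the standard exponential–moment (Chernoff) method — bounding $\Pr[X > t]$ via $\mathbb{E}[e^{sX}]$ and $\Pr[X < (1-\delta)\mu]$ via $\mathbb{E}[e^{-sX}]$ — and then to extract the two clean closed forms in the statement using elementary scalar inequalities. Throughout, write $p_i = \Pr[X_i = 1]$, so that $\mu = \sum_i p_i$, and use the independence of the $X_i$'s to factor the moment generating functions, together with the pointwise estimate $1 + p_i(x-1) \le \exp(p_i(x-1))$.

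For the upper tail: for any $s > 0$, Markov's inequality applied to $e^{sX}$ gives
\[
\Pr[X > t] \;\le\; e^{-st}\,\mathbb{E}[e^{sX}] \;=\; e^{-st}\prod_{i=1}^n \mathbb{E}[e^{sX_i}] \;\le\; e^{-st}\prod_{i=1}^n e^{p_i(e^s-1)} \;=\; \exp\big({-st + \mu(e^s-1)}\big).
\]
Since $t > 2e\mu > \mu$, we may choose $s = \ln(t/\mu) > 0$; substituting and simplifying yields the classical bound $\Pr[X > t] \le (e\mu/t)^t$. Finally, $t > 2e\mu$ forces $e\mu/t < 1/2$, hence $(e\mu/t)^t < 2^{-t}$, which is the first inequality.

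For the lower tail: the cases $\delta = 0$ (trivial) and $\delta = 1$ (where the left side is $\Pr[X < 0] = 0$) are immediate, so assume $0 < \delta < 1$. Applying the same method to $-X$: for $s > 0$,
\[
\Pr[X < (1-\delta)\mu] \;\le\; e^{s(1-\delta)\mu}\,\mathbb{E}[e^{-sX}] \;\le\; \exp\big({s(1-\delta)\mu + \mu(e^{-s}-1)}\big).
\]
Choosing $s = -\ln(1-\delta) > 0$ gives $\Pr[X < (1-\delta)\mu] \le \big(e^{-\delta}(1-\delta)^{-(1-\delta)}\big)^{\mu}$. It then remains to verify the scalar inequality $(1-\delta)\ln(1-\delta) \ge -\delta + \delta^2/2$ for $0 \le \delta < 1$, which follows by expanding $\ln(1-\delta) = -\sum_{j\ge 1}\delta^j/j$ and comparing coefficients term by term. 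This gives $-\delta - (1-\delta)\ln(1-\delta) \le -\delta^2/2$, and therefore $\Pr[X < (1-\delta)\mu] \le e^{-\delta^2\mu/2}$.

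There is no genuine obstacle here; the only steps needing a little care are the choices of the free parameter $s$ (taking $s = \ln(t/\mu)$ and $s = -\ln(1-\delta)$ explicitly, rather than leaving $s$ to be optimized abstractly) and the verification of the one-variable inequality $(1-\delta)\ln(1-\delta) \ge -\delta + \delta^2/2$ that converts the raw bound $\big(e^{-\delta}(1-\delta)^{-(1-\delta)}\big)^{\mu}$ into the stated form $e^{-\delta^2\mu/2}$. Both are routine, and one could alternatively just cite \cite{dubhashi2009concentration} for these standard facts.
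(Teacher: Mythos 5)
Your proof is correct: it is the standard exponential-moment (Chernoff) argument with the explicit parameter choices $s=\ln(t/\mu)$ and $s=-\ln(1-\delta)$, and both scalar simplifications (using $t>2e\mu$ for the upper tail, and $(1-\delta)\ln(1-\delta)\ge -\delta+\delta^2/2$ for the lower tail) check out; the only cosmetic omission is the trivial case $\mu=0$, where $X=0$ almost surely. The paper itself gives no proof and simply cites \cite{dubhashi2009concentration} for exactly this standard fact, so your argument is the same one the citation points to.
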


\section{An $\Omega(n^2)$ Lower Bound for $(5/3-\eps)$-Estimation}
\label{sec: 5/3-lower}

\newcommand{\wy}{w_{\sf Y}}
\newcommand{\wn}{w_{\sf N}}
\newcommand{\iy}{I_{\sf Y}}
\newcommand{\ino}{I_{\sf N}}

In this section we provide the proof of \Cref{thm: lower-main}. Specifically, for any constant $0<\eps <2/3$, we will construct a pair $I_{\sf Y}=(V,T,w_{\sf Y}), I_{\sf N}=(V,T,w_{\sf N})$ of instances of the metric Steiner Tree problem where $|T|=O(2^{(1/\eps)})$, such that the metric \ST costs of instance $I_{\sf Y}$ and instance $I_{\sf N}$ differ by factor $(5/3-\eps)$, while the metrics $\wy$ and $\wn$ differ at only $O_{\eps}(1)$ places. 
We will then construct two distributions $\dset_{\sf Y},\dset_{\sf N}$ of instances by randomly naming the vertices in the instances $I_{\sf Y},I_{\sf N}$ respectively; and show that any algorithm that with probability at least $0.51$ distinguishes between instances $\dset_{\sf Y}$ and $\dset_{\sf N}$ (and in particular between the distributions on metrics $\wy$ and $\wn$) has to perform at least $\Omega(n^2/4^{(1/\eps)})$ queries.
We remark that it is easy to construct (by adding dummy terminals to instances $I_{\sf Y}$ and $I_{\sf N}$), for every $k\ge \Omega(2^{1/\eps})$, such a pair of instances $I^{(k)}_{\sf Y}$ and $I^{(k)}_{\sf N}$ with $k$ terminals each. Our constructions are similar to the examples used in \cite{borchers1997thek} to determine the worst-case $k$-Steiner ratios.

We start by giving a high-level overview of the construction of instances $(V,T,\wy)$ and $(V,T,\wn)$.
We would like to ensure that (i) metrics $\wy$ and $\wn$ differ only in $O_{\eps}(1)$ places; and (ii) $\stcost(V,T,\wy)$ and $\stcost(V,T,\wn)$ differ by a factor of $(5/3)$ roughly. In order to achieve property (i), for every pair $(u,u')$ of vertices in $V$ such that at least one of $u,u'$ is a terminal, $\wy(u,u')=\wn(u,u')$ must hold, since otherwise simply querying all terminal-involved distances will distinguish between metrics $\wy$ and $\wn$ by performing $O(nk)$ queries. 
However, once we are given that $\wy$ and $\wn$ are identical on all terminal-involved pairs, the previous results in \cite{zelikovsky199311} and \cite{du1995component} imply that the values $\stcost(V,T,\wy)$ and $\stcost(V,T,\wn)$ differ by a factor of at most $(5/3)$. Therefore, we have to design metrics $\wy$ and $\wn$ such that the analysis from \cite{zelikovsky199311} and \cite{du1995component} is nearly tight. It turns out that the tree $\tau_Y$ has to be quite balanced and symmetric.

We now describe the construction of instances $(V,T,\wy)$ and $(V,T,\wn)$ in detail. 
Let $d=\ceil{1/\eps}$. We first define an auxiliary weighted tree $\rho$ as follows. The tree $\rho$ is a complete binary tree of depth $d$, so $|V(\rho)|=2^{d+1}-1$. Let $r$ be the root of $\rho$. For each node $v\in V(\rho)$, we say that $v$ is \emph{at level $i$} (or the level of $v$ is $i$), iff the unique path in $\rho$ that connects $v$ to $r$ contains $i$ edges. Clearly, all leaves of $\rho$ are at level $d$. For an edge $(u,u')\in E(\rho)$ where $u'$ is the parent of $u$, we say that $(u,u')$ is a \emph{level-$i$ edge} iff $u$ is at level $i$. We now define the weights on edges in $E(\rho)$: all level-$d$ edges have weight $1$, and for each $1\le i\le d-1$, all level-$i$ edges have weight $2^{(d-1)-i}$. We denote by $L(\rho)$ the set of all leaves of $\rho$.

In order to avoid ambiguity, we refer to vertices of $\rho$ as nodes and points in $V$ as vertices.
The vertex set $V$ is partitioned into $(2^{d+1}-1)$ subsets: $V=\bigcup_{x\in V(\rho)} V_x$, where each subset is indexed by a node in $\rho$. For each leaf node $x$ in $L(\rho)$, the set $V_x$ contains a single vertex, that we denote by $u_x$. For each non-leaf node $x$ in $\rho$, the set $V_x$ contains either $\floor{\frac{n-2^{d}}{2^d-1}}$ or $\ceil{\frac{n-2^{d}}{2^d-1}}$ vertices (so that the total number of vertices in $V$ is $n$), with one of them designated as the \emph{special vertex} in $V_x$, denoted by $u_x$, and all other vertices are called \emph{regular vertices}. 
For consistency, for each leaf node $x\in L(\rho)$, we also call the only vertex $u_x$ in $V_x$ a special vertex.
The terminal set is defined to be $T=\set{u_x\mid x\in L(\rho)}$, so $|T|=2^d$.
We denote by $S$ the set of all special vertices, so $T\subseteq S$ and $|S|=2^{d+1}-1$.

We now define metrics $w_{\sf Y}$ and $w_{\sf N}$ as follows.
We first define $w_{\sf N}$. For every pair $v,v'$ of vertices in $V$, assume $v\in V_x$ and $v'\in V_{x'}$; denote by $\hat \ell$ the level of the lowest common ancestor of nodes $x$ and $x'$ in $\rho$; and assume without loss of generality that the level of $x$ is at least the level of $x'$. Now let $\tilde x$ be any leaf of $\rho$ that lies in the subtree of $\rho$ rooted at $x$; then $\wn(v,v')=\dist_{\rho}(\tilde x,x)+ \dist_{\rho}(\tilde x,x')$ (note that this is well-defined since any such leaf $\tilde x$ will give the same value of $\dist_{\rho}(\tilde x,x)+ \dist_{\rho}(\tilde x,x')$).
We now define $\wy$.
For every pair $v,v'\in V$ such that at least one of $v,v'$ does not lie in $S$, $\wy(v,v')$ is defined identically as $\wn(v,v')$.
For every pair $v,v'$ of vertices in $S$ with $v\in V_x$ and $v'\in V_{x'}$, the value $\wy(v,v')$ is defined slightly different as $\wy(v,v')=\dist_{\rho}(x,x')$.

We prove the following claim which says on $\wy$ and $\wn$ defined above are indeed metrics (that is, they satisfy the triangle inequality). The proof is based on a straightforward case analysis, and is deferred to \Cref{apd: Proof YN metrics}.

\begin{claim}
\label{clm: YN metrics}
$\wy,\wn$ are metrics on $V$.
\end{claim}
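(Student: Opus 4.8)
The plan is to verify the triangle inequality for both $\wn$ and $\wy$ by reducing everything to distances in the tree $\rho$, which is a metric on its own node set by construction. First I would establish a clean closed form for $\wn$: for $v \in V_x$ and $v' \in V_{x'}$, if $\ell$ denotes the level of the lowest common ancestor $z$ of $x$ and $x'$, and we pick any leaf $\tilde x$ in the subtree rooted at the deeper of $x,x'$, then $\wn(v,v') = \dist_\rho(\tilde x, x) + \dist_\rho(\tilde x, x')$. I would first note this is well-defined (independent of the choice of $\tilde x$), since $\dist_\rho(\tilde x, x)$ is the sum of edge weights on the downward path from $x$ to a leaf, which by the construction of weights (level-$i$ edges have weight $2^{(d-1)-i}$, level-$d$ edges weight $1$) equals $2^{d-1-\lv(x)}$ when $\lv(x) \le d-1$ and $0$ when $\lv(x)=d$; in all cases it depends only on $\lv(x)$, not on the particular leaf. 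Denote this quantity by $h(x) := \dist_\rho(\tilde x, x)$, so $\wn(v,v') = h(x) + h(x')$ whenever $x$ is a (weak) descendant of $x'$ is \emph{not} required — actually I should be careful: the formula $\dist_\rho(\tilde x, x) + \dist_\rho(\tilde x, x')$ uses a leaf $\tilde x$ below the deeper node, so $\dist_\rho(\tilde x, x') = \dist_\rho(\tilde x, z) + \dist_\rho(z, x')$ where $z$ is the LCA. Thus $\wn(v,v') = h(x) + \big(h(x) - \dist_\rho(x,z)\big)\cdot[\dots]$; cleaner is to just write $\wn(v,v') = \big(h(\text{deeper}) - \dist_\rho(\text{deeper}, z)\big)\cdot 2 + \dots$. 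To avoid the morass, the right move is: observe $\wn(v,v') = 2h(x_{\min}) - \dist_\rho(x_{\min}, x_{\max})$ where $x_{\min}$ is whichever of $x,x'$ is deeper — wait, I'll instead phrase it as $\wn(v,v') = \dist_\rho(x, z) + \dist_\rho(x', z) + 2\cdot h(z_{\text{child-side}})\dots$ Rather than nail the exact algebra here, the structural point I would make is that $\wn$ equals the shortest-path metric of an auxiliary tree $\rho^+$ obtained from $\rho$ by attaching, to every internal node $x$ of $\rho$, a pendant path of total weight $h(x)$ ending in a new ``phantom leaf,'' and identifying all vertices of $V_x$ with that phantom leaf (for $x$ internal) or with $x$ itself (for $x$ a leaf). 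Since $\rho^+$ is a tree with a well-defined shortest-path metric, $\wn$ automatically satisfies the triangle inequality.

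Next I would handle $\wy$, which differs from $\wn$ only on pairs $v, v' \in S$ of special vertices, where $\wy(v,v') = \dist_\rho(x, x')$ with $v \in V_x$, $v' \in V_{x'}$. The key observation is that $\wy$ is also a tree metric: it is the shortest-path metric of a tree $\rho^\sharp$ obtained from $\rho^+$ by, additionally, placing each special vertex $u_x$ \emph{at the node $x$ itself} rather than at the phantom leaf. Concretely, $\rho^\sharp$ has vertex set $V(\rho) \sqcup (V \setminus S)$: each special vertex $u_x$ is identified with the node $x$, and each regular vertex in $V_x$ is attached to $x$ by a pendant path of weight $h(x)$ (all regular vertices in $V_x$ sharing... no, each regular vertex gets its own pendant path of weight $h(x)$, or they can share — it doesn't matter for the metric since mutual distances among $V_x$-regulars must be $0$ in $\wn$, hmm, actually $\wn$ of two regulars in the same $V_x$ is $2h(x) \ne 0$; so each regular vertex needs its own pendant path of weight $h(x)$ hanging off $x$). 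Then for $v$ regular in $V_x$ and $v'$ regular in $V_{x'}$, the $\rho^\sharp$-distance is $h(x) + \dist_\rho(x,x') + h(x') = \wn(v,v') = \wy(v,v')$, matching; for $v=u_x$ special and $v'$ regular in $V_{x'}$ it is $\dist_\rho(x,x') + h(x')$; and I must check this equals $\wn(v,v')$. Here I need to confirm that $\wn(u_x, v') = \dist_\rho(x, x') + h(x')$ when $x'$ is at least as deep — indeed picking $\tilde x$ below $x'$ gives $\dist_\rho(\tilde x, x') = h(x')$ and $\dist_\rho(\tilde x, x) = \dist_\rho(\tilde x, z) + \dist_\rho(z, x) = h(x') + \dist_\rho(z,x')... $ no. This is exactly where I expect the routine-but-fiddly case analysis to live, and it is the main obstacle: making sure the phantom-leaf construction faithfully reproduces the stated formula for $\wn$ across the cases (same $V_x$; $x$ ancestor of $x'$; $x, x'$ incomparable), and similarly that $\wy$ matches on special-special pairs.

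Given that both $\wn$ and $\wy$ are realized as shortest-path metrics of trees, the triangle inequality is immediate and \Cref{clm: YN metrics} follows. So the concrete steps, in order, are: (1) compute $h(x) = \dist_\rho(x, \tilde x) = 2^{d-1-\lv(x)}$ for $\lv(x) \le d-1$ and $h(x)=0$ for leaves, and confirm well-definedness of the $\wn$ formula; (2) derive the explicit three-case formula for $\wn(v,v')$ in terms of $\dist_\rho(x,x')$ and $h(x), h(x')$; (3) exhibit the tree $\rho^+$ (phantom leaves for regular vertices, actual nodes reused — careful, for $\wn$ \emph{all} vertices including specials sit at phantom leaves) and check the three cases; (4) exhibit $\rho^\sharp$ for $\wy$ (specials moved to their nodes) and check that it agrees with $\wn$ off $S \times S$ and realizes $\dist_\rho(x,x')$ on $S \times S$; (5) conclude. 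Alternatively, if one prefers to avoid the tree-embedding packaging, one can prove the triangle inequality directly: fix three vertices $v \in V_x$, $v' \in V_{x'}$, $v'' \in V_{x''}$ and verify $w(v,v') \le w(v,v'') + w(v'',v')$ by splitting on which pairs among $\{x,x'\},\{x,x''\},\{x',x''\}$ are special-involved and on the relative positions of $x, x', x''$ in $\rho$, repeatedly invoking the triangle inequality of $\dist_\rho$ and the fact $h(\cdot) \ge 0$; this is more tedious but entirely elementary, and it is presumably the ``straightforward case analysis'' the authors allude to. I would present the tree-embedding argument as the clean route and relegate the case checks to the appendix as promised.
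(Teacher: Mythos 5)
Your central move---realizing $\wn$ (and then $\wy$) as the shortest-path metric of an auxiliary tree so that the triangle inequality comes for free---cannot work, because $\wn$ is not a tree metric. The discrepancy is already visible in your proposed formula: for $v\in V_x$, $v'\in V_{x'}$ with $\lv(x)\ge\lv(x')$, writing $z$ for the LCA of $x,x'$ and $\tilde x$ for a leaf below $x$, one has $\dist_\rho(\tilde x,x')=h(x)+\dist_\rho(x,z)+\dist_\rho(z,x')$, so
\[
\wn(v,v')=\dist_\rho(\tilde x,x)+\dist_\rho(\tilde x,x')=2h(x)+\dist_\rho(x,x'),
\]
where $x$ is the \emph{deeper} node. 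Your $\rho^+$, with each $v\in V_x$ on its own pendant of length $h(x)$ off $x$, instead gives $h(x)+h(x')+\dist_\rho(x,x')$, which strictly exceeds $\wn(v,v')$ whenever $\lv(x)>\lv(x')$ (then $h(x)<h(x')$). And this is not fixable by a cleverer pendant construction, because $\wn$ genuinely violates the four-point condition: take an internal node $a$ of $\rho$ at level $d-1$ with leaf children $\ell_1,\ell_2$, and pick two distinct vertices $v,v'\in V_a$. Then $\wn(v,v')=2h(a)=2$, while $\wn(v,u_{\ell_1})=\wn(v',u_{\ell_1})=\wn(v,u_{\ell_2})=\wn(v',u_{\ell_2})=1$ and $\wn(u_{\ell_1},u_{\ell_2})=2$. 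In a tree metric, a point at distance $1$ from both endpoints of a geodesic of length $2$ must sit at its midpoint; this would force $u_{\ell_1}=u_{\ell_2}$ and hence $\wn(u_{\ell_1},u_{\ell_2})=0$, a contradiction (equivalently, the two cross sums in the four-point check equal $2$ while $\wn(v,v')+\wn(u_{\ell_1},u_{\ell_2})=4$). The same mismatch shows $\rho^\sharp$ fails to reproduce $\wy$ off $S\times S$: for $u_x$ special and $v'$ regular in $V_{x'}$, your $\rho^\sharp$-distance is $\dist_\rho(x,x')+h(x')$, whereas $\wy(u_x,v')=\wn(u_x,v')=2h(x_{\textnormal{deeper}})+\dist_\rho(x,x')$, and these disagree in general.

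Your stated fallback---fix $v\in V_x$, $v'\in V_{x'}$, $v''\in V_{x''}$, verify $\wn(v,v')\le\wn(v,v'')+\wn(v'',v')$ directly by cases on the relative positions of $x,x',x''$ in $\rho$, and then handle $\wy$ by further splitting on which of the three vertices lie in $S$---is exactly what the paper does in \Cref{apd: Proof YN metrics}, organized around the location of the point on the relevant leaf-to-leaf path that is closest to $x_3$. That part of the plan is sound in outline, but your proposal only gestures at it while committing its actual argument to the tree-embedding route, which, as shown above, is a dead end rather than a ``routine but fiddly'' detail to be checked.
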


We next prove the following two claims showing that the minimum Steiner Tree cost of instances $(V,T,\wy)$ and $(V,T,\wn)$ differ by a factor of roughly $(5/3)$.

\begin{claim}
	\label{clm: yes_cost}
	The minimum \ST cost of instance $(V,T,\wy)$ is at most $(d+1)\cdot 2^{d-1}$.
\end{claim}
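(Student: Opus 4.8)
The plan is to prove the bound by exhibiting an explicit feasible Steiner tree of cost exactly $(d+1)\cdot 2^{d-1}$. The natural candidate is the tree $\tau_Y$ that ``copies'' $\rho$ onto the special vertices: its vertex set is $S=\set{u_x\mid x\in V(\rho)}$, and for every edge $(x,x')\in E(\rho)$ it contains the edge $(u_x,u_{x'})$. Since $\rho$ is a tree, $\tau_Y$ is a tree as well (it is graph-isomorphic to $\rho$), and since $T=\set{u_x\mid x\in L(\rho)}\subseteq S=V(\tau_Y)$, the tree $\tau_Y$ spans every terminal. Hence $\tau_Y$ is a feasible Steiner tree for the instance $(V,T,\wy)$, using the non-leaf special vertices as Steiner vertices, so $\stcost(V,T,\wy)\le w(\tau_Y)$.

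Next I would evaluate $w(\tau_Y)$. For each edge $(x,x')\in E(\rho)$, both $u_x$ and $u_{x'}$ lie in $S$, so by the definition of $\wy$ on pairs of vertices in $S$ we have $\wy(u_x,u_{x'})=\dist_\rho(x,x')$, which equals the weight of the edge $(x,x')$ in $\rho$. Summing over all edges gives $w(\tau_Y)=\sum_{e\in E(\rho)}w(e)=w(\rho)$.

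Finally I would compute $w(\rho)$ by grouping edges by level: for each $1\le i\le d$ there are exactly $2^i$ level-$i$ edges (one per node at level $i$). The $2^d$ level-$d$ edges have weight $1$ each, contributing $2^d=2\cdot 2^{d-1}$; and for each $1\le i\le d-1$, the $2^i$ level-$i$ edges have weight $2^{(d-1)-i}$ each, contributing $2^i\cdot 2^{(d-1)-i}=2^{d-1}$, for a total of $(d-1)\cdot 2^{d-1}$ over these levels. Adding these, $w(\rho)=2\cdot 2^{d-1}+(d-1)\cdot 2^{d-1}=(d+1)\cdot 2^{d-1}$, which yields $\stcost(V,T,\wy)\le (d+1)\cdot 2^{d-1}$. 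There is no genuine obstacle here; the only point requiring care is to invoke the definition of $\wy$ on special-vertex pairs (rather than the $\wn$-style definition), so that the embedded copy of $\rho$ has the same total weight as $\rho$ — this is exactly where the modification of $\wy$ relative to $\wn$ is being used.
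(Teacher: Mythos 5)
Your proof is correct and takes exactly the same approach as the paper: you exhibit the Steiner tree on $S$ isomorphic to $\rho$, use the definition of $\wy$ on special-vertex pairs to identify its weight with $w(\rho)$, and compute $w(\rho)=2^d+\sum_{1\le i\le d-1}2^i\cdot 2^{(d-1)-i}=(d+1)\cdot 2^{d-1}$. The only difference is that you spell out the level-by-level edge count and the feasibility check, which the paper leaves implicit.
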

\begin{proof}
	Consider the following \St $\tau$ whose vertex set is $S$. Recall that $S=\set{u_x\mid x\in V(\rho)}$. The edge set of $\tau$ contains, for each edge $(x,x')\in E(\rho)$, an edge $(u_x,u_{x'})$. From the definition of $\wy$, it is easy to verify that the tree $\tau$ is identical to the tree $\rho$ (together with its edge weights). Therefore, 
	$$\wy(\tau)=w(\rho)=2^d+\sum_{1\le i\le d-1}2^{(d-1)-i}\cdot2^{i}=(d+1)\cdot 2^{d-1}.$$
\end{proof}

\begin{claim}
\label{clm: no_cost}
The minimum \ST cost of instance $(V,T,\wn)$ is at least $(5/3)\cdot d\cdot 2^{d-1}$.
\end{claim}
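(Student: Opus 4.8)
The plan is to show that any Steiner Tree solution for $(V,T,\wn)$ must have cost at least $(5/3)\cdot d\cdot 2^{d-1}$. The key observation is that in the metric $\wn$, distances between any two vertices only depend on the nodes of $\rho$ to which they belong (and are in fact realized using distances to leaves of $\rho$), so without loss of generality we may restrict attention to Steiner Trees whose vertex set consists only of the special vertices $\{u_x : x \in V(\rho)\}$ — using a regular vertex in $V_x$ is never better than using the special vertex $u_x$, since $\wn$ is symmetric within each $V_x$ in the relevant sense. Actually, more carefully: for a regular vertex $v \in V_x$ with $x$ a non-leaf node, $\wn(v, v')$ for any $v'$ equals $\dist_\rho(\tilde x, x) + \dist_\rho(\tilde x, x')$ for $\tilde x$ a leaf below $x$; this is at least $\dist_\rho(x,x')$, and the same formula holds for $u_x$, so we may assume the solution only uses special vertices. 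Thus the problem reduces to finding a minimum Steiner Tree in the weighted tree-like metric on $V(\rho)$ (with $\wn$-distances) spanning the leaves $L(\rho)$.

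First I would set up the recursion. For a node $x$ of $\rho$ at level $i$, let $C(x)$ denote the minimum cost of connecting together all leaves in the subtree $\rho_x$ rooted at $x$ (using any special vertices, not just those in $\rho_x$, but one checks that an optimal solution can be taken to stay within $\rho_x$ plus possibly needing to reach $x$ itself). The leaves below $x$ are at $\wn$-distance $2 \cdot \dist_\rho(\tilde x, x)$ from each other where $\tilde x$ ranges over those leaves — wait, more precisely two leaves $\tilde x_1, \tilde x_2$ below $x$ whose lowest common ancestor is at level $\hat\ell$ have $\wn(u_{\tilde x_1}, u_{\tilde x_2}) = 2\cdot\dist_\rho(\tilde x_1, \text{lca}) = 2(2^{d-1-\hat\ell} \cdot 2 - \text{something})$; I would just compute $\dist_\rho(\text{leaf}, \text{level-}i\text{ ancestor}) = 1 + \sum_{j=i+1}^{d-1} 2^{(d-1)-j} = 2^{d-1-i}$ for $i \le d-1$. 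So two leaves with lca at level $\hat\ell$ are at $\wn$-distance $2^{d-\hat\ell}$.

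The heart of the argument is a lower-bound recursion on $C(x)$: if $x$ is at level $i$ with children $y_1, y_2$ (at level $i+1$), then connecting all $2^{d-i}$ leaves below $x$ requires first connecting the two halves and then joining them, and the join costs at least the $\wn$-distance between the two leaf-groups, which is $2^{d-i}$ (leaves on opposite sides have lca exactly $x$ at level $i$). This gives $C(x) \ge C(y_1) + C(y_2) + 2^{d-i}$, with base case $C(x) = 0$ for $x$ a leaf and $C(x) = 2$ for $x$ the parent of two leaves at level $d-1$ (two leaves at $\wn$-distance $2^{d-(d-1)}=2$). Unrolling, $C(r) \ge \sum_{i=0}^{d-1} 2^i \cdot 2^{d-i} = d \cdot 2^d$, which is too strong — it overshoots — so the naive "merge the two subtrees independently" bound is not tight and in fact gives more than claimed. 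This tells me the recursion must be more subtle: the true optimum does \emph{not} recurse independently, because a single long edge can serve double duty. The hard part, and the main obstacle, will be to correctly identify the structure of the optimal Steiner Tree in $\wn$ and prove the matching lower bound of exactly $(5/3)d\cdot 2^{d-1}$ — this requires a charging argument (or LP-duality / cut-based lower bound) showing that every solution, when you account for which "levels" its edges span, must pay at least a $(5/3)$-fraction of what the balanced tree $\rho$ pays at each level. I would approach this by a careful induction where the inductive hypothesis tracks not just the cost of connecting the leaves below $x$, but also the minimum additional cost to further extend that component up to the node $x$ (a "Steiner Tree with one open end" quantity), leading to a two-variable recursion $(A_i, B_i)$ whose solution yields the $5/3$ constant; this is exactly the $k$-Steiner-ratio calculation from \cite{borchers1997thek} that the authors cite, and the factor $5/3$ is the limiting value of such a recursion. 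If a clean self-contained induction proves elusive, the fallback is to invoke the worst-case $2^d$-Steiner ratio bound of \cite{borchers1997thek} directly, since the metric $\wn$ is constructed precisely so that $\tau_Y$ (the spanning tree on $S$) is the full Steiner Tree and the terminal-induced MST must pay the full $5/3$ penalty.
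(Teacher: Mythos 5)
Your reduction to special vertices, the distance formula $\dist_\rho(\text{leaf},\text{level-}i) = 2^{d-1-i}$, and your diagnosis that the naive one-variable recursion $C(x)\ge C(y_1)+C(y_2)+2^{d-i}$ overshoots are all correct --- and the two-variable recursion you then propose (tracking whether the component extends to the Steiner vertex at $x$) is precisely the paper's $A(d),B(d)$ split, where $A(d)$ and $B(d)$ denote the minimum cost of a Steiner tree that excludes / includes $u_r$. So the heart of your plan is the right one and matches the paper.

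However, you do not carry it out, and there are two substantive gaps. First, you never derive the recursion or solve it; the paper needs several structural lemmas before the recursion is even valid --- an observation that every edge of the optimal tree is incident to a terminal, and a further observation constraining how many neighbors a Steiner vertex $u_x$ can have in each of $T_0(x),T_1(x),T_2(x)$ and forcing at most one of $\{u_x, \text{its two children}\}$ into the tree. Without these, the decomposition $A(d+1)=A(d)+B(d)+2^{d-1}+2^d$ and $B(d+1)=2A(d)+2^{d+1}$ has no justification, and it is exactly this recursion, solved in closed form, that produces the constant $5/6\cdot d\cdot 2^d = (5/3)\cdot d\cdot 2^{d-1}$. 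Second, your fallback is incorrect: in $\wn$ the optimal Steiner tree is \emph{not} the terminal spanning tree. The terminal MST in $\wn$ costs $d\cdot 2^d = 2d\cdot 2^{d-1}$, while the optimum is roughly $(5/3)d\cdot 2^{d-1}$, so Steiner vertices genuinely help here and one must upper-bound how much. Moreover, the $\rho$-shaped tree $\tau_Y$ evaluated in $\wn$ costs $(3d-1)\cdot 2^{d-1}$, which is worse than both the terminal MST and the claimed bound, so it is not the optimum either; and the $5/3$ factor in this paper is the separation between the two \emph{instances} $(V,T,\wn)$ and $(V,T,\wy)$, not an MST-vs-Steiner ratio within $\wn$. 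Invoking the Borchers--Du ratio as a black box therefore does not give the needed lower bound on $\stcost(V,T,\wn)$.
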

\begin{proof}
We start by proving the following observation on an optimal \St of instance $(V,T,\wn)$.
\begin{observation}
\label{obs: one from each group}
There exists an optimal \St $\tau$ of instance $(V,T,\wn)$, such that for each node $x\in V(\rho)$, $|V(\tau)\cap V_x|$ is either $0$ or $1$.
\end{observation}
\begin{proof}
Let $\tau^*$ be an optimal \St of instance $(V,T,\wn)$, and assume that there exists a node $x\in V(\rho)$, such that $\tau^*$ contains at least two distinct vertices of $V_x$, that we denote by $v,v'$. Clearly, $x$ is a non-leaf node in $\rho$ and so both $v$ and $v'$ are Steiner vertices in $\tau^*$. From the definition of $\wn$, for any vertex $u\in V, u\ne v,v'$, $\wn(u,v)=\wn(u,v')$. Let $\tau$ be the tree obtained from $\tau^*$ by replacing every $v'$-incident edge $(u,v')\in E(\tau^*)$ with edge $(u,v)$ and then removing any parallel edges. It is easy to verify that $\tau$ is also a \St of instance $(V,T,\wn)$, and that $w(\tau)\le w(\tau^*)$. Since $\tau^*$ is an optimal \St, $\tau$ has to be an optimal \St as well. We can keep modifying $\tau$ in the same way until every group $V_x$ contains at most one vertex in $\tau$, while ensuring that the resulting tree $\tau$ stays an optimal \St of instance $(V,T,\wn)$. This completes the proof of the observation.
\end{proof}

From \Cref{obs: one from each group}, and since all vertices in the same group behave identically with respect to $\wn$, we conclude that there is an optimal \St $\tau$ with $V(\tau)\subseteq S$.
Therefore, we from now on focus on showing that the optimal \St of instance $(S,T,\wn)$ is at least $(5/3)\cdot d\cdot 2^{d-1}$.
Recall that $S=\set{u_x\mid x\in V(\rho)}$.
We use the following observation, whose proof is deferred to \Cref{apd: Proof of edges love terminal}.

\begin{observation}
\label{obs: edges love terminal}
Let $\tau$ be an optimal \St of the instance $(S,T,\wn)$, then every edge of $\tau$ is incident to a vertex in $T$.
\end{observation}

For each non-leaf node $x\in V(\rho)$, we denote by $R(x)$ the subtree of $\rho$ rooted at $x$ and denote by $R_1(x)$, $R_2(x)$ the subtrees of $\rho$ rooted at two children of $x$, respectively. 
Recall that $S=\set{u_{x'}\mid x'\in V(\rho)}$. We define sets $S_1(x)=\set{u_{x'}\mid x'\in R_1(x)}$, $S_2(x)=\set{u_{x'}\mid x'\in R_2(x)}$ and $S_0(x)=\set{u_{x'}\mid x'\notin R(x)}$. 
We then define
$T_i(x)=T\cap S_i(x)$ for $i\in \set{0,1,2}$, and $T(x)=T_1(x)\cup T_2(x)$.
For convenience, for every node $x\in V(\rho)$ at level $i$ of $\rho$, we also say that $u_x$ is a \emph{level-$i$ vertex in $S$}. Similarly, if node $x$ is the parent of node $x'$ in $\rho$, we also say that vertex $u_x$ is the parent of vertex $u_{x'}$ in $S$.

We use the following observation, whose proof is deferred to \Cref{apd: Proof of OPT_props}.

\begin{observation}
\label{obs: OPT_props}
There is an optimal \St $\tau$ of the instance $(S,T,\wn)$, such that for every Steiner vertex $u_x$ in $\tau$: (i) $u_x$ has either one or two neighbors in $T_0(x)$, exactly one neighbor in $T_1(x)$, and exactly one neighbor in $T_2(x)$;
(ii) if we denote by $W_1$ and $W_2$ the connected components in the graph $\tau'\setminus \set{u_x}$ that contains the $T_1(x)$-neighbor of $u_x$ and the $T_2(x)$-neighbor of $u_x$, respectively, then $T_1(x)\subseteq V(W_1)\subseteq S_1(x)$, and
$T_2(x)\subseteq V(W_2)\subseteq S_2(x)$; and (iii)
for every vertex $u_x$ at level at most $d-2$ in $S$, exactly one vertex from the set containing $u_x$ and its two children belongs to $V(\tau)$.
\end{observation}

We are now ready to complete the proof of \Cref{clm: no_cost}. Let $r$ be the root of tree $\rho$. Note that metric $\wn$ and tree $\rho$ are both determined by a single nonnegative integer $d$. In order to avoid ambiguity, we denote by $(S,T,\wn)_d$ the instance determined by $d$.
For each $d\ge 0$, we define 
\begin{itemize}
\item $A(d)$ as the minimum cost of a Steiner tree of instance $(S,T,\wn)_d$ that does not contain $u_r$; and
\item $B(d)$ as the minimum cost of a Steiner tree of instance $(S,T,\wn)_d$ that contains $u_r$.
\end{itemize}
It is easy to verify that $A(1)=2$ and $B(1)=2$. We now show that, for each $d\ge 1$, 
\begin{equation}
\label{eq: recurrence}
A(d+1)=A(d)+B(d)+2^{d-1}+2^d; \quad\text{and}\quad 
B(d+1)=2\cdot A(d)+2^{d+1}.
\end{equation}

On the one hand, let $\tau$ be an optimal Steiner tree of instance $(S,T,\wn)_{d+1}$ that does not contain $u_r$.
Let $u_1,u_2$ be the children of $u_r$. From \Cref{obs: OPT_props}, exactly one of $u_1,u_2$ is contained in $\tau$. Assume w.l.o.g. that $u_1\in V(\tau)$. From \Cref{obs: OPT_props}, $\tau$ is the union of 
\begin{itemize}
\item a Steiner tree of instance $(S_1(u_r),T_1(u_r),\wn)_{d+1}$ that contains $u_1$ (denote by $\tau_1$);
\item a Steiner tree of instance $(S_2(u_r),T_2(u_r),\wn)_{d+1}$ that does not contain $u_2$ (denote by $\tau_2$); and
\item an edge connecting $\tau_1$ to $\tau_2$.
\end{itemize}
Note that instances $(S_1(u_r),T_1(u_r),\wn)_{d+1}$ and $(S_2(u_r),T_2(u_r),\wn)_{d+1}$ are identical to the instance $(S,T,\wn)_{d}$, so $w(\tau_1)=B(d)$ and $w(\tau_2)=A(d)$. Note that the minimum weight of an edge connecting $\tau_1$ to $\tau_2$ is the edge connecting $u_1$ to any leaf in $T_2(u_r)$, which has cost $2^{d-1}+2^{d}$. Therefore, $A(d+1)=A(d)+B(d)+2^{d-1}+2^d$.

On the one hand, let $\tau$ be an optimal Steiner tree of instance $(S,T,\wn)_{d+1}$ that contains $u_r$. From \Cref{obs: OPT_props}, both $u_1,u_2$ are contained in $\tau$. Indeed, $\tau$ is the union of 
\begin{itemize}
\item a Steiner tree of instance $(S_1(u_r),T_1(u_r),\wn)_{d+1}$ that does not contain $u_1$ (denote by $\tau_1$);
\item a Steiner tree of instance $(S_2(u_r),T_2(u_r),\wn)_{d+1}$ that does not contain $u_2$ (denote by $\tau_2$); and
\item an edge connecting $\tau_1$ to $u_r$ and  an edge connecting $\tau_2$ to $u_r$.
\end{itemize}
Via similar arguments, we can show that $w(\tau_1)=w(\tau_2)=A(d)$. Note that the minimum weight of an edge connecting $\tau_1$ to $u_r$ is the edge connecting any leaf in $T_1(u_r)$ to $u_r$, which has cost $2^{d}$. Therefore, $B(d+1)=2\cdot A(d)+2^{d+1}$.

We now use the inequality \ref{eq: recurrence} to complete the proof of \Cref{clm: no_cost}. From \ref{eq: recurrence}, we get that, for each $d\ge 1$, $A(d+1)=A(d)+2\cdot A(d-1)+5\cdot 2^{d-1}$. Using standard techniques and the initial values $A(0)=0$ and $A(1)=2$, we get that
\[A(d)=\bigg(\frac{5}{6}\bigg)\cdot d \cdot 2^{d}+ \bigg(\frac{-1}{9}\bigg)\cdot (-1)^d+\bigg(\frac{1}{9}\bigg)\cdot 2^d.\]
Therefore, from \ref{eq: recurrence}, we get that
\[
\begin{split}
B(d)=2\cdot A(d-1)+2^{d} & =2\cdot \left( \bigg(\frac{5}{6}\bigg)\cdot (d-1) \cdot 2^{d-1}+ \bigg(\frac{-1}{9}\bigg)\cdot (-1)^{d-1}+\bigg(\frac{1}{9}\bigg)\cdot 2^{d-1} \right)+2^{d}\\
& = \bigg(\frac{5}{6}\bigg)\cdot d \cdot 2^{d}+ \bigg(\frac{2}{9}\bigg)\cdot (-1)^{d}+\bigg(\frac{5}{18}\bigg)\cdot 2^{d}.
\end{split}
\]
Therefore, $A(d), B(d)\ge (5/6)\cdot d\cdot 2^d$. This completes the proof of \Cref{clm: no_cost}.
\end{proof}

From \Cref{clm: yes_cost} and \Cref{clm: no_cost}, we get that
\[
\frac{\stcost(\ino)}{\stcost(\iy)}\ge \frac{(5/3)\cdot d\cdot 2^{d-1}}{(d+1)\cdot 2^{d-1}}\ge \frac{5}{3}- \frac 1 d \ge \frac{5}{3}- \eps.
\]

We now complete the proof of \Cref{thm: lower-main} using the metrics $\wy,\wn$ defined above.

We construct a pair $\dset_{\sf Y},\dset_{\sf N}$ of distributions on metric Steiner Tree instances $(V',T',w')$ as follows. Set $V'$ is fixed and contains $n$ vertices. Let $\fset$ be the set of all one-to-one mappings from $V'$ to $V$. For each mapping $f\in \fset$, we define a pair of instances $\iy^f$ and $\ino^f$ as follows:
\begin{itemize}
\item $\iy^f=(V', f^{-1}(S), \wy^f)$, where $\wy^f$ is defined as: $\forall v,v'\in V'$, $\wy^f(v,v')=\wy(f(v),f(v'))$;
\item $\ino^f=(V', f^{-1}(S), \wn^f)$, where $\wn^f$ is defined as: $\forall v,v'\in V'$, $\wn^f(v,v')=\wn(f(v),f(v'))$;
\end{itemize}
where $f^{-1}(S)=\set{v\in V'\mid f(v)\in S}$. We then let $\dset_{\sf Y}$ be the uniform distribution on all instances in $\set{\iy^f\mid f\in \fset}$, and 
let $\dset_{\sf N}$ be the uniform distribution on all instances in $\set{\ino^f\mid f\in \fset}$.
Let $\dset$ be the distribution that sample an instance from $\dset_{\sf Y}$ with probability $1/2$, and sample an instance from $\dset_{\sf N}$ with probability $1/2$.

It is easy to verify that for each mapping $f\in \fset$, $\stcost(\iy^f)=\stcost(\iy)$ and $\stcost(\iy^f)=\stcost(\iy)$ hold, so any algorithm that with probability $0.51$ estimates the cost to within factor $(5/3-\eps)$ can correctly report a random instance from $\dset$ comes from $\dset_{\sf Y}$ or $\dset_{\sf N}$ with probability $0.51$.

Recall that the metrics $\wy$ and $\wn$ are identical on all pairs of vertices that are not both terminals. For each instance $\iy^f$, we say a pair $(v'_1,v'_2)$ of vertices in $V'$ is \emph{crucial} iff $v'_1,v'_2\in f^{-1}(S)$, and we say that the pair $(v'_1,v'_2)$ is \emph{discovered} iff the pair $(v'_1,v'_2)$ is queried by the algorithm.
From Yao's minimax principle \cite{yao1977probabilistic} and the above discussion, in order to distinguish between $\dset_{\sf Y}$ and $\dset_{\sf N}$ with probability $0.51$, it is necessary that the algorithm discovers a crucial pair on at least $0.01$-fraction of the instances in $\iy$. Therefore, the proof of \Cref{thm: lower-main} is concluded by the following lemma.

\begin{lemma}
Any deterministic algorithm that discovers a crucial pair on at least $0.01$-fraction of the instances in $\iy$ performs at least $\Omega(n^2/2^{2d})$ queries in expectation.
\end{lemma}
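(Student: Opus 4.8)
The plan is to prove this by a coupling between YES and NO instances together with a birthday-type union bound over the random placement of the special vertices inside their groups. First I would pin down what ``discovering a crucial pair'' really entails: since $\wy$ and $\wn$ agree on every pair of vertices that involves a terminal or a regular (non-special) vertex, the only pairs on which the two metrics can differ --- i.e. the crucial pairs --- are pairs both of whose endpoints are special Steiner vertices (those of $S\setminus T$), and in the instance $\iy^f$ these $2^d-1$ vertices are hidden among the $n-2^d$ Steiner vertices. Consequently, when the deterministic algorithm is run on $\iy^f$, every answer it receives \emph{before} its first query to a crucial pair is identical to the answer it would receive on the matching NO instance $\ino^f$; and because $\wn$ is constant on each group $V_x$, the metric $\wn^f$ --- hence the algorithm's entire transcript up to and including its first crucial query --- depends on $f$ only through the coarse partition $\pi(f)=\big(f^{-1}(V_x)\big)_{x\in V(\rho)}$ of $V'$ into groups.

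Next I would freeze this partition. Conditioned on $\pi$, the deterministic algorithm's execution on a NO instance is completely determined, so the set $R_\pi$ of pairs it queries there is a fixed set of size $q(\pi)$; and by the coupling, the algorithm discovers a crucial pair on $\iy^f$ if and only if $R_\pi$ contains a crucial pair. The only randomness left once $\pi$ is fixed is the choice of the special vertex inside each of the $2^d-1$ internal groups, which is uniform and independent across groups, each group having size $|V_x|=\Theta(n/2^d)$. A fixed pair can be crucial only if its two endpoints lie in two \emph{distinct} internal groups (two vertices of one group cannot both be special, and a terminal is never a special Steiner vertex), in which case it is crucial with probability $\prod_{i}1/|V_{x_i}|=O(2^{2d}/n^2)$. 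A union bound over the $q(\pi)$ pairs in $R_\pi$ then gives $\Pr\big[\,R_\pi\text{ contains a crucial pair}\mid\pi\,\big]\le q(\pi)\cdot O(2^{2d}/n^2)$; averaging over $\pi$ and using the hypothesis that a crucial pair is discovered on at least a $0.01$-fraction of the YES instances yields $0.01\le O(2^{2d}/n^2)\cdot\mathbb{E}_\pi[q(\pi)]$, i.e. $\mathbb{E}_\pi[q(\pi)]=\Omega(n^2/2^{2d})$, as claimed.

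The hard part will be making the coupling argument airtight: I must argue that an \emph{adaptive} algorithm can extract nothing useful about the \emph{locations} of the special Steiner vertices before it actually hits a crucial pair --- that is, that its query sequence is genuinely a deterministic function of the coarse partition $\pi$ alone, and hence is chosen ``blindly'' with respect to the within-group identity of each special vertex --- which is exactly what makes the union bound lossless. A secondary, minor point is that the argument bounds the query count on NO instances; this coincides with the query count on any YES instance on which a crucial pair is discovered (the two transcripts agree up to that query), and assuming without loss of generality that the algorithm's query count is input-independent makes all these quantities equal, matching the statement.
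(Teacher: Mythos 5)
Your proposal is correct and follows the same underlying idea as the paper, but it is substantially more careful: the paper's own proof is a one-liner (``$f^{-1}(S)$ is a uniformly random $|S|$-subset, so a single query discovers a crucial pair with probability $\binom{|S|}{2}/\binom{n}{2}$'') that implicitly treats the query set as fixed, whereas your coupling to the NO instance and conditioning on the group partition $\pi(f)$ is exactly what is needed to make that union bound legitimate against an adaptive algorithm. You also correctly read the paper's definition of ``crucial pair'' as pairs of \emph{non-terminal} special Steiner vertices; as literally written ($v'_1,v'_2\in f^{-1}(S)$, where $T'\subseteq f^{-1}(S)$ is known to the algorithm) a single terminal--terminal query would trivially ``discover'' a crucial pair, so the stated definition must be a typo.
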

\begin{proof}
Since $f$ is a random one-to-one mapping from $V'$ to $V$, the set $f^{-1}(S)$ is a random size-$|S|$ subset of $V'$. Therefore, the probability that a single query discoveres a crucial pair is $\binom{|S|}{2}/\binom{n}{2}$, and it follows that the expected number of queries that is required to discovers a crucial pair with probability at least $\Omega(1)$ is $\Omega\big(\binom{n}{2}/\binom{|S|}{2}\big)=\Omega(n^2/2^{2d})$.
\end{proof}

\section{Algorithm for a $(2-\eps_0)$-Estimation of Steiner Tree Cost}
\label{sec: beat-2-upper}

In this section we provide the proof of \Cref{thm: beat-2-main}.
Specifically, we will construct an algorithm, that, takes as input a set $V$ of $n$ points, a set $T\subseteq V$ of $k$ terminals, and access to a metric $w$ on $V$, estimates the metric \ST cost to within a factor of $(2-\eps_0)$, for some universal constant $\eps_0$ that does not depend on $n$ and $k$, by performing $\tilde O(n^{12/7}+n^{6/7}\cdot k)$ queries.
This section is organized as follows.
First, in \Cref{subsec: with MST}, we give an algorithm that, assumes that the induced metric of $w$ on $T$ is known to us upfront, estimates the metric \ST cost to within a factor of $(2-\eps_0)$ with $\tilde O(n^{3/2}+n^{3/4}\cdot k)$ queries. The detail of some critical subroutine of the algorithm is provided in \Cref{sec: set cover}. Then in \Cref{subsec: without MST}, we show how to remove the assumption that the induced metric of $w$ on $T$ is given, while still attaining a slighly worse (while still sublinear) query complexity $\tilde O(n^{12/7}+n^{6/7}\cdot k)$.

\subsection{An Algorithm with the Terminal-Induced Metric Given Upfront}
\label{subsec: with MST}

In this subsection, we assume that the metric on terminals induced by $w$ is given to us upfront, and give an algorithm that estimates the metric \ST cost to within a factor of $(2-\eps_0)$.
We first give a high-level overview, and then describe the algorithm in detail and provide its analysis.

\subsubsection*{Overview of the algorithm}

We start by constructing a minimum spanning tree over the terminals, say $\tau^*$. Let $\mst$ denote the weight of $\tau^*$, so $\stcost(V,T,w)\ge \mst/2$.
The rest of the algorithm focuses on gathering ``local evidence'' to ascertain that $\stcost(V,T,w)$ is bounded away from $\mst$. If the algorithm fails to find the evidence to support this assertion, then we will be able to claim that $\stcost(V,T,w)$ is bounded away from $\mst/2$.

We now describe what constitutes this local evidence. At a high-level, it is some property of the metric $w$ that allows us to locally ``restructure'' the minimum terminal spanning tree maintaining connectivity among the terminals while reducing its total cost. To get some intuition for this process, let us consider a metric where all distances are $1$ or $2$. Suppose that the distances between all terminals are $2$, the distances between all Steiner vertices are $2$, and the distances between a terminal and a Steiner vertex is either $1$ or $2$. Clearly, $\mst=2k-2$. Assume that a Steiner vertex $v$ is at distance $1$ to three terminals $u_1,u_2,u_3$. Now if we remove two edges from $\tau^*$ such that terminals $u_1,u_2,u_3$ lie in different connected subtrees, and then add the edges $(v,u_1),(v,u_2),$ and $(v,u_3)$, then we get another Steiner Tree that now contains $v$. Note that, in this process we have deleted two edges of cost $2$ each and added three edges of cost $1$ each, so essentially the total cost decreases by $1$. We view this ``Steiner vertex $v$ connects to terminals $u_1$, $u_2$, $u_3$ via length-$1$ edges'' structure as a ``local evidence that separates $\stcost(V,T,w)$ from $\mst$''.

It is not hard to observe that, this type of evidence is both local and aggregatable, e.g., if a Steiner vertex $v$ is at distance $1$ to terminals $u_1$-$u_3$ and another Steiner vertex $v'$ is at distance $1$ to terminals $u_4$-$u_7$, then we can ``save a total of $(4-3)+(6-4)=3$ units of cost'' from $\mst$. The process of identifying the best way to aggregate these local cost-saving improvements is reminiscent of solving an instance of {\em Set Cover}. Specifically, if we define, for each Steiner vertex $v$, a set $W_v$ containing all terminals $u\in T$ with $w(u,v)=1$, then a good aggregation of the local evidence is a collection of a small number of sets $W_v$ that cover many terminals. In particular, if we denote $\wset=\set{W_v\mid v\notin T}$, then using similar ``local MST restructure''-type arguments, we can show that $\stcost(V,T,w)\le \mst-\Omega(k-\setcover(T,\wset))$, where $\setcover(T,\wset)$ is the minimum solution size of the Set Cover instance $(T,\wset)$. Therefore, our goal now is to estimate the value of $k-\setcover(T,\wset)$ to within an additive $\eps k$ factor (and some small multiplicative factor). We provide an $\tilde O(n^{3/2}+n^{3/4}\cdot k)$-query algorithm for this task in \Cref{sec: set cover}, and then show how to implement this algorithm when the terminal-induced metric is not given upfront, with a slightly worse query complexity $\tilde O(n^{12/7}+n^{6/7}\cdot k)$.

We next describe how the ideas outlined above for the special case of $(1,2)$-metric, can be extended to the general case. Let us consider the construction of the minimum spanning tree $\tau^*$ on $T$ using Kruskal's algorithm. 
Assume that the weight of every terminal-terminal edge is $(1+\eps)^i$ for some non-negative integer $i$. Then in the first round we add all weight-$1$ edges and obtain some connected components (called \emph{clusters}), and in the second round we add all weight-$(1\!+\!\eps)$ edges, and some clusters in the first round are merged into bigger clusters, etc. The main observation is that, in every round, we can use the Steiner vertices to locally restructure this cluster-merging step just as the special case. In particular, if there exists a Steiner vertex $v$ and three first-round clusters $U_1,U_2,U_3$, such that $U_1,U_2,U_3$ are merged in the second round, and $w(v,U_1),w(v,U_2),w(v,U_3)$ are close to $(1+\eps)/2$, then we can replace two weight-$(1\!+\!\eps)$ edges with three weight-roughly-$(1\!+\!\eps)/2$ edges, thereby saving the total cost by roughly $(1\!+\!\eps)/2$ without destroying the connectivity between the terminals in $U_1,U_2,U_3$. Therefore, the main framework of our algorithm is to compute the hierarchical structure of the terminal minimum spanning tree $\tau^*$ and use this set-cover-type algorithm at every ``level'' of $\tau^*$ to search for local evidence that there is a Steiner tree of cost significantly better than $\tau^*$.

There is one subtlety in the above algorithmic framework, which is that the cardinality-$2$ sets do not provide cost-saving. In particular, if we replace one terminal-terminal edge with two terminal-Steiner edge (of weight about half of that of a terminal-terminal edge), then the total cost will not decrease. More concretely, if we consider the metric $\wy$ defined in \Cref{sec: 5/3-lower}, which is the shortest-path metric induced by a complete binary tree with edge cost geometrically decreasing along with the levels, then when we construct Set Cover instances on different levels, we will only get cardinality-$2$ sets and ended up finding no local evidence at all, but in fact $\stcost(V,T,\wy)=\mst/2$ holds. To overcome this issue, we introduce another evidence-searching subroutine that goes beyond a single level of the hierarchical structure of $\tau^*$ while involving only $O(1)$ vertices. This is based on the observation that, in this very special case, although local evidence cannot be found at a single level, it can be found by looking at two consecutive levels and only focusing on sets of $O(1)$ vertices. It turns out that incorporating this subroutine into the above framework gives us the desired algorithm.

$\ $

We now describe the algorithm in detail. Recall that we are given an instance $(V,T,w)$ of the metric Steiner Tree problem and are allowed to perform queries to metric $w$. Also recall that the induced metric of $w$ on $T$ is also given to us upfront.
We use the following parameters: $\underline{\eps_0= 2^{-40}; \eps=2^{-20}}$.

\subsection*{Step 1. Computing an MST on $T$ and its hierarchical structure}

 We pre-process the instance $(V,T,w)$ as follows. 
Let $D=\max\set{w(u,u')\mid u,u'\in T}$.
While there exists a pair $u,u'\in T$ with $w(u,u')\le D/k^2$, we delete an arbitrary one of them from $T$ and $V$. We repeat this until all pairwise distances between vertices of $T$ are at least $D/k^2$. 
Let $T'$ be the resulting terminal set we get, and define $V'=(V\setminus T)\cup T'$.
We then scale the metric $w$ by defining another metric $w'$ on $V'$ such that for every pair $v,v'\in V$, $w'(v,v')=w(v,v')/\min\set{w(u,u')\mid u,u'\in T'}$, so now the distance (under $w'$) between every pair of terminals in $T'$ is at least $1$ and at most $k^2$. It is easy to verify that: (i) the metric \ST cost $\stcost(V',T',w')$ of instance $(V',T',w')$ is within factor $(1+O(1/k))$ of $\stcost(V,T,w)/\min\set{w(u,u')\mid u,u'\in T'}$; and (ii) every distance query to $w'$ can be simulated by a distance query to $w$.
Therefore, from now on we work with instance $(V',T',w')$, and we will construct an algorithm that with high probability estimates the value of $\stcost(V',T',w')$ to within a factor of $(2-2\eps_0)$.
Eventually, when the algorithm returns an estimate $X$ of $\stcost(V',T',w')$, we return $X\cdot\min\set{w(u,u')\mid u,u'\in T'}$ as the output estimate of $\stcost(V,T,w)$. 
It is easy to verify that the final output of the algorithm is with high probability a $(2-\eps_0)$-approximation of $\stcost(V,T,w)$.
For convenience, in the remainder of this section, we rename the vertex set $V'$, the terminal set $T'$ and the metric $w'$ by $V,T,w$, respectively.

Let $L=\ceil{\log_{1+\eps}k^2}$. For every pair $u,u'\in T$, we say that the edge $(u,u')$ is \emph{at level $i$} (or $(u,u')$ is an \emph{level-$i$} edge), iff $(1+\eps)^{i-1}\le w(u,u')< (1+\eps)^{i}$.
Clearly, every edge connecting a pair of terminals is at level at most $L$.
For each $1\le i\le L$, we define $H_i$ to be the graph on $T$ that contain all edges up to level $i$, and we define $H_0$ to be the empty graph on $T$. For each index $1\le i\le L$, we define $\sset_i$ as the collection of vertex sets of connected components of graph $H_{i-1}$. That is, each set in $\sset_i$ contains all vertices of some connected component of $H_{i-1}$. Clearly, each $\sset_i$ is a partition of $T$. 

Let $\sset=\bigcup_{1\le i\le L}\sset_i$. It is easy to verify that $\sset$ is a laminar family. That is, every pair $S,S'$ of sets in $\sset$, either $S\subseteq S'$, or $S'\subseteq S$, or $S\cap S=\emptyset$. We associate with $\sset$ a partition tree $\tset$ as follows. The vertex set of $\tset$ contains, for each set $S\in \sset$, a node $x_S$ representing the set $S$. The edge set of $\tset$ contains, for each pair $S,S'\in \sset$ such that $S\subseteq S'$ and $S,S'$ lie on consecutive levels, an edge $(x_S,x_{S'})$. In this case, we say that $x_{S'}$ is the parent node of $x_S$ (and $x_S$ is a child node of $x_{S'}$); similarly, we say that $S'$ is a parent set of $S$ (and $x_S$ is a child set of $S'$). Note that $\sset_L$ contains a single set, and its corresponding node in $\tset$ is designated as the root of $\tset$. It is easy to verify that $\tset$ is a tree.

Lastly, we compute a minimum spanning tree $\tau^*$ on $T$, and denote $\mst=w(\tau^*)$.
As the induced metric of $w$ on $T$ is given to us upfront, in this step we did not perform any additional queries.

\subsection*{Step 2. Finding local evidence using a set-cover-type subroutine}

We start by introducing a set-cover-type subroutine that we will use in this step.

\vspace{-8pt}
\paragraph{Algorithm $\algsetcover$.}
Let $(U,\wset)$ be an instance of the Set Cover problem, where $U$ is a collection of elements and $\wset$ is a collection of subsets of $U$. For convenience, throughout the paper, when we consider an instance $(U,\wset)$ of Set Cover, we always assume that $\wset$ contains, for each element $u\in U$, a singleton set $\set{u}$, so $|\wset|\ge |U|$, and if we denote by $\setcover(U,\wset)$ the size of the smallest set cover for the instance $(U,\wset)$, then $\setcover(U,\wset)\le |U|$.
The elements of $U$ and the number of sets in $\wset$ are known to us, but we do not know which elements each set of $\wset$ contains. We are allowed to perform queries to a \emph{membership oracle} of instance $(U,\wset)$, which is an oracle that, takes as input an element $u\in U$ and a set $W\in \wset$, returns whether or not $u$ belongs to $W$. A query to the membership oracle of instance $(U,\wset)$ is also called a \emph{membership query} to instance $(U,\wset)$.

For a collection $\wset$ of subsets of $U$, we denote by $\wset_{\ne 2}$ the collection that contains all sets in $\wset$ with size not equal to $2$.
For positive real numbers $X,Y,a,b$ with $a>1$, we say that $X$ is an {\em $(a,b)$-estimation of $Y$} iff $Y\le X\le aY+b$.

We use the following theorem, whose proof is deferred to \Cref{sec: set cover}. 

\begin{theorem}
\label{thm: sec cover}
There is a polynomial-time randomized algorithm called $\algsetcover$, that, given any instance $(U,\wset)$ of Set Cover and any constant $0<\eps<1$, with high probability, returns a $(4,\eps|U|)$-estimation of $\big(|U|-\setcover(U,\wset_{\ne 2})\big)$, by performing $O((|\wset|^{3/2}+|\wset|^{3/4}|U|) (\log|\wset|)^2/\eps^3 )$ membership queries to the instance $(U,\wset)$.
\end{theorem}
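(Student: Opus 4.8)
The plan is to reduce the estimation of $|U| - \setcover(U, \wset_{\ne 2})$ to a counting/sampling problem that is amenable to sublinear query algorithms, using the following structural fact: a minimum set cover of $(U, \wset_{\ne 2})$ that uses only sets of size $\ge 3$ plus singletons corresponds to a collection of disjoint "large-covered" groups, and the quantity $|U| - \setcover$ is exactly the total "savings" $\sum_{W} (|W| - 1)$ over the sets of size $\ge 3$ chosen in an optimal solution. So the first step is to argue that computing (or estimating) $|U| - \setcover(U, \wset_{\ne 2})$ is, up to the stated $(4, \eps|U|)$-slack, the same as estimating the maximum total savings achievable by a sub-collection of the size-$\ge 3$ sets — a coverage-type objective. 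Here I would invoke a greedy / LP-relaxation argument to show that a near-optimal such collection can be taken to have a bounded number of "heavy" sets, or more precisely that the savings are dominated either by a few very large sets or by many medium sets, and in either regime sampling suffices.

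Second, I would set up the sampling scheme. Sample $O(|\wset|^{1/2} \cdot \mathrm{polylog}/\eps^c)$ candidate sets $W \in \wset$ uniformly at random, and for each sampled $W$ estimate $|W|$ by querying membership of $O(|U| \cdot \mathrm{polylog}/|W_{\text{est}}|)$ — or rather by querying a random sample of elements against $W$, spending roughly $|U|/\mathrm{(threshold)}$ queries to distinguish "size $\ge 3$ and large" from "size $\le 2$". The total cost of this phase is where the $|\wset|^{3/4}|U|$ term will come from: we cannot afford to fully read every sampled set, so we use a two-level sampling (sample sets, then sample elements within each set) and a threshold on set size. For the large sets (size above a threshold $\tau \approx |U|/|\wset|^{1/4}$, say), there can be at most $|\wset|$ of them but their contribution is easy to estimate by direct sampling of elements; for sets below the threshold, we estimate the aggregate savings via the random sample of sets, using that each contributes $O(\tau)$ so a Chernoff bound (\Cref{lem: Chernoff}) gives concentration of the sampled sum scaled up by $|\wset|/(\text{sample size})$.

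Third, I would combine the estimates: the estimator for $|U| - \setcover(U,\wset_{\ne 2})$ is (total savings from heavy sets, directly estimated) plus (scaled-up sampled savings from light sets), and I would show this lies in $[\, |U|-\setcover,\; 4(|U|-\setcover) + \eps|U|\,]$. The factor $4$ comes from the gap between the additive "total savings" objective and the true set-cover value (disjointness of chosen sets loses a constant factor via a greedy argument, and the LP-vs-IP gap contributes another constant), while the additive $\eps|U|$ absorbs sampling error. A union bound over all sampled sets, each with failure probability $\mathrm{poly}(1/|\wset|)$, keeps the whole thing correct with high probability, and accounts for the $(\log|\wset|)^2$ factor in the query bound.

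The main obstacle I anticipate is the first step: cleanly relating $|U| - \setcover(U, \wset_{\ne 2})$ — an optimization over a potentially intricate combinatorial structure — to a simple additive savings quantity that sampling can estimate, while only losing a constant multiplicative factor and an $\eps|U|$ additive term. Getting disjointness of the chosen large sets (needed so that savings add up without double-counting elements) while not sacrificing more than a constant factor, and simultaneously arguing that a near-optimal solution can be well-approximated by examining only a sublinear random sample of the sets, is the delicate part; I expect this to require a careful bucketing of sets by size and a separate argument in each bucket, together with the observation that singletons are "free" so only the size-$\ge 3$ sets matter for the savings.
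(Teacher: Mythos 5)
Your proposal identifies the right target quantity — interpreting $|U|-\setcover(U,\wset_{\ne2})$ as "savings" of the form $\sum_W(|W|-1)$ over a disjoint sub-collection — and correctly flags the key difficulty (making that savings quantity estimable by sublinear sampling while controlling double-counting). But you do not resolve that difficulty, and the resolution is exactly the part the paper's proof hinges on: the paper introduces an auxiliary graph $H$ on the low-frequency elements, with an edge $(u,u')$ whenever some set contains both $u$ and $u'$, and proves (\Cref{lem:sc-mt}) that \emph{any} maximal matching $M$ in $H$ satisfies
\[
\tfrac12\bigl(|\ulow|-\setcover(\ulow,\wlow)\bigr)\le |M|\le |\ulow|-\setcover(\ulow,\wlow).
\]
This is the structural reduction that makes the problem tractable: it converts the optimization (choosing a small sub-collection of sets) into a quantity — the size of any maximal matching — that is known to be estimable with sublinear query complexity via the algorithm of \cite{Behnezhad21}, which explores neighborhoods of random vertices rather than sampling sets. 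The factor $4$ then comes from this factor $2$ plus a further factor $2$ incurred when $\wlow$ is split by set size into $\wset_1$ and $\wset_2$ (\Cref{obs: matching size}), not from an LP/IP gap.

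Your plan instead proposes to estimate the savings by sampling $O(|\wset|^{1/2})$ sets uniformly from $\wset$ and extrapolating. This will not work as stated: the sets realizing the optimal savings can be a vanishingly small fraction of $\wset$ (e.g., all of $|U|-\setcover$ might be attributable to a handful of large sets among $|\wset|$ sets, almost all of size $\le 2$), so a uniform sample of sets will typically see none of them and the extrapolated estimate is not a valid lower bound. Conversely, summing savings over \emph{all} size-$\ge3$ sets (the quantity that your two-level sampling would actually concentrate around) massively overcounts because of overlaps, and you give no mechanism for charging the overlap loss to only a constant factor. The element-frequency split (high vs.\ low) and the set-size split you gesture at are both present in the paper, but they are used to make the matching-based estimator efficient, not as a substitute for it. Without the matching lemma (or some comparable local, per-element certificate of savings), the step from "sampled savings" to an actual $(4,\eps|U|)$-estimate of $|U|-\setcover$ has a genuine gap.
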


We remark that improving the query complexity of the result above will immediately improve the query complexity for the $(2-\eps_0)$-estimation algorithm. Also, note that the number of queries needed by a naive algorithm to solve the estimation problem above would be $O(|U||\wset|)$, the size of the input description. So the theorem above provides an estimation algorithm that is sublinear in the size of the input description whenever $|\wset|^{3/2} = o(|U||\wset|)$.

For each index $1\le i\le L$ and each pair $u,u'\in T$, we let $w_i(u,u')=w(u,u')$ if $(u,u')$ is a level-$i$ edge; otherwise we let $w_i(u,u')=0$. 
Consider the minimum spanning tree $\tau^*$ computed in Step $1$. For each $1\le i\le L$, we say that level $i$ is \emph{light} iff $w_i(\tau^*)<\mst/(L\log n)$ (that is, the total weight of all level-$i$ edges in $\tau^*$ is less than $\mst/(L\log n)$); otherwise we say that level $i$ is \emph{heavy}. 
For a set $E$ of edges in $E(\tau^*)$, we define $w_i(E)=\sum_{(u,u')\in E}w_i(u,u')$ and call $w_i(E)$ the \emph{level-$i$ weight} of set $E$.
The following observation is immediate.

\begin{observation}
$\sum_{i: \text{ level }i\text{ is light }}w_i(\tau^*)\le \mst/\log n$.
\end{observation}

Before we describe the set-cover-type subroutine in detail, we give some intuition. We intend to find local evidence on different levels separately. For each $0\le i\le L$ and for each set $S\in \sset_i$, we think of vertices in $S$ as already connected via edges up to level $i$. So we will search for better ways to connect different sets in $\sset_i$ via Steiner vertices. However, for Steiner vertex $v$ and set $S\in \sset$, naively it takes $O(|S|)$ queries to compute $w(v,S)$, which we cannot afford as $|S|$ can be as large as $k$. Therefore, for each $S\in \sset_i$, we will first compute a subset $\tilde S\subseteq S$ as its ``representative'', such that $|\tilde S|$ is small, and the values $w(v,S)$ and $w(v,\tilde S)$ are close for all Steiner vertices $v$.

We now describe the set-cover-type subroutine in detail.
For each level $i$ that is heavy, we construct an instance $I_i=(U_i,\wset_i)$ of Set Cover as follows. Recall that $\sset_i$ is the collection of all level-$i$ sets.
We first compute, for each index $0\le i\le L$ and for each set $S\in \sset_i$, a maximal subset $\tilde S$ of $S$, such that the distance between every pair of terminals in $\tilde S$ is at least $\eps\cdot (1+\eps)^i$.
We define $\tilde\sset_i$ to be the collection that contains all such sets $\tilde S$ with $|\tilde S|\le (L\log^2 n)/\eps$.
The ground set $U_i$ in instance $I_i$ is defined as $U_i=\set{x_S\mid \tilde S\in \tilde\sset_i}$. 
The collection $\wset_i$ contains, for each vertex $v\in V\setminus T$, a set $W_i(v)$ that is defined as 
$W_i(v)=\set{x_S\mid w(v,\tilde S)\le (3/5)\cdot (1+\eps)^{i}}$. We use the following simple observations.

\begin{observation}
\label{obs: query_simulation}
$|U_i|\le k$, $|\wset_i|\le n-k$, and every membership query in the instance $(U_i,\wset_i)$ can be simulated by at most $(L\log^2 n)/\eps$ distance queries to the metric $w$.
\end{observation}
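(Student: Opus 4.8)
The plan is to establish each of the three assertions by directly unwinding the definitions introduced just above; the observation is pure bookkeeping, so I do not expect any genuine obstacle — the only thing to keep track of is that nothing in the simulation incurs hidden queries.

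\textbf{Sizes of $U_i$ and $\wset_i$.} For the first bound, I would recall that $U_i=\set{x_S\mid \tilde S\in\tilde\sset_i}$, so the elements of $U_i$ are in one-to-one correspondence with a subcollection of $\sset_i$. Since $\sset_i$ is the collection of vertex sets of the connected components of $H_{i-1}$, it is a partition of $T$, hence has at most $|T|=k$ parts; therefore $|U_i|\le|\sset_i|\le k$. For the second bound, $\wset_i$ contains one set $W_i(v)$ for each Steiner vertex $v\in V\setminus T$ (distinct vertices may yield the same set, which only decreases $|\wset_i|$), and $|V\setminus T|=n-k$, so $|\wset_i|\le n-k$.

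\textbf{Simulating a membership query.} Fix a membership query, which asks whether a given $x_S\in U_i$ lies in a given $W_i(v)\in\wset_i$. By the definition of $W_i(v)$, this holds if and only if $w(v,\tilde S)=\min\set{w(v,u)\mid u\in\tilde S}\le(3/5)\cdot(1+\eps)^i$. Hence it suffices to query the distance $w(v,u)$ for each $u\in\tilde S$ — that is, $|\tilde S|$ distance queries to $w$ — and compare the minimum against the fixed threshold $(3/5)\cdot(1+\eps)^i$. The threshold is a constant requiring no query, and the representative set $\tilde S$ was already fixed when $\tilde\sset_i$ was constructed in the preprocessing, so it requires no query either. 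Finally, $x_S\in U_i$ forces $\tilde S\in\tilde\sset_i$, and by construction every set of $\tilde\sset_i$ has size at most $(L\log^2 n)/\eps$; therefore the query is simulated by at most $(L\log^2 n)/\eps$ distance queries, as claimed.

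This observation is the bridge that will let us, later in this step, feed each instance $I_i=(U_i,\wset_i)$ into Algorithm $\algsetcover$ of \Cref{thm: sec cover} and convert its membership-query bound into a distance-query bound against $w$; there is no hard part, only the need to confirm that the per-query overhead is exactly the size of a representative set.
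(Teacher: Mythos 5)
The paper states this observation without proof, treating it as immediate from the definitions, and your unwinding is exactly the argument one would supply: $|U_i|\le|\sset_i|\le k$ since $\sset_i$ partitions $T$; $|\wset_i|\le|V\setminus T|=n-k$ since there is one set per Steiner vertex; and a membership query $x_S\in W_i(v)$ reduces to checking $\min_{u\in\tilde S}w(v,u)\le(3/5)(1+\eps)^i$, which costs $|\tilde S|\le(L\log^2 n)/\eps$ distance queries by the size cap defining $\tilde\sset_i$. Correct and essentially the same (and only) approach.
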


\begin{observation}
\label{obs: small sets enough}
For each level $i$ that is heavy, $|U_i|\ge (1-O(1/\log n))\cdot|\sset_i|$.
\end{observation}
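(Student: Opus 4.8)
The plan is to observe that $U_i$ has one node $x_S$ for exactly those $S\in\sset_i$ whose representative $\tilde S$ has $|\tilde S|\le M$, where $M:=(L\log^2 n)/\eps$; hence it suffices to show that for a heavy level $i$ the number of ``bad'' clusters $S\in\sset_i$ with $|\tilde S|>M$ is at most $(2/\log n)\,|\sset_i|$.

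I would first record a structural fact about any minimum spanning tree $\tau^*$ of $T$: for every $S\in\sset_i$ the edges of $\tau^*$ with both endpoints in $S$ form a subtree $\tau^*_S$ that spans $S$, and these subtrees are edge-disjoint, so $\sum_{S\in\sset_i}w(\tau^*_S)\le w(\tau^*)=\mst$. This holds because if the $\tau^*$-path $P$ between two terminals $a,b$ of a common component $S$ of $H_{i-1}$ used an edge $e$ with $w(e)\ge(1+\eps)^{i-1}$, then deleting $e$ splits $\tau^*$ into two parts separating $a$ from $b$, and the $H_{i-1}$-path from $a$ to $b$ must cross this cut via an edge of weight $<(1+\eps)^{i-1}\le w(e)$; exchanging it for $e$ contradicts minimality of $\tau^*$. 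Thus every edge of $P$ has weight $<(1+\eps)^{i-1}$ and hence lies in $H_{i-1}$, so $P$ stays inside the $H_{i-1}$-component $S$, proving $\tau^*[S]$ is connected.

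The quantitative heart is a packing bound. If $S$ is bad, then $S$ contains more than $M$ terminals (a superset of $\tilde S$) that are pairwise at distance at least $\eps(1+\eps)^i$. Doubling $\tau^*_S$ and taking an Euler tour yields a closed walk through $S$ of length $2w(\tau^*_S)$; shortcutting it to a cyclic tour on $\tilde S$ cannot increase its length, yet each of its edges joins two distinct points of $\tilde S$ and hence has length at least $\eps(1+\eps)^i$, so its length is at least $M\cdot\eps(1+\eps)^i=L\log^2 n\cdot(1+\eps)^i$. Therefore $w(\tau^*_S)\ge\tfrac12 L\log^2 n\cdot(1+\eps)^i$ for every bad $S$, and combined with $\sum_S w(\tau^*_S)\le\mst$ this gives $\#\{\text{bad }S\}\le 2\mst/\big(L\log^2 n\cdot(1+\eps)^i\big)$.

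Finally I would lower-bound $|\sset_i|$ from heaviness: the level-$i$ edges of $\tau^*$ number at most $|\sset_i|-1$ (each merges two components of $H_{i-1}$) and each has weight $<(1+\eps)^i$, so $w_i(\tau^*)<|\sset_i|\,(1+\eps)^i$; since level $i$ is heavy, $w_i(\tau^*)\ge\mst/(L\log n)$, whence $|\sset_i|>\mst/\big(L\log n\cdot(1+\eps)^i\big)$. Putting the two estimates together, $\#\{\text{bad }S\}<(2/\log n)\,|\sset_i|$, and since $|U_i|=|\sset_i|-\#\{\text{bad }S\}$ we conclude $|U_i|\ge(1-2/\log n)\,|\sset_i|=(1-O(1/\log n))\,|\sset_i|$. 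I expect the only delicate point to be the structural step — verifying that $\tau^*$ decomposes into subtrees that separately span the clusters of $\sset_i$, which is precisely what lets the Euler-tour estimate be summed over $\sset_i$ rather than applied to a single cluster; the remaining arithmetic is routine.
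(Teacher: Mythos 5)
Your proof is correct and establishes the same quantitative bound, but it executes the key packing step differently from the paper. The paper's proof is more global: it observes that $X := \bigcup_{S\in\sset_i}\tilde S$ is a subset of $T$ whose points are pairwise at distance at least $\eps(1+\eps)^i$ (within a single $\tilde S$ by construction, across different $S$ because distinct components of $H_{i-1}$ are separated by $\ge(1+\eps)^{i-1}>\eps(1+\eps)^i$), and then bounds $|X|$ in one shot by a packing argument against $\mst$; the number of bad clusters is then at most $|X|/M$ with $M=L\log^2 n/\eps$. You instead decompose $\tau^*$ into the edge-disjoint subtrees $\tau^*_S$ induced on the $H_{i-1}$-components — which requires the MST-exchange lemma showing each $\tau^*[S]$ is connected and spans $S$ — and run a per-cluster Euler-tour packing, charging $\ge \tfrac12 L\log^2 n\cdot(1+\eps)^i$ per bad cluster against the budget $\sum_S w(\tau^*_S)\le\mst$. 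Both arrive at $\#\{\text{bad }S\}\le 2\mst/(L\log^2 n\cdot(1+\eps)^i)$ and then divide by the same heaviness lower bound on $|\sset_i|$. Your structural lemma (sub-threshold edges of any MST form a forest spanning exactly the threshold-graph components) is a clean and standard fact and your exchange argument for it is sound — since the crossing $H_{i-1}$-edge has weight strictly less than $(1+\eps)^{i-1}\le w(e)$, there is no tie issue — but the paper sidesteps it entirely by packing $X$ against the full tree $\tau^*$ rather than component by component. The extra structure you prove doesn't buy anything here, though it is the sharper bookkeeping; the paper's route is shorter for the same conclusion.
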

\begin{proof}	
Recall that $U_i=\set{x_S\mid \tilde S\in \tilde\sset_i}$. It suffices to show that, if level $i$ is heavy, then there are at most $O(1/\log n)$ fraction of sets $S$ in $\sset_i$ with $|\tilde S|>(L\log^2 n)/\eps$. Define $X=\bigcup_{S\in \sset_i}\tilde S$. Note that every pair of vertices in $X$ are at distance at least $\eps\cdot (1+\eps)^i$ in $w$. Therefore, $|X|\le  1+\mst/(\eps\cdot (1+\eps)^i)\le 2\cdot \mst/(\eps\cdot (1+\eps)^i)$. On the other hand, since level $i$ is heavy, $w_i(\tau^*)\ge \mst/(L\log n)$, and so $|\sset_i|\ge \mst/(L\cdot\log n\cdot (1+\eps)^{i-1})$. Altogether,
\[
\frac{L\log^2 n}{\eps}\cdot\bigg|\set{S\in \sset_i\mid |\tilde S|>\frac{L\log^2 n}{\eps}}\bigg|\le 
|X|\le \frac{2\cdot \mst}{\eps\cdot (1+\eps)^i}\le |\sset_i|\cdot \frac{2L\cdot\log n}{\eps\cdot(1+\eps)},\]
and it follows that there are at most $O(1/\log n)$ fraction of sets $S$ in $\sset_i$ with $|\tilde S|>(L\log^2 n)/\eps$.
\end{proof}

Then for each $0\le i\le L$, we apply the algorithm $\algsetcover$ to instance $I_i=(U_i,\wset_i)$ constructed above, and obtain an estimate $X_i$ of $\big(|U_i|-\setcover(U_i,(\wset_i)_{\ne 2})\big)$. If 
$\sum_{0\le i\le L}(1+\eps)^i\cdot X_i>2^{30}\cdot \eps_0\cdot\mst$, then we return $(1-\eps_0)\cdot\mst$ as an estimate of $\stcost(V,T,w)$. Otherwise, we go to the next step. 
Since the algorithm in \Cref{thm: sec cover} performs $(|U_i|^{}|\wset_i|^{3/4}+|\wset_i|^{3/2})\cdot(\log(|U_i|+|\wset_i|))^{O(1)}\le \tilde O(n^{3/2}+n^{3/4}k^{})$ queries, using \Cref{obs: query_simulation}, we get that the algorithm performs $\tilde O(n^{3/2}+n^{3/4}k^{})\cdot O(L\log^2 n)=\tilde O(n^{3/2}+n^{3/4}k^{})$ queries on the metric $w$ in this step.

\subsubsection*{Analysis of Step 2 when the algorithm returns $(1-\eps_0)\cdot\mst$ as the estimate of $\stcost(V,T,w)$}

Before we proceed to describe the next steps of the algorithm, we first show in this subsection that, if we collected enough local evidence in this step, then indeed $\stcost(V,T,w)$ is bounded away from $\mst$. Specifically, we will show that, if 
$\sum_{0\le i\le L}(1+\eps)^i\cdot X_i>2^{30}\cdot\eps_0\cdot\mst$, then $\stcost(V,T,w)\le (1-\eps_0)\cdot\mst$. Since $\stcost(V,T,w)\ge \mst/2$, our estimate in this case is indeed a $(2-2\eps_0)$-approximation of $\stcost(V,T,w)$.
For each $0\le i\le L$, we define $Y_i=|U_i|-\setcover(U_i,(\wset_i)_{\ne 2})$.
Define $L'=\ceil{\log_{1+\eps}2}$, so $L'$ is a constant between $2^{19}$ and $2^{20}$ from the definition of $\eps$.
We start by proving the following claim.

\begin{claim}
\label{clm: set cover construct steiner tree}
For each $0\le i\le L-1$, there exists a set $E_i$ of edges, each connecting a terminal in $T$ to a Steiner vertex in $V\setminus T$, such that (i) the vertex sets of the connected components in graph $H_{i-1}\cup E_i$ are exactly the sets in $\sset_{i+L'-1}$; and (ii) $w(E_i)\le \big(\sum_{i\le s< i+L'}w_s(\tau^*)\big)- (1/20)\cdot (1+\eps)^i\cdot |Y_i|$.
\end{claim}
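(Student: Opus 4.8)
The plan is to build the edge set $E_i$ greedily by converting a good solution to the Set Cover instance $I_i=(U_i,\wset_i)$ into a local restructuring of the minimum spanning tree $\tau^*$ at levels $i$ through $i+L'-1$. First I would fix a set cover $\cset\subseteq(\wset_i)_{\ne 2}$ of $(U_i,(\wset_i)_{\ne 2})$ of size $\setcover(U_i,(\wset_i)_{\ne 2})$; since every cover uses sets of size $\ge 3$ (plus the mandatory singletons which we will not need), the number of ground elements covered ``for free'' by the non-singleton sets, namely $|U_i|-\setcover(U_i,(\wset_i)_{\ne 2})=Y_i$, will be the total savings we can extract. For each set $W_i(v)\in\cset$ that is not a singleton, $v$ is a Steiner vertex with $w(v,\tilde S)\le (3/5)(1+\eps)^i$ for every $x_S$ in $W_i(v)$; I pick one representative terminal $u_S\in\tilde S$ realizing (approximately) this distance, and add to $E_i$ the star edges $\{(v,u_S): x_S\in W_i(v)\}$. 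The key accounting point: if $W_i(v)$ covers $t\ge 3$ new sets $S$, then connecting their clusters via the star through $v$ uses $t$ edges of weight $\le(3/5)(1+\eps)^i$, whereas connecting $t$ clusters inside the MST at these levels costs at least $(t-1)$ edges of weight $\ge(1+\eps)^{i-1}$; the per-set saving is therefore a constant fraction of $(1+\eps)^i$, and summing over the cover yields a total saving of at least $(1/20)(1+\eps)^i|Y_i|$, which is the bound in (ii) after subtracting this from $\sum_{i\le s<i+L'}w_s(\tau^*)$.

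The structural statement (i) — that $H_{i-1}\cup E_i$ has exactly the connected components $\sset_{i+L'-1}$ — requires more care, and I expect this to be the main obstacle. The issue is that adding the Steiner stars only connects those clusters in $\sset_i$ that are hit by the chosen cover; clusters missed by $\cset$ (and pairs that merge via cardinality-2 events, which by design contribute nothing) must still be connected, and we must not over-connect beyond what happens by level $i+L'-1$. My approach is to process levels $i,i+1,\dots,i+L'-1$ one at a time: at level $i$ use the Set Cover solution to merge clusters through Steiner stars wherever possible, and for every remaining merge that $\tau^*$ performs at levels $i,\dots,i+L'-1$ simply reuse the corresponding terminal-terminal edge of $\tau^*$. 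Because $L'=\ceil{\log_{1+\eps}2}$, every level-$s$ edge with $i\le s<i+L'$ has weight in $[(1+\eps)^{i-1},2(1+\eps)^{i-1}]$, i.e. all weights in this window are within a factor $2$ of each other; this is exactly why we can afford to charge the Steiner-star edges (weight $\le(3/5)(1+\eps)^i$) against the deleted MST edges (weight $\ge(1+\eps)^{i-1}$) and still save a constant fraction, and why restricting the cover to the $L'$-level window loses nothing essential. I would argue by a union-find / Kruskal-style invariant that after this process the components are precisely those of $H_{i+L'-1}=$ the graph containing all edges up to level $i+L'-1$, whose components are by definition $\sset_{i+L'}$ — and here I should double-check the index: the claim states $\sset_{i+L'-1}$, which corresponds to components of $H_{i+L'-2}$, so I would align the window so that the merges realized are exactly those occurring at levels $i,\dots,i+L'-2$, i.e. use $L'-1$ levels of merging plus the base graph $H_{i-1}$.

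To make the savings bound clean I would handle the representative-set subtlety explicitly: $U_i$ indexes the sets $S$ whose pruned representative $\tilde S$ is small ($|\tilde S|\le(L\log^2 n)/\eps$), and for $x_S\in W_i(v)$ the guarantee is $w(v,\tilde S)\le (3/5)(1+\eps)^i$, which immediately gives a terminal in $S$ within that distance of $v$ — so the star edges are genuinely short. Then, deleting one spanning edge per cluster-merge realized by a Steiner star and inserting the star edges, the net change at level $i$ is at most $\big(\text{number of star edges}\big)\cdot(3/5)(1+\eps)^i - \big(\text{number of merges saved}\big)\cdot(1+\eps)^{i-1}$; since each non-singleton cover set of size $t$ yields $t$ star edges but saves $t-1\ge(2/3)t$ merges, and $\sum t$ over the cover is $\ge|U_i|-(\text{singletons used})$ while $\sum(t-1)=Y_i$, a short computation with $\eps=2^{-20}$ gives the per-unit saving $\ge(1/20)(1+\eps)^i$. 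Collecting this over the construction, and noting that all non-star edges used are genuine level-$s$ edges of $\tau^*$ for $i\le s<i+L'$, we get $w(E_i)\le\sum_{i\le s<i+L'}w_s(\tau^*)-(1/20)(1+\eps)^i|Y_i|$, establishing (ii). The remaining routine check is that the connected components behave as claimed, i.e. verifying the Kruskal invariant is maintained when an MST edge is replaced by a Steiner path of length $2$ (which preserves the partition), completing (i).
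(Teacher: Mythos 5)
Your high-level plan coincides with the paper's: convert an optimal cover of $(U_i,(\wset_i)_{\ne 2})$ into a local restructuring of $\tau^*$, charging Steiner-star edges of weight $\le (3/5)(1+\eps)^i$ against deleted level-$i$ MST edges of weight $\ge(1+\eps)^{i-1}$, with the $L'$-level window ensuring that the merges realized by the stars stay within what happens by level $i+L'-1$. (Your off-by-one worry about $\sset_{i+L'-1}$ and the fact that some of the retained edges are terminal--terminal are both consistent with what the paper itself does; neither is a real obstacle.) However, there is a genuine gap in the per-set accounting.

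You add ``the star edges $\{(v,u_S): x_S\in W_i(v)\}$'' and argue ``if $W_i(v)$ covers $t\ge 3$ new sets, ... per-set saving is a constant fraction of $(1+\eps)^i$''. But a set in an optimal cover of $(U_i,(\wset_i)_{\ne 2})$ has \emph{size} $\ge 3$; it need not contribute $\ge 3$ \emph{new} elements once earlier sets are accounted for. If $W_i(v)$ has $s$ elements but only $t'\in\{1,2\}$ of them are new, adding all $s$ star edges (or even just the $t'$ new ones) frees up only $t'-1$ MST edges, and the net change is a \emph{loss}: e.g., for $t'=2$, two edges of weight $\le(3/5)(1+\eps)^i$ replacing one edge of weight $\ge(1+\eps)^{i-1}$ increases the cost. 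The paper avoids this by (a) first pruning the cover to a subcollection $\widetilde\wset_i$ in which every retained set contributes at least two new elements, and (b) in the ``exactly two new'' case adding a \emph{third} short edge to an already-covered cluster, so the count becomes three edges added versus two deleted -- a genuine saving of roughly $(1/5)(1+\eps)^i$. Your sentence ``each non-singleton cover set of size $t$ yields $t$ star edges but saves $t-1\ge(2/3)t$ merges'' conflates set size with new elements: the merges saved are governed by new elements, where $t'-1\ge(2/3)t'$ can fail. The clean invariant is $\sum(t'-1)=Y_i$ over the sets with $\ge 2$ new elements; combined with per-set savings $\ge \tfrac{1}{10}(t'-1)(1+\eps)^i$ (the minimum ratio is attained at $t'=3$, and the $t'=2$ case with the extra edge gives ratio $\tfrac15$), this recovers $(1/20)(1+\eps)^i Y_i$. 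Without the pruning and the extra-edge trick, inequality (ii) does not follow.
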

\begin{proof}
Fix an index $0\le i\le L-1$ and consider the instance $I_i=(U_i,(\wset_i)_{\ne 2})$.
Let $\wset^*_i$ be an optimal set cover of instance $(U_i,(\wset_i)_{\ne 2})$. 
We compute a sub-collection $\widetilde \wset_i$ of $\wset^*_i$ as follows. 
We process the sets of $\wset^*_i$ in an arbitrary order. Upon processing each set in $\wset^*_i$, we add it into to $\widetilde \wset_i$ iff the set contains at least two elements that are not contained in all previously processed sets in $\wset^*_i$. Denote the resulting set by  $\widetilde \wset_i=\set{W_i(v_1),\ldots,W_i(v_r)}$, where the sets are indexed according to the order in which they are added to $\widetilde \wset_i$. For each $1\le j\le r-1$, define $U_i(v_j)=W_i(v_j)\setminus \big(\bigcup_{1\le t\le j-1}W_i(v_t)\big)$. From the above discussion, sets $U_i(v_1),\ldots,U_i(v_r)$ are mutually disjoint and each containing at least two elements.

On the one hand, we show that $\sum_{1\le j\le r}|U_i(v_j)|\ge 2\cdot |Y_i|$. In fact, in the process of iteratively processing the sets of $\wset^*_i$ to obtain a subcollection $\widetilde \wset_i$, every set that is not added into $\widetilde \wset_i$ contains exactly one element that does not lie in previously processed sets. Therefore, at least $Y_i$ sets are eventually added to $\widetilde \wset_i$. Note that $|U_i(v_j)|\ge 2$ for each $1\le j\le r$, we get that $\sum_{1\le j\le r}|U_i(v_j)|\ge 2\cdot |Y_i|$.

On the other hand, we construct the set $E_i$ of edges via the following iterative process. Throughout, we maintain a set $\hat E$ of edges, that is initialized to be the set of all edges in $\tau^*$ from level $i$ to level $i+L'$ (so the initial total weight of $\hat E$ is $\sum_{i\le s< i+L'}w_s(\tau^*)$). 
We will ensure that set $\hat E$ always satisfies the property (i) in the claim. That is, the vertex sets of the connected components in graph $H_{i-1}\cup \hat E$ are exactly the sets in $\sset_{i+L'-1}$.
We iteratively process Steiner vertices $v_1,\ldots, v_r$ while modifying set $\hat E_i$ as follows. Consider now the iteration of processing $v_j$ for some $1\le j\le r$. Denote $U_i(v_j)=\set{S_1,\ldots, S_p}$, where $S_1,\ldots, S_p\in \sset_{i-1}$. 
Clearly, sets $S_1,\ldots, S_p$ are subsets of the same set in $\sset_{i+L'-1}$, as each pair of them is at distance at most $(6/5)\cdot (1+\eps)^{i}<(1+\eps)^{i+L'-1}$ in $w$.
Moreover, from the definition of $W_i(v_j)$, for each $1\le q\le p$, there exists a terminal $u_q\in S_q$ such that $w(v_j,u_q)\le (3/5)\cdot (1+\eps)^{i}$.
We distinguish between the following two cases.

Case 1. $|U_i(v_j)|\ge 3$. We simply add edges $(v_j,u_1),\ldots,(v_j,u_p)$ into the set $\hat E$. Since initially set $\hat E$ contains all edges of $E(\tau^*)$ at level $i$, and since elements in $U_i(v_j)$ do not belong to any other set in $\set{U_i(v_1),\ldots,U_i(v_r)}$, it is easy to see that, we can delete $(p-1)$ edges from the current set $\hat E$ that are level-$i$ edges of $E(\tau^*)$, such that the resulting set $\hat E$ still satisfies property (i) in the claim.

Case 2. $|U_i(v_j)|= 2$. Since $|W_i(v_j)|\ge 3$, there must exist another set $S_0\in \sset_i$ such that element $x_{S_0}$ is contained in $W_i(v_j)$ and some previous set $W_i(v_{j'})$ (for some $j'<j$), and so there exists a terminal $u_0\in S_0$ with $w(v_j,u_0)\le (3/5)\cdot (1+\eps)^{i}$. We simply add edges $(v_j,u_0),(v_j,u_1),(v_j,u_2)$ into the set $\hat E$. For similar reasons, it is easy to see that we can delete $2$ edges from the current set $\hat E$ that are level-$i$ edges of $E(\tau^*)$, such that the resulting set $\hat E$ still satisfies property (i) in the claim.

In either case, we add $t_j$ edges into set $\hat E$ and delete $(t_j-1)$ from set $\hat E$, for some $t_j\ge 3$ (and in fact $t_j\ge |U_i(v_j)|$). Since the edges that are added to set $\hat E$ have weight at most $(3/5)\cdot (1+\eps)^{i}$, and the edges that are deleted from set $\hat E$ have weight at least $(1+\eps)^{i-1}$. In the iteration of processing $v_j$, the total weight of $\hat E$ decreases by at least 
$(t_j-1)\cdot (1+\eps)^{i-1}-t_j\cdot\big((3/5)\cdot (1+\eps)^{i}\big)\ge (1/40)\cdot t_j\cdot (1+\eps)^{i}$, as $t_j\ge 3$. Therefore, the accumulative decrease of the total weight of $\hat E$ has weight is at least \[\sum_{1\le j\le r}\frac{t_j\cdot (1+\eps)^{i}}{40}\ge \sum_{1\le j\le r}\frac{|U_i(v_j)|\cdot (1+\eps)^{i}}{40}
\ge  \frac{ |Y_i|\cdot (1+\eps)^{i}}{20}.\]
Denote the resulting set $\hat E$ by $E_i$, and the claim now follows.
\end{proof}

We now use \Cref{clm: set cover construct steiner tree} to complete the analysis of Step 2. For each index $0\le j\le L'-1$, we define $\alpha_j=\sum_{0\le i\le L:\text{ }i\equiv j (\text{mod } L')}Y_i\cdot (1+\eps)^i$. Clearly, 
$\sum_{0\le j< L'}\alpha_j=\sum_{0\le i\le L}(1+\eps)^i\cdot Y_i$, and so there exists some $0\le j^*\le L'-1$ such that $\alpha_{j^*}\ge (1/L')\cdot\sum_{0\le i\le L}(1+\eps)^i\cdot Y_i$.
We now define $E'$ as the set that contains (i) all edges of $\tau^*$ that are at level $0,1,\ldots,j^*-1$; and (ii) all edges of $E_{j^*}, E_{j^*+L'},E_{j^*+2L'},\ldots,$ that are given by \Cref{clm: set cover construct steiner tree}. From \Cref{clm: set cover construct steiner tree}, it is easy to verify that the graph induced by edges of $E'$ is a Steiner tree of instance $(V,T,w)$, and moreover, 
\[
\begin{split}
w(E') & \le w(\tau^*)-\frac{1}{20}\cdot\alpha_{j^*}\\
& \le w(\tau^*)-\frac{1}{20 L'}\cdot \sum_{0\le i\le L}(1+\eps)^i\cdot Y_i\\
& \le w(\tau^*)-\frac{1}{20 L'}\cdot \sum_{0\le i\le L}(1+\eps)^i\cdot \frac{X_i-\eps |U_i|}{4}\\
& = w(\tau^*)-\frac{1}{20 L'}\cdot\bigg(\sum_{0\le i\le L} \frac{(1+\eps)^i\cdot X_i}{4}-\sum_{0\le i\le L} \frac{(1+\eps)^i\cdot\eps|U_i|}{4}\bigg)\\
& \le w(\tau^*)-\frac{1}{20 L'}\cdot\bigg( \frac{2^{30}\cdot\eps_0\cdot\mst}{4}-\eps\cdot \mst\bigg) \le (1-\eps_0)\cdot w(\tau^*),
\end{split}
\]
according to the definition of $\eps,\eps_0$ and the fact that $2^{19}\le L'\le 2^{20}$. This shows that our estimate in this case is indeed a $(2-2\eps_0)$-approximation of $\stcost(V,T,w)$.

\subsection*{Step 3. Finding local evidence using a $4$-vertex subroutine}

In the third and last step, we focus on finding one specific type of local evidence, by querying distances related to groups of $4$ vertices.

Recall that we have computed a laminar family $\sset$ of subsets of terminals in $T$ and its partition tree $\tset$. We say that a node $x_S$ in $\tset$ is \emph{good} iff $x_S$ has exactly two children in $\tset$, and each child node of $x_S$ also has exactly two children in $\tset$. In this case, we also say that the corresponding set $S$ in $\sset$ is good.
Consider a good set $S\in \sset$. Let $S_1,S_2$ be its child sets, let $S_{11},S_{12}$ be the child sets of $S_1$, and let $S_{21},S_{22}$ be the child sets of $S_2$. We define the \emph{advantage} of set $S$, denoted by $\adv(S)$, as follows. 
We define $w^*(S)=w(S_{11},S_{12})+w(S_{21},S_{22})+w(S_{1},S_{2})$, that is, the total edge weight in $\tau^*$ that is used to connect the four sets $S_{11},S_{12},S_{21},S_{22}$ into a single set $S$.
We say that a set $Y$ \emph{represents} $S$, iff $Y$ contains exactly four terminals $u_{11},u_{12},u_{21},u_{22}\in T$, such that $u_{11}\in S_{11}$, $u_{12}\in S_{12}$, $u_{21}\in S_{21}$, and $u_{22}\in S_{22}$.
For a set $Y$ that represents $S$, we define $\adv(S,Y)=w^*(S)-\min_{v\in V\setminus T}\set{\stcost(Y\cup\set{v},Y,w)}$, so $\adv(S,Y)$ is the maximum cost reduction (local evidence) that can be achieved with the help of any single Steiner vertex.
We define $\adv(S)=\max_{Y}\set{\adv(S,Y)}$.
Intuitively, in this step we are searching for the benefit of utilizing one Steiner vertex to restructure a specific type of $2$-level local structure in $\tau^*$.


We now describe the algorithm in this step. 
We denote by $\sset_g$ the collection of all good sets in $\sset$. For each $0\le i\le L$, let $\sset^{i}_g$ be the set of all level-$i$ good sets. For each good set $S\in \sset^i_g$, similar to Step 2, we compute a maximal subset $\tilde S$ of $S$, such that the distance between every pair of terminals in $\tilde S$ is at least $\eps\cdot (1+\eps)^i$.
We then define 
$$A_i=\sum_{S\in \sset^i_g: |\tilde S|\le (L\log^2 n/\eps)}\adv(S)\cdot \mathbf{1}\big[\adv(S)\ge \eps^{3/4}(1+\eps)^i\big].$$ 
However, we are unable to compute $A_i$ using few queries, as the number of sets $S$ with $|\tilde S|\le (L\log^2 n/\eps)$ can be large, and for each such set, computing $\adv(S)$ takes $\Omega(n)$ queries since we need to try all Steiner vertuces. To get around this obstacle, we will compute, for each $i$, an estimate $B_i$ of $A_i$ as follows. Let $\hat \sset^i_g$ be the collection of all level-$i$ good sets $S$ with $|\tilde S|\le (L\log^2 n)/\eps$.
We sample $(\log n)/\eps^{10}$ sets in $\hat \sset^i_g$.
For each sampled set $S\in \hat \sset^i_g$, we first query all distances between any terminal in $\tilde S$ and any vertex in $V\setminus T$. We then try all four-terminal sets $Y$, such that $Y\subseteq \tilde S$ and $Y$ represents $S$ (note that there are at most $O((L\log^2 n/\eps)^4)$ such sets), and compute $\adv(S,Y)$ using the acquired distance information. We then let $\adv (S)$ be the maximum over all values $\adv(S,Y)$ that we computed.
We then let $B_i$ be the sum of $\adv (S)$ for all sampled sets $S$ such that $\adv (S)>(\eps^{3/4}/2)\cdot (1+\eps)^i$, namely
$$B_i=\sum_{S\text{ sampled}}\bigg((\eps^{3/4}/2)\cdot(1+\eps)^i\bigg)\cdot \mathbf{1}\big[\adv(S)\ge (\eps^{3/4}/2)\cdot(1+\eps)^i\big].$$ 
Finally, we compute
$\sum_{0\le i\le L}B_i\cdot |\hat \sset^i_g|/(\log n/\eps^{10})$. If it is greater than $5\eps^{3/4}\cdot \mst$, then we return $(1-\eps_0)\cdot\mst$ as an estimate of $\stcost(V,T,w)$. Otherwise, we return $\mst$ as an estimate of $\stcost(V,T,w)$. This completes the description of Step 3, and also completes the description of the whole algorithm.
In Step $3$, we performed in total $\big((\log n)/\eps^{10}\big)\cdot O((L\log^2 n/\eps)^4)\cdot O(n)=\tilde O(n)$ queries. Overall, the algorithm performs $\tilde O(n^{3/2}+n^{3/4}k^{})+\tilde O(n)=\tilde O(n^{3/2}+n^{3/4}k^{})$ queries.

\subsubsection*{Analysis of Step 3 when the algorithm returns $(1-\eps_0)\cdot\mst$ as the estimate of $\stcost(V,T,w)$} 

We now show that, if we collected enough local evidence in this step, then indeed $\stcost(V,T,w)$ is bounded away from $\mst$. Specifically, we will show that, with high probability, if $\sum_{0\le i\le L}B_i\cdot |\hat \sset^i_g|/(\log n/\eps^{10})>5\eps^{3/4}\cdot \mst$, then $\stcost(V,T,w)\le (1-\eps_0)\cdot\mst$.
Since $\stcost(V,T,w)\ge \mst/2$, this implies that our estimate in this case is indeed a $(2-2\eps_0)$-approximation of $\stcost(V,T,w)$.

We first show that, if $\sum_{0\le i\le L}A_i>(2\eps_0)\cdot \mst$ holds, then $\stcost(V,T,w)\le (1-\eps_0)\cdot \mst$. 
Note that we have $\sum_{0\le i\le L}A_i=\sum_{i\text{ even}}A_i+\sum_{i\text{ odd}}A_i$. We assume that $\sum_{i\text{ even}}A_i>\eps_0\cdot \mst$ (the case where $\sum_{i\text{ odd}}A_i>\eps_0\cdot \mst$ is symmetric). Consider now the minimum spanning tree $\tau^*$ computed in Step 1. We will iteratively modify $\tau^*$ by processing good sets in $\sset_g$ and eventually obtain a \St of instance $(V,T,w)$, such that the total cost decreases by at least $\sum_{i\text{ even}}A_i$.

We now formally describe the iterative modification process. 
Throughout the process, we will maintain a \St $\tau$ of instance $(V,T,w)$, that is initialized to be $\tau^*$.
Let $\sset^{e}_g$ be the set of all good sets $S$ at an even-index level, such that $|\tilde S|\le (L\log^2 n)/\eps$. In each iteration, we pick a set $S\in \sset^{e}_g$ that is at the lowest level, and after processing $S$ we discard it from $\sset^{e}_g$. We now describe the iteration of processing set $S$. Let sets $S_1,S_2,S_{11},S_{12},S_{21},S_{22}$ be defined as before. Before this iteration, each of these six sets induce a connected subgraph of the current tree $\tau$, and they are connected in $\tau$ by an edge $e_1$ connecting $S_{11}$ to $S_{12}$, an edge $e_2$ connecting $S_{21}$ to $S_{22}$, and an edge $e$ connecting $S_{1}$ to $S_{2}$. Clearly, the total cost of these three edges is at most $w^*(S)$, by the definition of $w^*(S)$. Consider now the set $Y$ that represents $S$ such that $\adv(S)=\adv(S,Y)$. Let $E'_Y$ be the set of edges in the \St that achieves $\adv(S,Y)$. In this iteration we simply replace edges $e,e_1,e_2$ with edges in $E'_Y$. It is easy to see that the resulting tree $\tau$ is still a \St of instance $(V,T,w)$, and the decrease of total weight is $w^*(S)-w(E'_Y)=\adv(S,Y)=\adv(S)$. Therefore, after processing all sets in $\sset^{e}_g$ in this way, we obtain a \St of instance $(V,T,w)$ with total cost at most $w(\tau^*)-\sum_{S\in \sset^{e}_g}\adv(S)=w(\tau^*)-\sum_{i\text{ even}}A_i\le (1-\eps_0)\cdot \mst$. This shows that $\stcost(V,T,w)\le (1-\eps_0)\cdot \mst$.

We now show that, if $\sum_{0\le i\le L}B_i\cdot |\hat \sset^i_g|/(\log n/\eps^{10})>5\eps^{3/4}\cdot \mst$, then $\sum_{0\le i\le L}A_i>(2\eps_0)\cdot \mst$ holds, completing the analysis of Step 3 when the output is $(1-\eps_0)\cdot\mst$. 
We start with the following simple observation.

\begin{observation}
\label{obs: rep good for adv}
For each good set $S\in \sset^i_g$ and each set $Y$ that represents $S$, there exists a set $\tilde Y\subseteq \tilde S$ that represents $S$, such that $\adv(S,Y)\le \adv(S,\tilde Y)+8\eps\cdot (1+\eps)^i$.
\end{observation}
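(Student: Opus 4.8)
The plan is to establish Observation~\ref{obs: rep good for adv} by a direct substitution argument: start from an arbitrary set $Y$ representing $S$, replace each of its four terminals with a nearby terminal lying in $\tilde S$, and show that this replacement changes $\adv(S,Y)$ by at most the claimed additive error. The key fact we will exploit is that $\tilde S$ is a \emph{maximal} subset of $S$ with pairwise distances at least $\eps\cdot(1+\eps)^i$; consequently, for every terminal $u\in S$ there is a terminal $u'\in\tilde S$ with $w(u,u')<\eps\cdot(1+\eps)^i$ (otherwise $u$ could be added to $\tilde S$, contradicting maximality). Since $S_{11},S_{12},S_{21},S_{22}$ are all subsets of $S$ (as $S$ is good and sits at level $i$, all its terminals are within pairwise distance $(1+\eps)^i$ under $w$), each of the four terminals $u_{11},u_{12},u_{21},u_{22}$ of $Y$ has a proxy in $\tilde S$; moreover, each proxy lies in the same child-of-child set as the original terminal, because those child sets are themselves at level $< i$ and the proxy is within distance $\eps\cdot(1+\eps)^i$, hence within the same level-$(i-1)$-or-lower cluster. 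This gives a set $\tilde Y\subseteq\tilde S$ that represents $S$.

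**The comparison step.** We must bound $|\adv(S,Y)-\adv(S,\tilde Y)|$. Recall $\adv(S,Y)=w^*(S)-\min_{v\notin T}\stcost(Y\cup\{v\},Y,w)$; since $w^*(S)$ does not depend on $Y$, it suffices to bound the change in $\min_{v\notin T}\stcost(Y\cup\{v\},Y,w)$. For any fixed Steiner vertex $v$, an optimal Steiner tree on $Y\cup\{v\}$ has cost at most a sum of four (or fewer) edge weights incident to the four terminals of $Y$ (in the worst case a star centered at $v$, or a path). Replacing each terminal $u$ of $Y$ by its proxy $u'\in\tilde S$ changes each relevant distance $w(v,u)$ by at most $w(u,u')<\eps\cdot(1+\eps)^i$ by the triangle inequality, and similarly changes terminal-terminal distances within the four-point set by at most $2\eps\cdot(1+\eps)^i$. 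Since any Steiner tree on four points plus one Steiner vertex uses at most four edges, the total perturbation of $\min_v\stcost(\cdot)$ is at most $4\cdot 2\eps\cdot(1+\eps)^i=8\eps\cdot(1+\eps)^i$; running the same argument in both directions gives $\adv(S,Y)\le\adv(S,\tilde Y)+8\eps\cdot(1+\eps)^i$.

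**The main obstacle** I anticipate is the bookkeeping of exactly how many edges can appear in an optimal Steiner tree on the five-point set $Y\cup\{v\}$ and hence how many perturbed distances accumulate — one must be careful to get the constant $8$ rather than something larger, and to handle the case where the optimal tree on $Y\cup\{v\}$ does not use $v$ at all (in which case it is a spanning tree on four terminals, with three edges, and the same bound is comfortably met). A secondary subtlety is verifying that the proxy $u'$ genuinely lies in the correct grandchild set $S_{ab}$ rather than merely in $S$; this needs the observation that all of $S_{11},S_{12},S_{21},S_{22}$ are at level strictly below $i$, so any two terminals in distinct grandchild sets are at distance at least $(1+\eps)^{i-L'}$ or are separated by the clustering at some level $\ge i-1$ — making the $\eps\cdot(1+\eps)^i$-ball around $u$ fall entirely within $u$'s own grandchild cluster once $\eps$ is small enough, which it is by the choice $\eps=2^{-20}$. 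Once these two points are pinned down, the estimate follows by routine triangle-inequality manipulations.
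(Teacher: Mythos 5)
Your proposal is correct and takes essentially the same route as the paper: find, by maximality of $\tilde S$, a proxy $\tilde u_{ab}\in\tilde S$ within distance $\eps(1+\eps)^i$ of each $u_{ab}$, check that the proxy stays inside the same grandchild set $S_{ab}$ (so $\tilde Y$ still represents $S$), then perturb the Steiner tree achieving $\adv(S,Y)$ and charge the cost increase to the at most four edges, each moving by at most $2\eps(1+\eps)^i$. The one place you should tighten the exposition is the claim that the proxy lands in the correct grandchild set: the grandchild sets sit at level $i-2$ (i.e., they are components of $H_{i-3}$), so two terminals of $S$ in different grandchild sets are at distance at least $(1+\eps)^{i-3}$, and what you need is $\eps(1+\eps)^i<(1+\eps)^{i-3}$, equivalently $\eps(1+\eps)^3<1$, which holds for $\eps=2^{-20}$; your phrasing ``level-$(i-1)$-or-lower cluster'' and ``the $\eps(1+\eps)^i$-ball falls entirely within $u$'s grandchild cluster'' points at the right idea but is imprecise (the ball certainly contains points outside the cluster; what matters is that its intersection with $S$ stays inside). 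Also, the aside that ``all terminals of $S$ are within pairwise distance $(1+\eps)^i$'' is false in general (a level-$i$ component can have large diameter); fortunately you never use it. Finally, only one direction of the perturbation argument is needed to conclude $\adv(S,Y)\le\adv(S,\tilde Y)+8\eps(1+\eps)^i$.
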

\begin{proof}
Let $Y=\set{u_{11},u_{12},u_{21},u_{22}}$. Recall that $\tilde S$ is a maximal subset of $S$, such that the distance between every pair of terminals in $\tilde S$ is at least $\eps\cdot (1+\eps)^i$. So there exist vertices $\tilde u_{11}\in S_{11}$, such that $w(u_{11},\tilde u_{11})\le \eps\cdot (1+\eps)^i$.
Similarly, there exist vertices
$\tilde u_{12}\in S_{12}$, $\tilde u_{21}\in S_{21}$, $\tilde u_{22}\in S_{22}$ that are close to $u_{12},u_{21},u_{22}$, respectively. We simply let $\tilde Y=\set{\tilde u_{11},\tilde u_{12},\tilde u_{21},\tilde u_{22}}$. Assume that the set $Y$ achieves the cost $\adv(S,Y)$ via Steiner vertex $v$ and tree $\tau$. It is easy to observe that by replacing vertex $u_{ij}$ with $\tilde u_{ij}$, we obtain another Steiner tree $\tilde\tau$ that achieves advantage at least $\adv(S,Y)-8\eps\cdot (1+\eps)^i$. \Cref{obs: rep good for adv} now follows.
\end{proof}

Consider now the collection $\hat \sset^i_g$. Let $\sset'\subseteq \hat \sset^i_g$ contain all sets $S\in \hat \sset^i_g$ with $\adv(S)\ge \eps^{3/4}(1+\eps)^i$, and let $\sset''$ contain other sets. Since we have sampled $\log n/\eps^{10}$ sets in $\hat \sset^i_g$, from Chernoff Bound, 
\begin{itemize}
\item if $|\sset'|\ge \eps\cdot |\hat \sset^i_g|$, then with probability $(1-n^{-10})$ the number of sampled sets in $\sset'$ is within factor $(1+\eps)$ from $(\log n/\eps^{10})|\sset'|/|\hat \sset^i_g|$, so \[\frac{|\hat \sset^i_g|}{(\log n/\eps^{10})}\cdot\sum_{S\text{ sampled}, S\in \sset'}(\eps^{3/4}/2)\cdot(1+\eps)^i\le (1+\eps)\cdot \sum_{S\in \sset'}\adv(S);\]
\item if $|\sset'|< \eps\cdot |\hat \sset^i_g|$, then then with probability $(1-n^{-10})$, the number of sampled sets in $\sset'$ is at most $10\log n/\eps^9$, \[\frac{|\hat \sset^i_g|}{(\log n/\eps^{10})}\cdot\sum_{S\text{ sampled}, S\in \sset'}(\eps^{3/4}/2)\cdot(1+\eps)^i\le (10\eps)\cdot (\eps^{3/4}/2)\cdot(1+\eps)^i\cdot |\hat \sset^i_g|\le \eps\cdot w_i(\tau^*).\]
\end{itemize}
On the other hand, it is clear that
\[\frac{|\hat \sset^i_g|}{(\log n/\eps^{10})}\cdot\sum_{S \text{ sampled, } S\in \sset''}(\eps^{3/4}/2)\cdot(1+\eps)^i\le \eps^{3/4}\cdot w_i(\tau^*).
\]
Altogether, we get that, with probability $1-O(n^{-10})$, $A_i\ge (B_i-\eps^{3/4} \cdot w_i(\tau^*))/2$. Taking the union bound over all $0\le i\le L$, we get that, with probability $1-O(n^{-9})$,
$\sum_{0\le i\le L}A_i\ge \sum_{0\le i\le L}(B_i-\eps^{3/4} \cdot w_i(\tau^*))/2$. Therefore, if $\sum_{0\le i\le L}B_i\ge 5\eps^{3/4}\cdot \mst$, then $\sum_{0\le i\le L}A_i\ge 2\eps^{3/4}\cdot \mst\ge 2\eps_0\cdot \mst$.
This completes the analysis of Step 3 when the output is $(1-\eps_0)\cdot\mst$.

$\ $

Note that, if the algorithm did not return $(1-\eps_0)\cdot\mst$, then according to the algorithm, $\sum_{0\le i\le L}B_i< 5\eps^{3/4}\cdot \mst$. From the definition of $A_i$ and $B_i$ and similar arguments in the above analysis, we get that with high probability, $\sum_{0\le i\le L}A_i< 5\eps^{1/2}\cdot \mst$. Then from similar arguments in \Cref{obs: small sets enough}, we can then show that $\sum_{S\in \sset_g}w^*(S)\le O(\mst/\log n)$, and $\sum_{S\in \sset_g}\adv(S)\le 6\eps^{1/2}\cdot \mst$.

\subsection*{Analysis when the algorithm returns $\mst$ as the estimate of $\stcost(V,T,w)$}

Lastly, we show that, if the algorithm did not collect enough local evidence in Step 2 and Step 3, then $\stcost(V,T,w)$ is indeed bounded away from $\mst/2$. Specifically, we show that, if the algorithm returns $\mst$ as the estimate, then $\stcost(V,T,w)\ge \mst/(2-2\eps_0)$, and so in this case the estimate is indeed a $(2-2\eps_0)$-approximation of $\stcost(V,T,w)$.

Before diving into the details, we give some intuition. Consider the optimal \St $\tau$, and we will iteratively remove Steiner vertices from it such that eventually it becomes a spanning tree over terminals. In each iteration, we will try to replace some set $E$ of edges in the current tree with another set $E'$ of terminal-terminal edges, such that $w(E')\le (2-\Omega(\eps_0))\cdot w(E)$ holds and the resulting graph is still a Steiner Tree. Intuitively, if we cannot find sufficient local evidence in Step 2, then most Steiner vertices in $\tau$ can be eliminated such that the resulting tree $\tau$ satisfies that $w(\tau)\cdot (2-\Omega(\eps_0))\le \mst$.
However, it is also possible that $\tau$ behaves in a similar way as $\wy$ defined in \Cref{sec: 5/3-lower} and we cannot find Steiner vertices to eliminate in $\tau$. In this case, we will replace some set $E$ of edges in the current tree with a set $E'$ of terminal-terminal edges, such that $w(E')\le 2\cdot w(E)$ holds and simultaneously construct a set of four vertices that represents some set in $\sset$ as defined in Step 3, and achieves cost reduction comparable to $w(E')$. Eventually, we will collect sufficient four-vertex sets, indicating that the local evidence that should have been found by the $4$-vertex subroutine is large, contradicting the outcome of Step 3.

We now provide the complete proof.
Let $\tau_{\opt}$ be an optimal solution of instance $(V,T,w)$. We will iteratively modify tree $\tau_{\opt}$, such that eventually we obtain a spanning tree on $T$ whose total weight is at most $(2-2\eps_0)$ times the weight of $\tau_{\opt}$. Since such a tree has total weight at least $\mst$, we get that $w(\tau_{\opt})\ge \mst/(2-2\eps_0)$.

We now describe the tree-modification process.  Throughout, we maintain a \St $\tau$ of instance $(V,T,w)$, that is initialized to be $\tau_{\opt}$. 
Note that we can assume without loss of generality that $\tau_{\opt}$ does not contain degree-$2$ Steiner vertices (since such vertices can be suppressed without increasing the total weight of the tree), and whenever degree-$2$ Steiner vertices emerge in tree $\tau$, we immediately suppress them. We will also maintain two collections $\xset,\yset$ of sets of terminals in $T$, such that both $\xset$ and $\yset$ initially contain no sets, and (i) every set added into $\xset$ has size at least $3$, and will be denoted by $X_i(v)$ for some integer $0\le i\le L-1$ and some Steiner vertex $v$; and (ii) every set added into $\yset$ has size exactly $4$. 
In each iteration, we distinguish between the following cases.

\vspace{+8pt}

\textbf{Case 1.} {$\tau$ contains a Steiner vertex that is adjacent to at least three terminals.} Let $v$ be such a vertex. 
Let $u_1,\ldots,u_t$ be the terminals that are adjacent to $v$, such that distances $w(v,u_1)\le w(v,u_2)\le \cdots\le w(v,u_t)$. Intuitively, if the distances differ significantly, then we can replace the heaviest edge with a terminal-terminal edge; if the distances are almost the same and close to half of terminal-terminal distances, then they should contribute a set to the Set Cover instance on this level (defined in Step 2); if the distances are almost the same and significantly greater than half of terminal-terminal distances, then they can all be replaced by terminal-terminal edges, with the total weight increasing by a factor at most $(2-\Omega(\eps_0))$. Specifically, we distinguish between the following two cases.

\textbf{Case 1.1.} There exists a pair $i,j$ of indices such that $w(u_i,u_j)\le (2-4\eps_0)\cdot w(v,u_j)$. In this case, we simply replace the edge $(v,u_j)$ in $\tau$ with edge $(u_i,u_j)$. See \Cref{fig: case11} for an illustration.

\textbf{Case 1.2.} For every pair $i,j$ of indices, $w(u_i,u_j)> (2-4\eps_0)\cdot w(v,u_j)$. Denote $\ell=w(u,v_1)$. Observe that, in this case, $w(v,u_t)\le (1+5\eps_0)\cdot \ell$ must hold, since otherwise $w(u_1,u_t)\le w(v,u_1)+w(v,u_t)\le (1+\frac{1}{1+5\eps_0})\cdot w(v,u_t)\le  (2-4\eps_0)\cdot w(v,u_t)$, a contradiction to the assumption in this case. Also observe that, for every pair $i,j$, $w(u_i,u_j)> (2-4\eps_0)\cdot \ell$. We denote $i^*=\floor{\log_{1+\eps}\ell}$, and then define the set $X_{i^*}(v)=\set{u_1,\ldots,u_t}$ and add it into $\xset$ (note that $t\ge 3$). We then replace, for each $2\le i\le t$, edge $(v,u_i)$ with edge $(u_1,u_i)$. See \Cref{fig: case12} for an illustration.

\begin{figure}[h]
\centering
\subfigure[Case 1.1: before (left) and after (right).]{\scalebox{0.1}{\includegraphics{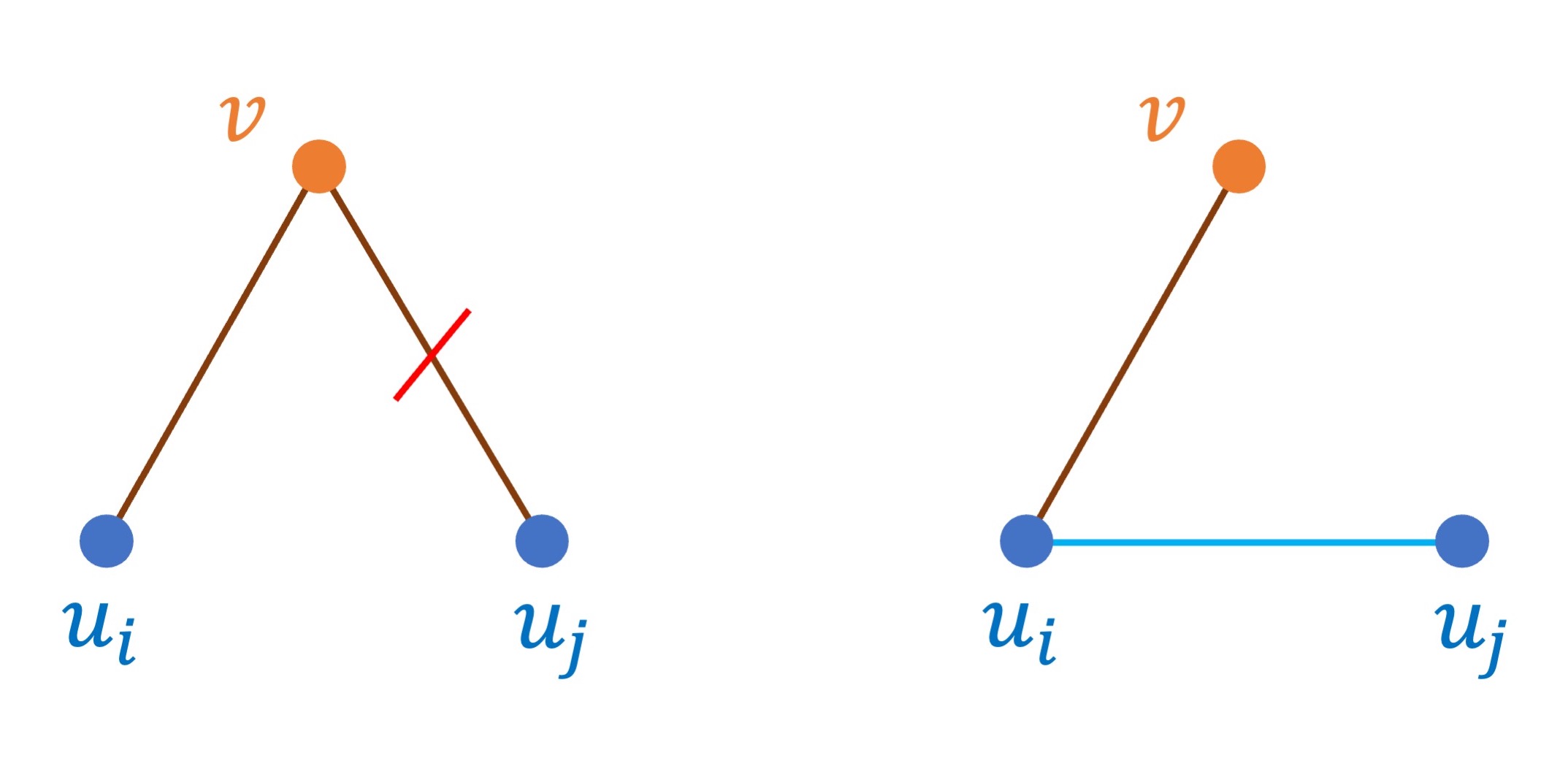}}\label{fig: case11}}
\hspace{0.5cm}
\subfigure[Case 1.2: deleted edges (red) and new edges (blue).]{
\scalebox{0.11}{\includegraphics{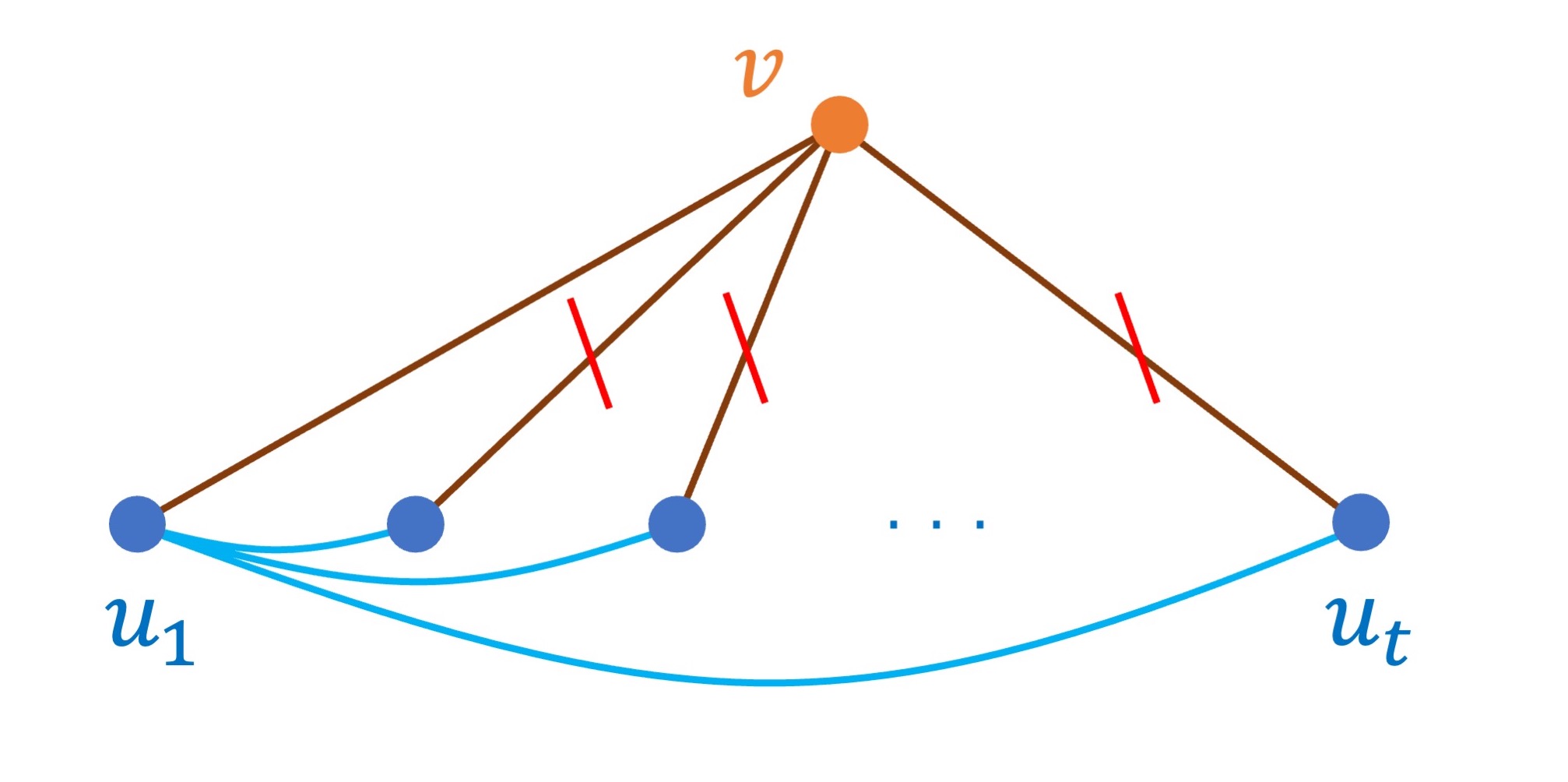}}\label{fig: case12}}
\caption{An illustration of edge replacement in Case 1.\label{fig: case_1}}
\end{figure}

Assume now that Case 1 does not happen, so every Steiner vertex is adjacent to at most two terminals. We root tree $\tau$ at an arbitrary Steiner vertex. Since $\tau$ does not contain degree-$2$ Steiner vertices, every height-$1$ Steiner vertex is incident to exactly two terminals (since otherwise it is either a leaf or a degree-$2$ Steiner vertex, a contradiction). 

Consider now any Steiner vertex $v$ of height $2$ in $\tau$. Let $u'_1,\ldots,u'_p$ be the terminals that $v$ is adjacent to, let $v_1,\ldots,v_t$ be the  height-$1$ Steiner vertices adjacent to $v$, and for each $1\le j\le t$, let $u^j_1,u^j_2$ be the two terminals adjacent to $v_j$, such that $w(v_j,u^j_1)\le w(v_j,u^j_2)$. See \Cref{fig: case2} for an illustration.
Since $v$ is not a degree-$2$ Steiner vertex in $\tau$, either $t\ge 2$, or $t= 1$ and $p\ge 1$. 

\begin{figure}[h]
	\centering
\includegraphics[scale=0.11]{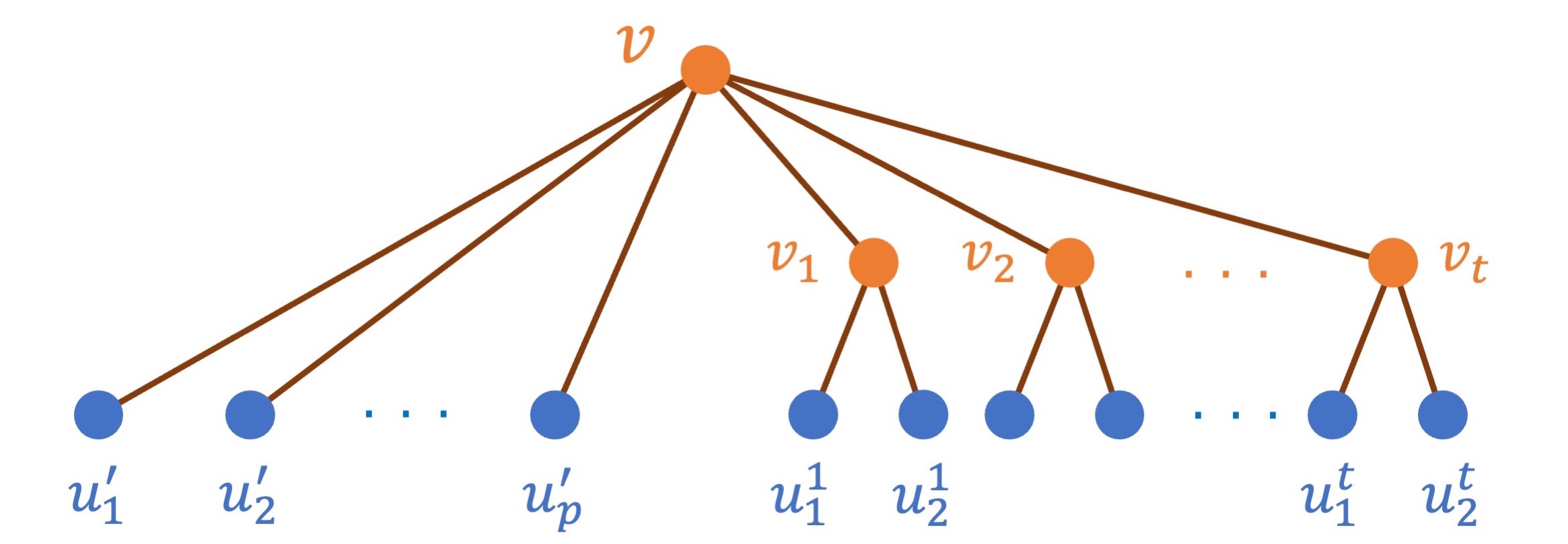}
	\caption{A schematic view of vertices and edges in Case 2.\label{fig: case2}}
\end{figure}

\textbf{Case 2.} {The tree-distances in $\tau$ between $v$ and terminals $u'_1,\ldots,u'_p, u^1_1,u^1_2,\ldots,u^t_1,u^t_2$ are not all within factor $(1+O(\eps_0))$.} Intuitively, in this case the subtree of $\tau$ rooted at $v$ is not balanced enough, and so we can always find some terminal-Steiner edge to replace with a terminal-terminal edge. In particular, we distinguish between the following six cases.

\textbf{Case 2.1.} There exists a pair $1\le i,j\le p$ such that $w(u'_i,u'_j)\le (2-4\eps_0)\cdot w(v,u'_j)$. Similar to Case 1.1, we replace edge $(v,u'_j)$ in $\tau$ with edge $(u'_i,u'_j)$ (see \Cref{fig: case21}).

\textbf{Case 2.2.} There exists an index $1\le j\le t$ such that $w(u^j_1,u^j_2)\le (2-4\eps_0)\cdot w(v,u^j_2)$. We replace edge $(v,u^j_2)$ in $\tau$ with edge $(u^j_1,u^j_2)$ (see \Cref{fig: case22}).

\textbf{Case 2.3.} There exist $1\le i\le p$, $1\le j\le t$ and $z\in \set{1,2}$, such that $w(u'_i,u^j_z)\le (2-4\eps_0)\cdot w(v,u'_i)$. We replace edge $(v,u'_i)$ in $\tau$ with edge $(u'_i,u^j_z)$ (see \Cref{fig: case23}).

\textbf{Case 2.4.} There exist indices $1\le i\le p$, $1\le j\le t$ and $z\in \set{1,2}$, such that $w(u'_i,u^j_z)\le (2-8\eps_0)\cdot \big(w(v,v_j)+w(v_j,u^j_z)\big)$. We replace edges $(v,v_j),(v_j,u^j_1),(v_j,u^j_2)$ in $\tau$ with edges $(u^j_1,u^j_2)$ and $(u'_i,u^j_z)$ (see \Cref{fig: case24}).

\textbf{Case 2.5.} There exist $1\le j,j'\le t$ and $z,z'\in \set{1,2}$, such that $w(u^j_z,u^{j'}_{z'})\le (2-8\eps_0)\cdot \big(w(v,v_j)+w(v_j,u^j_z)\big)$. We replace edges $(v,v_j),(v_j,u^j_1),(v_j,u^j_2)$ in $\tau$ with edges $(u^j_1,u^j_2)$ and $(u'_i,u^j_z)$, edge $(v_j,u^j_2)$ with edge $(u^j_1,u^j_2)$ (see \Cref{fig: case25}).

\textbf{Case 2.6.} There exist $1\le j\le t$ and $z\in \set{1,2}$, such that $w(v,u^j_z)\le (1-4\eps_0)\cdot \big(w(v,v_j)+w(v_j,u^j_z)\big)$. We replace edges $(v,v_j),(v_j,u^j_1),(v_j,u^j_2)$ in $\tau$ with edges $(u^j_1,u^j_2)$ and $(v,u^j_z)$.

\begin{figure}[h]
	\centering
	\subfigure[Case 2.1.]{\scalebox{0.09}{\includegraphics{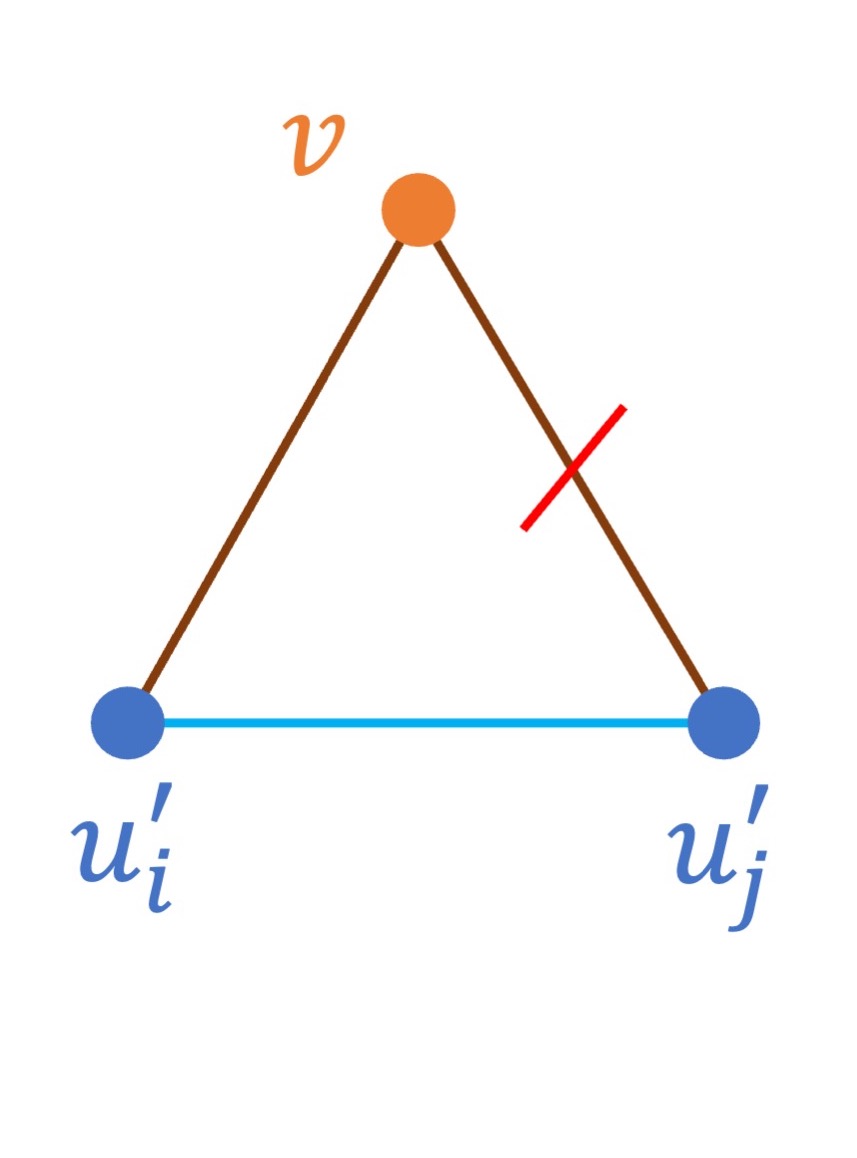}}\label{fig: case21}}
	\hspace{0.1cm}
	\subfigure[Case 2.2.]{
		\scalebox{0.09}{\includegraphics{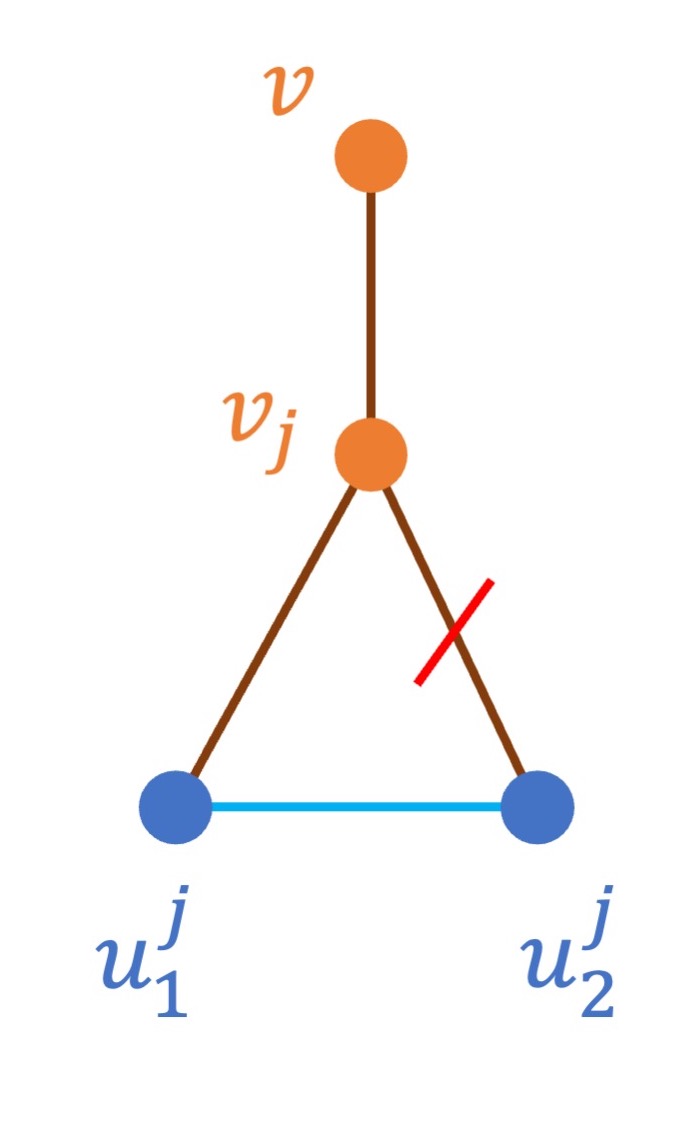}}\label{fig: case22}}
	\hspace{0.1cm}
	\subfigure[Case 2.3.]{
		\scalebox{0.09}{\includegraphics{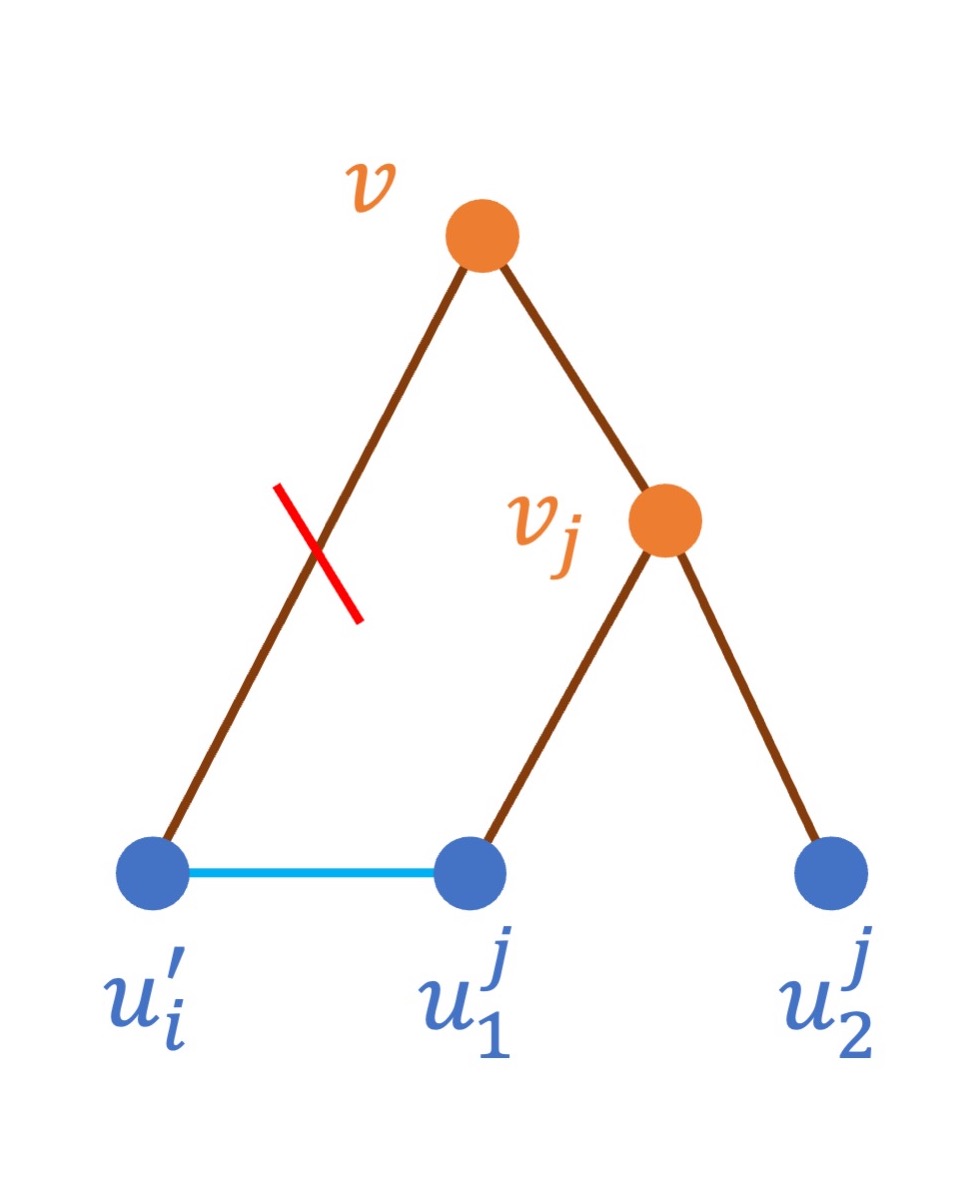}}\label{fig: case23}}
	\hspace{0.1cm}
	\subfigure[Case 2.4.]{
		\scalebox{0.09}{\includegraphics{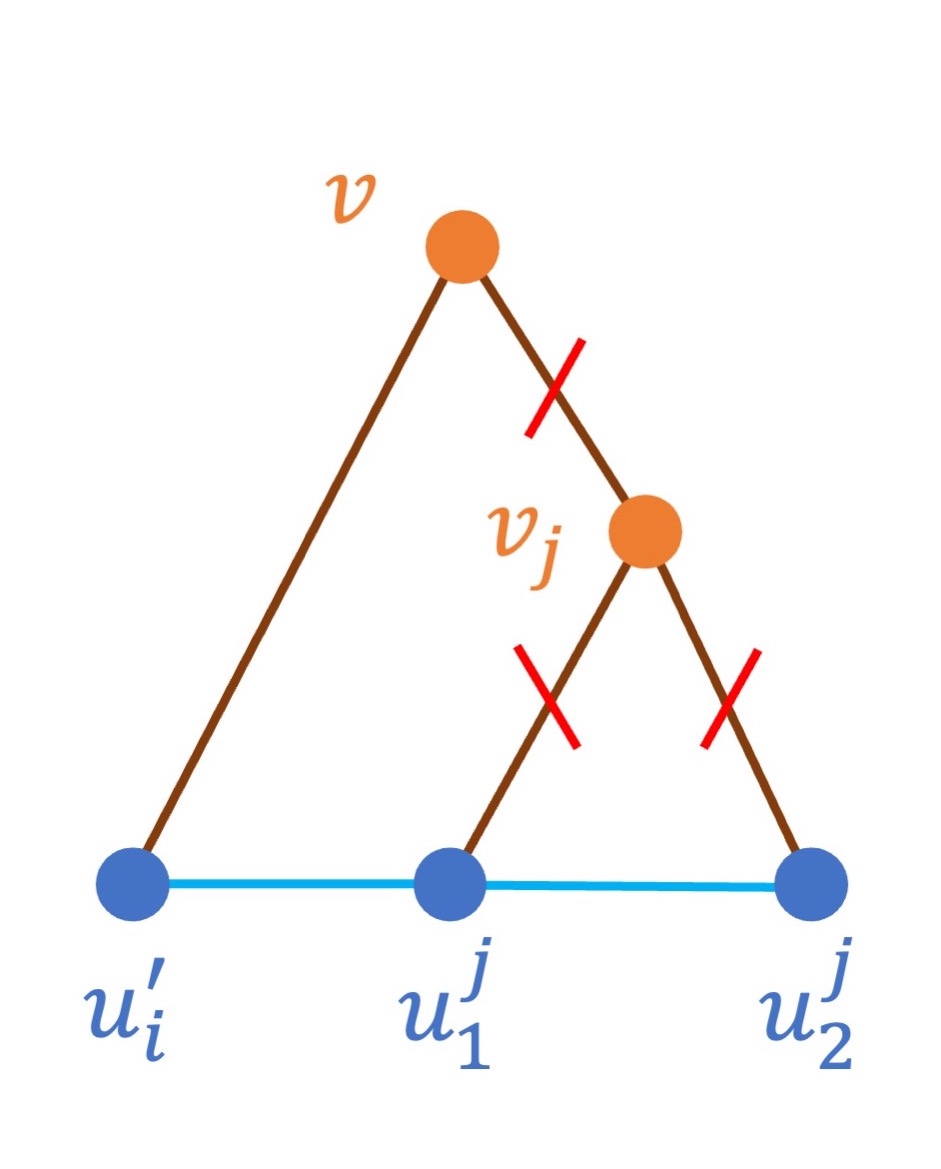}}\label{fig: case24}}
	\hspace{0.1cm}
	\subfigure[Case 2.5.]{
		\scalebox{0.09}{\includegraphics{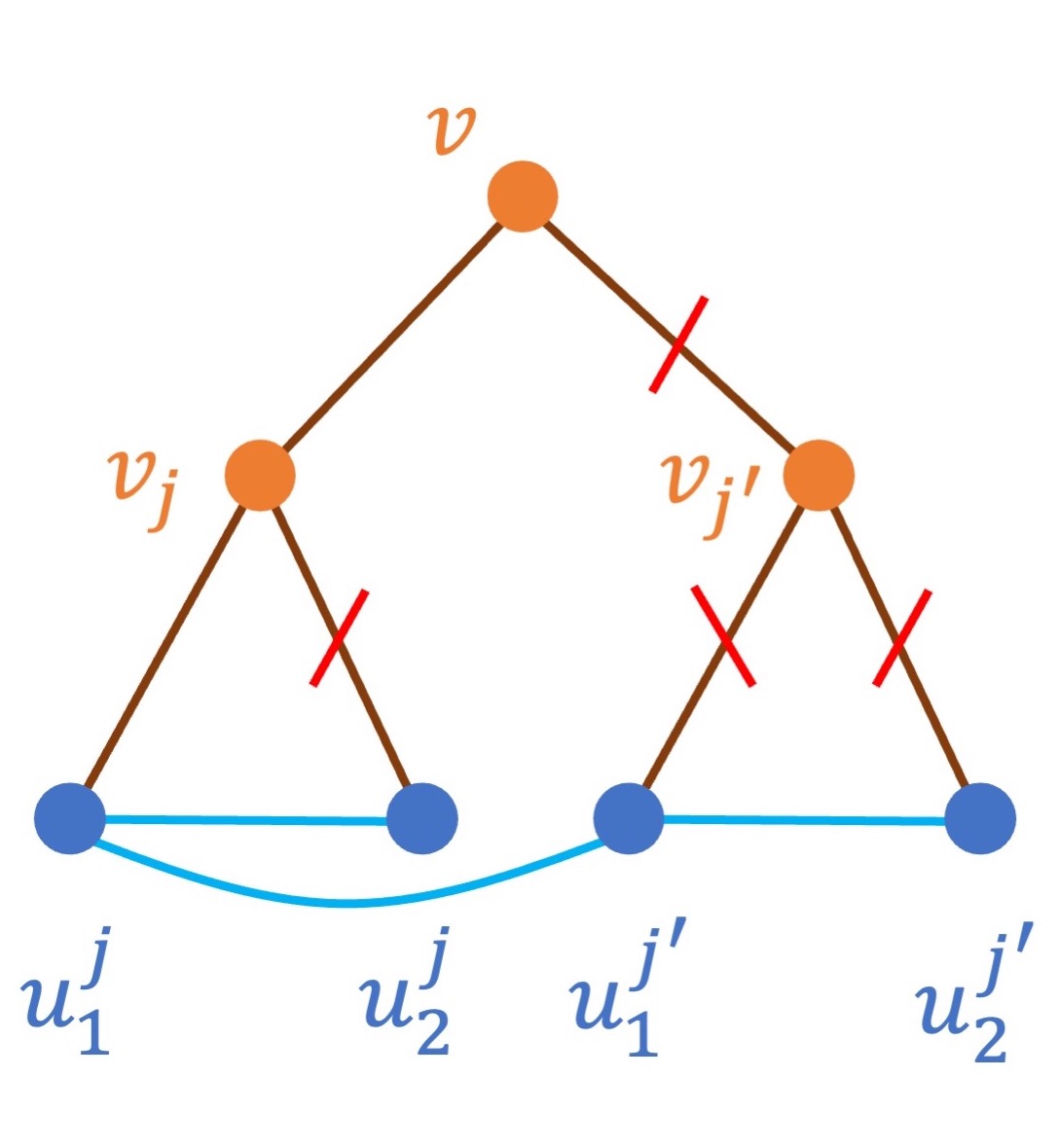}}\label{fig: case25}}
	\caption{An illustration of edge replacement in Case 2.\label{fig: case_2}}
\end{figure}

$\ $

Assume that Case 1 and 2 do not happen.
We denote $U=\set{u'_1,\ldots,u'_p, u^1_1,u^1_2,\ldots,u^t_1,u^t_2}$.
From the discussion in Case 2, it is easy to observe that the tree-distances in $\tau$ between $v$ and terminals $U$ are within factor $(1+20\eps_0)$ from each other.
Denote $\ell=\min\set{w(v,u)\mid u\in U}$. 
So for every terminal $u\in U$, $\ell\le w(v,u)\le (1+20\eps_0)\cdot \ell$. We denote $i^*=\floor{\log_{1+\eps}\ell}$, and let $u^*=\arg\min_{u\in U}\set{w(v,u)\mid u\in U}$.

\begin{figure}[h]
	\centering
	\includegraphics[scale=0.12]{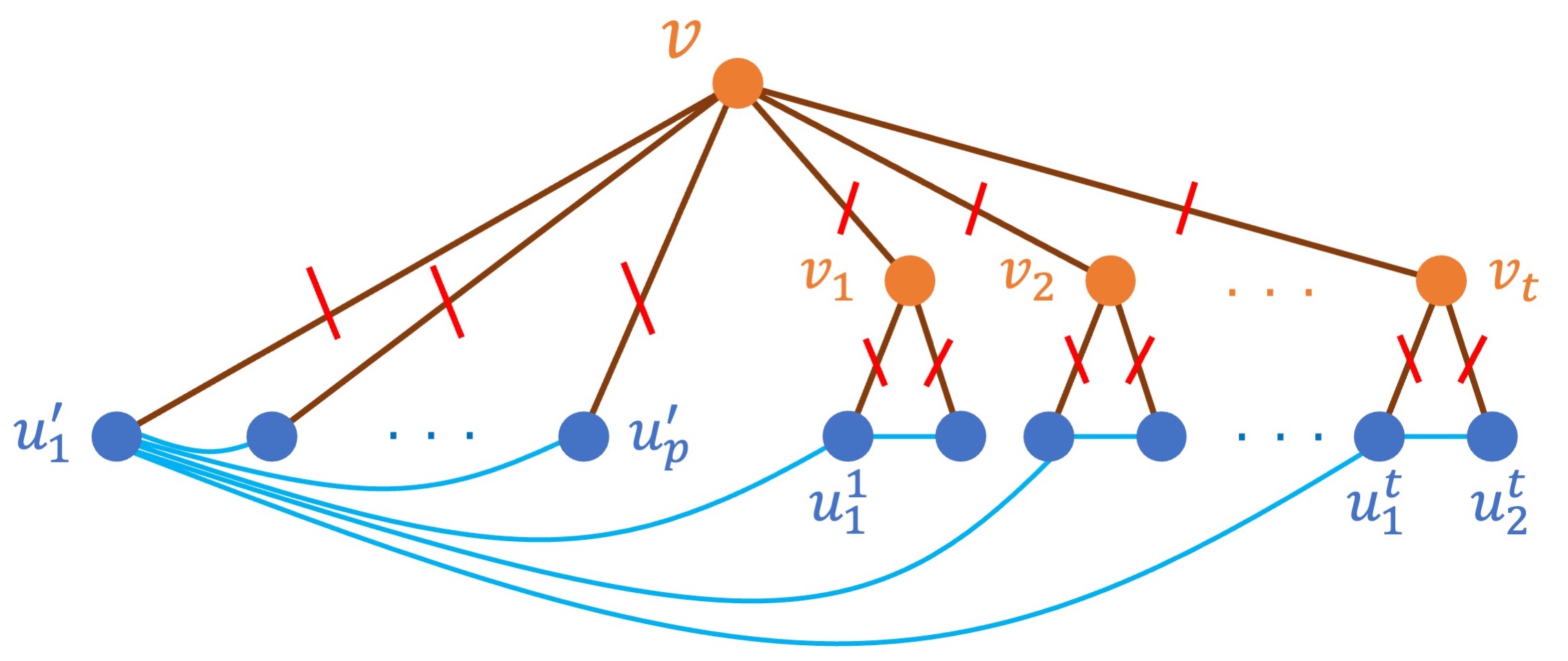}
	\caption{An illustration of edge replacement in Case 3.1 (assume that $u^*=u'_1$).\label{fig: case31}}
\end{figure}

\textbf{Case 3.} We say Case 3 happens if one of the following subcases happen.

\textbf{Case 3.1.} $p+t\ge 3$. 
We define the set $X_{i^*}(v)=\set{u'_1,\ldots,u'_p,u^1_1,\ldots,u^1_t}$ and add it into $\xset$.
Then for each $1\le j\le t$, we replace edge $(v_j,u^j_2)$ with $(u^j_1,u^j_2)$, and for each vertex $u\in U\setminus \set{u^*,u^1_2,\ldots,u^t_2}$, we replace edges in the $v$-$u$ path in $\tau$ with edge $(u^*,u)$ (see \Cref{fig: case31}). 

\textbf{Case 3.2.} $p=1$ and $t=1$. This case can be actually viewed as the special case of the next case by letting $v_2=u^2_1=u^2_2=u'_1$.

\textbf{Case 3.3.} $p=0$ and $t=2$. 
We assume without loss of generality that $u^*=u^1_1$. 

\textbf{Case 3.3.1.} We say that Case 3.3.1 happens if one of the following two cases happen:
\begin{itemize}
\item if $w(v,v_1)\le (50\eps_0)\cdot \ell$, then we define set $X_{i^*}(v)=\set{u^1_1,u^1_2,u^2_1}$ and add it into $\xset$;
\item if $w(v,v_2)\le (50\eps_0)\cdot \ell$, then we define set $X_{i^*}(v)=\set{u^1_1,u^2_1,u^2_2}$ and add it into $\xset$. 
\end{itemize}
In addition, in the above cases, we replace edge $(v_2,u^2_2)$ with $(u^2_1,u^2_2)$, edges $(v_2,v),(v_2,u^2_1)$ with edge $(u^2_1,u^1_1)$, and edge $(v_1,u^1_2)$ with edge $(u^1_1,u^1_2)$ (see \Cref{fig: case331}). 

\textbf{Case 3.3.2.}
Consider now the laminar family $\sset$ computed in Step 1. 
We say that a set $U'$ of terminals in $T$ is \emph{interfered} iff there is a set $S$ in $\sset$, such that both $S\setminus U', U'\setminus S\ne \emptyset$, and in this case we say that any vertex $u\in S\setminus U'$ is a \emph{witness}.
If the pair $u^1_1,u^1_2$ of terminals are interfered, then let $u$ be a witness, and assume without loss of generality that there exists a set $S\in \sset$ that contains $u^1_1$ and $u$ but not $u^1_2$ 
(the case where the pair $u^2_1,u^2_2$ of terminals are interfered is symmetric). 
If the set $\set{u^1_1,u^1_2,u^2_1,u^2_2}$ is interfered, then let $u$ be a witness, and assume in particular that there exists a set $S\in \sset$ that contains $u^1_1,u^1_2$ and $u$ but not $u^2_1,u^2_2$.
In both cases, we delete vertex $v_1$ and all its incident edges, and add edges $(u^1_1,u^1_2)$ and $(u^1_1,u)$ (see \Cref{fig: case332}).

If Case 3.3.2 does not happen, then the collection $\sset$ computed in Step 1 must contain sets $\set{u^1_1,u^1_2}$, $\set{u^2_1,u^2_2}$ and some set that contains all elements $u^1_1,u^1_2,u^2_1,u^2_2$. Since $\sset$ is a laminar family, there exists a minimum set in $\sset$ that contains all elements $u^1_1,u^1_2,u^2_1,u^2_2$, that we denote by $S$.

\textbf{Case 3.3.3.} $S=\set{u^1_1,u^1_2,u^2_1,u^2_2}$, and
either $w(v,v_1)\le (1-4\eps^{1/4}) \ell$ or $w(v,v_2)\le (1-4\eps^{1/4}) \ell$ holds. 
Assume without loss of generality that $w(v,v_1)\le (1-4\eps^{1/4}) \ell$.
Then we add the set $S$ into $\yset$, and then we replace edge $(v_2,u^2_2)$ with $(u^2_1,u^2_2)$, edges $(v_2,v),(v_2,u^2_1)$ with edge $(u^2_1,u^1_1)$, and edge $(v_1,u^1_2)$ with edge $(u^1_1,u^1_2)$. The illustration figure in this case is identical to that of Case 3.3.1 (see \Cref{fig: case331}).

\textbf{Case 3.3.4.}
$w(v,v_1),w(v,v_2)> (1-4\eps^{1/4})\cdot \ell$. In this case we simply replace edge $(v,u^1_2)$ with $(u^1_1,u^1_2)$ and edge $(v,u^2_2)$ with $(u^2_1,u^2_2)$. We call the operation particularly in this case a \emph{bad replacement}.

\begin{figure}[h]
	\centering
	\subfigure[Case 3.3.1.]{\scalebox{0.115}{\includegraphics{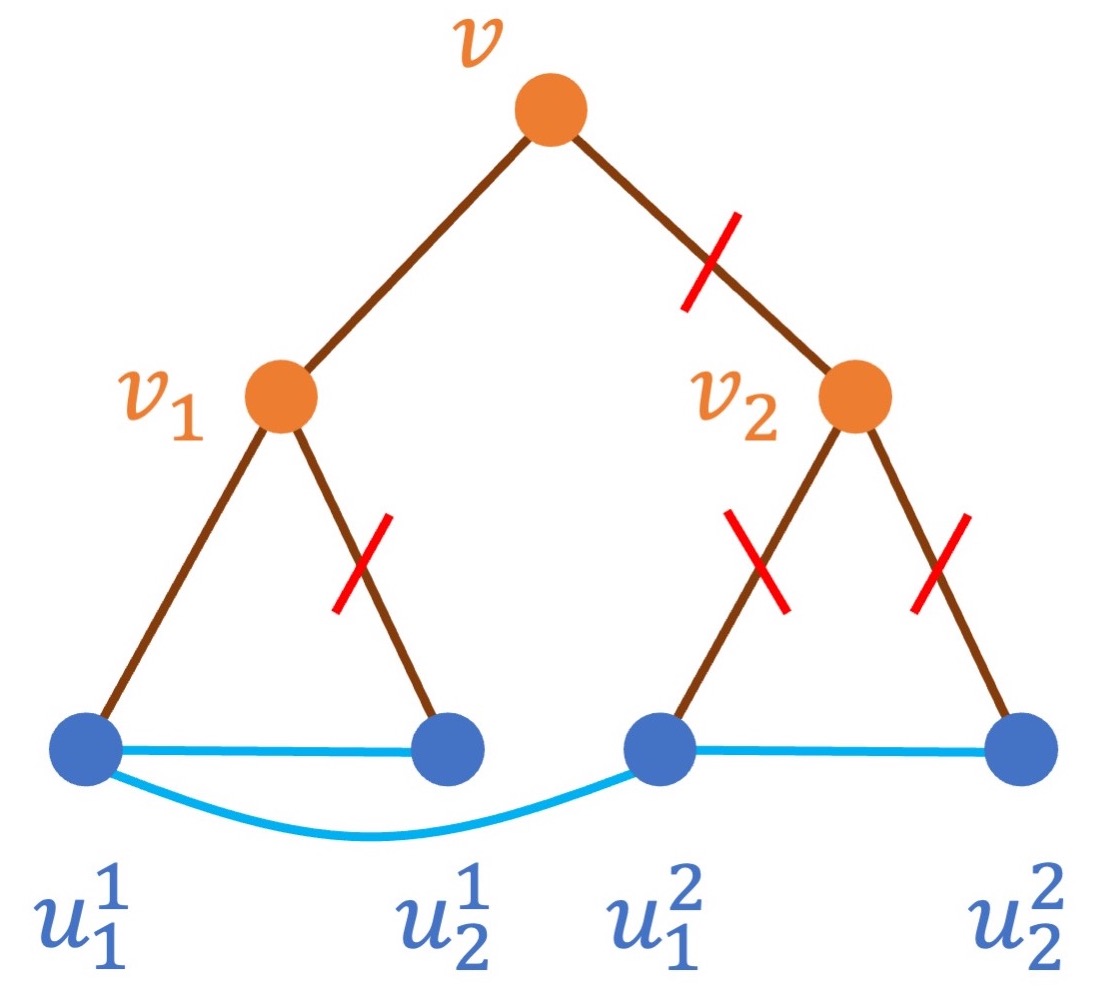}}\label{fig: case331}}
	\hspace{0.5cm}
	\subfigure[Case 3.3.2: old and new edges (left) and part of the tree $\tset$ (right).]{
		\scalebox{0.11}{\includegraphics{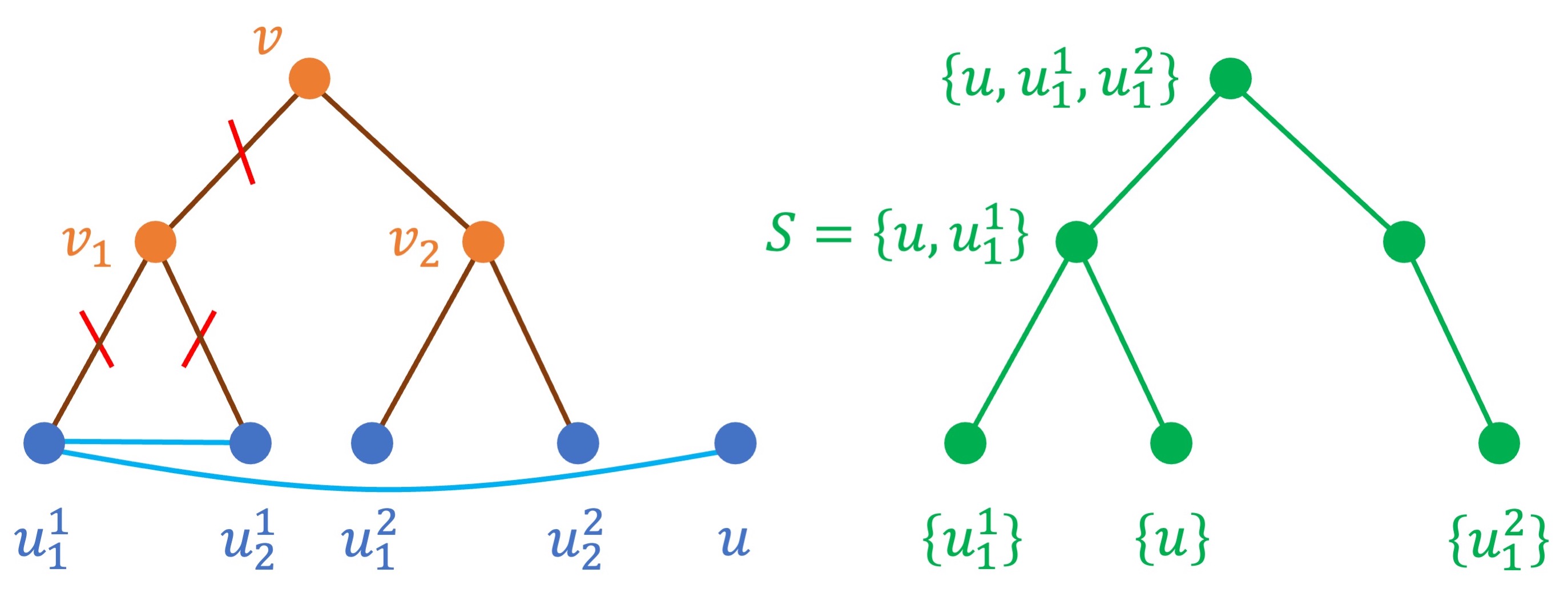}}\label{fig: case332}}
	\caption{An illustration of edge replacement in Case 3.3.1 and Case 3.3.2.\label{fig: case_3312}}
\end{figure}

The only possibility that Cases 2,3 do not happen is when:
\begin{itemize}
\item $v$ has two children $v_1,v_2$; $v_1$ has two children $u^1_1,u^1_2$; and $v_2$ has two children $u^2_1,u^2_2$;
\item $\ell\le w(v,u^1_1),w(v,u^1_2),w(v,u^2_1),w(v,u^2_2)\le (1+20\eps_0)\ell$; and
\item the set $\set{u^1_1,u^1_2,u^2_1,u^2_2}$ of vertices are not interfered in $\sset$, but $S\ne \set{u^1_1,u^1_2,u^2_1,u^2_2}$.
\end{itemize}
In this case, we say that $v$ is an \emph{unlucky} vertex. If there exists a height-$2$ non-unlucky vertex, then we process it using the operations described in Cases 2 and 3. We now consider the fourth and the last case, where all height-$2$ vertices in the current tree $\tau$ are unlucky.

\textbf{Case 4.} All height-$2$ vertices in $\tau$ are unlucky. 
If tree $\tau$ does not contain any height-$3$ vertices, then $\tau$ contains a unique height-$2$ vertex, but then this unique height-$2$ vertex is a degree-$2$ Steiner vertex in $\tau$, a contradiction. Therefore, tree $\tau$ contains height-$3$ vertices, and every height-$3$ vertex has at least two children.
Consider now any height-$3$ vertex $v^*$ and let $v,\hat v$ be two of its height-$2$ children. Since $v$ is unlucky, we let the vertices $v_1,v_2,u^1_1,u^1_2,u^2_1,u^2_2$ be defined as before, and we define the vertices $\hat v_1, \hat v_2, \hat u^1_1, \hat u^1_2, \hat u^2_1, \hat u^2_2$ similarly for $\hat v$. We also define set $S$ for $v$ as before.

Recall that $S\ne \set{u^1_1,u^1_2,u^2_1,u^2_2}$. From the construction of laminar family $\sset$, there exists a terminal $u$ in $S\setminus\set{u^1_1,u^1_2,u^2_1,u^2_2}$, such that $\min\set{w(u,u^1_1),w(u,u^1_2),w(u,u^2_1),w(u,u^2_2)}\le 2\ell\cdot(1+\eps)$. Assume without loss of generality that $w(u,u^1_1)\le 2\ell\cdot(1+\eps)$.

\textbf{Case 4.1.} If $w(v,v^*)\ge 10\eps\cdot \ell$ (the case where $w(\hat v,v^*)\ge 10\eps\cdot \ell$ is symmetric), then we delete from $\tau$ vertices $v_1,v_2,v$ and all its incident edges, and add new edges $(u^1_1,u^1_2),(u^2_1,u^2_2),(u^1_1,u^2_1)$ and $(u^1_1,u)$ (see \Cref{fig: case41}).

\textbf{Case 4.2.} If $w(v,v^*),w(\hat v,v^*)\le 10\eps\cdot \ell$. Denote $\ell= w(v^*,u^1_1)$ and $\hat \ell= w(v^*,\hat u^1_1)$, and assume without loss of generality that $\ell\le \hat\ell$. Then we delete all edges in the subtree rooted at $\tau^*$ expect for the $v^*$-$u^1_1$ path, and add edges $(u^1_1,u^1_2),(u^2_1,u^2_2),(u^1_1,u^2_1)$, edges  $(\hat u^1_1, \hat u^1_2),(\hat u^2_1, \hat u^2_2),(\hat u^1_1, \hat u^2_1)$ and edge $(u^1_1,\hat u^1_1)$ (see \Cref{fig: case42}). If $\hat\ell\le (1+10\eps)\cdot \ell$, then we further add set $X_{i^*}(v^*)=\set{u^1_1,u^2_1,\hat u^1_1,\hat u^2_1}$ into collection $\xset$, where $i^*=\floor{\log_{1+\eps}\ell}$.

\begin{figure}[h]
	\centering
	\subfigure[Case 4.1.]{\scalebox{0.11}{\includegraphics{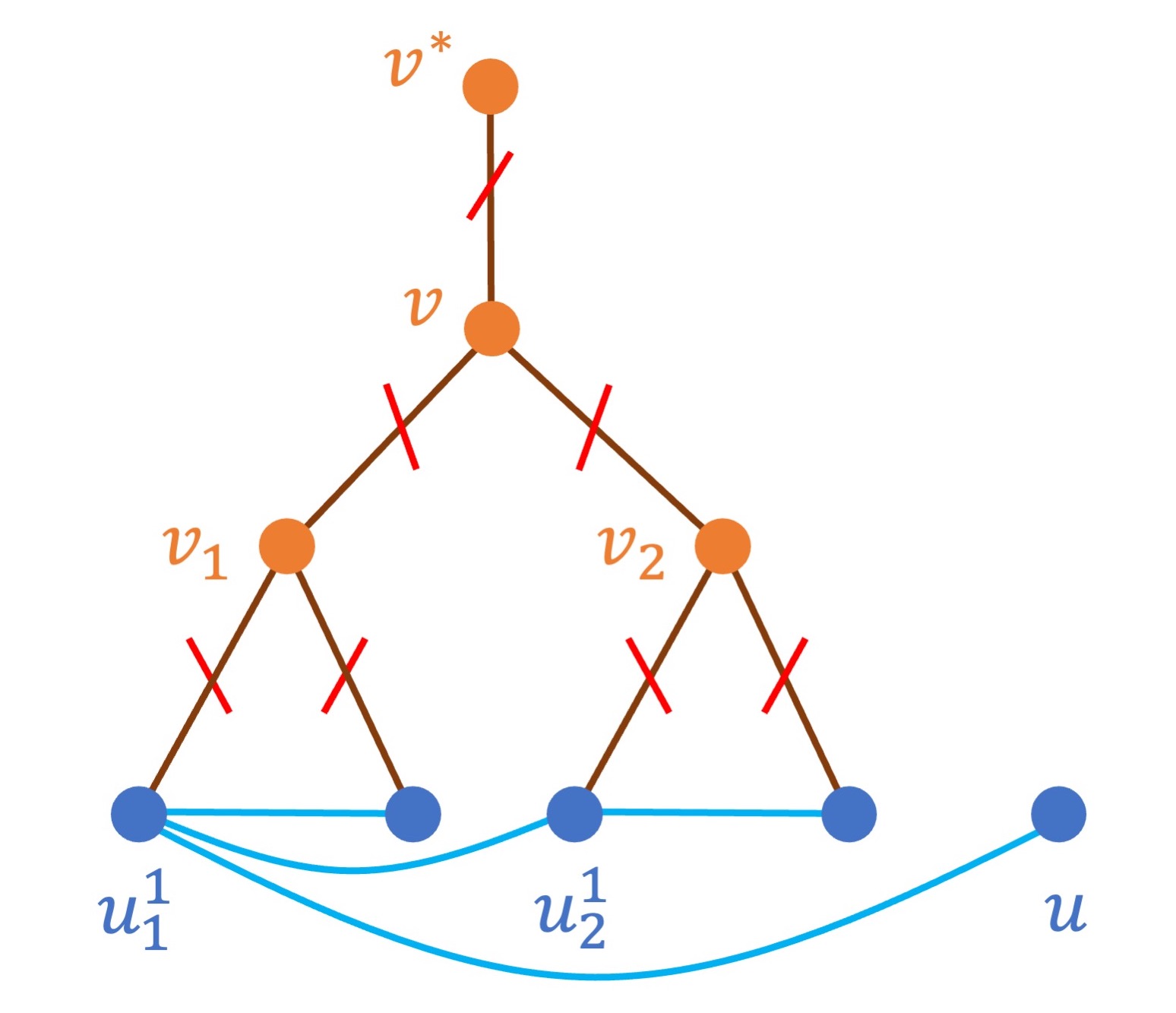}}\label{fig: case41}}
	\hspace{0.5cm}
	\subfigure[Case 4.2.]{
		\scalebox{0.1}{\includegraphics{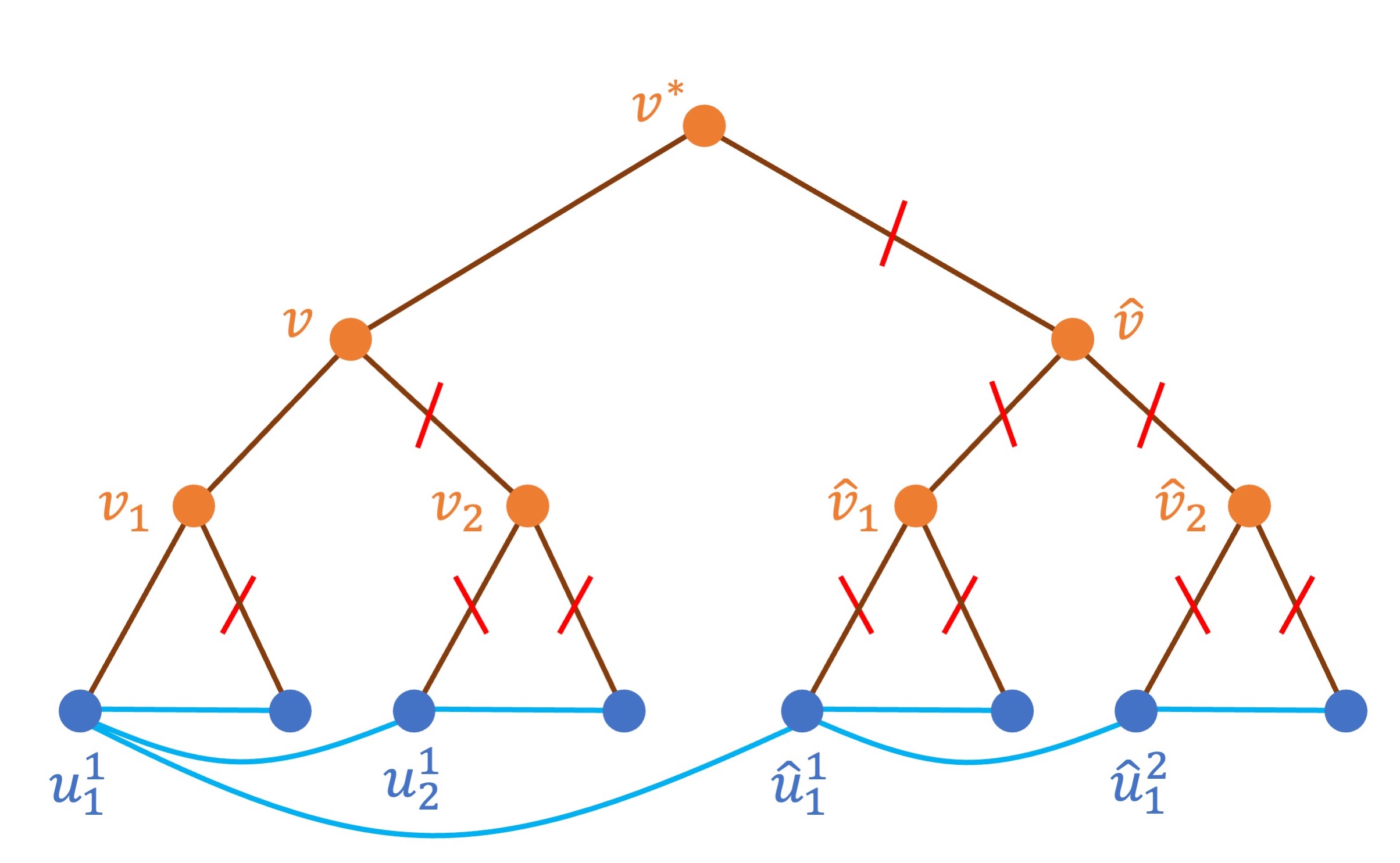}}\label{fig: case42}}
	\caption{An illustration of edge replacement in Case 4.\label{fig: case_4}}
\end{figure}

This finally completes the description of the tree-modification process.
Note that, in each of the cases described above, we replaced a set $E$ of edges that do not connect a pair of terminals in the current tree $\tau$ with a new set $E'$ of edges that connect a pair of terminals, such that $(\tau\setminus E)\cup E'$ is still a valid \St, such that either 
\begin{itemize}
\item $w(E')\le (2-4\eps_0)\cdot w(E)$; or
\item $w(E')\le 2\cdot w(E)$, and we have added a set $X_i(v)$ into $\xset$, such that $w(E)\le 10\cdot (1+\eps)^i$; or
\item $w(E')\le 2\cdot w(E)$, and we have added a set $Y$ into $\yset$, such that $\adv(Y)\ge \eps^{3/4}\cdot w(E)$; or
\item $w(E')\le 2\cdot w(E)$, and we have not added sets into $\xset$ or $\yset$, which may only happen in Case 3.3.4 where we performed a bad replacement.
\end{itemize}
 
First, it is easy to see that, the total cost of all edges where we perform bad replacements is at most $8\cdot\eps^{1/4}\cdot w(\tau_{\opt})$.
Second, from the construction of the Set Cover instances $\set{(\wset_i,U_i)}_{0\le i\le L-1}$ in Step 2, and using similar arguments in the proof of \Cref{obs: rep good for adv}, it is easy to show that when we add a set $X_i(v)$ into collection $\xset$, there is a set $W\in \wset_i$, such that, for each $u\in X_i(v)$, there is a terminal $u'\in W$, such that $w(u,u')\le \eps\cdot (1+\eps)^i$; and different sets $X_i(v)$ corresponds to different sets in $\wset_i$.
Therefore, according to the algorithm and the discussion above, the total weight of all edges in $\tau^*$ in which we either perform a bad replacement or add a set into $\xset$ or $\yset$ is at most
\[8\cdot\eps^{1/4}\cdot w(\tau_{\opt})+\frac{6\eps^{1/2}\cdot \mst}{\eps^{1/4}}+2^{20}\cdot\eps_0\cdot \mst\le \frac{\mst}{3}.\]
Therefore, if we denote by $\tau'$ the resulting tree we get from the above process, then 
\[\mst\le w(\tau')\le 2\cdot \frac{w(\tau_{\opt})}{3}+(2-4\eps_0)\cdot\frac{2\cdot w(\tau_{\opt})}{3}\le (2-2\eps_0)\cdot w(\tau_{\opt}).\]
It follows that $w(\tau_{\opt})\ge \mst/(2-2\eps_0)$. This completes the proof of the correctness of the algorithm.

\newcommand{\ASCOFF}{\textsf{AlgSetCovOff}}
\newcommand{\COVELE}{\textsf{CovEle}}
\newcommand{\COVSET}{\textsf{CovSet}}
\newcommand{\wlow}{\wset_{\textnormal{low}}}
\newcommand{\ulow}{U_{\textnormal{low}}}
\newcommand{\uhigh}{U_{\textnormal{high}}}
\newcommand{\tlow}{T_{\textnormal{low}}}
\newcommand{\thigh}{T_{\textnormal{high}}}

\subsection{Proof of \Cref{thm: sec cover}}
\label{sec: set cover}
In this section, we give a sublinear query algorithm for our Set Cover objective, and prove \Cref{thm: sec cover}.
Recall that we are given an instance $(U,\wset)$ of Set Cover, and the goal is to estimate the value of $\big( |U|-\setcover(U,\wset_{\ne 2})\big)$ where $\wset_{\ne 2}$ is the collection of sets in $\wset$ with size not equal to 2.
We note that the goal of estimating the number of sets needed to cover a universe has been considered from the perspective of sublinear query algorithms~\cite{har2016towards,indyk2018set,grunau2020improved}. However, these results do not apply to our setting as we need to estimate the difference between the universe size and the set cover size. 

We first give an algorithm that outputs an estimate of $\big( |U|-\setcover(U,\wset)\big)$, and then show how to modify the algorithm to prove \Cref{thm: sec cover}.

\subsubsection*{An Algorithm for Estimating $\big( |U|-\setcover(U,\wset)\big)$}
\label{sec: set_cover_general}

The main result of this subsection is the following theorem.

\begin{theorem} \label{thm: sec cover 1}
There is a polynomial-time randomized algorithm, that, given an instance $(U,\wset)$ of Set Cover and any constant $0<\eps<1$, with probability $1-O(n^{-2})$, returns a $(4,\eps|U|)$-estimation of $\big(|U|-\setcover(U,\wset)\big)$, by performing $O((|\wset|^{3/2}+|\wset|^{3/4}|U|^{}) \cdot(\log n)^2/\eps^3 )$ membership queries to the instance $(U,\wset)$.
\end{theorem}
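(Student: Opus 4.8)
The plan is to relate $Y := |U|-\setcover(U,\wset)$ to the size of a matching in the graph $G_\wset$ on vertex set $U$ in which $u\sim u'$ iff some $W\in\wset$ contains both of them. I claim $|M|\le Y\le 2|M|$ for \emph{any} maximal matching $M$ of $G_\wset$. For the lower bound, $M$ together with the singletons of the unmatched elements is a feasible family for the set cover (each matched pair lies in a common set, and singletons are always available), using at most $|U|-|M|$ sets. For the upper bound, maximality means no two unmatched elements share a set, so in an optimal family $\{A_j\}$ (with distinct $W_j\supseteq A_j$) every $A_j$ contains at most one unmatched element, hence at least $|A_j|-1$ matched ones; since each matched element lies in at most one $A_j$, this gives $Y=\sum_j(|A_j|-1)\le 2|M|$. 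Consequently, if we can produce $\hat m$ with $\bigl|\hat m-|M|\bigr|\le \tfrac{\eps}{2}|U|$ for the particular maximal matching $M$ obtained by scanning $\wset$ in a uniformly random order and greedily pairing up the still-free elements inside each set, then $2\hat m+\eps|U|$ is a $(4,O(\eps|U|))$-estimation of $Y$, and rescaling $\eps$ proves the theorem. So the whole task becomes: estimate $|M|/|U|$ up to an additive $O(\eps)$.

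\textbf{Estimating $|M|$ with sublinear query complexity.} I would sample $\tilde O(1/\eps^2)$ elements of $U$ and, for each sampled $u$, decide (approximately) whether $u$ is matched by $M$ via the standard local-simulation recursion: $u$ gets matched precisely in the first set of the random scan that contains $u$ and still has two or more free elements when it is reached, and resolving this reduces to recursively deciding the same question for the other elements of the earlier-scanned sets containing $u$. The obstacle is that adjacency in $G_\wset$ is costly to probe: learning which sets contain an element costs $|\wset|$ membership queries, and a single set can contain up to $|U|$ elements, so the raw recursion is hopelessly over budget. To tame it I would introduce two thresholds --- a set is \emph{large} if it has more than $\sigma$ elements, an element is \emph{heavy} if it lies in more than $\phi$ sets --- and treat the matched pairs that are created inside a large set, or that involve a heavy element, separately: subsample the large sets and read a few of them in full, and bound the total contribution of the remaining such pairs by $\eps|U|$ so it is swallowed by the additive error. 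On the complementary ``cheap'' part, every level of the recursion branches over at most $\phi$ sets of size at most $\sigma$ each, so aborting a recursion once it has touched $\mathrm{poly}(\sigma,\phi,\log|U|,1/\eps)$ elements (declaring the root matched whenever this happens) keeps the per-sample cost under control while misclassifying only an $O(\eps)$-fraction of the matched mass. Choosing $\sigma,\phi$ on the order of $|\wset|^{1/4}$ balances the two pieces: reading the subsampled large sets costs about $|\wset|^{3/4}|U|$, the cheap recursions cost about $|\wset|^{3/2}$ in total, and the extra $\mathrm{polylog}$ and $1/\eps^{3}$ factors come from the sample size ($1/\eps^2$), the truncation depth ($1/\eps$), and union bounds for the $1-O(|U|^{-2})$ success probability via Chernoff (\Cref{lem: Chernoff}).

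\textbf{Where the difficulty lies.} The matching reduction is short; the real work is the joint analysis of (i) the query cost of the local simulation and (ii) the correctness of truncation. One must show, at the same time, that the ``expensive'' phenomena --- large sets, heavy elements, and recursion trees that fail to die out quickly --- are rare enough that handling them crudely perturbs $|M|$ by at most $\eps|U|$, and that the thresholds are small enough that everything that survives is genuinely cheap to explore. These two demands push $\sigma$ and $\phi$ in opposite directions, and making the exponents land exactly on $|\wset|^{3/2}+|\wset|^{3/4}|U|$ requires a careful calibration of the thresholds together with a fairly detailed case analysis of the ways in which a matched pair of $M$ can arise. I expect this bookkeeping, not the structural step, to be the bulk of the proof.
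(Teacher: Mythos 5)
Your matching lemma ($|M|\le Y\le 2|M|$ for any maximal matching $M$ of $G_\wset$, with $Y=|U|-\setcover(U,\wset)$) is exactly the paper's \Cref{lem:sc-mt}, and the plan of estimating $|M|$ by local simulation while partitioning sets by size and elements by frequency is in the same spirit as the paper's two partitions $\wset_1/\wset_2$ and $U_{\text{low}}/U_{\text{high}}$; the paper invokes the maximal-matching estimator of \cite{Behnezhad21} as a black box rather than re-deriving the local greedy simulation, but that is a difference of presentation, not a gap. The real gap is your treatment of heavy elements. You assert that matched pairs involving a heavy element but lying in a small set contribute only $O(\eps|U|)$ to $|M|$ and can be discarded, and later describe the task as ``showing that heavy elements are rare enough.'' Neither is true. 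Take $U$ of size $k$ partitioned into $k/3$ triples $T_1,\dots,T_{k/3}$, and let $\wset$ consist of all $T_i$, all singletons, and all pairs $\{u,u'\}\subseteq U$, so $|\wset|\approx k^2/2$. Every element has frequency $\Theta(k)\gg|\wset|^{1/4}$, so under $\phi=|\wset|^{1/4}$ every element is heavy; every set has size at most $3$, so none is large; yet $Y=2k/3$ and $|M|\approx k/2$. Discarding the heavy pairs throws away the whole matching and produces an estimate near $\eps k$, which is not a $(4,\eps|U|)$-estimation of $2k/3$ for small $\eps$.

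What is missing is the observation that high-frequency elements can be covered essentially for free: if the frequency threshold $N$ is chosen so that a uniformly random sub-collection of $O(\eps|U|)$ sets hits every element of frequency at least $N$ with high probability, then \emph{all} high-frequency elements together add only $O(\eps|U|)$ to the optimal cover size, so each of them contributes essentially $+1$ to $|U|-\setcover(U,\wset)$ regardless of how many of them there are. The paper accordingly computes $|U_{\text{high}}|$ exactly and adds it to a matching-based estimate for the low-frequency sub-universe; it never needs $|U_{\text{high}}|$ to be small. For this coverage argument the threshold must scale roughly as $\tilde\Theta(|\wset|/(\eps|U|))$ when $|U|\lesssim|\wset|^{3/4}$, dropping to $\tilde\Theta(|\wset|^{1/4}/\eps)$ only when $|U|\gtrsim|\wset|^{3/4}$, not a uniform $|\wset|^{1/4}$: in the counterexample above the correct $N\approx 25k\log|\wset|/\eps$ exceeds every actual frequency $\approx k$, so all elements are classified low-frequency and the matching is estimated directly. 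Without this ``random sub-collection covers everything heavy'' idea, frequency-based discarding cannot yield the claimed additive guarantee, and this is the step your proposal needs to supply, not merely recalibrate.
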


Before we describe the algorithm in detail, we give a brief intuition. First we observe that, with high probability, all elements that appear in many (at least $\Omega(|\wset|\log |\wset|/\eps|U|)$) sets in $\wset$ can be covered by a random subset of $O(\eps |U|)$ sets in $\wset$, and so they can be ignored as we allow $\eps|U|$ additive error in the estimation. 
Assume for simplicity that all elements that appear in fewer than $o(|\wset|\log |\wset|/\eps|U|)$ sets in $\wset$.
Consider the optimal set cover $\wset^*$ and assume the sets in $\wset^*$ are arranged into a sequence. Now, if a set in $\wset^*$ covers $t$ elements that are not covered by the previous sets in the sequence, then it can be viewed as ``contributing'' $(t-1)$ to $\big(|U|-\setcover(U,\wset)\big)$.
We will show that, at a high-level, this ``contribution'' can be characterized as follows: if we consider the graph on $U$ where there is an edge $(u,u')$ iff $u,u'$ appear in the same set in $\wset$, then it can be shown that the size of any maximal matching in the graph is within a constant factor of $\big(|U|-\setcover(U,\wset)\big)$. We then focus on this graph and utilize the algorithm from \cite{Behnezhad21} to estimate the size of any maximal matching in it.

We now describe the algorithm in detail. 
For convenience, we denote $n=|\wset|$ and $k=|U|$.
Throughout this subsection, we use a parameter $\beta=\max\set{k/n^{3/4},1}$.
In other words, when $k\ge n^{3/4}$, $\beta=k/n^{3/4}$, otherwise $\beta=1$.

We define the \emph{frequency} of an element $u\in U$ in the collection $\wset$ to be the number of sets in $\wset$ that contain $u$.
We first partition the vertices into two subsets according to their frequency in $\wset$ as follows. Let $\tilde\wset$ be a random sub-collection of $\wset$ that contains $k/\beta$ sets. For every element $u\in U$, we compute the frequency $u$ in $\tilde\wset$ by performing membership queries on all pairs $(u,W)$ with $W\in \tilde\wset$. We then let $U_{\text{low}}$ be the set of all elements in $U$ with frequency at most $75 \beta n\log n / (\eps k)$ in $\tilde\wset$, and let $U_{\text{high}}=U\setminus U_{\text{low}}$. 
The total number of queries that are needed to compute this partition is at most 
$$k\cdot (k/\beta))=O(k^2/\beta)=O\bigg(\frac{k^2}{\max\set{k/n^{3/4},1}}\bigg)=O\bigg(k^2\cdot \min\set{n^{3/4}/k,1}\bigg)=\underline{O(n^{3/2}+kn^{3/4})}.$$

Let $N=50 n\beta\log n /(\eps k)$.
We use the following observations that follow from the Chernoff Bound in \Cref{lem: Chernoff}.

\begin{observation}
\label{obs: high low}
With probability $1-n^{-2}$, all elements in $U_{\textnormal{low}}$ have frequency at most $2N$ in $\wset$, and all elements in $U_{\textnormal{high}}$ have frequency at least $N$ in $\wset$.
\end{observation}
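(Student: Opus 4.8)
The plan is to apply the Chernoff bound from \Cref{lem: Chernoff} separately to each element $u\in U$, using the randomness of the sub-collection $\tilde\wset$, and then take a union bound over all $k\le n$ elements. Fix an element $u\in U$, and let $f$ denote its true frequency in $\wset$ (the number of sets $W\in\wset$ with $u\in W$). Since $\tilde\wset$ is a random sub-collection of $\wset$ of size $k/\beta$, the frequency of $u$ in $\tilde\wset$ is a sum of (negatively correlated, or one can instead sample with replacement / use a Poissonization-style argument) indicator variables, one per set of $\tilde\wset$, each equal to $1$ with probability $f/n$. Thus the expected frequency of $u$ in $\tilde\wset$ is $\mu_u := f\cdot (k/\beta)/n = fk/(\beta n)$. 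Recall the two thresholds: an element lands in $U_{\textnormal{low}}$ iff its $\tilde\wset$-frequency is at most $75\beta n\log n/(\eps k)$, and $N = 50n\beta\log n/(\eps k)$; note $75\beta n\log n/(\eps k) = 1.5\cdot (50 n\beta\log n/(\eps k)) = 1.5 N$, so the scaled-down threshold $(k/\beta)/n$ times the desired true-frequency cutoffs $N$ and $2N$ gives $\mu$-values of $50\log n/\eps$ and $100\log n/\eps$ respectively.

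The two directions are then routine Chernoff applications. \textbf{(i)} If $u$'s true frequency $f \ge 2N$, then $\mu_u \ge 100\log n/\eps \ge 100\log n$, and we want to show $u$ is unlikely to be placed in $U_{\textnormal{low}}$, i.e. unlikely that its $\tilde\wset$-frequency drops below $1.5N$ (corresponding to $\mu$-scale $75\log n/\eps$, which is at most $\tfrac{3}{4}\mu_u$). By the lower-tail bound in \Cref{lem: Chernoff} with $\delta = 1/4$, this probability is at most $e^{-\mu_u/32}\le e^{-3\log n} \le n^{-3}$. \textbf{(ii)} If $u$'s true frequency $f \le N$, we want $u$ unlikely to land in $U_{\textnormal{high}}$, i.e. unlikely that its $\tilde\wset$-frequency exceeds $1.5N$; here $\mu_u \le 50\log n/\eps$, and the threshold in $\mu$-scale is $75\log n/\eps \ge 1.5\mu_u$. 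Apply either the upper-tail bound of \Cref{lem: Chernoff} (taking $t = 75\log n/\eps$, which exceeds $2e\mu_u$ once $\eps<1$ since $75 > 2e\cdot 50$ is false — so instead use the multiplicative upper-tail Chernoff $\Pr[X>(1+\delta)\mu]\le e^{-\delta^2\mu/3}$ with $\delta=1/2$), giving probability at most $e^{-\mu_u/12}$; but this is only small when $\mu_u$ is large, which fails for low-frequency elements. The cleaner route: apply the $t$-form bound of \Cref{lem: Chernoff}, $\Pr[X>t]\le 2^{-t}$ for $t>2e\mu$, with $t = 1.5N\cdot(k/\beta)/n = 75\log n/\eps$. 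Since $\mu_u \le 50\log n/\eps < 75\log n/(2e\cdot\eps)$ fails numerically, one should instead first replace $\mu_u$ by an upper bound $\mu'_u := 50\log n/\eps$ and note $\Pr[X>t]$ is monotone, then $t = 75\log n/\eps$ is not $>2e\mu'_u = (100e/\eps)\log n$; so use instead the bound with $t$ chosen as, say, $\max(1.5N(k/\beta)/n,\ 2e\mu'_u+1)$ — in our parameter regime $1.5N$ in $\mu$-scale is $75\log n/\eps$ and $2e\mu'_u \approx 272\log n/\eps$, so we cannot get $2^{-t}$ directly and must fall back on the multiplicative form. Concretely: elements with true frequency $\le N$ have $\mu_u\le 50\log n/\eps$, we want $\Pr[X\ge 1.5N\cdot k/(\beta n)]=\Pr[X\ge (75\log n/\eps)]\le\Pr[X\ge (1+\tfrac12)\cdot(50\log n/\eps)]\le e^{-(1/4)(50\log n/\eps)/3}=e^{-(50/12)(\log n)/\eps}\le n^{-3}$ for $\eps<1$.

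The final step is a union bound: there are $|U|=k\le n$ elements, so with probability at least $1 - 2n\cdot n^{-3} \ge 1-n^{-2}$, simultaneously every element with true frequency $\ge 2N$ avoids $U_{\textnormal{low}}$ (hence every element of $U_{\textnormal{low}}$ has true frequency $< 2N \le 2N$) and every element with true frequency $\le N$ avoids $U_{\textnormal{high}}$ (hence every element of $U_{\textnormal{high}}$ has true frequency $> N \ge N$), which is exactly the claim. The main obstacle here is purely bookkeeping: tracking the three related thresholds ($N$, $1.5N$, $2N$), rescaling between ``frequency in $\wset$'' and ``frequency in $\tilde\wset$'' by the factor $(k/\beta)/n$, and choosing $\delta$ in the Chernoff bound so that the resulting exponent beats $3\log n$ for the union bound — all of which works with constant room to spare given the constant $50$ (or $75$) in the definitions and the assumption $\eps<1$. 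No genuinely hard idea is needed beyond verifying these constants are slack enough, and if the $t$-form of \Cref{lem: Chernoff} does not numerically apply, the standard multiplicative upper-tail form (which can be cited from the same source) does.
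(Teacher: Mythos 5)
Your proof is correct and takes exactly the approach the paper intends — the paper supplies no proof of this observation beyond the remark that it ``follows from the Chernoff Bound in \Cref{lem: Chernoff},'' and your Chernoff-plus-union-bound argument with the scaled $\mu$-values $50\log n/\eps$, $75\log n/\eps$, $100\log n/\eps$ is precisely the calculation being appealed to. The one step worth making explicit is the upper tail for elements with $\mu_u$ strictly below $50\log n/\eps$: there you should invoke stochastic domination (pad $X$ up to a variable with mean exactly $50\log n/\eps$ before applying the multiplicative bound) or handle $\delta>1$ with the $e^{-\delta\mu/3}$ form — but either way the constants have ample slack, so the bound and the final union bound go through.
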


\begin{observation}
\label{obs: high free}
Let $\hat\wset$ be a random sub-collection of $\wset$ that contains $\eps k/(10\beta)$ sets. Then with probability $1-n^{-2}$, every element in $U_{\textnormal{high}}$ is contained in some set in $\hat \wset$.
\end{observation}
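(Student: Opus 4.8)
The plan is to condition on the good event of \Cref{obs: high low} and then apply a union bound over the elements of $U_{\textnormal{high}}$, exploiting that $\hat\wset$ is drawn independently of the subcollection $\tilde\wset$ that defines the partition $U=U_{\textnormal{low}}\cup U_{\textnormal{high}}$.

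Concretely, first I would fix any realization of $\tilde\wset$ for which the conclusion of \Cref{obs: high low} holds, so that every $u\in U_{\textnormal{high}}$ has frequency at least $N=50n\beta\log n/(\eps k)$ in $\wset$; by \Cref{obs: high low} this occurs with probability $1-n^{-2}$. Fix such a $u$. Since $\hat\wset$ is a uniformly random size-$\lceil\eps k/(10\beta)\rceil$ subcollection of the $n=|\wset|$ sets (independently of $\tilde\wset$), and at least $N$ of these sets contain $u$, the chance that $\hat\wset$ misses all sets containing $u$ is
\[
\frac{\binom{n-N}{\lceil\eps k/(10\beta)\rceil}}{\binom{n}{\lceil\eps k/(10\beta)\rceil}}\ \le\ \Bigl(1-\frac{N}{n}\Bigr)^{\eps k/(10\beta)}\ \le\ \exp\!\Bigl(-\frac{N}{n}\cdot\frac{\eps k}{10\beta}\Bigr)\ =\ \exp(-5\log n)\ =\ n^{-5},
\]
where the final equalities use $N/n=50\beta\log n/(\eps k)$, so that $\tfrac{N}{n}\cdot\tfrac{\eps k}{10\beta}=5\log n$; this identity is exactly what pins down the choices of $N$ and $|\hat\wset|$.

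Then I would union bound over the at most $|U_{\textnormal{high}}|\le|U|=k\le n$ elements of $U_{\textnormal{high}}$ (the inequality $k\le n$ holds since $\wset$ contains all singletons): with probability at least $1-n^{-4}$ over $\hat\wset$, every element of $U_{\textnormal{high}}$ lies in some set of $\hat\wset$. Combining this with the $1-n^{-2}$ probability of the conditioning event of \Cref{obs: high low} gives the stated bound (with total failure probability $O(n^{-2})$; the clean $1-n^{-2}$ in the statement can be recovered by slightly enlarging the constant $50$ in the definition of $N$). I do not anticipate a genuine obstacle here: the only points needing care are that the randomness of $\hat\wset$ is independent of that defining $U_{\textnormal{high}}$, so the per-element estimate remains valid after conditioning on \Cref{obs: high low}, and the routine verification of the arithmetic $\tfrac{N}{n}\cdot\tfrac{\eps k}{10\beta}=5\log n$.
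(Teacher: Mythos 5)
Your proof is correct and follows the approach the paper intends (the paper gives no explicit proof beyond the remark that the observation ``follows from the Chernoff Bound''): condition on the frequency lower bound of \Cref{obs: high low}, estimate the chance that a random size-$\eps k/(10\beta)$ subsample of $\wset$ misses all of the at least $N$ sets containing a fixed $u\in U_{\textnormal{high}}$, then union bound over $|U_{\textnormal{high}}|\le k\le |\wset|$. Your direct hypergeometric estimate $\binom{|\wset|-N}{m}/\binom{|\wset|}{m}\le(1-N/|\wset|)^m$ is a clean way to handle the sampling-without-replacement dependence that a literal invocation of \Cref{lem: Chernoff} would otherwise require justifying.
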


We now focus on the elements in $U_{\text{low}}$. We define $\wlow=\set{W\cap U_{\text{low}}\mid W\in \wset}$.
From \Cref{obs: high low} with probability $1-n^{-2}$, all elements in $\ulow$ have frequency at most $2N$ in $\wlow$. 
We define a graph $H$ as follows. Its vertex set is $\ulow$, and its edge set contains, for every pair $u,u'\in \ulow$, an edge $(u,u')$ iff there exists a set $W\in \wlow$ that contains both $u$ and $u'$. We prove the following lemma.

\begin{lemma} \label{lem:sc-mt}
	Let $M$ be any maximal matching in $H$. Then   \[\frac{\card{\ulow}-\setcover(\ulow,\wlow)}{2} \le \card{M} \le \card{\ulow}-\setcover(\ulow,\wlow).\]
\end{lemma}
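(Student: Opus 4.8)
The plan is to prove the two inequalities separately, relating a maximal matching $M$ in $H$ to set covers of $(\ulow,\wlow)$ in both directions.

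\textbf{Upper bound $\card{M}\le\card{\ulow}-\setcover(\ulow,\wlow)$.} First I would construct a set cover from $M$ that is not too large. Given the maximal matching $M$, every unmatched vertex of $H$ is an isolated vertex in the sense that it belongs to no edge of... more precisely, maximality only says it is not matched, not that it is isolated; but note that each unmatched vertex $u$ still lies in \emph{some} set $W\in\wlow$ (since $\wlow$ contains the singleton $\set{u}$, as we assume singletons are always present), so we can cover it. Concretely: for each edge $(u,u')\in M$, pick a witnessing set $W_{u,u'}\in\wlow$ containing both $u$ and $u'$; for each vertex $u$ not covered by these witnessing sets, add the singleton $\set{u}$. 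This yields a set cover of size at most $\card{M}+\bigl(\card{\ulow}-2\card{M}\bigr)=\card{\ulow}-\card{M}$, because the $\card{M}$ witnessing sets cover at least $2\card{M}$ distinct vertices (the matched ones). Hence $\setcover(\ulow,\wlow)\le\card{\ulow}-\card{M}$, which rearranges to the desired bound.

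\textbf{Lower bound $\card{M}\ge\bigl(\card{\ulow}-\setcover(\ulow,\wlow)\bigr)/2$.} Here I would go the other way: take an optimal set cover $\wset^*$ of $(\ulow,\wlow)$, arrange its sets in a sequence $W_1,\dots,W_m$ with $m=\setcover(\ulow,\wlow)$, and for each $j$ let $N_j$ be the set of elements newly covered by $W_j$ (those in $W_j$ but in no $W_{j'}$ with $j'<j$). The $N_j$ partition $\ulow$, so $\sum_j\card{N_j}=\card{\ulow}$ and $\sum_j(\card{N_j}-1)=\card{\ulow}-m$. Within each $N_j$ with $\card{N_j}\ge 2$, every pair of its elements is joined by an edge of $H$ (they lie together in $W_j$), so $H[N_j]$ is a clique on $\card{N_j}$ vertices and thus contains a matching of size $\floor{\card{N_j}/2}\ge(\card{N_j}-1)/2$. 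A maximum matching restricted to the disjoint cliques $\{H[N_j]\}_j$ therefore has size at least $\sum_j(\card{N_j}-1)/2=(\card{\ulow}-m)/2$, so $H$ has a matching of size at least $(\card{\ulow}-\setcover(\ulow,\wlow))/2$. Finally, since any maximal matching $M$ has size at least half the maximum matching size, we would get $\card{M}\ge(\card{\ulow}-\setcover(\ulow,\wlow))/4$ — a factor $2$ off. To get the stated bound I instead argue directly: any \emph{maximal} matching $M$ must touch at least one endpoint of every edge, and in particular $\ulow\setminus V(M)$ is an independent set in $H$; an independent set in $H$ means no two of its elements share a set in $\wlow$, so each set of $\wlow$ contains at most one element of $\ulow\setminus V(M)$, forcing $\setcover(\ulow,\wlow)\ge\card{\ulow\setminus V(M)}=\card{\ulow}-2\card{M}$, i.e.\ $\card{M}\ge(\card{\ulow}-\setcover(\ulow,\wlow))/2$.

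\textbf{Main obstacle.} The delicate point is the lower bound: the naive ``disjoint cliques'' argument only controls the \emph{maximum} matching and loses a factor $2$ when passing to an arbitrary maximal matching, so one must instead use the structural fact that the unmatched vertices of a maximal matching form an independent set of $H$ and that independent sets of $H$ are exactly ``pairwise-set-disjoint'' vertex subsets — which then directly lower-bounds $\setcover(\ulow,\wlow)$. I would double-check the edge-cases (e.g.\ ensuring $\wlow$ indeed contains the relevant singletons, and that $\setcover(\ulow,\wlow)$ is well-defined, i.e.\ $\wlow$ covers $\ulow$), both of which follow from the standing assumption that $\wset$ contains all singletons.
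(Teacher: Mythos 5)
Your proof is correct and takes essentially the same route as the paper: the upper bound builds a set cover of size $\card{\ulow}-\card{M}$ from one witnessing set per matched pair plus singletons, and the lower bound uses that no set can contain two unmatched vertices (else $M$ would not be maximal) to conclude $\setcover(\ulow,\wlow)\ge\card{\ulow}-2\card{M}$. The only difference is cosmetic — the paper phrases the lower bound by bounding the number of unmatched vertices via an optimal cover, while you phrase it as a lower bound on the cover size via the independent set $\ulow\setminus V(M)$; your initial ``disjoint cliques'' detour is a nice self-correction that highlights exactly why the maximality of $M$ (not just maximum matching size) is what makes the factor $2$ work out.
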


\begin{proof}
	On the one hand, we can construct a set cover of size at most $\card{\ulow}-|M|$ as follows. For every pair $u,u'\in \ulow$ that is matched in $M$, we take any set that contains both $u,u'$; for every $u''\in \ulow$ that is not matched in $M$, we take the set $\set{u''}$. Clearly, we have taken at most $|\ulow|-|M|$ sets and they form a set cover, so $|M|\le |\ulow|-\setcover(\ulow,\wlow)$.
	On the other hand, consider an optimal set cover $\fset$ with $|\fset|=\setcover(\ulow,\wlow)$. Note that each set $W\in \fset$ contains at most one element that is not matched in $M$; otherwise there are $u,u'\in \ulow$ that are both not matched in $M$ while the edge $(u,u')$ belongs to $H$ by the definition of $H$, a contradiction to the maximality of $M$. Therefore, the number of vertices in $H$ that are unmatched in $M$ is at most $\setcover(\ulow,\wlow)$, which implies that $\frac{\card{\ulow}-\setcover(\ulow,\wlow)}{2} \le \card{M}$.
\end{proof}

We use the following result, which is implicit in Section $3$ of~\cite{Behnezhad21}. 

\begin{theorem}[\!\!\!\cite{Behnezhad21}\!] \label{thm:mat}
	Let $G$ be a graph on $Z$ vertices with average degree $\bar d$.
	If we are given an oracle that takes a vertex $v$ of $G$ as input and outputs all neighbors of $v$ in $G$, then there is an algorithm, that, given any $0<\eps <1$, with probability at least $1-n^{-2}$ estimates the size of some maximal matching in $G$ to within an additive factor of $\eps Z$, by performing  $O(\bar{d}\log Z /\eps^2)$ queries to the oracle. 
\end{theorem}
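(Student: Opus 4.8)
The plan is to realize the algorithm through the \emph{randomized greedy maximal matching} (RGMM) together with its standard local oracle, and to estimate the size of this (random but well-defined) maximal matching by uniform vertex sampling. Assign i.i.d.\ uniform ranks $r(e)\in[0,1]$ to the edges of $G$; with probability $1$ all ranks are distinct. Let $M=M(r)$ be the matching obtained by scanning edges in increasing rank order and adding an edge to $M$ whenever both its endpoints are still unmatched. Then $M$ is a maximal matching, it is a deterministic function of $r$, and it satisfies the local recursive characterization: $e\in M$ iff no edge $e'$ adjacent to $e$ with $r(e')<r(e)$ lies in $M$. This yields a recursive routine $\texttt{InMatching}(e)$: for $e=(x,y)$, query the oracle for the neighbor lists of $x$ and $y$, and for every adjacent edge $e'$ with $r(e')<r(e)$ recursively call $\texttt{InMatching}(e')$, returning \textsf{true} iff all these calls return \textsf{false} (the recursion terminates because ranks strictly decrease, bottoming out at a local-minimum edge, which is in $M$). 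A vertex $v$ is matched in $M$ iff running $\texttt{InMatching}$ on its incident edges (in increasing rank order, stopping at the first \textsf{true}) reports a match; note that $M$ is ``some maximal matching'' of $G$, as required.

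The estimator is then: sample $s=\Theta(\log Z/\eps^2)$ vertices uniformly and independently, determine (via the above) whether each is matched in $M$, let $\hat p$ be the fraction that are, and output $\hat X=\hat p\cdot Z/2$. Conditioned on $r$, the true fraction of matched vertices is $p=2\card{M}/Z$, so $\expect[\hat X\mid r]=\card{M}$; since the $s$ samples are independent, Hoeffding's inequality gives $\card{\hat p-p}\le\eps$ with probability at least $1-2e^{-2s\eps^2}\ge 1-Z^{-3}$ for an appropriate constant in $s$, hence $\card{\hat X-\card{M}}\le \eps Z/2\le \eps Z$. This holds for every fixed $r$, hence unconditionally, and in the application of \Cref{sec: set_cover_general} one has $Z\le\card{U}\le\card{\wset}=n$, so $Z^{-3}\le n^{-2}$.

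The only remaining quantity to control is the number of oracle queries, which (memoizing so that each vertex's neighbor list is fetched at most once across the whole run) is at most $O(s)$ plus the number of distinct edges visited by all recursive explorations from the $s$ sampled vertices. The technical crux — and the part I expect to be the real obstacle, being exactly the content of \cite{Behnezhad21} — is the bound: \emph{the expected number of distinct edges explored by the matched-status test from a uniformly random vertex is $O(\bar d)$}, where $\bar d=2\card{E}/Z$. I would prove this by a charging argument over the random ranks: the edges explored when testing $v$ form a union of rank-decreasing exploration paths emanating from $v$'s incident edges, a path $e_0,e_1,\dots,e_t$ of pairwise-adjacent edges is fully explored only when the ranks along the relevant sub-path are in decreasing order, an event whose probability decays super-exponentially in the length, and summing the resulting per-edge probabilities over all starting vertices and all such paths telescopes to $O(\card{E}/Z)=O(\bar d)$. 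By linearity over the $s$ samples, the expected total number of queries is $O(\bar d\,\log Z/\eps^2)$; to convert this into a worst-case query bound I would run the exploration with a hard budget of $C\bar d\,s$ queries and abort if it is exceeded (abort probability $<1/10$ by Markov), then take the median of $O(\log Z)$ independent repetitions, which keeps the query budget $O(\bar d\,\log Z/\eps^2)$ and boosts the success probability to $1-Z^{-2}$. Everything besides the $O(\bar d)$ oracle-cost bound — the RGMM/oracle equivalence, Hoeffding, budget truncation and median amplification — is routine.
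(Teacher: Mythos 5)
The paper does not actually prove \Cref{thm:mat}; it states that the result is ``implicit in Section~3 of~\cite{Behnezhad21}'' and uses it as a black box, so there is no in-paper proof to compare against. Your framework — randomized-greedy maximal matching determined by i.i.d.\ edge ranks, a local oracle that recurses on adjacent edges of strictly smaller rank, a $\Theta(\log Z/\eps^2)$-vertex sample of the matched fraction, Hoeffding, budget truncation, and median amplification — is indeed the same architecture as Behnezhad's, and all of the routine parts you describe are correct.

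The genuine gap is exactly where you flag it, and I want to emphasize that your proposed sketch for that step would fail. You suggest bounding the expected number of explored edges by summing, over all rank-decreasing paths emanating from the random start, the probability that the path is fully explored, on the grounds that a path of $t{+}1$ edges has probability about $1/(t{+}1)!$ of being monotone and that this ``telescopes to $O(\bar d)$.'' It does not: the number of paths of length $t$ in the line graph starting at a fixed edge can be as large as $(2\Delta-2)^t$ where $\Delta$ is the maximum degree, so the naive sum is $\sum_t (2\Delta)^t/(t{+}1)! = \Theta(e^{2\Delta}/\Delta)$ — the classical Nguyen--Onak exponential bound, not $O(\bar d)$. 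Already reducing this to $\mathrm{poly}(\bar d)$ expected recursive calls (Yoshida--Yamamoto--Ito) requires a cleverer double-counting over ordered pairs of edges, and obtaining the near-linear $\tilde O(\bar d)$ bound on the actual \emph{query} complexity (which must also account for the cost of enumerating each visited vertex's neighborhood, not merely count recursive calls) is precisely the technical content of \cite{Behnezhad21}; an earlier attempt at this bound by Onak, Ron, Rosen, and Rubinfeld had a flaw that Behnezhad's argument repairs with a substantially more delicate analysis. So the ``charging argument telescopes'' sentence is not a correct proof sketch of the missing lemma; if you instead cite the per-sample $\tilde O(\bar d)$ bound from \cite{Behnezhad21} as a black box, the rest of your proposal goes through. (One minor accounting nit: ``$Z^{-3}\le n^{-2}$'' needs $Z\ge n^{2/3}$, which is not guaranteed in the application where $Z = \card{\ulow}$; taking $\Theta(\log n/\eps^2)$ samples instead of $\Theta(\log Z/\eps^2)$ fixes this and still fits the stated query bound up to the substitution $Z\to n$.)
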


In order to use \Cref{thm:mat} to estimate the size of a maximal matching in $H$, we need to efficiently implement an oracle that, given any $u\in U$, finds all neighbors of $u$ in $H$, which we do next. 

\paragraph{A subroutine for finding all neighbors of a given vertex in $H$.}
From the definition of $H$, the neighbors of $u$ are the elements $u'$ in $\ulow$ that are contained in the same set as $u$. 
We find all neighbors of $u$ in $H$ as follows. We first perform membership queries on all pairs $(u,W)$ with $W\in \wlow$ (a total of $O(n)$ queries), and find all sets in $\wlow$ that contains $u$. Then for each set $W\in \wlow$ that contains $u$, we perform membership queries on all pairs $(u',W)$ with $u'\in \ulow$ (a total of at most $2N\cdot k$ queries as $u$ is contained in at most $2N$ sets), and figure out which elements are contained in $W$. The set of all neighbors of $u$ in $H$ is then obtained by taking the union of all sets $W$ that contain $u$. In the whole subroutine, we have performed $2N\cdot k= O(\beta n \log n/\eps)$ queries in total. 

From \Cref{thm:mat} with the algorithm described above serving as the oracle, we can with high probability estimate the size of a maximal matching to within an additive factor of $\eps k$ with $O(\bar{d} \cdot \beta n \log^2 n/\eps^3)$ membership queries, where $\bar d$ is the average degree in $H$. However, $\bar d$ can be as large as $k$, in which case we can only get an $\tilde{O}(k\cdot n/\eps^3)$ upper bound. To improve upon this, we will pre-process the instance $(\ulow,\wlow)$ before using the algorithm in \Cref{thm:mat}.

We partition the collection $\wlow$ into $\wset_1$ and $\wset_2$ as follows.

We set parameter $Q=\max\set{k\log^{-2} n /\eps n^{1/4}, \sqrt{n}\log^{-2} n /\eps}$.
That is, when $k\le n^{3/4}$, $Q=\sqrt{n}\log^{-2} n /\eps$, otherwise $Q=k\log^{-2} n /\eps n^{1/4}$.
Let $\tilde U$ be a size-$Q$ random subset of $\ulow$.
We let $\wset_1$ contain all sets $W\in \wlow$ with $|W\cap \tilde U|\le \log  n$, and let $\wset_2=\wlow\setminus \wset_1$.
Using Chernoff bound and similar arguments in the proof of \Cref{obs: high low}, we can show that with probability $1-n^{-2}$, every set in $\wset_1$ contains at most $(100k\log n/Q)$ elements, and every set in $\wset_2$ contains at least $(k\log n/100Q)$ elements. The number of queries that are needed for computing this partition is at most \underline{$O(nQ)=\tilde O_{\eps}(n^{3/2}+kn^{3/4})$}.

We now consider the instances $(\ulow,\wset_1)$ and  $(\ulow,\wset_2)$ separately. We define graphs $H_1,H_2$ for $(\ulow,\wset_1)$ and $(\ulow,\wset_2)$ respectively, in a similar way that $H$ is defined for instance $(\ulow,\wlow)$. We use the following observation.

\begin{observation}
\label{obs: matching size}
Let $M_1$ be any maximal matching in $H_1$, and let $M_2$ be any maximal matching in $H_2$. Then there exists a maximal matching in $H$ whose size is between $(\card{M_1}+\card{M_2})/2$ and $\card{M_1}+\card{M_2}$.
\end{observation}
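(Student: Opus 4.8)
The plan is to prove the two bounds separately; the lower bound is immediate and essentially all the work is in the upper bound. First I would record the only structural fact I need: since $\wlow=\wset_1\cup\wset_2$, an unordered pair lies together in some set of $\wlow$ iff it lies together in some set of $\wset_1$ or of $\wset_2$, so $E(H)=E(H_1)\cup E(H_2)$; that is, $H_1$ and $H_2$ are spanning subgraphs of $H$ whose edge sets cover $E(H)$. By symmetry between $H_1$ and $H_2$ I may assume $\card{M_1}\ge\card{M_2}$, and I will construct a maximal matching $M$ of $H$ with $M_1\subseteq M$. Then $\card{M}\ge\card{M_1}\ge(\card{M_1}+\card{M_2})/2$, which is the lower bound, so it remains only to arrange $\card{M}\le\card{M_1}+\card{M_2}$.

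To build $M$ I would start from $M_1$, then scan the edges of $M_2$ in arbitrary order and add each edge both of whose endpoints are currently unmatched, and finally add arbitrary edges of $H$ greedily until the matching is maximal. Write $M=M_1\cup M_2'\cup M_3$ for the three groups of added edges. Because $M_2$ is itself a matching, the only obstruction to adding $e\in M_2$ in the second phase is that $e$ meets $V(M_1)$, so $M_2'=\set{e\in M_2 : e\cap V(M_1)=\emptyset}$; put $M_2''=M_2\setminus M_2'$, so $\card{M_2'}+\card{M_2''}=\card{M_2}$. The crux is the claim that $\card{M_3}\le\card{M_2''}$, since that gives $\card{M}=\card{M_1}+\card{M_2'}+\card{M_3}\le\card{M_1}+\card{M_2}$, as required. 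To prove the claim, take any $e'\in M_3$: both its endpoints lie outside $V(M_1)$, so $e'\notin E(H_1)$ by maximality of $M_1$ in $H_1$, hence $e'\in E(H_2)$; if both endpoints also lay outside $V(M_2)$ this would contradict maximality of $M_2$ in $H_2$, so $e'$ has an endpoint $v\in V(M_2)$ lying outside $V(M_1)\cup V(M_2')$. Such a $v$ is matched in $M_2$ by an edge not in $M_2'$ (else $v\in V(M_2')$), hence by an edge of $M_2''$, and $v$ is the unique endpoint of that $M_2''$-edge outside $V(M_1)$. Mapping each $e'\in M_3$ to such a $v$ is injective (two $M_3$-edges sharing $v$ would violate that $M_3$ is a matching), and the target set has size at most $\card{M_2''}$; this yields $\card{M_3}\le\card{M_2''}$.

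I expect the main obstacle to be exactly this bound $\card{M_3}\le\card{M_2''}$: one has to argue, using the maximality of $M_1$ within $H_1$ \emph{and} the maximality of $M_2$ within $H_2$ simultaneously, that the edges needed to extend $M_1\cup M_2'$ to a maximal matching of $H$ can all be charged to the $M_2$-edges that were destroyed by $V(M_1)$, so that they cannot proliferate. Everything else — the identity $E(H)=E(H_1)\cup E(H_2)$, the symmetry reduction to $\card{M_1}\ge\card{M_2}$, and the lower bound from $M_1\subseteq M$ — is routine bookkeeping.
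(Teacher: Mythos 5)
Your proposal is correct and follows essentially the same construction and charging argument as the paper: start from $M_1$, absorb the surviving edges of $M_2$, greedily complete to a maximal matching of $H$, and then charge each third-phase edge to a unique edge of $M_2''=M_2\setminus M_2'$ (you charge via the distinguished non-$V(M_1)$ endpoint, the paper via a ``blocker'' edge, but these are the same bijection). The lower bound via $M_1\subseteq M$ and the WLOG assumption $\card{M_1}\ge\card{M_2}$ also match the paper exactly.
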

\begin{proof}
Note that $H=H_1\cup H_2$. 
Assume without loss of generality that $|M_1|\ge |M_2|$.
We construct a matching $M$ in $H$ as follows. We start with $M=M_1$. We add all edges of $M_2$ that do not share endpoint with any edges in $M_1$. Then we greedily add edges in $H_2$ that do not share endpoint with any of the current edges in $M$, until no edge can be added. Since $M_1$ is a maximal matching in $H_1$, from our algorithm, it is easy to verify that the resulting matching $M$ is a maximal matching in $H$. Note that $|M|\ge |M_1|\ge (\card{M_1}+\card{M_2})/2$. 

It remains to show that $|M|\le \card{M_1}+\card{M_2}$. We denote by $M'_2$ the subset of edges in $M_2$ that do not share endpoints with edges in $M_1$. It suffices to show that in the last step we have greedily added at most $|M_2\setminus M'_2|$ edges into $M$. Consider an edge $e$ added to $M$ in this step. Since $e\in E(H_2)$ and $M_2$ is a maximal matching in $H_2$, $e$ shares an endpoint with some edge $e'\in M_2$, and $e'$ has not been added to $M$ in the previous step, so $e'\in M_2\setminus M'_2$. We say that $e'$ is the \emph{blocker} of $e$. In fact, every edge $e'\in M_2\setminus M'_2$ can be the blocker of at most one edge added in the last step, since if $e'=(u,u')$ and $u$ is an endpoint of some edge in $M_1$, then every edge $e$ blocked by $e'$ has to contain $u'$ as an endpoint, and thus there can be at most one such edge. 
\end{proof}

We first consider the instance $(\ulow,\wset_1)$. Recall that each set in $\wset_1$ has size at most $(100k\log n/Q)$, and every element in $\ulow$ is contained in at most $2N$ sets in $\wset_1$. Therefore, every element in $\ulow$, as a vertex in $H_1$, has at most $(100k\log n/Q) \cdot 2N = \tilde O(\beta n/Q)$ neighbors. We now apply the algorithm in \Cref{thm:mat} with parameter $\eps/10$ together with the subroutine for finding all neighbors of a given vertex described above to obtain an estimate $X_1$ of the size of a maximal matching in $H_1$. Since the average degree in $H_1$ is $\tilde O(\beta n/Q)$, the number of queries performed by the algorithm is 
$$O(\beta n \log n/\eps)\cdot \tilde O(\beta n/Q)\cdot (\log  n)/\eps^2=\tilde O(\beta^{2} n^{2}/Q)=
\tilde O\bigg(n^{2}\cdot\frac{\beta}{Q}\cdot  \beta\bigg)=\tilde O\bigg(n^{3/2}\cdot  \beta\bigg)=\underline{\tilde O(n^{3/2}+kn^{3/4})}.$$

We next consider the instance $(\ulow,\wset_2)$.
Recall that every set in $\wset_2$ has size at least $(k\log n/100Q)$ and every element is contained in at most $2N$ sets. Therefore,
$$\card{\wset_2} 
\le \frac{ k \cdot 2N }{ (k\log n/100Q)}
=\tilde O\bigg(\max\set{k,\frac{n^{3/2}}{k}}\bigg).$$ 
Therefore, if $k>n^{3/4}$, we can simply use another $\tilde O(n)$ queries to estimate $|\bigcup_{W\in \wset_2}W|$ to obtain an estimate of $\setcover(U,\wset_2)$ to within an additive $\eps k$ factor. If $k\le n^{3/4}$, we can then perform $k\cdot |\wset_2|=\tilde O(k^2+ n^{3/2})=\underline{\tilde O(n^{3/2})}$ queries to obtain the entire instance $(\ulow, \wset_2)$, compute the graph $H_2$, and then compute the size $X_2$ of a maximal matching in $H_2$.

Lastly, we return $X=|U_{\text{high}}|-\eps k/10+(X_1+X_2)/2$ as our final output. From the above discussion, the number of queries we have performed is $\tilde O(n^{3/2}+kn^{3/4})$.

We next analyze the probability that $X$ is a $(4,\eps  |U|)$-approximation of $|U|-\setcover(U,\wset)$.

\paragraph{Bad Event $\xi$.} We define $\xi$ to be the bad event that either of the following happens: (i) there exists some element in $U_{\text{low}}$ with frequency at more than $2N$ in $\wset$ or there exists some element in $U_{\text{high}}$ with frequency at most $N$ in $\wset$; (ii) there does not exist a subcollection of $\eps k/(10\beta)$ sets in $\wset$ that cover all elements in $U_{\text{high}}$; 
(iii) there exists some set in $\wset_1$ that contains more than $(100k\log n/Q)$ elements, or there exists some set in $\wset_2$ that contains fewer than $(k\log n/100Q)$ elements;
and (iv) in the application of the algorithm from \Cref{thm:mat} to $H_1$, the output $X_1$ is not an estimate of the size of any maximal matching in $H_1$ to within an additive error of $\eps k/10$. From \Cref{obs: high low}, \Cref{obs: high free}, \Cref{thm:mat} and the above discussion, $\Pr[\xi]=O(n^{-2})$.

The proof of \Cref{thm: sec cover 1} is concluded by the following claim.
\begin{claim}
\label{clm: main}
If event $\xi$ does not happen, then $X$ is a $(4,\eps |U|)$-estimation of $|U|-\setcover(U,\wset)$.
\end{claim}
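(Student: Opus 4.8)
The plan is to follow the output $X = |\uhigh| - \eps k/10 + (X_1+X_2)/2$ back to $Y := |U| - \setcover(U,\wset)$ through three reductions, working throughout under the assumption that $\xi$ does not occur (here $n = |\wset|$, $k = |U|$).

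First I would separate the high-frequency and low-frequency contributions. Writing $D := |\ulow| - \setcover(\ulow, \wlow)$, note that restricting an optimal cover of $(U, \wset)$ to $\ulow$ gives a cover of $(\ulow, \wlow)$ of no greater size, so $\setcover(\ulow, \wlow) \le \setcover(U, \wset)$; conversely, lifting an optimal cover of $(\ulow, \wlow)$ to $\wset$ and adjoining the at most $\eps k/(10\beta) \le \eps k/10$ sets of $\wset$ that cover $\uhigh$ (such a subcollection exists because $\neg\xi$ rules out event (ii) in the definition of $\xi$) shows $\setcover(U, \wset) \le \setcover(\ulow, \wlow) + \eps k/10$. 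Since $|U| = |\ulow| + |\uhigh|$, these bound $Y$ by
\[
|\uhigh| + D - \tfrac{\eps k}{10} \;\le\; Y \;\le\; |\uhigh| + D .
\]

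Next I would reduce $D$ to matching sizes. Applying \Cref{lem:sc-mt} to the co-occurrence graph $H$ on $\ulow$, every maximal matching $M$ of $H$ has $D/2 \le |M| \le D$. Since $H = H_1 \cup H_2$, \Cref{obs: matching size} then gives a maximal matching $M$ of $H$ with $(|M_1| + |M_2|)/2 \le |M| \le |M_1| + |M_2|$, where $M_1$ is the maximal matching of $H_1$ whose size $X_1$ estimates to within additive error $\eps|\ulow|/10 \le \eps k/10$ (by \Cref{thm:mat}, using that $\neg\xi$ rules out event (iv)) and $M_2$ is the maximal matching of $H_2$ whose size is $X_2$ — computed exactly when $k \le n^{3/4}$, and otherwise estimated to within additive $O(\eps k)$ using that $|\wset_2| = o(k)$, so that $\setcover(\ulow, \wset_2)$ contributes negligibly. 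Chaining these,
\[
\tfrac{|M_1| + |M_2|}{2} \;\le\; D \;\le\; 2\,(|M_1| + |M_2|) .
\]
Substituting $X_1 + X_2 = |M_1| + |M_2| \pm O(\eps k)$ and feeding this back through the first display, one collects the two multiplicative factors of $2$ (one from \Cref{lem:sc-mt}, one from \Cref{obs: matching size}) into the claimed factor $4$, while the additive slack from the $\uhigh$ adjustment together with the matching-estimation errors sums to at most $\eps k = \eps|U|$; this yields $Y \le X \le 4Y + \eps|U|$.

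The main obstacle is this final bookkeeping. The bounds supplied by \Cref{lem:sc-mt} and \Cref{obs: matching size} are both two-sided, so they must be chained in precisely the directions that make $X$ an \emph{over}estimate of $Y$, and this has to survive combination with the one-sided additive errors of $X_1$ and $X_2$ and with the fixed $\eps k/10$ correction baked into $X$. Pinning down the constant multiplying $(X_1 + X_2)$ and the sign of the $\eps k/10$ term so that the result is exactly a $(4, \eps|U|)$-estimation of $Y$, rather than an estimation in the reverse direction, is the only genuinely delicate point; the three reduction steps themselves are short, and the remaining arithmetic is routine.
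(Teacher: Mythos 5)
Your three reductions are the right ones and they follow the paper's argument faithfully: the $\uhigh/\ulow$ split with $|\uhigh| + D - \eps k/10 \le Y \le |\uhigh| + D$ (using $\neg\xi$ to get the $\eps k/10$ cover of $\uhigh$), the reduction of $D$ to maximal-matching size in $H$ via \Cref{lem:sc-mt}, and the further decomposition into $H_1, H_2$ via \Cref{obs: matching size}. Up to that point the proposal is correct.

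The gap is in the last step, and it is not a matter of "routine bookkeeping" — the direction you claim is unattainable. You assert that chaining the inequalities "yields $Y \le X \le 4Y + \eps|U|$," i.e., that $X$ overestimates $Y$. But trace through your own chain: the algorithm outputs $X = |\uhigh| - \eps k/10 + (X_1+X_2)/2$, while $Y$ can be as large as $|\uhigh| + D$ and $D$ can be as large as $2(|M_1|+|M_2|) \approx 2(X_1+X_2)$. So $Y$ can exceed $X$ by nearly a factor of $4$: there is no hope of proving $Y \le X$. (A concrete check: if $\uhigh = \emptyset$, $\ulow$ is covered by $33$ disjoint triples, and all of $H = H_1$, then $Y = 66$, $|M_1| = 33$, and $X \approx 16.5$.) What the algorithm actually guarantees, and what the paper's own proof establishes, is the \emph{under}estimate direction: $X \le Y$ and $Y \le 4X + \eps|U|$. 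The upper chain $X \le |\uhigh| - \eps k/10 + |M| \le |\uhigh| - \eps k/10 + D \le Y$ uses the one-sided $X_1 \le |M_1|$ and $|M| \le D$ from \Cref{lem:sc-mt}; the lower chain $Y \le |\uhigh| + D \le |\uhigh| + 2|M| \le |\uhigh| + 2(X_1+X_2) + \eps|U|/5 \le 4X + \eps|U|$ uses $D \le 2|M|$ and $|M_1| \le X_1 + \eps|U|/10$. You even flag the direction as "the only genuinely delicate point" and then pick the wrong one. Part of the confusion is traceable to the paper itself: the definition "$X$ is an $(a,b)$-estimation of $Y$ iff $Y \le X \le aY + b$" given just before \Cref{thm: sec cover} contradicts what the proof of this claim (and the use of the estimate in the downstream analysis, which needs $Y_i \ge (X_i - \eps|U_i|)/4$) actually requires; the intended reading is $X \le Y \le aX + b$. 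But that typo does not rescue the claim $Y \le X$ — it simply cannot hold for this estimator, and a correct write-up must chain the inequalities in the opposite direction from the one you describe.
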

\begin{proof}
Assume now that event $\xi$ does not happen. 
Let $M_1$ be a maximal matching in $H_1$, such that $X_1\le |M_1|\le X_1+\eps |U|/10$. Let $M_2$ be a maximal matching in $H_2$, such that $X_2=|M_2|$. Let $M$ be any maximal matching in $H$. From \Cref{obs: matching size},
\[
\frac{X_1+X_2}{2}\le \frac{|M_1|+|M_2|}{2}\le |M|\le |M_1|+|M_2|\le X_1+X_2+\frac{\eps |U|}{10}.
\]
Then from \Cref{lem:sc-mt},
\[
\frac{X_1+X_2}{2}\le |M|\le |\ulow|-\setcover(\ulow,\wlow)\le 2|M|\le 2(X_1+X_2)+\frac{\eps |U|}{5}.
\]

Let $\tilde \wset$ be a sub-collection of $\eps |U|/10$ sets in $\wset$ that cover all elements in $U_{\text{high}}$ (such a subcollection exists since event $\xi$ does not happen). Let $\tilde \wlow$ be an optimal set cover of instance $(\ulow,\wlow)$. Clearly, $\tilde \wset\cup \tilde \wlow$ is a feasible set cover of instance $(U,\wset)$. Therefore, $\setcover(U,\wset)\le \setcover(\ulow,\wlow)+\eps |U|/10$, and
\[
\begin{split}
X & =|U_{\text{high}}|-\frac{\eps |U|}{10}+\frac{X_1+X_2}{2}
\le |U_{\text{high}}|-\frac{\eps |U|}{10}+|M|\le |U_{\text{high}}|-\frac{\eps |U|}{10}+|\ulow|-\setcover(\ulow,\wlow)
\\
& 
= |U|-\setcover(\ulow,\wlow)-\frac{\eps |U|}{10}
\le |U|-\setcover(U,\wset).
\end{split}
\]
On the other hand, since $\setcover(\ulow,\wlow)\le \setcover(U,\wset)$,
\[
\begin{split}
 |U|-\setcover(U,\wset) & \le |U|-\setcover(\ulow,\wlow)=|U_{\text{high}}|+|\ulow|-\setcover(\ulow,\wlow)\le |U_{\text{high}}|+2|M| 
\\
& 
\le |U_{\text{high}}|+2(X_1+X_2)+\frac{\eps |U|}{5}
\le 4\cdot \bigg(|U_{\text{high}}|-\frac{\eps |U|}{10}+\frac{X_1+X_2}{2}\bigg)+\eps |U| =4X+\eps |U|.
\end{split}
\]
\end{proof}

\subsubsection*{Handling the cardinality-$2$ sets}

We follow the same algorithm as decribed in \Cref{sec: set_cover_general}. We first partition $U$ into $U_{\text{low}}$ and $U_{\text{high}}$, and then focus on instance $(\ulow,\wlow)$. We partition $\wlow$ into $\wset_1$ and $\wset_2$ and construct graphs $H_1$ and $H_2$. Note that when constructing these partitions, we do not (and are not able to) ignore the sets of size 2.
Since every set in $\wset_2$ contains more than $2$ elements, computing an estimate of the size of a maximal matching in $H_2$ can be done in the same way. 
When processing the instance $(\ulow,\wset_1)$, in the subroutine of finding all neighbors of an element $u\in \ulow$, we just need to ignore all sets of size $2$ in $\wset_1$ that contains $u$. 

Regarding the proof that the output is a $(4,\eps  |U|)$-approximation of $|U|-\setcover(U,\wset)$, \Cref{obs: high free} shows that with probability $1-n^{-2}$ there exists a subcollection $\tilde \wset$ of at most $\eps|U|/10$ sets in $\wset$ that cover all elements in $U_{\text{high}}$. We need to modify it to show that with probability $1-n^{-2}$ there exists a subcollection of $\eps|U|/5$ sets in $\wset_{\ne 2}$ that cover all elements in $U_{\text{high}}$, and this can be simply achieved by replacing all cardinality-$2$ sets in $\tilde \wset$ with singleton sets that contain all elements that are covered by the cardinality-$2$ sets in $\tilde \wset$. And the proof of \Cref{clm: main} now still goes through with this modification.

\newcommand{\BFS}{\mathsf{BFS}}
\newcommand{\FIND}{\mathsf{FIND}}

\subsection{Implementation without Knowing the Terminal-Induced Metric Upfront}
\label{subsec: without MST}

In this subsection, we complete the proof of \Cref{thm: beat-2-main} by showing that the algorithm described in \Cref{subsec: with MST} and \Cref{sec: set cover} can in fact be implemented without knowing the terminal-induced metric upfront, at the cost of a slightly worse query complexity $\tilde O (n^{12/7}+n^{6/7}\cdot k)$. 

Intuitively, since we do not have the terminal-induced metric upfront, we can no longer assume that we can start the process with a MST on terminals along with its hierarchical structure in our hand. However, as Step 2 and Step 3 in our algorithm only utilize local MST structure to find evidence, we do not really need the whole MST in order to implement them, but we can instead locally explore the MST structure (and its hierarchical structure) whenever needed.
Therefore, we start by introducing two BFS-type subroutines that are crucial for implementing Steps 2 and 3. 
These subroutines are similar to the ones used in \cite{czumaj2009estimating}.

\subsubsection*{Auxiliary BFS-type Subroutines}

Since we are not able to query the distances between each pair of terminals, we are not able to precisely recover the graph $H_i$ for each layer $i$. Thus, we will define a graph $\hat{H}_i$ which lies somewhere  ``in-between'' $H_i$ and $H_{i+1}$ but makes it easier to explore locally.

Let $\eps_1 = \eps/10$. We first construct a graph $\bar{H}_i$ as follows: the vertices are terminals, and there is an edge between each pair of terminals if their distance is less than $\eps_1(1+\eps)^i$. We assign a random rank on each terminal, and let $R_i$ be the lexicographically first maximal independent set based on this rank. We say the terminals in $R$ are the {\em representative terminals}, and for each terminal $u$, there is a representative terminal $u'$ at distance at most $\eps_1(1+\eps)^i$ away from $u$. We say {\em $u$ is represented by $u'$}. We define $\hat{H}$ as follows: any terminal is connected to its representative terminal, and for any two representative terminals in $R$, they are connected in $\hat{H}$ if and only if their distance is at most $(1+3\eps_1)(1+\eps)^i$. 
The following observation directly follows from these definitions:

\begin{observation} \label{obs:obs1}
    Any component in $H_i$ is also connected in $\hat{H}_i$, and any component in $\hat{H}_i$ is also connected in $H_{i+1}$.
\end{observation}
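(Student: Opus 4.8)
The statement is essentially a sandwich claim: $H_i$ (the graph connecting terminal pairs at distance $< (1+\eps)^i$, up to the geometric rounding used in the paper) is refined by $\hat{H}_i$, which in turn is refined by $H_{i+1}$. I would prove the two containments separately, in each case by showing that every edge of the ``smaller'' graph is realized by a short path in the ``larger'' graph, and then noting that connectivity of a component is preserved under adding edges / taking a supergraph on the same vertex set. The only real content is tracking the triangle-inequality slack introduced by passing through representative terminals.

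\textbf{Step 1: every component of $H_i$ is connected in $\hat{H}_i$.} It suffices to show that for every edge $(u,u')$ of $H_i$ — i.e., every pair of terminals with $w(u,u') < (1+\eps)^i$ (I will use the appropriate level-$i$ threshold as defined in Step 1 of the algorithm) — the two endpoints $u,u'$ lie in the same component of $\hat{H}_i$. Let $\rho(u), \rho(u')$ be the representative terminals of $u$ and $u'$, so $w(u,\rho(u)), w(u',\rho(u')) \le \eps_1(1+\eps)^i$. By the triangle inequality,
\[
w(\rho(u),\rho(u')) \le w(\rho(u),u) + w(u,u') + w(u',\rho(u')) \le (1+\eps)^i + 2\eps_1(1+\eps)^i \le (1+3\eps_1)(1+\eps)^i,
\]
so by definition $\rho(u)$ and $\rho(u')$ are adjacent in $\hat{H}_i$ (or equal); and $u$–$\rho(u)$, $u'$–$\rho(u')$ are edges of $\hat{H}_i$ by construction. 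Hence $u$ and $u'$ are connected in $\hat{H}_i$. Since this holds for every edge of $H_i$, any two vertices in the same $H_i$-component are connected in $\hat{H}_i$.

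\textbf{Step 2: every component of $\hat{H}_i$ is connected in $H_{i+1}$.} Again it suffices to check each edge of $\hat{H}_i$. There are two kinds: (a) a terminal $u$ to its representative $\rho(u)$, with $w(u,\rho(u)) \le \eps_1(1+\eps)^i < (1+\eps)^{i+1}$ (since $\eps_1 < 1$), hence $(u,\rho(u))$ is an edge of $H_{i+1}$; and (b) a pair of representative terminals at distance $\le (1+3\eps_1)(1+\eps)^i$, which is at most $(1+\eps)^{i+1}$ provided $(1+3\eps_1) \le (1+\eps)$, i.e., $3\eps_1 \le \eps$ — which holds since $\eps_1 = \eps/10$. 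So every such pair is also an edge of $H_{i+1}$. Thus $\hat{H}_i$ is a subgraph of $H_{i+1}$ on the same vertex set, and connectivity is trivially preserved.

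\textbf{Anticipated main obstacle.} There is no deep obstacle here; the proof is a routine triangle-inequality chase. The one thing to be careful about is matching the exact thresholds: the paper defines level-$i$ edges of $H_i$ via $(1+\eps)^{i-1} \le w(u,u') < (1+\eps)^i$, so $H_i$ contains all pairs at distance $< (1+\eps)^i$; I must make sure the constants $\eps_1 = \eps/10$, the $(1+3\eps_1)$ representative-threshold, and the geometric ratio $(1+\eps)$ are reconciled so that the inequalities in both steps go through with room to spare (they do, since $3\eps_1 = 3\eps/10 < \eps$). A second minor point is to state explicitly that ``component $C$ of $G$ is connected in $G'$'' means every pair of vertices of $C$ is joined by a path in $G'$ — which follows immediately once every edge of $G$ is shown to be ``simulated'' by a path in $G'$, since a path between consecutive vertices along an edge of $C$ can be concatenated.
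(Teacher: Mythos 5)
Your proof is correct and matches the intended reasoning exactly: the paper states this observation "directly follows from these definitions" without spelling out the argument, and your triangle-inequality chase through the representative terminals (with the key verifications that $1+2\eps_1 \le 1+3\eps_1$ for Step 1 and $1+3\eps_1 \le 1+\eps$ for Step 2, both satisfied by $\eps_1=\eps/10$) is precisely the routine check being elided.
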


We define the {\em size of a component} in $\hat{H}$ as the number of representative terminals in $\hat{H}$. Next, we define another directed graph $H_i^C$ as follows. The vertices of $H_i^C$ are the components in $\hat{H}$, and for any two components $S_1$ and $S_2$, there is a directed edge from $S_1$ to $S_2$ if there is a representative terminal $u_1 \in S_1$ and a terminal $u_2 \in S_2$ such that their distance is at most $(1+3\eps_1)(1+\eps)^i$. We say that a {\em component $S_1$ can reach a component $S_2$} if there is a directed path from $S_1$ to $S_2$ in $H_i^C$. We have the following observation:

\begin{observation} \label{obs:obs2}
    If $S_1$ can reach $S_2$ in $H_i^C$, then $S_1$ and $S_2$ are inside the same component in $H_{i+1}$.
\end{observation}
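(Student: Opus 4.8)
**Proof proposal for Observation (the statement about $H_i^C$: if $S_1$ can reach $S_2$ then they lie in the same component of $H_{i+1}$).**

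The plan is to induct on the length of the directed path in $H_i^C$ witnessing that $S_1$ reaches $S_2$. The base case is a single directed edge $S_1 \to S_2$. By definition of $H_i^C$, such an edge means there is a representative terminal $u_1 \in S_1$ and a terminal $u_2 \in S_2$ with $w(u_1,u_2) \le (1+3\eps_1)(1+\eps)^i$. Since $\eps_1 = \eps/10$, we have $(1+3\eps_1)(1+\eps)^i = (1 + 3\eps/10)(1+\eps)^i < (1+\eps)^{i+1}$, so the edge $(u_1,u_2)$ is present in $H_{i+1}$. On the other hand, $S_1$ is a connected component of $\hat H_i$, and by \Cref{obs:obs1} any component of $\hat H_i$ is connected in $H_{i+1}$; likewise $S_2$ is connected in $H_{i+1}$. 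Therefore $S_1 \cup S_2$, together with the edge $(u_1,u_2)$, is connected in $H_{i+1}$, which places $S_1$ and $S_2$ in the same component of $H_{i+1}$.

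For the inductive step, suppose $S_1$ reaches $S_2$ via a directed path $S_1 \to S' \to \cdots \to S_2$ of length $\ell \ge 2$. By the induction hypothesis applied to the sub-path from $S'$ to $S_2$ (of length $\ell-1$), $S'$ and $S_2$ lie in the same component of $H_{i+1}$. By the base-case argument applied to the first edge $S_1 \to S'$, $S_1$ and $S'$ lie in the same component of $H_{i+1}$. Transitivity of ``belonging to the same connected component'' then gives that $S_1$ and $S_2$ are in the same component of $H_{i+1}$, completing the induction.

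I do not expect any real obstacle here; the only point that needs care is the arithmetic check $(1+3\eps_1)(1+\eps)^i < (1+\eps)^{i+1}$, i.e.\ $1 + 3\eps_1 < 1+\eps$, which holds comfortably since $\eps_1 = \eps/10$ gives $1 + 3\eps/10 < 1+\eps$. Everything else is a routine invocation of \Cref{obs:obs1} (each $\hat H_i$-component is connected in $H_{i+1}$) plus the observation that connectivity in a fixed graph is an equivalence relation, so a chain of pairwise co-component relations collapses to a single one.
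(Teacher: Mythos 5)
Your proof is correct. The paper states \Cref{obs:obs2} without proof, treating it as an immediate consequence of the definitions; your argument is exactly the reasoning one would supply: for each directed edge $S_a \to S_b$ in $H_i^C$, the witnessing pair $(u_1,u_2)$ has distance at most $(1+3\eps_1)(1+\eps)^i < (1+\eps)^{i+1}$ (using $\eps_1=\eps/10$), so $(u_1,u_2)$ is an edge of $H_{i+1}$; combined with \Cref{obs:obs1}, which makes each of $S_a,S_b$ connected inside $H_{i+1}$, the two components merge, and transitivity along the directed path (your induction) gives the general statement.
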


We define $U_i$ to be the components $S$ in $\hat{H}_{i-1}$ such that the total size of the components that can be reached by $S$ is at most $100 L \log^2 n/\eps_1$. 

Throughout the process, we will always maintain the knowledge of whether a terminal is representative or not. At first, it is {\em unknown} for each terminal whether or not it is a representative. The subroutine $\FIND(u,i)$ takes as input a terminal $u$ and a parameter $i$ and outputs a representative $u'$ in the following manner. We query the distance between $u$ and all other terminals, if all terminals that are at most $\eps_1(1+\eps)^i$ away from $u$ have lower rank than $u$ or are known not to be a representative, then we output $u$. Otherwise let $u_1$ be the highest rank terminal among them, and we repeat this process on $u_1$, and continue in this manner, until we find a representative $u'$. We then mark $u'$ as a representative, and all terminals that are at most $\eps_1(1+\eps)^i$ away from $u'$ as non-representative. Note that $u'$ might not represent $u$, but $u$ and $u'$ must be in in the same component in $\hat{H}_i$. The following observation gives an upper bound on the running time of $\FIND$.

\begin{observation} \label{obs:find}
    With high probability, for any $u$ and $i$, $\FIND(u,i)$ uses $\tilde{O}(k)$ queries.
\end{observation}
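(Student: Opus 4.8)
The plan is to show that a single invocation of $\FIND(u,i)$ visits only $O(\log k)$ terminals before halting. Since at each visited terminal the subroutine spends at most $k-1$ distance queries (it queries the distance from that terminal to every other terminal, and all subsequent bookkeeping — marking a representative and its neighborhood — is free once those distances are known), this bounds the cost of a single call by $O(k\log k)=\tilde O(k)$. A union bound over the at most $k$ choices of $u$ and the $L+1 = O(\log k/\eps)=\tilde O(1)$ choices of level $i$ then yields the high-probability guarantee simultaneously for all pairs $(u,i)$.

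The first step is to record the structure of the walk. Let $u=u_0\to u_1\to u_2\to\cdots$ be the terminals visited by $\FIND(u,i)$. The key point is that $u_{j+1}$ is the terminal of \emph{globally} highest rank among those within distance $\eps_1(1+\eps)^i$ of $u_j$: the ``known non‑representative'' flags enter only through the halting test (``all closer terminals are lower-ranked or already marked non-representative''), so they can cause the walk to stop sooner but never redirect it to a lower-ranked vertex. Consequently $\operatorname{rank}(u_0)<\operatorname{rank}(u_1)<\operatorname{rank}(u_2)<\cdots$, so all visited terminals are distinct, and writing $N[v]$ for the set of terminals within distance $\eps_1(1+\eps)^i$ of $v$ (including $v$), a trivial induction shows that $\operatorname{rank}(u_j)=\max_{w\in A_j}\operatorname{rank}(w)$, where $A_j:=\bigcup_{t\le j}N[u_t]$ is the set of terminals the walk has ``seen'' through step $j$.

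The heart of the argument is a standard deferred-decisions computation over the uniformly random ranks, revealing the relative order of $A_1,A_2,\dots$ one layer at a time. Conditioned on the history through step $j$ (which determines $u_j$, hence $N[u_j]$, hence $A_{j+1}$), the walk proceeds past $u_j$ exactly when the maximum of $A_{j+1}$ is one of the $b_j:=|A_{j+1}\setminus A_j|$ freshly revealed terminals — an event of probability $b_j/|A_{j+1}|$ — and it halts immediately if $b_j=0$. Hence, with $a_j:=|A_j|$ (so $1\le a_1<a_2<\cdots$ while the walk continues, and $a_j\le k$ throughout),
\[
\Pr[\text{walk length}\ge \ell]\;\le\;\mathbb{E}\!\left[\prod_{j=1}^{\ell-1}\frac{a_{j+1}-a_j}{a_{j+1}}\right].
\]
A short convexity argument — the quantity $\prod_j(1-a_j/a_{j+1})$ subject to $1\le a_1<\cdots<a_\ell\le k$ is maximized when the $a_j$ grow geometrically — bounds the right-hand side by $\bigl(1-k^{-1/(\ell-1)}\bigr)^{\ell-1}$, which for $\ell=c\log_2 k+1$ with a large enough absolute constant $c$ is at most $k^{-10}$. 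Union-bounding over the at most $k(L+1)$ pairs $(u,i)$ then shows that with probability $1-k^{-\Omega(1)}$ every $\FIND(u,i)$ visits at most $O(\log k)$ terminals, and therefore uses $O(k\log k)=\tilde O(k)$ queries.

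The only point requiring a bit of care is the interaction between one $\FIND$ call and the representative/non-representative marking accumulated over all earlier calls. I would dispatch this exactly as noted above: the marking affects only the halting condition and never the choice of next vertex, so the trajectory of $\FIND(u,i)$ is always a prefix of the marking-free ``greedy highest-rank neighbor'' walk analyzed here; the probabilistic bound therefore applies verbatim, with the marking only helping. Everything else — the per-step query count, absorbing the $O(\log k/\eps)$ factors from $L$ and the two halting-condition checks into $\tilde O(\cdot)$ — is routine.
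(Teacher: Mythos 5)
Your proof is correct in its essentials and takes a genuinely different route from the paper. The paper dispatches this observation in two sentences by recognizing the exploration sequence of $\FIND$ as a path in the DFS tree of the parallel randomized greedy MIS and citing Fischer--Noever \cite{FischerN20} for the $O(\log n)$ depth bound. You instead re-derive the $O(\log k)$ walk-length bound from scratch via a deferred-decisions argument: observe that the (marking-free) walk always moves to the globally highest-rank element of the explored neighborhood $A_j$, so the walk continues past step $j$ precisely when the max of $A_{j+1}$ lands among the newly revealed $b_j$ elements (probability $b_j/a_{j+1}$ by exchangeability of ranks), and then a convexity/optional-stopping argument caps $\Pr[\text{length}\ge\ell]$ by $\bigl(1-k^{-1/(\ell-1)}\bigr)^{\ell-1}$, which is polynomially small at $\ell=\Theta(\log k)$. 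This is a self-contained and more elementary argument; the paper's route is shorter but imports a nontrivial external theorem. Two small points worth tightening if this were to be written out fully: (i) there is an off-by-one in the claim $\operatorname{rank}(u_j)=\max_{w\in A_j}\operatorname{rank}(w)$ (with $A_j=\bigcup_{t\le j}N[u_t]$ one actually has $u_{j+1}=\max A_j$ while the walk continues, and $u_0$ need not be $\max A_0$ at all), and (ii) the ``marking only shortens, never redirects'' claim rests on reading the subroutine as always jumping to the \emph{globally} highest-rank neighbor even if it is already marked non-representative, with marks consulted only at the halting test; that reading is defensible given the text and is implicitly what the paper's DFS-path identification also relies on, but it deserves a sentence of justification since under the alternative reading (skip marked neighbors) the trajectory need not be a prefix of the marking-free walk.
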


\begin{proof}
    The exploration sequence is in fact a path in the DFS tree of the running of greedy parallel maximal independent set problem, and the depth of the DFS tree is $O(\log n)$ with high probability~\cite{FischerN20}. So with high probability, the length of the exploration sequence is $\tilde{O}(1)$ and the total number of queries we use is $\tilde{O}(k)$.
\end{proof}

Next, we give a BFS subroutine that takes as input a terminal $u$ and an integer $i$. Intuitively, the subroutine explores the neighborhood of $u$ in its level-$i$ connected component up to a poly-logarithmic depth. Throughout, every terminal is either marked \textsf{out}, or \textsf{in}, or \textsf{representative}, or \textsf{active}. Initially, terminal $u$ is marked \textsf{active} and all other terminals are marked \textsf{out}. 
The procedure $\BFS(u,i)$ proceeds in rounds. 
In each round, we first pick an \textsf{active} terminal $u'$ with the maximum rank (if there is no such terminal then the procedure is terminated). We run $\FIND(u',i)$ and let $\hat u$ be the output. We mark $\hat u$ as \textsf{representative}.
We then query the distance between $\hat u$ and all other terminals marked \textsf{out}, for each such terminal $u''$, if $w(\hat u, u'') < \eps_1(1+\eps)^i$, then we mark $u''$ \textsf{in}. 
If $\eps_1(1+\eps)^i \le w(\hat u,u'') < (1+3\eps_1)(1+\eps)^i$, we mark $u''$ \textsf{active}. 
This completes the description of a round.
If the procedure did not terminate after $100 L \log^2 n/\eps_1$ rounds, then we artificially terminate the procedure and mark all \textsf{active} terminals \textsf{in}.

It is easy to observe that, after the procedure $\BFS(u,i)$ terminates, all terminals marked \textsf{in} or \textsf{representative} are certified to lie in the a component that is reachable from the component that contains $u$ in $\hat{H}_i$
Clearly, if the procedure is not artificially terminated, then we have found all components in $\hat{H}_i$ that is reachable from the component containing $u$ with all representative terminals inside it. So we can check if the component that contains $u$ is inside $U_{i+1}$ or not. If the procedure is artificially terminated, then $u$ is contained in a component that is not in $U_{i+1}$.
As the procedure $\BFS(u,i)$ runs for at most $100 L \log^2 n/\eps$ rounds and each round takes at most $\tilde{O}(k)$ queries, its query complexity is $\tilde O(k/\eps_1)$. 

We prove the following lemma.

\begin{lemma} \label{lem:small}
    For any integers $i$ and $M$, either (i) level $i$ is light; or (ii) $\card{R_{i-1}} \le 2ML \log n/\eps_1$; or (iii) $\card{U_i} \ge M$ holds.
\end{lemma}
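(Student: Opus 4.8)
The plan is to prove the contrapositive: assuming level $i$ is heavy, i.e. $w_i(\tau^*)\ge \mst/(L\log n)$, and that $\card{U_i}<M$, I will deduce $\card{R_{i-1}}\le 2ML\log n/\eps_1$. Write $\theta=100L\log^2 n/\eps_1$ for the threshold used in the definition of $U_i$. The starting point is the identity $\card{R_{i-1}}=\sum_{S}\mathrm{size}(S)$, where the sum ranges over the connected components $S$ of $\hat H_{i-1}$ (each representative terminal lies in exactly one such component), and I split this sum into the components that belong to $U_i$ and the rest.

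The contribution of $U_i$ is controlled immediately: for $S\in U_i$ the total size of the components reachable from $S$ in $H_{i-1}^C$ is at most $\theta$, so in particular $\mathrm{size}(S)\le\theta$, and since there are fewer than $M$ such components they hold fewer than $M\theta$ representatives. For $S\notin U_i$ I would first invoke \Cref{obs:obs2} at level $i-1$: every component reachable from $S$ lies in the same component $C$ of $H_i$ as $S$, and since the total size of those reachable components exceeds $\theta$, the component $C$ of $H_i$ carries more than $\theta$ representatives. Conversely, any component of $\hat H_{i-1}$ sitting inside an $H_i$-component with at most $\theta$ representatives has $\mathrm{size}\le\theta$ and therefore lies in $U_i$. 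Hence the representatives unaccounted for by $U_i$ are exactly those inside the \emph{oversized} components of $H_i$ (those with more than $\theta$ representatives), and the remaining task is to bound their number.

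For that bound I would run a metric-packing argument against $\tau^*$ and then feed in heaviness. The representatives are pairwise at distance $\ge \eps_1(1+\eps)^{i-1}$ (they form an independent set of $\bar H_{i-1}$), and by the Kruskal-style definition of the graphs $H_j$ the level-$\le i$ edges of $\tau^*$ inside any component $C$ of $H_i$ span $C$; so inside an oversized $C$ these edges have weight at least $\tfrac12\, n_C\,\eps_1(1+\eps)^{i-1}$, where $n_C$ is its number of representatives. Summing over oversized components and using heaviness in the form $\mst\le L\log n\cdot w_i(\tau^*)$ bounds $\sum_{\text{oversized }C} n_C$ in terms of the level-$i$ weight of $\tau^*$; the delicate point is to turn this into a bound proportional to $M$, for which one must couple the level-$i$ structure of $\tau^*$ inside oversized components back to $\card{U_i}$ — concretely, a level-$i$ edge of $\tau^*$ lying inside a \emph{non}-oversized $H_i$-component always joins two components of $\hat H_{i-1}$ that both lie in $U_i$, so there are few such edges, and one needs the oversized part not to dominate. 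I expect this last coupling between the level-$i$ edges of $\tau^*$ and $U_i$ (so as to land on the stated dependence on $L$, $\log n$ and $\eps_1$) to be the main obstacle; the packing estimate and the structural facts of \Cref{obs:obs1} and \Cref{obs:obs2} are routine.
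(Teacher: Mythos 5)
Your proof inverts the paper's argument (contrapositive instead of forward), and this change of direction is fatal here because the lemma, as calibrated, has essentially no slack. The paper argues forward: the $\card{R_{i-1}}$ representatives are pairwise at distance at least $\eps_1(1+\eps)^{i-1}$, so a packing/Euler-tour bound gives $\mst\gtrsim\card{R_{i-1}}\,\eps_1(1+\eps)^{i-1}$; heaviness gives $w_i(\tau^*)\ge\mst/(L\log n)$; since each level-$i$ edge weighs less than $(1+\eps)^i$ one gets $\card{\sset_i}\ge w_i(\tau^*)/(1+\eps)^i\ge 2M/(1+\eps)$; and an argument in the spirit of \Cref{obs: small sets enough} (together with \Cref{obs:obs2}) is used to conclude that all but an $O(1/\log n)$-fraction of these survive into $U_i$. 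The constant $2$ in the lemma is chosen precisely so that $\tfrac{2}{1+\eps}\cdot(1-O(1/\log n))>1$ at the end — there is nothing to spare.

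This makes your decomposition of $\card{R_{i-1}}$ hopeless on arithmetic grounds, before one ever gets to the oversized part you flagged as the obstacle. With $\theta=100L\log^2 n/\eps_1$, your bound on the $U_i$-side is $\sum_{S\in U_i}\mathrm{size}(S)<M\theta=100ML\log^2 n/\eps_1$, which is already a factor $50\log n$ larger than the target $2ML\log n/\eps_1$; even if the oversized representatives contributed nothing you would not land below the claimed threshold. Your plan for the oversized side does not close either. Packing gives $\sum_{\text{oversized}}n_C\lesssim\mst/\bigl(\eps_1(1+\eps)^{i-1}\bigr)$ and heaviness gives $\mst\le L\log n\cdot w_i(\tau^*)$, but to finish you must upper-bound $w_i(\tau^*)$ in terms of $M$; the number of level-$i$ edges of $\tau^*$ inside an oversized $H_i$-component $C$ is itself up to $n_C-1$, so the resulting inequality has the self-referential form $\sum n_C\le A+B\sum n_C$ with $B=\Theta(L\log n/\eps_1)\gg1$, which is vacuous. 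The forward argument sidesteps this entirely: it never needs an upper bound on $w_i(\tau^*)$, only the trivial cap $w_i(\tau^*)\le\card{\sset_i}(1+\eps)^i$, which it uses to \emph{lower}-bound $\card{\sset_i}$. So the missing ``coupling'' you anticipated is not merely a technical lacuna; the whole decomposition route carries an irreducible $\Theta(\log n)$ overhead and must be abandoned in favor of the paper's forward counting.
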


\begin{proof}
    Assume that (i) and (ii) do not hold. We will show that $\card{U_i} \ge M$ must hold. Since $\card{R_{i-1}} \le 2ML \log n/\eps_1$, the $2M L \log n/\eps$ representative terminals are at distance at least $\eps_1(1+\eps)^{i-1}$ from each other, so $\mst \ge (2M L \log n/\eps) \cdot \eps(1+\eps)^{i-1} = 2M L \log n (1+\eps)^{i-1}$. On the other hand, since level $i$ is not light, the total weight of all level-$i$ edges in $\mst$ is at least $\mst/(L \log n) \ge 2M (1+\eps)^{i-1}$. Note that $w_i(\mathcal{T^*})\le \card{\mathcal{S}_i} (1+\eps)^i$, so $\card{\mathcal{S}_i} \ge 2M /(1+\eps)$. Lastly, from \Cref{obs: small sets enough}, and the fact that any component of $\hat{H}_{i-1}$ that is not in $\card{U_i}$ is inside a large component in $H_i$ by \Cref{obs:obs2}. $\card{U_i} \ge \frac{2 (1-O(1/\log n)) M}{1+\eps} > M$.
\end{proof}

We now proceed to describe the simulation of Steps 2 and 3 of our algorithm in the previous subsections. Note that if $k=O(n^{6/7})$, then we can simply perform $k^2=O(n^{12/7})$ queries to obtain the terminal-induced metric and then perform Steps 2 and 3. The query complexity is $O(n^{12/7})+\tilde O(n^{3/2}+n^{3/4}\cdot k)=\tilde O(n^{12/7}+n^{6/7}\cdot k)$. Therefore, we assume from now on that $k=\Omega(n^{6/7})$. \footnote{The main reason that we can run on graph $\hat{H_i}$ instead of $H_i$ is the following: the algorithm we give in the previous section is in fact also true if we run layer $i$ algorithms on graph $H_{i+1}$ since the threshold we use for setting up the set cover instance, $3/5(1+\eps)^i$ can be an arbitrary number between $1/2(1+\eps)^i$ and $(1+\eps)^i$, which means it also works for $3/5(1+\eps)^{i+1}$. And since $\hat{H_i}$ is between $H_i$ and $H_{i+1}$ by \Cref{obs:obs1} and \Cref{obs:obs2}, the analysis still works if we run on $\hat{H_i}$.}

\subsubsection*{Simulation of Step 2}
We now describe how to simulate Step 2 of the algorithm described in \Cref{subsec: with MST}. 
Recall that, in Step 2, we have constructed, for each index $i$, an instance $(U_i,\wset_i)$ of Set Cover for finding local evidence at level $i$, and designed an algorithm called $\algsetcover$ for estimating the value of $|U_i|-\setcover(U_i, (\wset_i)_{\ne 2})$. In particular, the sets in $\wset_i$ correspond to Steiner nodes and the elements in $U_i$ correspond to level-$i$ connected components whose representative size is small (at most $L\log^2 n/\eps$), and a set $W\in \wset_i$ contains an element in $U$ iff the Steiner node that $W$ corresponds to is at distance at most $(3/5)\cdot(1+\eps)^i$ from some representative of the component that the element in $U$ corresponds to.

Fix an index $i$, and for convenience we denote $U=U_i$ and $\wset=\wset_i$.
As we do not have the MST on terminals, we do not know the level-$i$ connected components, which means we do not know the element in $U$ but can only make queries to locally explore them. The simulation of Step 2 (and Step 3) finds the best tradeoff between the query complexity of this additional local exploration task and the previous algorithmic steps.
In the remainder of this section, we use the parameter $M=n^{6/7}\log^2 n$. 

We first find the first $(2M L \log n/\eps_1)$ terminal in $R_{i-1}$ by greedy MIS algorithm. 
The query complexity of this step is $\tilde O(Mk)$.
If $\card{R_{i-1}} \le (2M L \log n/\eps)$, then we have already figured out all level-$i$ connected components together with the representatives in each component, and therefore we can now run the algorithm $\algsetcover$ described before to obtain an estimate of the value of $|U|-\setcover(U, \wset_{\ne 2})$, whose query complexity is $\tilde O(n^{3/2}+n^{3/2}\cdot k)$.
Assume from now on that $\card{R_{i-1}} > (2M L \log n/\eps)$

From \Cref{lem:small}, either level $i$ is light, or $|U|\ge M$. In order to determine which case happens, we will estimate the size of $U$. Specifically, we pick a random terminal $u$ and run the procedure $\BFS(i-1,u)$. If the level-$i$ connected component containing $u$ is small, then we set $X(u)$ to be the inverse of the size of this component, otherwise we set $X(u)=0$. 
It is easy to observe that $X$ is a random variable supported on $[0,1]$, and $\mathbb{E}[X]=|U|/k$.
Therefore, from Chernoff Bound, if we repeat the process for $100k \log n/M$ random sampled terminals, then with probability $1-n^{-10}$, we can estimate the value of $|U|$ to within an additive factor of $M/5$. Therefore, we can either correctly claim that $|U|<M$ or correctly claim that $|U|\ge M/2$. 
The query complexity is 
$\tilde O(k)\cdot (100k \log n/M) = \tilde{O}(k^2/M)= \underline{\tilde{O}(k\cdot n^{1/7})}$.
If the claim is $|U|<M$, then from \Cref{lem:small}, level-$i$ is light, and we will just ignore this layer by giving up local evidence on it. From now on we assume that $|U|>M/2$. 

We now simulate the algorithm $\algsetcover$ in \Cref{sec: set cover} with some modifications. 
We use another parameter $R=50 n^{1/7}\log n/\eps$. First we partition the terminals into subsets $\thigh$ and $\tlow$ such that (i) for every $u\in \thigh$, the number of Steiner nodes $v$ at distance at most $(3/5)\cdot (1+\eps)^i$ from $u$ is at least $R$; and
(ii) for every $u\in \tlow$, the number of Steiner nodes $v$ at distance at most $(3/5)\cdot (1+\eps)^i$ from $v$ is at most $2R$.
Such a partition can be computed with $\underline{\tilde O(k n/R)=\tilde O(k\cdot n^{6/7})}$ queries. 
Note that, for each terminal in $\thigh$, the element in $U$ that corresponds to the level-$i$ connected component that the terminal belongs to is contained in at least $R$ sets. Since $|U|>M/2$, we can show via similar arguments that $\eps |U|$ random sets will cover all these elements (as $M\cdot R>n/\eps$). Therefore, we can ignore all level-$i$ connected components that contain a terminal in $\thigh$.

Next, we partition the Steiner vertices into subsets $V_1$ and $V_2$, using another parameter $P=n^{2/7}$ such that (i) for every vertex $v\in V_1$, the number of terminals in $\tlow$ at distance at most $(3/5)\cdot (1+\eps)^i$ from $v$ is at most $100 P$; and (ii) for every vertex $v\in V_1$, the number of terminals in $\tlow$ at distance at most $(3/5)\cdot (1+\eps)^i$ from $v$ is at least $P$. Such a partition can be computed with $\tilde{O}(nk/P)=\underline{\tilde{O}(k\cdot n^{5/7})}$ queries. A similar argument shows 
\[\card{V_2} \le\frac{kR}{P}=\frac{k \cdot 50 n^{1/7} \log n/\eps}{n^{2/7}} = O(n^{6/7}\log n) = o(M)=o(|U|).\]

We now define $\ulow$ as the set of small level-$i$ connected components that consist of only terminals in $\tlow$. Let $\wset_1,\wset_2$ be the collections of sets naturally defined by Steiner nodes in $V_1, V_2$, respectively. We consider the set cover instances $(\wset_1,\ulow)$ and $(\wset_2,\ulow)$ separately. 

We first estimate the value of $|\ulow|-\setcover(\wset_2,\ulow)$. Since $\card{\wset_2} = o(\card{U})$, in order to obtain a $(2,\eps \card{U})$-estimate of $\card{U}-\setcover(\wset_2,\ulow)$, it is sufficient to estimate $|\bigcup_{W\in \wset_2}W|$ to within an additive factor of $\eps |U|$.
This can be done in a similar way as estimating $|U|$.
Specifically, we pick a random terminal $u\in \ulow$ and run the procedure $\BFS(u,i)$. If the level-$i$ connected component that contains $u$ is small, then we set $X(u)$ to be the inverse of the size of this component; otherwise we set $X(u)=0$. So the random variable $X$ is supported on $[0,1]$ and $\mathbb{E}[X]=|\bigcup_{W\in \wset_2}W|/|\tlow|$. From Chernoff bound, if we repeat the experiment for $100k\log n/(\eps M)$ times, then we can obtain an estimate of $|\bigcup_{W\in \wset_2}W|$ to within an additive factor of $\eps M$. The query complexity of this step is $\tilde{O}(k\cdot k/M)=\underline{\tilde{O}(k\cdot n^{1/7})}$.

We next estimate the value of $|U|-\setcover(\wset_1,\ulow)$. We proceed similarly as $\algsetcover$, by defining an auxiliary graph $H$ on $U$ and estimate its maximal matching size. 
Similar to \Cref{sec: set cover}, we only need to design a subroutine for finding all neighbors of a given element in $H$.
This can be done as follows. Note that an element in $H$ corresponds to a small level-$i$ connected component. We first find all Steiner nodes in $V_1$ that is close to (at distance at most $(3/5)\cdot(1+\eps)^i$ from) the component, and then find all terminals that are close to each of these Steiner nodes. Since any small components has at most $\tilde{O}(1)$ representatives, the number of such terminals is at most $R\cdot P=\tilde{O}((n^{1/7}/\eps) \cdot n^{2/7}) = \tilde{O}(n^{3/7}/\eps)$. For each of these terminals, we run the procedure $\BFS(\cdot,i)$ on it to figure out if it indeed lies in a small component or not, and if the answer is yes, the component that contains the terminal is counted as a neighbor in $H$. The query complexity for all $\BFS$ procedures is $\underline{\tilde{O}(n^{3/7}\cdot k/\eps)}$. Lastly, via similar arguments, we can show that the query complexity of implementing the maximal matching estimation algorithm on $H$ is $O(RP\cdot RPk)=\underline{\tilde O(n^{6/7}\cdot k)}$.


We note that it is immediate to generalize above procedure to handle the cardinality-$2$ sets, since when estimating $|U|-\setcover(\wset_1,\ulow)$, we have figured out all elements in each explored set (and will be able to discard the set whenever it contains exactly two elements in $U$).

\subsubsection*{Simulation of Step 3}
We now describe the simulation of Step 3, the four-vertex subroutine.
Recall that, with the knowledge of terminal-induced metric, we constructed a laminar family and its partitioning tree $\tau$ to represents its hierarchical structure, and we only aim to find a node $x_S\in V(\tau)$, such that $x_S$ has exactly two children in $\tau$ and each child of $x_S$ also has exactly two children in $\tau$. Consider such a node $x_S$ at level $i$.
Note that, if a child of $x_S$ splits (into two child nodes) at a lower-than-$(i-\log_{1+\eps} (1/\eps_0))$ level, then the advantage obtained within this child node is at most $\eps_0 (1+\eps)^i$, and it is safe to ignore it. Thus, we can run $\BFS$ on all terminals $S$ for all levels between $i-\log_{1+\eps} (1/\eps_0)$ and $i$ to figure out the hierarchical structure of $S$ between these levels and calculate $\adv(S)$. The query time is $\tilde{O}(k)$.

The additional steps needed in simulating Step 3 are similar to that of Step 2, we first find the first $(2ML \log n/\eps)$ terminals in $R_{i-1}$. If $\card{R_{i-1}} \le 2ML \log n/\eps$, than we already figured out all level-$i$ connected components, and we then simply proceed as before:  sample $O(\log n /\eps^{10})$ small components and calculate the advantage of them. Otherwise, similar to Step 2, we first estimates $|U|$. Either we correctly establish that $\card{U}<M$, in which case level $i$ is light and can be safely ignored; or we correctly establish that $\card{U}>M/2$. Now we sample $O(k\log n /(\eps^{10} M))$ terminals, and for each sampled terminal $u$, we first run $\BFS(u,i-1)$ to figure out the component $S$ that contains $u$. If $S$ is a small component, we calculate $\adv(S)$, and add it to $B_i$ with probability $1/|S|$. The process is the same as sampling $\log n /\eps^{10}$ small components in $U$ and sum up the advantage of them. So $B_i$ is a good approximation of $A_i$ in this case as well. The total query complexity is $\tilde{O}(nk/M) = \tilde{O}(n^{1/7}k)$.

$\ $

Altogether, the query complexity is $\tilde O(n^{12/7}+n^{6/7}\cdot k)$.

\section{An $\tilde \Omega(nk)$ Lower Bound for $(2-\eps)$-Approximate Steiner Tree}
\label{sec: proof of beat-2-lower-computing}

In this section, we provide the proof of \Cref{thm: beat-2-lower-computing} by showing that any randomized algorithm that computes a $(2-\eps)$-approximate Steiner Tree performs at least $\Omega(nk)$ queries in the worst case. Throughout this section, we assume that $k\le n/100$.

We first construct a distribution on metric Steiner Tree instances $(V,T,w)$ as follows. The vertex set $V$ and the terminal set $T$ are fixed (recall that $|V|=n$ and $|T|=k$). The terminal set is partitioned into $t=\floor{\frac{k}{\floor{1/\eps}}}$ sets $T=\bigcup_{1\le i\le t}T_i$, such that each set $T_i$ contains either $\floor{1/\eps}$ or $\ceil{1/\eps}$ terminals. This partitioning is fixed as well. The only randomized part is the weight-metric $w$. Let $X$ be a set chosen uniformly at random from all size-$t$ subsets of $V\setminus T$. We call vertices in $X$ \emph{crucial} vertices. 
We then choose a random one-to-one mapping from $X$ to $[t]$, and for each $i\in [t]$, we call the vertex in $X$ that is mapped to $i$ the \emph{$i$-crucial} vertex. The random metric $w$ is defined according to the random set $X$ as follows. For every $i\in [t]$, the weight between the $i$-crucial vertex in $X$ and every terminal in $T_i$ is $1$, and the weight between any other pair of vertices in $V$ is $2$.
It is easy to verify that $w$ always satisfies the triangle inequality.

We call edges that are incident to crucial vertices \emph{crucial edges}. Clearly, crucial edges form $t$ disjoint stars.
It is easy to verify that, although $w$ is random, any two realizations of $w$ are isomorphic, and so the metric Steiner Tree cost is always the same. In particular, the optimal Steiner tree contains all crucial edges and $(t-1)$ other edges connecting the star graphs formed by crucial edges, and so $\stcost(V,T,w)=k\cdot 1+(t-1)\cdot 2=k+2t-2$.
On the other hand, any Steiner tree that contains $k'$ crucial edges has to contain at least $(t-1)+(k-k')$ other edges in order to span all terminals, and its total cost is at least $k'\cdot 1+\big((t-1)+(k-k')\big)\cdot 2=2k-k'+2t-2$. Therefore, in order to compute a Steiner tree of cost $(2-4\eps)\cdot \stcost(V,T,w)$, which is a Steiner tree of cost $(2-4\eps)(k+2t-2)$, an algorithm has to finds at least $(2k+2t-2)-(2-4\eps)(k+2t-2)=4\eps k -2t\ge \eps k$ edges.

Observe that, from the construction of $w$, if a terminal $u\in T_i$ has a weight-$1$ edge connecting to a Steiner vertex, then that Steiner vertex is the $i$-crucial vertex in $X$ and every other terminal in $T_i$ is also connected to it by weight-$1$ edges. For ease of analysis, we consider the following distribution of instances $(V',T',w')$.
The terminal set $T'$ contains, for each $i\in [t]$, a terminal $u_i\in T_i$, and it is fixed. The vertex set $V'=T'\cup (V\setminus T)$ and is also fixed. The metric $w'$ is random and defined as follows. Let $X$ be a set chosen uniformly at random from all size-$t$ subsets of $V\setminus T$, and we then choose a random one-to-one mapping from $X$ to $[t]$. We define $i$-crucial vertices similarly as before. For every $i\in [t]$, the weight between the $i$-crucial vertex in $X$ and terminal $u_i$ is $1$, and the weight between any other pair of vertices in $V'$ is $2$. It is easy to observe that a size-$t$ set $X$ and a mapping from $X$ to $[t]$ defines a metric $(V,T,w)$ and a metric $(V',T',w')$, and a query in either metric can be simulated by a query in the other. Additionally, in order to find at least $\eps k$ crucial edges in $w$, it is necessary to find $(\eps k)/\ceil{1/\eps}\ge \eps^2k/2$ crucial edges in $w'$.

We say that an crucial edge $(u,x)$ is \emph{discovered} by an algorithm at some step iff the set of queries performed by the algorithm before this step uniquely identify the edge $(u,x)$ to be a crucial edge. We say that a terminal is discovered iff its (unique) incident crucial edge is discovered, otherwise we say it is \emph{undiscovered}.
From Yao's minimax principle \cite{yao1977probabilistic} and the above discussion, in order to prove \Cref{thm: beat-2-lower-computing}, it suffices to prove the following lemma.

\begin{lemma}
\label{clm: crucial edge}
Any deterministic algorithm that discovers in expectation at least $\eps^2 k/2$ crucial edges in the distribution on $w'$ defined above performs at least $\Omega(\eps^2 nk)$ queries in expectation.
\end{lemma}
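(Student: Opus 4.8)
The plan is to analyze an arbitrary deterministic algorithm against the distribution on $w'$ and lower bound its expected number of queries. Recall that under this distribution there is a uniformly random injection $i\mapsto x_i$ from $[t]$ into the set $S:=V\setminus T$ of $n-k$ Steiner vertices, the $i$-th crucial edge is $(u_i,x_i)$ of weight $1$, and all other pairs have weight $2$. The first step is to observe that the only queries carrying information are those of the form $(u_i,v)$ with $v\in S$: a query inside $T'$ or inside $S$ always returns $2$, and $(u_i,v)$ returns $1$ exactly when $v=x_i$. So we may assume every query has this form and regard it as the test ``$x_i\stackrel{?}{=}v$''. Writing $q$ for the (random) number of queries, the goal reduces to showing $\expect[\#\text{discovered crucial edges}]\le O(\expect[q]/n)$; combined with $t\le\eps k\le k\le n/100$ and the hypothesis $\expect[\#\text{discovered}]\ge\eps^2 k/2$, this gives $\expect[q]=\Omega(\eps^2 nk)$.

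Next I would classify, for a fixed run, the query $Q_\ell=(u_{i_\ell},v_\ell)$ at which each crucial edge $(u_{i_\ell},x_{i_\ell})$ first becomes \emph{discovered} (all earlier queries against $u_{i_\ell}$ having necessarily returned $2$). Let $c_\ell$ be the number of those earlier queries. Call the discovery \emph{heavy} if $c_\ell\ge (n-k)/3$ and \emph{light} otherwise. Heavy discoveries are handled by a charging argument: the $c_\ell\ge(n-k)/3$ queries charged to a heavy discovery all involve $u_{i_\ell}$, hence are disjoint across distinct discoveries, so there are at most $q/((n-k)/3)=O(q/n)$ of them, deterministically. I would also observe that a \emph{light} discovery is necessarily completed by a ``yes'' answer ($v_\ell=x_{i_\ell}$): a discovery completed by a ``no'' answer needs every one of the $n-k-t$ non-crucial Steiner vertices to have already been queried against $u_{i_\ell}$, since such a vertex can be ruled out for $u_{i_\ell}$ only by a direct query and never through the injectivity constraint; this forces $c_\ell\ge n-k-t>(n-k)/3$.

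The core is to bound the expected number of light discoveries. I would prove the per-step estimate that, conditioned on the history preceding $Q_\ell$, the probability that $Q_\ell$ returns $1$ while $i_\ell$ is undiscovered and $c_\ell<(n-k)/3$ is $O(1/n)$. On that event at least $(n-k)-c_\ell-t>n/2$ Steiner vertices are still ``open'' for $u_{i_\ell}$ (not yet queried against $u_{i_\ell}$ and not already revealed to equal some $x_j$), so the estimate follows once we show that, conditioned on the history, the marginal $\Pr[x_{i_\ell}=v_\ell]$ is at most a constant times the reciprocal of the number of open vertices. Granting this, the indicator of the bad event minus its conditional probability is a martingale difference, so optional stopping yields that the expected number of light discoveries is $O(\expect[q]/n)$. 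Adding the $O(q/n)$ bound for heavy discoveries completes the reduction and proves the lemma.

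I expect the only nontrivial step to be this last marginal estimate — controlling $\Pr[x_{i_\ell}=v_\ell\mid\text{history}]$ when the history is an arbitrary transcript of prior answers, which imposes on the random injection $i\mapsto x_i$ both equality constraints (from the already-discovered crucial edges) and a large family of inequality constraints (from the ``no'' answers). In the clean case, where the algorithm has only probed $u_{i_\ell}$ directly $c_\ell$ times and learned a few values $x_j$, a short symmetrization/permanent computation gives exactly $1/((n-k)-c_\ell-t+1)$; the general case requires checking that the inequality constraints on the \emph{other} $x_j$'s cannot concentrate the distribution of $x_{i_\ell}$ onto $v_\ell$ by more than a constant factor, which I would establish by a bijective counting argument over consistent injections, using that at most $t\le n/100$ values can be effectively ``pinned'' by such constraints. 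Everything else is routine bookkeeping.
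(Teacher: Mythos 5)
Your proof is correct and takes a genuinely different, arguably cleaner route than the paper's. The paper partitions the query sequence into phases ending at each discovery, calls a phase good if every undiscovered terminal still has uncertainty at least $3n/4$, lower-bounds the expected length of a good phase by $n/200$ via a random-matching swapping lemma (Lemma~\ref{lem: random edge}), and upper-bounds the number of bad phases by amortizing uncertainty decrements. You instead classify each \emph{discovery} as heavy or light according to whether the discovered terminal had $\ge (n-k)/3$ prior queries; heavy discoveries are handled by a deterministic disjoint-charging argument (the $c_\ell$ charged queries all involve $u_{i_\ell}$ and hence are disjoint across distinct terminals), and light discoveries by a per-query conditional-probability bound summed over the query sequence. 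The technical heart is the same swapping argument over injections consistent with the transcript: your swap to an unassigned, un-ruled-out $v'$ (of which there are $\ge 2(n-k)/3 - t = \Omega(n)$) is the construction inside Lemma~\ref{lem: random edge}, with your observation that injectivity can exclude at most $t \le n/100$ vertices playing the role of the $b>10a$ hypothesis there. Your write-up is more explicit in two places the paper elides: the heavy/light split makes the amortization manifestly tight (the paper's bad-phase count and its multiplication by $n/200$ per good phase both implicitly invoke a Wald-type identity, which your linearity/optional-stopping framing makes rigorous), and you observe directly that a light discovery must arrive via a \emph{yes} answer since elimination costs at least $n-k-t$ direct queries on that terminal — the paper only bounds the probability of a weight-$1$ answer and tacitly relies on elimination being impossible early in a good phase.
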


In the remainder of this section, we provide the proof of \Cref{clm: crucial edge}. From now on, we only consider algorithms that perform queries to the metric $w'$ instead of the metric $w$. We follow the framework used in the proof of Lemma 5.3 in \cite{assadi2019sublinear}, which shows that any randomized algorithm that outputs an approximate maximal matching performs at least $\Omega(n^2)$ queries in the worst case.

Consider a deterministic algorithm and the sequence of queries it performs. We partition the sequence into \emph{phases} as follows. The first phase starts at the first query, a phase ends as soon as a crucial edge is discovered, and the next phase starts right after the previous phase ends. For each integer $j$, let $Z_j$ be the random variable denoting the number of queries performed in the $j$-th phase. In order to prove \Cref{clm: crucial edge}, it suffices to show that $\mathbb{E}[\sum_{1\le j\le \eps^2 k} Z_i]=\sum_{1\le j\le \eps^2 k}\mathbb{E}[Z_i]=\Omega(\eps^2 nk)$.

From the construction of $w$, the weight between any pair of Steiner vertices and the weight between any pair of terminals is always $2$, so the only potentially useful queries are the ones between a terminal and a Steiner vertex. 
We define the \emph{uncertainty} of a terminal $u\in T'$ at some step to be $(n-k)$ minus the number of queries involving $u$ performed by the algorithm so far.
We say that a phase is \emph{bad}, iff at the start of the phase, there exists an undiscovered terminal whose uncertainty is below $3n/4$; and we say that a phase is \emph{good}, iff at the start of the phase, the uncertainty of every undiscovered terminal is at least $3n/4$. The proof of \Cref{clm: crucial edge} is concluded by the following claims.

\begin{claim}
The expected number of queries in a good phase is at least $n/200$.
\end{claim}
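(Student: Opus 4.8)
The plan is to show that, conditioned on the current phase being good, with probability at least $0.98$ the phase contains more than $n/100$ queries; the bound $\mathbb{E}[\text{queries in a good phase}]\ge n/200$ then follows immediately. Throughout I use that the algorithm is deterministic, so the transcript of (query, answer) pairs seen at any point is a deterministic function of the hidden injection $\phi\colon[t]\to V\setminus T$ (where $\phi(i)$ is the $i$-crucial vertex, so the crucial edges are exactly the pairs $(u_i,\phi(i))$), which is a priori uniform. Hence, conditioned on any reachable history $h$, the posterior on $\phi$ is uniform over the injections consistent with $h$: those with $\phi(i)=v$ for every query $(u_i,v)$ answered $1$, and $\phi(i)\ne v$ for every query $(u_i,v)$ answered $2$. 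As already observed, only terminal--Steiner queries can reveal anything, and queries on a terminal already discovered at the phase start cannot discover a new crucial edge, so I restrict attention to queries $(u_i,v)$ with $u_i$ undiscovered at the phase start.

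First I would argue that, during a good phase, the \emph{only} mechanism by which a crucial edge can be discovered within the phase's first $n/100$ queries is a \emph{direct hit}: a query $(u_i,v)$ with $u_i$ undiscovered at the phase start and $v=\phi(i)$. The only alternative is \emph{elimination} --- pinning $\phi(i)$ because every other Steiner vertex has been ruled out, each either by a $(u_i,\cdot)$-query answered $2$ or by being an already-discovered $\phi(j)$ with $j\ne i$. But an undiscovered terminal $u_i$ at the start of a good phase has uncertainty $\ge 3n/4$, i.e.\ has been involved in at most $(n-k)-3n/4\le n/4$ queries, all answered $2$; so at the phase start the set of Steiner vertices still consistent with being $\phi(i)$ has size at least $(n-k)-\big((n-k)-3n/4\big)-(t-1)=3n/4-t+1>0.7n$, using $k,t\le n/100$. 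Each query of the phase is on a single terminal and shrinks only that terminal's candidate set, by at most one; the only cross-effect would be the discovery of a new $\phi(j)$, which already terminates the phase. Hence after fewer than $n/100$ further queries every undiscovered terminal's candidate set still has size $>1$, ruling out elimination.

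Next I would bound the probability that the $j$-th query of the phase, for $j\le n/100$, is a direct hit, conditioned on the phase still being ongoing. Fix the history $h$ right before this query; determinism fixes the query, say $(u_i,v_j)$, and by the previous paragraph a hit is possible only if $u_i$ was undiscovered at the phase start. Now further condition on the restriction of $\phi$ to all indices $\ne i$ (any completion consistent with $h$): then $\phi(i)$ is uniform over $V\setminus T$ with the at most $t-1$ images of those other indices removed and the at most $n/4+(j-1)$ vertices $u_i$ has been queried against (all answered $2$) removed, hence uniform over a set of size at least $(n-k)-(t-1)-n/4-n/100>0.7n$. Averaging over the conditioning gives $\Pr[v_j=\phi(i)\mid h]\le 1/(0.7n)<2/n$. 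A union bound over $j=1,\dots,\lfloor n/100\rfloor$ then shows the phase ends within its first $n/100$ queries with probability at most $(2/n)\cdot(n/100)=0.02$; so it has more than $n/100$ queries with probability at least $0.98$, and its expected number of queries is at least $0.98\cdot n/100>n/200$.

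The main obstacle is the second paragraph: one must pin down exactly what ``the queries uniquely identify a crucial edge'' can mean, check that the only two mechanisms are a direct hit and an elimination, and verify that the relevant candidate sets stay much larger than $n/100$ throughout a good phase --- all of which crucially rely on the good-phase hypothesis (uncertainty $\ge 3n/4$ for undiscovered terminals) together with $k,t\le n/100$. Once that is in place, the deferred-decisions conditioning on $\phi$ (with the injectivity constraint costing only the cheap additive $t-1$ term) and the final union bound are routine.
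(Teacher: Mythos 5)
Your plan coincides with the paper's at the structural level: bound the probability that any single query in a good phase makes a new discovery, then union-bound over the first $\Theta(n)$ queries to show that with constant probability the phase lasts $\Omega(n)$ queries. The difference lies in how the per-query bound is obtained. The paper proves and applies a bipartite-matching counting lemma (\Cref{lem: random edge}, adapted from \cite{assadi2019sublinear}): it forms the bipartite graph of unqueried (undiscovered-terminal, Steiner-vertex) pairs, checks that its $A$-side degrees stay $\ge 2b/3$ with $b > 10a$, and bounds the fraction of $A$-perfect matchings through a fixed edge by a swap argument. You instead argue directly by deferred decisions: condition the uniform posterior over injections $\phi$ on $\phi|_{[t]\setminus\{i\}}$, observe that $\phi(i)$ is then uniform over a set of size $\ge 0.7n$, and conclude $\Pr[\text{hit}]\le 1/(0.7n)$. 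Both derivations rest on the good-phase hypothesis (uncertainty $\ge 3n/4$) together with $k,t\le n/100$; yours avoids invoking the matching-counting lemma and is arguably more self-contained. You are also more explicit than the paper about the possibility that a crucial edge is ``discovered'' by elimination rather than a direct weight-$1$ answer — the paper's proof treats the phase as ending only on a weight-$1$ query without remarking on this. One small imprecision in your ruling-out of elimination: keeping every candidate set of size $>1$ is not, by itself, enough, since a Hall-type forcing can pin $\phi(i)$ even when $i$'s own direct candidate set is large (e.g.\ two terminals both restricted to $\{a,b\}$ force a third whose set is $\{a,b,c\}$ to take $c$). What saves you is that all candidate sets remain $\ge 0.69n \gg t$ throughout the first $n/100$ queries, so a greedy extension shows the direct candidate sets coincide with the true consistent sets, which rules out every form of elimination; you correctly flag this as the point that needs pinning down.
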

\begin{proof}
We use the following lemma, which is similar to Lemma 5.4 in \cite{assadi2019sublinear}.
\begin{lemma}
\label{lem: random edge}
Let $H=(A,B,E)$ be a bipartite graph with $|A|=a$ and $|B|=b$ (where $b> 10a$), such that every vertex in $A$ has degree at least $2b/3$, then for every edge $e\in E$, the probability that $e$ is contained in a uniformly at random chosen $A$-perfect matching in $G$ is at most $2/b$ (here an $A$-perfect matching is a matching that matches all vertices of $A$).
\end{lemma}
\begin{proof}
It is easy to verify from Hall's Theorem that $H$ always contains a perfect matching. Consider an edge $(u,v)\in E$ where $u\in A$ and $v\in B$. Let $M$ be an $A$-perfect matching that contains edge $(u,v)$. We construct $b/2$ other $A$-perfect matchings as follows. Let $B'$ be the set of vertices in $B\setminus \set{v}$ that is adjacent to $u$ in $H$ but is not an endpoint of any edge in $M$. Since $\deg_H(u)\ge 2b/3$ and $b> 10a$, 
we get that $|B'|\ge 2b/3-1-b/10\ge b/2$. For each vertex $v'\in B'$, we define $M_{v'}$ to be the matching obtained from $M$ by replacing edge $(u,v)$ with edge $(u,v')$. Clearly, $M_{v'}$ is an $A$-perfect matching for every $v\in B'$, so we obtained a collection of at least $b/2$ other $A$-perfect matchings that do not contain edge $(u,v)$, that we denote by $\fset(M)$. On the other hand, for every pair $M,M'$ of distinct $A$-perfect matchings in $H$ that contains the edge $(u,v)$, the collections $\fset(M),\fset(M')$ are disjoint. This is because for every matching $\hat M\in \fset(M)$ and for every matching $\hat M'\in \fset(M')$, $\hat M$ and $\hat M'$ must differ on an edge not incident to $u$, as $M$ and $M'$ do. Therefore, the number of perfect $A$-matchings in $H$ is at least $(b/2)$ times the number of perfect $A$-matchings in $H$ that contains the edge $(u,v)$, and the lemma now follows.
\end{proof}
	
By definition, at the start of a good phase, the uncertainty of every undiscovered terminal is at least $3n/4$. Therefore, for the next $n/24$ queries, it is easy to verify that the graph induced by all unqueried edges satisfy the conditions of \Cref{lem: random edge}, and so the probability that any single query finds an edge of weight $1$ is at most $2/(n-k)\le 3/n$. Thus, with probability at least $(n/24)\cdot (3/n)=1/8$, none of the first $n/24$ queries of a good phase finds a weight-$1$ edge. This implies that the expected number of queries in a good phase is at least $(1/8)\cdot (n/24)\ge n/200$.
\end{proof}

\begin{claim}
If the algorithm performs less than $\eps^2 nk/20$ queries, then the number of bad phases is at most $\eps^2 k/4$.
\end{claim}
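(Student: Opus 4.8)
I read the statement in expectation over the random metric $w'$ (both ``performs fewer than $\eps^2 nk/20$ queries'' and ``number of bad phases is at most $\eps^2 k/4$'' as bounds on expectations); a literal per‑realization reading is false, since an algorithm can burn $n/2$ probes on one terminal $v$ without ever hitting $v$'s crucial vertex, leaving $v$ exposed (uncertainty below $3n/4$) for the rest of the run, so that every later phase is bad even though the realized query count stays far below the budget — the content of the claim is precisely that such behaviour is wasteful in expectation.

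The plan is a two–step reduction. First I would observe that, by definition of a phase, every phase except possibly the last ends exactly at the query that discovers a \emph{new} crucial edge, and distinct phases discover distinct crucial edges; hence the number of phases, and in particular the number of bad phases, is at most $D+1$, where $D$ is the number of crucial edges discovered. So it suffices to prove $\mathbb{E}[D] < \eps^2 k/4 - 1$. Second, I would bound $\mathbb{E}[D]$ by splitting discoveries into two types, after the harmless reductions that the algorithm only queries terminal–Steiner pairs (all other entries of $w'$ equal $2$) and never re‑queries an already‑discovered terminal — these only shrink the (random) total query count $Q$ and leave the phase partition and the set of discovered edges unchanged, so it suffices to treat such algorithms. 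Call the crucial edge of a terminal $v$ an \emph{elimination discovery} if the algorithm probed $v$ against all but one Steiner vertex (so against $\ge n-k-1$ of them), and a \emph{hit discovery} if it probed the pair $(v,x_v)$ directly. Elimination discoveries consume pairwise disjoint blocks of at least $n-k-1 \ge (1-\tfrac1{100})n$ queries, so their number is at most $Q/(n-k-1)$, of expectation below $\eps^2 k/19$. For hit discoveries I would fix $v$ and condition on the crucial vertices of all other terminals: then the sequence $\hat\pi_v$ of Steiner vertices the algorithm probes against $v$, up to and including the first hit, is determined \emph{without} reference to $x_v$, because until a hit occurs every answer is forced to be $2$; and since $x_v$ is conditionally uniform over the $\ge n-k-t$ Steiner vertices not used by the others (recall $t\le\eps k\le n/100$), $v$ is hit‑discovered with probability at most $|\hat\pi_v|/(n-k-t)$. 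Summing, $\mathbb{E}[\#\text{hit discoveries}] \le \mathbb{E}\bigl[\sum_v|\hat\pi_v|\bigr]/(n-k-t)$, and the target follows once $\mathbb{E}\bigl[\sum_v|\hat\pi_v|\bigr] = O(\mathbb{E}[Q])$; with $n-k-t \ge (1-\tfrac2{100})n$ and $\mathbb{E}[Q] < \eps^2 nk/20$, the elimination and hit contributions plus the $+1$ leave $\mathbb{E}[\#\text{bad phases}] < \eps^2 k/4$ once $k$ exceeds an absolute constant, which suffices since the lower bound is asymptotic.

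The hard part is the inequality $\mathbb{E}\bigl[\sum_v|\hat\pi_v|\bigr] = O(\mathbb{E}[Q])$: $|\hat\pi_v|$ is a \emph{counterfactual} probe count and can exceed the number of probes $v$ actually receives on realizations where $v$ is hit early. I would handle this by noting that on any realization where $v$ is \emph{not} hit‑discovered the actual and counterfactual probe counts coincide, whereas whenever $|\hat\pi_v|$ is large $v$ is hit‑discovered with correspondingly large conditional probability and, conditioned on a hit, the hit position (hence the actual number of probes on $v$) is in expectation $\Omega(|\hat\pi_v|)$ since it is essentially uniform in $\hat\pi_v$; either way a terminal with large $\mathbb{E}[|\hat\pi_v|]$ forces $\Omega(\mathbb{E}[|\hat\pi_v|])$ expected probes onto $v$, and $\sum_v\mathbb{E}[\#\text{probes on }v]=\mathbb{E}[Q]$ closes the loop, after tuning the threshold separating ``small'' and ``large'' $|\hat\pi_v|$ to obtain a good enough absolute constant. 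A minor further point is that the crucial vertices form a uniform injection rather than i.i.d.\ uniform values, which only lowers the relevant hit probabilities and is absorbed into the $n-k-t$ denominators; and the two algorithmic reductions opening the second step should be checked to leave the phase partition and discovery pattern intact.
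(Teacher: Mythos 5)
Your opening observation is fair: read verbatim, with no further convention, the claim can fail on lucky realizations (burn $n/2$ probes on one terminal without hitting its crucial vertex, leave it exposed, then guess the remaining terminals' crucial vertices on the first try). But the paper's intent --- made explicit in the analogous argument of \Cref{sec: >2-main}, where the witness of a bad phase is revealed and settled the moment the phase begins --- makes the deterministic claim true by a two-line charging argument, and that charging argument \emph{is} the paper's entire proof: a terminal can witness a bad phase only after its uncertainty has dropped below $3n/4$, i.e.\ only after at least $n-k-3n/4\ge 0.24n$ queries have been spent on that specific terminal; these query blocks are disjoint across terminals, and under the reveal convention each bad phase consumes a distinct such terminal, so fewer than $\eps^2nk/20$ queries permit at most roughly $\eps^2k/4.8<\eps^2k/4$ bad phases. (The paper's write-up of this step is terse and implicitly relies on that convention, which is spelled out only in \Cref{sec: >2-main}; your counterexample is exactly the scenario the convention excludes.) In particular, the intended proof never reasons about how crucial edges get discovered.

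Your route is genuinely different and much heavier: you bound \emph{all} phases by the number of discoveries $D$ and then prove $\mathbb{E}[D]=O(\mathbb{E}[Q]/n)$ by the deferred-decisions argument. I believe it can be completed --- the key inequality does hold: conditioned on the other terminals' crucial vertices, with $h=|\hat\pi_v|$, $m$ candidate entries of $\hat\pi_v$ and $N'\ge n-k-t$ candidates overall, the expected number of probes actually spent on $v$ is at least $h-mh/N'+m(m+1)/(2N')\ge h\bigl(1-h/(2N')\bigr)\ge 0.49\,h$, giving $\mathbb{E}\bigl[\sum_v|\hat\pi_v|\bigr]\le 2.1\,\mathbb{E}[Q]$, after which your constants close. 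Three caveats, though. First, if carried out, your argument proves \Cref{clm: crucial edge} outright ($\mathbb{E}[D]\ge\eps^2k/2$ forces $\mathbb{E}[Q]=\Omega(\eps^2nk)$), so the phase/uncertainty apparatus --- including the notion of a bad phase and the per-sequence dichotomy in which the paper uses this claim downstream --- becomes superfluous; what you have is an alternative proof of the lemma, not a drop-in proof of this claim, and your expectation-only restatement would not slot into the paper's concluding paragraph as written. Second, your elimination/hit dichotomy is not exhaustive as stated: because the crucial vertices are distinct, a terminal can be discovered by elimination after only $n-k-t$ probes on it once other crucial vertices are known, so the elimination threshold must be $n-k-t$ rather than $n-k-1$ (harmless, since $t\le\eps k\le n/100$). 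Third, ``the hit position is essentially uniform in $\hat\pi_v$'' should be ``uniform over the candidate positions of $\hat\pi_v$'', which is exactly the gap the computation above patches.
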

\begin{proof}
Since any useful query is between a terminal and a Steiner vertex, a useful query reduces the uncertainty of exactly one terminal by $1$. Therefore, in order to have $\eps^2 k/40$ bad phases, the reduction in the total uncertainty is at least $(\eps^2 k/4)\cdot (n-k-3n/4)\ge \eps^2 nk/200$, which implies that the algorithm performs at least $\eps^2 nk/200$ queries.
\end{proof}

From the above two claims, in order to find $\eps^2 k/2$ crucial edges, either the algorithm performs at least $\eps^2 nk/20$ queries, or the number of bad phases encountered by the query sequence is at most $\eps^2 k/4$. Thus, in the first $\eps^2 k/2$ phases, there are at least $\eps^2 k/4$ good phases, implying that the expected number queries made by the algorithm is at least $(\eps^2 k/4)\cdot (n/200)=\Omega(\eps^2 nk)$.
This completes the proof of \Cref{clm: crucial edge}, and therefore also completes the proof of  \Cref{thm: beat-2-lower-computing}.

\section{Upper and Lower Bounds for $\alpha$-Approximate Steiner Tree ($\alpha \ge 2$)}
\label{sec: >2-main}

In this section we provide the proof of \Cref{thm: >2-main}. 

\subsection{Upper Bound}
\label{subsec: >2_upper}

We start by presenting an $\tilde{O}(k^2/\alpha)$ query algorithm for any $\alpha\ge 2$. The query algorithm for any $\alpha \ge 2$ is very similar to the algorithm in the proof of Theorem 1 in \cite{chen2022sublinear}, which shows a one-pass $\tilde O(n/\beta)$ streaming algorithm for estimating the metric MST cost to within factor $\beta$ (for any $\beta>1$). 
Note that we may assume without lose of generality that $\alpha=\Omega(\log^2 n)$, as otherwise we can simply query all terminal-terminal weight (which is $k^2=\tilde O(k^2/\alpha)$ queries) and then compute the minimum spanning tree on $T$, which is a $2$-approximate Steiner Tree (and therefore an $\alpha$-approximate Steiner Tree as $\alpha\ge 2$).

\paragraph{Algorithm.} 
Let $\beta=\alpha/(100\log n)$.
We first choose a uniformly at random size-$\ceil{k/\beta}$ subset of $T$, and denote it by $T'$.
We then query all weights between pairs of terminals in $T'$, and use the acquired information to compute the minimum spanning tree $\tau'$ over $T'$. Lastly, for every terminal $u\notin T$, we query all weights between $u$ and terminals in $T'$, and let $f(u)=\arg_{u'\in T'}\min\set{w(u,u')}$. Finally, we output the tree $\tau$ defined as $\tau=\tau'\cup\set{(u,f(u))\mid u\in T\setminus T'}$.

On the one hand, observe that the algorithm only queries weights with one endpoint in $T'$ and the other endpoint in $T$, so the number of queries performed by the algorithm is at most $k\cdot (k/\beta)=\tilde O(k^2/\alpha)$. 
On the other hand, it is easy to verify that the algorithm always outputs a spanning tree on $T$.
The proof that the spanning tree $\tau$ output by the algorithm above is with high probability an $\alpha$-approximate Steiner Tree is similar to the analysis on pages 15-16 in \cite{chen2022sublinear}, and is deferred to \Cref{apd: Analysis of >2_upper}.

\subsection{Lower Bound}

We now prove an $\Omega(k^2/\alpha)$ lower bound for computing an $\alpha$-approximate Steiner Tree for any $\alpha \ge 2$. In fact, we will show that, if all vertices in the metric space are terminals (so there are $k$ terminals and no Steiner vertices), then for every $\alpha\ge 2$, computing a spanning tree of cost at most $\alpha$ times the minimum spanning tree cost requires at least $\Omega(k^2/\alpha)$ queries. Since the metric Steiner Tree cost is at most twice the minimum terminal spanning tree cost, this lower bound implies the lower bound in \Cref{thm: >2-main} (as we can construct a metric Steiner Tree instance, where all Steiner vertices are sufficiently far from all terminals and so any $\alpha$-approximate Steiner tree may only terminal-terminal edges, and is therefore a terminal spanning tree).

It is easy to observe that it suffices to consider the case where $\alpha\ge 2$ is an integer and $k$ is divisible by $100\alpha$.
We generate a metric space $w$ from the following distribution. The vertex set $V=T$ is fixed. Let $\pset$ be a partitioning of $T$, that is chosen uniformly at random from all partitioning of $T$ that partition $T$ into $t=k/100\alpha$ sets $T_1, \ldots, T_t$ of size $100\alpha$ each. For every pair $u,u'$ of terminals in the same part of $\pset$, $w(u,u)=1$, and for all other pairs $u,u'$, $w(u,u)=2\alpha$.

We call the weight-$1$ edges as \emph{crucial} edges.
It is easy to verify that, although metric $w$ is random, any two realizations of $w$ are isomorphic, and so the minimum spanning cost is always the same. In particular, the minimum spanning tree contains $(k-t)$ crucial edges and $(t-1)$ other edges connecting the trees incide each partition, and so $\mst(w)=(k-t)\cdot 1+(t-1)\cdot 2\alpha=1.02k-t-2\alpha$.
On the other hand, for any spanning tree that contains $k'$ crucial edges and $(k-k'-1)$ other edges, its total cost is at least $k'\cdot 1+(k-k'-1)\cdot 2\alpha=2\alpha k-(2\alpha-1) k'-2\alpha$. Therefore, in order for a spanning tree $\tset$ to approximate the minimum spanning tree to within factor $\alpha$, the tree $\tset$ has to contain at least 
\[\frac{2\alpha k-2\alpha-\alpha\cdot (1.02k-t-2\alpha)}{2\alpha-1}= \frac{0.98\alpha k-2\alpha+\alpha t+2\alpha^2}{2\alpha}\ge 0.01k\] 
crucial edges, and so there must be at least $0.01k$ vertices in $T$, such that $\tset$ contains a crucial edge incident to it.

We say that an edge is \emph{discovered} by an algorithm at some step iff the set of queries performed by the algorithm before this step uniquely identify the edge to be a weight-$1$ edge or a weight-$2\alpha$ edge.
We say that a vertex $u$ is \emph{settled} iff the algorithm has discovered a crucial edge incident to $u$; otherwise we say that it is \emph{unsettled}. 
Similarly, we say that a part $T_i$ in the partitioning $\pset$ is settled iff all its vertices are settled. 
From Yao's minimax principle \cite{yao1977probabilistic} and the above discussion, in order to prove the lower bound of \Cref{thm: >2-main}, it suffices to prove the following lemma.

\begin{lemma}
\label{lem: settled vertices}
Any deterministic algorithm that settles at least $0.01k$ vertices in the distribution on $w$ defined above performs at least $\Omega(k^2/\alpha)$ queries in expectation.
\end{lemma}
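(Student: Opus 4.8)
The plan is to adapt the phase-based argument used in Section~\ref{sec: proof of beat-2-lower-computing} to this ``all-terminal'' setting. Consider a deterministic algorithm and the sequence of queries it performs; partition this sequence into phases, where each phase ends exactly when a new \emph{part} $T_i$ of the partitioning $\pset$ becomes settled (equivalently, the last unsettled vertex of some part gets a crucial edge discovered incident to it). Since settling $0.01k$ vertices requires settling at least $0.01k/(100\alpha) = k/(10^4\alpha)$ full parts (each part has $100\alpha$ vertices), it suffices to show that the first $k/(10^4\alpha)$ phases together use $\Omega(k^2/\alpha)$ queries in expectation. Let $Z_j$ be the number of queries in the $j$-th phase; I want $\sum_{j \le k/(10^4\alpha)} \expect[Z_j] = \Omega(k^2/\alpha)$.

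As in the earlier proof, I would classify phases as good or bad according to an ``uncertainty'' potential. Define the uncertainty of an unsettled vertex $u$ at a given step to be $(k-1)$ minus the number of queries involving $u$ performed so far (or a suitable variant that ignores queries already known to be non-crucial). Call a phase \emph{bad} if at its start some unsettled vertex has uncertainty below $3k/4$, and \emph{good} otherwise. First I would show that in a good phase, the expected number of queries is $\Omega(k/\alpha)$: at the start of a good phase every unsettled vertex still has at least $\approx 3k/4$ ``candidate'' partners, and conditioned on the history the random partition looks like a uniformly random partition into blocks of size $100\alpha$ consistent with what has been revealed; an analogue of Lemma~\ref{lem: random edge} (or a direct counting argument on random partitions) shows that any single query reveals a crucial edge with probability $O(\alpha/k)$, so with constant probability none of the first $\Omega(k/\alpha)$ queries of the phase discovers a crucial edge, giving $\expect[Z_j] = \Omega(k/\alpha)$ for good phases. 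Second, a budget argument: each crucial-edge query reduces the total uncertainty of unsettled vertices by at most a bounded amount, so if the algorithm makes fewer than, say, $k^2/(10^5\alpha)$ queries, the number of bad phases is at most $k/(10^4 \cdot 4\alpha)$ (since each bad phase ``costs'' $\Omega(k)$ units of reduced uncertainty on some vertex before that vertex could be settled). Combining: among the first $k/(10^4\alpha)$ phases at least a constant fraction are good, each contributing $\Omega(k/\alpha)$ expected queries, so the total is $\Omega\big((k/\alpha)\cdot(k/\alpha)\big) = \Omega(k^2/\alpha)$.

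The main obstacle I anticipate is the probabilistic heart of the ``good phase'' bound: unlike the bipartite-matching setting of Lemma~\ref{lem: random edge}, here the hidden object is a uniformly random partition into equal blocks of size $100\alpha$, and queries that return $2\alpha$ (non-crucial) carve out forbidden co-membership constraints that are not as clean as deleted matching edges. I would need to argue that, conditioned on any history in which every unsettled vertex still has $\ge 3k/4$ potential co-members, the probability that a fixed pair $(u,u')$ lies in the same block is $O(\alpha/k)$ — intuitively because $u$ has $\Theta(\alpha)$ block-mates spread among $\Theta(k)$ candidates. This likely follows from a symmetrization / switching argument: given a valid partition placing $u,u'$ together, one can produce $\Omega(k/\alpha)$ distinct valid partitions separating them by swapping $u'$ with one of the many candidate vertices not currently co-located with $u$, and distinct source partitions yield disjoint families, exactly mirroring the proof of Lemma~\ref{lem: random edge}. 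Care is needed to ensure these swaps respect the block-size constraint and all revealed non-edges, but since each unsettled vertex has $\ge 3k/4$ slack this should go through. The remaining steps (the uncertainty budget bookkeeping and the final counting of good vs.\ bad phases) are routine analogues of the computations already carried out in Section~\ref{sec: proof of beat-2-lower-computing}.
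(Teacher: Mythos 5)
Your phase decomposition is set up incorrectly, and this breaks the argument at the very first step. You end a phase when an entire \emph{part} $T_i$ becomes settled, and you claim that settling $0.01k$ vertices ``requires settling at least $0.01k/(100\alpha)$ full parts.'' That implication goes the wrong way. Settling a vertex $u$ only means one crucial edge incident to $u$ has been discovered; the other $100\alpha-2$ vertices in $u$'s part remain unsettled. An algorithm can therefore settle $0.01k$ vertices spread thinly across the $t=k/(100\alpha)$ parts (on average $\alpha$ settled vertices per part) without ever completing a single part. On such an execution your decomposition produces zero finished phases, so the phase-counting machinery never starts, and the lemma is not proved. Settling $k/(10^4\alpha)$ full parts would \emph{suffice} for $0.01k$ settled vertices, but the lemma must rule out the cheaper strategy of settling $0.01k$ vertices while completing few or no parts. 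The paper's decomposition ends a phase whenever a previously-unsettled \emph{vertex} becomes settled, so $0.01k$ settled vertices forces $\Omega(k)$ phases, which is the right scale for a $\Omega(k/\alpha)$-per-good-phase bound to give $\Omega(k^2/\alpha)$.

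Your probabilistic core — that in a good phase each query is crucial with probability $O(\alpha/k)$, via a swap/hosting argument on consistent partitions — is essentially the same as the paper's \Cref{obs: good phase} and is sound in spirit. However, even after repairing the phase decomposition, a single uncertainty threshold on vertices is likely too coarse for the bookkeeping: the swap argument needs, for any unsettled $u$, a large set of eligible partners $\hat u$ that (i) have no discovered edge to $u$'s part and (ii) whose own part has no discovered edge to $u$. Controlling only the number of queries incident to $u$ does not control (i), which concerns other vertices in $u$'s part. The paper handles this by distinguishing two kinds of bad phase — one triggered when a \emph{part} has had $\Omega(k)$ outgoing edges revealed, and one triggered when a \emph{vertex} has had edges to $\Omega(t)$ distinct parts revealed — and charges them to the query budget at different rates ($\Theta(k)$ vs.\ $\Theta(k/\alpha)$). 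You would need a similar two-pronged bad-phase criterion to make the good-phase probability bound and the budget bound hold simultaneously.
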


In the remainder of this section, we provide a proof of \Cref{lem: settled vertices}. We use a similar approach as in \Cref{sec: proof of beat-2-lower-computing}. 
Consider a deterministic algorithm and the sequence of queries it performs.
Assume without loss of generality that the algorithm terminates whenever it settles $0.01k$ vertices.
We partition the sequence of queries it makes as follows.
The first phase starts at the first query, a phase ends as soon as a previously unsettled vertex is settled, and the next phase starts right after the previous phase ends. For each $j\ge 1$, we let $Z_j$ be the random variable denoting the number of queries performed in the $j$-th phase. In order to prove \Cref{lem: settled vertices}, it suffices to show that $\sum_{1\le j\le 0.01 k}\mathbb{E}[Z_i]=\Omega(k^2/\alpha)$.

We classify all phases into good ones and bad ones as follows.
We say that an unsettled vertex $u$ is \emph{well-discovered}, iff the number of parts $T_i$ in $\pset$ such that some edge in $E(u,T_i)$ has been discovered is at least $t/10$.
We say that a phase is \emph{type-1 bad}, iff at the start of this phase there exists an unsettled part $T_i\in \pset$, and such that all discovered edges in $E(T_i,T\setminus T_i)$ touch at least $k/10$ vertices outside $T_i$. 
We say that a phase is \emph{type-2 bad} iff at the start of this phase there exists a well-discovered vertex. If a phase is neither type-1 bad nor type-2 bad, then we say it is \emph{good}. 
For ease of analysis, whenever the algorithm starts a type-1 bad phase due to some part $T_i$, we immediately reveal to the algorithm the weight of all edges with at least one endpoints in $T_i$, so the part $T_i$ is settled right away; and whenever the algorithm starts a type-2 bad phase due to some well-discovered unsettled vertex $u$, we immediately reveal a crucial edge incident on $u$, so $u$ and the other endpoint of the revealed edge are settled right away.
We prove the following claims.

\begin{claim}
The expected number of queries in a good phase is at least $k/1600\alpha$.
\end{claim}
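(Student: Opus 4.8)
The plan is to show that in a good phase, the algorithm is very unlikely to discover a crucial edge within its first $\Theta(k/\alpha)$ queries, from which the expected length bound follows. Fix the start of a good phase. Since the phase is neither type-1 bad nor type-2 bad, every unsettled part $T_i$ has the property that the discovered edges in $E(T_i, T\setminus T_i)$ touch fewer than $k/10$ outside vertices, and every unsettled vertex $u$ has discovered edges to fewer than $t/10$ of the parts $T_i$. I would first argue that, conditioned on everything discovered so far, the conditional distribution of the partition $\pset$ restricted to the still-unsettled vertices is (close to) uniform over partitions consistent with the discovered information; in particular, for a query $(u,u')$ between two unsettled vertices with no discovered edge between them and between their ``neighborhoods,'' the conditional probability that $u,u'$ lie in the same part is $O(\alpha/k)$ — because $u$'s part has size $100\alpha$, and among the roughly $k - O(k/10)$ candidate vertices not yet ruled out for membership in $u$'s part, $u'$ is equally likely to be any of them, so the probability it is one of the $O(\alpha)$ partners of $u$ is $O(\alpha/k)$.

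Next I would make this precise via a counting / deferred-decisions argument rather than an explicit conditional-distribution formula: think of the partition being revealed lazily, where each query $(u,u')$ reveals whether $u'\in T_{i(u)}$. At the start of a good phase, for any unsettled $u$, the number of parts with a discovered edge to $u$ is $<t/10$, so at least $9t/10$ parts (hence at least $90\alpha t/100 \ge 0.9k$ vertices, minus at most $0.01k$ settled ones) remain genuinely ``open'' as possible co-members of $u$. The type-1 condition guarantees that $u$'s own part has lost at most a $1/10$ fraction of its potential slots, so the part still has $\ge 90\alpha$ available members among $\ge 0.8k$ open vertices; thus any single query has conditional success probability at most $\approx 90\alpha/(0.8k) \le 200\alpha/k$. (The settled vertices and the edges forcibly revealed by the type-1/type-2 conventions only help the algorithm by a bounded amount and are subtracted off inside these constants — this is where I would be careful to keep the fractions $1/10$, $0.01$, etc.\ consistent so that the final constant works out to $1600$.) By a union bound over the first $m = k/(1600\alpha)$ queries of the phase, the probability that any of them discovers a crucial edge is at most $m \cdot 200\alpha/k < 1/2$ (after possibly re-tuning constants). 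Hence with probability at least $1/2$ the phase lasts at least $k/(1600\alpha)$ queries, giving $\mathbb E[Z_j \mid \text{good}] \ge \tfrac12 \cdot \tfrac{k}{1600\alpha} = \Omega(k^2/\alpha)$ — more than enough; I would simply state the bound $k/1600\alpha$ with the constants chosen so the arithmetic closes.

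The main obstacle I anticipate is the conditioning argument: making rigorous that, given the transcript of answers so far in a good phase, the probability a fresh query $(u,u')$ returns weight $1$ is $O(\alpha/k)$. The subtlety is that the transcript is adaptive and the ``discovered'' edges (both the ones the algorithm queried and the ones force-revealed by the type-1/type-2 conventions) create correlations; one must verify that these correlations cannot concentrate the remaining partition so as to make $u$ and $u'$ likely partners. The clean way to handle this is the principle of deferred decisions restricted to the set of open (undiscovered, unsettled) vertex pairs, exactly mirroring Lemma 5.4 of \cite{assadi2019sublinear} / the good-phase argument in \Cref{sec: proof of beat-2-lower-computing}: among the open vertices for $u$, the co-members of $u$ are a uniformly random subset of the right size, so each open query hits one with probability (number of remaining co-members)/(number of open vertices) $= O(\alpha/k)$. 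Everything else — the union bound and the resulting expectation estimate — is routine.
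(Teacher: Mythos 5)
Your plan is essentially the paper's proof: upper-bound the conditional probability that a fresh query in a good phase discovers a crucial edge by $O(\alpha/k)$ via a counting argument over partitionings consistent with the transcript, then union bound over the first $\Theta(k/\alpha)$ queries. The paper formalizes the probability bound as an observation stating that, for any unqueried pair with at least one not-yet-settled endpoint, the number of consistent partitionings that separate the pair is at least $\Omega(k/\alpha)$ times the number that join it, and proves this by a swap (``host'') construction: starting from any consistent partitioning that joins $u,u'$, exchange $u$ with one of $\ge k/2$ other eligible vertices to produce a consistent separating partitioning, and then argue each separating partitioning can be produced by at most $O(\alpha)$ joining ones.

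One point worth flagging: you propose to handle the rigor ``exactly mirroring Lemma 5.4 of~\cite{assadi2019sublinear} / the good-phase argument in \Cref{sec: proof of beat-2-lower-computing}.'' That lemma (and its Section~\ref{sec: proof of beat-2-lower-computing} analogue) is a statement about uniformly random $A$-perfect matchings in a bipartite graph and does not apply directly here, where the hidden structure is a partition into parts of size $100\alpha$ rather than a matching. You cannot simply cite it; you need a fresh counting argument tailored to partitions, which is exactly what the swap/host construction supplies. Your informal sketch (``among the open vertices for $u$, the co-members of $u$ are a uniformly random subset of the right size'') is the right intuition and matches what the swap argument proves, but it is not a literal consequence of the earlier lemma. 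Also be careful that ``consistent'' in the paper includes not only the query answers but the good-phase side conditions (no well-discovered unsettled vertex, no overly-exposed unsettled part), and the host construction must be checked to preserve those too — the paper does this. Aside from that reference being inexact, your constants and the union-bound step land in the same place as the paper's $k/1600\alpha$.
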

\begin{proof}
Recall that the metric $w$ is defined based on a partitioning $\pset$ of $T$ into $t=k/100\alpha$ subsets of size $100\alpha$ each (that we call a \emph{valid} partitioning). We say that a valid partitioning $\pset$ \emph{joins} a pair $u,u'$ of vertices in $T$, iff $u,u'$ lie in the same part of $\pset$, otherwise we say that $\pset$ \emph{separates} the pair $u,u'$.
We say that a valid partitioning $\pset$ of $T$ is \emph{consistent} with the current queries, iff (i) every discovered crucial edge has both its endpoints lying in the same part of $\pset$ and every discovered edge that is not crucial has its endpoints in different partitions; (ii) every unsettled vertex $u$ is not well-discovered; and (iii) for every unsettled part $T_i$ in $\pset$, all discovered edges in $E(T_i,T\setminus T_i)$ touch at least $k/10$ vertices outside $T_i$. In other words, a partitioning $\pset$ is consistent iff it provides consistent answers for all queries made by the algorithm, and the algorithm is not a bad phase given the current queries. 
Intuitively, from the algorithm's viewpoint, the up-to-date distribution (according to the answers to the queries performed so far) of the underlying partitioning should be the uniform distribution on all partitionings that are consistent with the current queries.
We prove the following observation.
\begin{observation}
\label{obs: good phase}
At any time during the first $k/(10\alpha)$ queries of a good phase, for every pair $u,u'$ of vertices in $T$ such that $u,u'$ are not both settled and the edge $(u,u')$ has not been queried yet, the number of consistent partitionings that separate the pair $u,u'$ is at least $k/800\alpha$ times the number of consistent partitionings that joins the pair $u,u'$.
\end{observation}
\begin{proof}
Consider a pair $u,u'$ of vertices and a consistent partitioning $\pset$ that joins the pair $u,u'$. Assume without loss of generality that $u$ is unsettled. We construct a collection of other partitionings as follows. Let $T_i$ be the part in $\pset$ that contains $u$ and $u'$. Let $T'(u)$ be the set of all unsettled terminals $\hat u$ such that (i) no edge in $E(\set{\hat u}, T_i)$ has been discovered; and (ii) no edge in $E(\set{u}, T_{i'})$ has been discovered, where $T_{i'}$ is the part in $\pset$ that contains $\hat u$. 
We prove the following observation.
\begin{observation}
At any time during the first $k/(10\alpha)$ queries of a good phase, for every unsettled vertex $u$, $|T'(u)|\ge k/2$.
\end{observation}
\begin{proof}
Before this good phase, $u$ is also unsettled, so the number of vertices $\hat u$ such that $(u,\hat u)$ has been discovered is at most $(t/10)\cdot (100\alpha)\le k/10$. Let $T_i$ be the part that $u$ lies in, since $T_i$ is not settled before this phase, it did not create a bad phase before, and so the number of vertices in $T\setminus T_i$ that are touched by discovered edges in $E(T_i,T\setminus T_i)$ is at most $k/10$. On the other hand, note that the algorithm settles at most $0.49$ vertices before this phase, and has performed performed at most $k/(10\alpha)$ queries in this phase.
By definition, $T'(u)$ contains all vertices $\hat u$ that does not satisfy any of the above conditions, so $|T'(u)|\ge k-k/10-k/10-0.01k-k/(10\alpha)\ge k/2$.
\end{proof}

For each $\hat u\in T'(u)$, consider the partitioning $\pset(\hat u)$ obtained by exchanging the positions of vertices $u$ and $\hat u$ (that is, if originally $u\in T_i$ and $\hat u\in T_{i'}$, then we move $u$ to $T_{i'}$ and move $\hat u$ to $T_{i}$). Clearly, vertices $u,u'$ are separated in $\pset(\hat u)$. We prove the following observation.

\begin{observation}
For every $\hat u\in T'(u)$, $\pset(\hat u)$ is a consistent partitioning.
\end{observation}
\begin{proof}
Consider a vertex $\hat u\in T'(u)$. Let $T_i$ be the set that contains $u$ and let $T_{i'}$ be the set that contains $\hat u$. By definition of $T'(u)$, no edge in $E(\set{\hat u}, T_i)$ has been discovered, and no edge in $E(\set{u}, T_{i'})$ has been discovered. On the other hand, since $u$ and $\hat u$ are not settled, no edge in $E(\set{\hat u}, T_{i'})$ has been discovered, and no edge in $E(\set{u}, T_{i})$ has been discovered. Therefore, $\pset(\hat u)$ provides consistent answers with all queries made by the algorithm so far. On the other hand, it also implies that for every part $T_j$ in $\pset(\hat u)$, all discovered edges in $E(T_j,T\setminus T_j)$ touch the same set of vertices as the corresponding part in $\pset$. Lastly, it is also easy to verify that every unsettled vertex is still not discovered in $\pset(\hat u)$. Altogether, $\pset(\hat u)$ is a consistent partitioning with all current queries.
\end{proof}

For every $\hat u\in T'(u)$, we say that $\pset(\hat u)$ is a \emph{host} of $\pset$, so $\pset$ has at least $k/2$ hosts.
On the other hand, for every partitioning $\pset'$ that separates $u,u'$, there are at most $200\alpha$ partitioning $\pset$, such that $\pset$ joins $u,u'$ and $\pset'$ is a host of $\pset$ (since in $\pset'$, $u,u'$ belong to different parts, say $u\in T_i$ and $u'\in T_{i'}$ so it must be the case that either $u$ and some vertex in $T_{i'}$ are exchanged or $u'$ and some vertex in $T_{i}$ are exchanged, a total of at most $200\alpha$ possibilities).
Therefore, the number of consistent partitionings that separate the pair $u,u'$ is at least $(k/2)/(200\alpha)=k/400\alpha$ times the number of consistent partitionings that joins the pair $u,u'$.
\end{proof}

From \Cref{obs: good phase}, among the first $k/800\alpha$ queries in a good phase, the probability that any single query finds a crucial edge is at most $400\alpha/n$, and so with probability $1/2$, none of them finds a crucial edge. This implies that the expected number of queries in a good phase is at least $(k/800\alpha)\cdot (1/2)=k/1600\alpha$.
\end{proof}

\begin{claim}
If the algorithm performs less than $k^2/(10^7\alpha)$ queries, then the number of type-1 bad phases is at most $k/(10^5\alpha)$, and the number type-2 bad phases is at most $k/10^3$.
\end{claim}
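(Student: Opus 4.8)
The plan is to prove both bounds by a single charging argument: I will show that each bad phase can be triggered only after the algorithm has already spent many queries ``charged'' to the object that triggered it --- the part $T_i$ for a type-1 bad phase, the vertex $u$ for a type-2 bad phase --- and then sum these charges against the total budget $Q<k^2/(10^7\alpha)$. Throughout, the stopping rule is used in the form: at the start of any phase fewer than $0.01k$ vertices are settled.

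First I would isolate the structural facts about \emph{discovered} edges. Call the connected components of the graph on $T$ whose edges are the directly queried crucial ($=1$) edges together with all crucial edges inside artificially revealed parts the \emph{clusters}; these are pairwise disjoint, every cluster of size at least $2$ consists entirely of settled vertices, and a part is \emph{determined} (i.e.\ its full vertex set is known) precisely when it is a revealed part or a cluster of size $100\alpha$. Since each determined part carries $100\alpha$ settled vertices, at any time there are fewer than $0.01k/(100\alpha)=t/100$ determined parts, and the total size of the clusters of size at least $2$ is below $0.01k$. The key lemma I need is: \textbf{if $(x,v)$ is discovered to be a $2\alpha$-edge and the part containing $x$ is not determined, then either the part containing $v$ is determined, or the algorithm directly queried some edge between the cluster of $x$ and the cluster of $v$ and received the answer $2\alpha$.} This is proved by the consistent-partition method already used for good phases: if neither alternative held, one builds a partition of $T$ into size-$100\alpha$ parts consistent with all queries and reveals but placing $x$ and $v$ in the same part, contradicting discovery. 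I expect this lemma --- in particular ruling out ``counting''-style deductions that force $x\not\sim v$ without an explicit witnessing $2\alpha$-query --- to be the main technical obstacle; the rest is bookkeeping.

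For the type-1 count, fix a type-1 bad phase with trigger part $T_i$ and let $V$ be the set of vertices outside $T_i$ touched by a discovered $2\alpha$-edge incident to $T_i$, so $|V|\ge k/10$. As $T_i$ is unsettled, no vertex of $T_i$ lies in a determined part, so the lemma applies to each such edge. Write $V=V^{\det}\cup V^{\mathrm{dir}}$, where $V^{\det}$ consists of the vertices lying in a determined part, so $|V^{\det}|< (t/100)\cdot 100\alpha=0.01k$. By the lemma, every $v\in V^{\mathrm{dir}}$ lies in a cluster ``reached'' by some direct query from $T_i$ to the outside that returned $2\alpha$; letting $q$ be the number of such direct queries and $\mathcal C$ the set of reached clusters, we get $|\mathcal C|\le q$ and $\sum_{C\in\mathcal C}(|C|-1)<0.01k$ (disjoint size-$\ge 2$ clusters of settled vertices), hence $|V^{\mathrm{dir}}|\le\sum_{C\in\mathcal C}|C|\le q+0.01k$. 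Therefore $q\ge k/10-0.02k=0.08k$. Each direct query on an edge with exactly one endpoint in a given part is charged to at most two type-1 triggers, so $(\#\text{type-1 bad phases})\cdot 0.08k\le 2Q<2k^2/(10^7\alpha)$, giving fewer than $25k/(10^7\alpha)<k/(10^5\alpha)$ type-1 bad phases.

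For the type-2 count I would run the same argument one level down. Fix a type-2 bad phase triggered by an unsettled vertex $u$; since $u$ is unsettled its cluster is $\{u\}$ and all discovered edges incident to $u$ are $2\alpha$-edges, so the lemma applies to each of the $\ge t/10$ parts having a discovered edge to $u$. Each such part is either determined (at most $t/100$ of these) or contains the other endpoint of a direct query from $u$ that returned $2\alpha$; and because $C_u=\{u\}$ there is no amplification, so a single such query accounts for exactly one part. Hence $u$ is incident to at least $t/10-t/100=9t/100$ direct queries. The type-2 triggers are distinct vertices and each query is incident to at most two of them, so $(\#\text{type-2 bad phases})\cdot 9t/100\le 2Q$, i.e.\ at most $200Q/(9t)=20000\alpha Q/(9k)<k/10^3$ type-2 bad phases. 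Finally, combining these two bounds with the preceding claim (each good phase uses $\ge k/(1600\alpha)$ queries in expectation) and the fact that bad phases settle at most $100\alpha$ (type-1) or $2$ (type-2) new vertices each forces at least $\Omega(k)$ good phases before $0.01k$ vertices are settled, hence $\mathbb E[Q]=\Omega(k^2/\alpha)$, which is \Cref{lem: settled vertices}.
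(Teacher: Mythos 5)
You take the same high-level charging approach as the paper: each bad-phase trigger (a part $T_i$ for type-1, a vertex $u$ for type-2) forces $\Omega(k)$ resp.\ $\Omega(t)$ queries incident to it, and each query is charged to at most two triggers, so the number of bad phases is bounded by $2Q$ divided by the per-trigger cost. The paper compresses this into four lines, silently treating ``discovered'' $2\alpha$-edges as coming only from direct queries, from type-1 reveals, or from transitivity through settled vertices, and absorbing the last two into the $100\alpha\cdot k/(10^5\alpha)$ and $0.01k$ correction terms. Your proof is the same argument made careful: you isolate the dichotomy (determined part at one end, or a direct cross-cluster $2\alpha$-query) as an explicit key lemma, and you handle the cluster inflation cleanly with $\sum_{C}(|C|-1)<0.01k$. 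The bookkeeping is correct: for type-1, $q\ge k/10-0.01k-(q+0.01k)$ gives $q\ge 0.08k$, whence $\#\text{type-1}\le 2Q/(0.08k)<k/(4\cdot 10^5\alpha)$; for type-2, $q_u\ge 9t/100$ and $\sum_u q_u\le 2Q$ give $\#\text{type-2}<2k/(9\cdot 10^3)$. Both are within the claimed bounds.

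The one genuine gap --- which you flag yourself --- is that the key lemma is asserted, not proved. Two things need verification: (i) the ``precisely when'' clause that a part is determined only if it is a revealed part or a size-$100\alpha$ cluster of directly discovered crucial edges (ruling out indirect elimination arguments), and (ii) that if neither endpoint's part is determined and no cross-cluster $2\alpha$-query exists then there is a consistent partition joining the two endpoints. Both should follow from the swap/consistent-partition construction used in the ``good phase'' claim (\Cref{obs: good phase}), but (ii) in particular must argue that the remaining slack lets you relocate enough vertices without violating any queried or revealed constraint, and that is exactly the kind of argument that can fail if the algorithm has pinned things down more than you expect. To be fair, the paper's own four-line proof does not state or prove any such characterization either; your proof simply makes explicit an assumption the paper leaves implicit. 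So the proposal is a correct and somewhat more rigorous rendering of the paper's argument, conditional on the key lemma, which remains to be proved.
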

\begin{proof}
Assume for contradiction that the number of type-1 bad phases is more than $k/(10^5\alpha)$. Then there are at least $k/(10^5\alpha)$ indices $i$, such that the number of edges in $|E(T_i,T\setminus T_i)|$ discovered by the algorithm is at least $k/10$. On the other hand since the algorithm never settles more than $0.01k$ vertices, it never discovers more than $0.01k$ crucial edges. Therefore, the number of queries performed by the algorithm is at least \[\frac{1}{2}\cdot \frac{k}{10^5\alpha}\cdot \bigg(\frac{k}{10}-\frac{k}{10^5\alpha}\cdot 100\alpha-0.01k\bigg)\ge \frac{k^2}{10^7\alpha},\] a contradiction.
Assume for contradiction that the number of type-2 bad phases is more than $k/10^3$, then there are more than $k/10^3$ well-discovered vertices. 
Therefore, the number of queries performed by the algorithm over all phases is at least $(k/10^3)\cdot (t/10-k/(10^5\alpha))\cdot (1/2)\ge k^2/(10^7\alpha)$, a contradiction.
\end{proof}

From the above two claims, we get that, in order to settle $0.01k$ vertices, either the algorithm performs at least $k^2/(10^7\alpha)$ queries, or the number of type-1 bad phases of its query sequence is at most $k/(10^5\alpha)$ (which settles at most $0.001k$ vertices in total) and the number of type-2 bad phases of its query sequence is at most $k/10^3$ (which settles at most $0.002k$ vertices in total), and so among the first $0.01k$ phases, there are at least $0.007k$ good phases, implying that the expected number queries made by the algorithm is at least $0.007k\cdot (k/1600\alpha)=\Omega(k^2/\alpha)$.
This completes the proof of \Cref{lem: settled vertices}, and therefore also completes the proof of the lower bound in \Cref{thm: >2-main}.

\newcommand{\DY}{\mathcal{D}_{\textnormal{\textsf{Y}}}}
\newcommand{\DN}{\mathcal{D}_{\textnormal{\textsf{N}}}}
\newcommand{\SP}{special\xspace}
\newcommand{\NM}{regular\xspace}
\newcommand{\IM}{secret\xspace}
\newcommand{\DD}{\mathcal{D}}
\newcommand{\isety}{\iset_{\sf Y}}
\newcommand{\isetn}{\iset_{\sf N}}

\section{An $\tilde \Omega(n+k^{6/5})$ Query Lower Bound for $(2-\eps)$-Estimation}
\label{sec: beat-2-lower}

In this section we provide the proof of \Cref{thm: beat-2-lower-main}, by showing that, for any $0<\eps<1$, any randomized algorithm that with probability $2/3$ estimates the Steiner Tree cost to within a factor of $2-\eps$ performs $\Omega(n+k^{6/5})$ in the worst case.
We will prove a lower bound of $\Omega(n)$ in \Cref{sec: n lower} and a lower bound of $\Omega(k^{6/5})$ in \Cref{sec: k^{6/5} lower}. Combined together, they complete the proof of \Cref{thm: beat-2-lower-main}. Throughout the section, we assume that $1/(2\eps)<k<n/2$.

\subsection{An $\Omega(n)$ Lower Bound}
\label{sec: n lower}

We construct a distribution $\dset$ of metric Steiner Tree instance $(V,T,w)$ as follows. The set $T$ contains $k$ terminals and the vertices in $V\setminus T$ are denoted by $v_1,\ldots,v_{n-k}$. 
We define $w_0$ as the metric on $V$ where $w_0(v,v')=2$ for all pairs $v,v'\in V$.
For each $1\le j\le n-k$, we define a metric $w_j$ as follows:
\begin{itemize}
\item For each terminal $u\in T$, $w_j(u,v_j)=1$.
\item For every other pair $v,v'\in V$, $w_j(v,v')=2$.
\end{itemize}
In other words, the weight-$1$ edges in $w_j$ form a star graph where $v_j$ is its center and all terminals are its leaves. It is easy to verify that $w_j$ is indeed a metric. We then define the instance $I_j=(V,T,w_j)$, and the distribution $\dset$ is defined as follows: $\Pr[I_0]=1/2$, and for each $1\le j\le n-k$, $\Pr[I_j]=1/(2(n-k))$.

Clearly, $\stcost(I_0)=2(k-1)$, and for each $j$, $\stcost(I_j)=k$ as the star graph formed by all weight-$1$ edges is a Steiner Tree of cost $k$. Since $k>1/(2\eps)$, $\stcost(I_0)/\stcost(I_j)>2-\eps$ for all $j$, and so estimating the metric Steiner Tree cost of a random instance sampled from $\dset$ to within factor $(2-\eps)$ is equivalent to determining whether or not the random instance is $I_0$.

In order to correctly determining if a random instance sampled from $\dset$ is $I_0$ with probability at least $2/3$, it is necessary that the algorithm discovers a weight-$1$ edge on at least $(1/3)$-fraction of the instances $I_1,\ldots,I_{n-k}$.
From Yao's minimax principle \cite{yao1977probabilistic}, the following claim implies a $\Omega(n-k)=\Omega(n)$ lower bound on the query complexity of any randomized algorithm, as $k\le n/2$.

\begin{claim}
Any deterministic algorithm that discovers a weight-$1$ edge on at least $1/3$-fraction of the instances $I_1,\ldots, I_{n-k}$ performs at least $\Omega(n)$ queries in expectation.
\end{claim}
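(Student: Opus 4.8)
The plan is to turn this into a short counting argument about first-query ``hitting times''. Fix a deterministic algorithm $A$. The crucial observation is that in instance $I_j$ a query returns $1$ only if it is a pair $(u,v_j)$ with $u\in T$, and every other query returns $2$; hence $A$ can discover a weight-$1$ edge on $I_j$ only after making such a query. Consider the sequence of queries $e_1,e_2,\dots$ (finite or infinite) that $A$ makes when every query is answered $2$ --- equivalently, its behaviour on the all-$2$ metric $w_0$. For each $j\in[n-k]$ let $t_j\in\{1,2,\dots\}\cup\{\infty\}$ be the first index $t$ with $e_t$ of the form $(u,v_j)$, $u\in T$.

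First I would record two facts. \textbf{(a)} On input $I_j$ the algorithm follows the sequence $e_1,e_2,\dots$ through step $t_j-1$ (every answer so far equals $2$), queries $e_{t_j}=(u,v_j)$ at step $t_j$, and only then sees an answer $1$; hence $A$ discovers a weight-$1$ edge on $I_j$ if and only if $t_j<\infty$, and in that case the number $q_j$ of queries it makes on $I_j$ satisfies $q_j\ge t_j$. \textbf{(b)} The map $j\mapsto t_j$ is injective on $\{j:t_j<\infty\}$, since a single query $e_t$ is a terminal--Steiner pair $(u,v_j)$ for at most one index $j$. Now suppose $A$ discovers a weight-$1$ edge on at least a $1/3$-fraction of $I_1,\dots,I_{n-k}$, i.e.\ $m:=|\{j:t_j<\infty\}|\ge (n-k)/3$. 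By (b) the finite values $t_j$ are $m$ distinct positive integers, so $\sum_{j:\,t_j<\infty} t_j\ \ge\ 1+2+\cdots+m\ \ge\ m^2/2\ \ge\ (n-k)^2/18$, and by (a) this lower-bounds $\sum_{j=1}^{n-k} q_j$. Therefore the average number of queries $A$ makes over $I_1,\dots,I_{n-k}$ is at least $(n-k)/18=\Omega(n)$, using $k\le n/2$.

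The one place that needs a little care, and the only real obstacle, is the ``only if'' direction of (a): before $A$ queries any terminal--$v_j$ pair it has only ever seen the answer $2$, which is also what $w_0$ (a metric with no weight-$1$ edge) returns, so no particular edge is yet forced to have weight $1$ and nothing has been discovered; thus the first terminal--$v_j$ query in $A$'s execution on $I_j$ is exactly $e_{t_j}$. It is also worth noting that the sequence $e_1,e_2,\dots$ is well defined as the behaviour of $A$ fed the answer $2$ to every query even though $w_0$ is not in the support of $\dset$; we never use that this sequence terminates, since on each genuine $I_j$ the algorithm halts and the inequality $q_j\ge t_j$ is all that is needed.
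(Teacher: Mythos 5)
Your proof is correct and takes essentially the same approach as the paper's: the paper's one-line argument is ``a single query can only check one Steiner vertex,'' which is exactly your injectivity fact (b), combined with the observation that the algorithm's behaviour is deterministic (following the all-$2$ transcript) until the first hit. You have simply spelled out the first-hitting-time counting $\sum_j t_j \ge 1+2+\cdots+m \ge m^2/2$ that the paper leaves implicit, making the expectation bound fully rigorous.
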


\begin{proof}
Observe that, in each of the instances $I_1,\ldots, I_{n-k}$, all weight-$1$ edges are incident to one Steiner vertex, that we call the \emph{secret vertex}, and the secret vertex is $v_1,\ldots,v_{n-k}$ with probability $1/(n-k)$ each.
Since a single query can only check one Steiner vertex (by querying any edge connecting it to a terminal), in order to find the secret vertex on at least $1/3$-fraction of the instances $I_1,\ldots, I_{n-k}$, any algorithm needs to perform at least $(n-k)/3=\Omega(n-k)$ queries in expectation.
\end{proof}

\subsection{An $\Omega(k^{6/5})$ Lower Bound}
\label{sec: k^{6/5} lower}

We will construct a pair $\DY,\DN$ of distributions on metric Steiner Tree instances, such that $\DY$ is only supported on instances $(V,T,w)$ with $\stcost(V,T,w)$ close to $Z$, while $\DN$ is only supported on instances instances $(V,T,w)$ with $\stcost(V,T,w)$ close to $2Z$, where $Z$ is some function of $k$ that will be defined later.
We let $\dset=(\DY+\DN)/2$ be the average distribution of $\DY$ and $\DN$. 
We show that, in order to report correctly with probability at least $2/3$ whether an random instance sampled from $\dset$ comes from $\DY$ or $\DN$, any randomized algorithm has to perform at least $\tilde{\Omega}(k^{6/5})$ queries, thereby proving \Cref{thm: beat-2-lower-main}.


We now proceed to define the distributions $\DY$ and $\DN$.
We first define an \emph{auxiliary instance} $(V,T,w)$ as follows. For convenience, we let set $T$ contain $k+k^{2/5}/\eps$ terminals instead of $k$ terminals. As we will see, this does not influence our lower bound, as $(k+k^{2/5}/\eps)^{6/5}=\Theta(k^{6/5})$.

\begin{itemize}
\item The set $T$ is partitioned into subsets $T=\big(\bigcup_{1\le j\le d}S_i\big)\cup \big(\bigcup_{1\le i\le d'}T_i\big)$, where $d=k^{2/5}$, $d'=k^{3/5}$; for each $1\le j\le d$, $|S_j|=1/\eps$, and for each $1\le i\le d'$, $|T_i|=k^{2/5}$.
\begin{itemize}
\item We call sets $S_1,\ldots,S_d,T_1,\ldots,T_{d'}$ \emph{groups}.
\item We denote $S=\bigcup_{1\le j\le d}S_j$, we call terminals in $S$ \emph{special} terminals, and we call terminals in $T\setminus S$ \emph{regular} terminals.
    For each $1\le j\le d$, we denote $S_j=\set{s_{j,1},\ldots,s_{j,1/\eps}}$. Note that $\card{S} = k^{2/5}/\eps$.
\end{itemize}  
\item The set $V\setminus T$ contains $n-(k+k^{2/5}/\eps)$ Steiner vertices and is further partitioned into $d'+1$ subsets $V\setminus T=\bigcup_{0\le i\le d'}V_i$, where for each $1\le i\le d'$, $|V_i|=k^{2/5}/\eps$.
\item The metric $w$ is defined as follows:  
\begin{itemize}
\item For every pair $u,u'$ of regular terminals that belong to the same group, $w(u,u')=0$.
\item For each $1\le i\le d'$, we fix an arbitrary perfect matching $M_i$ between terminals in $S$ and vertices in $V_i$. For every matched pair $u,v'$, $w(u,v)=1$.
\item The weight between every other pair of vertices in $V$ is $2$.
\end{itemize}
\end{itemize}

\begin{figure}[h]
\centering
\includegraphics[scale=0.14]{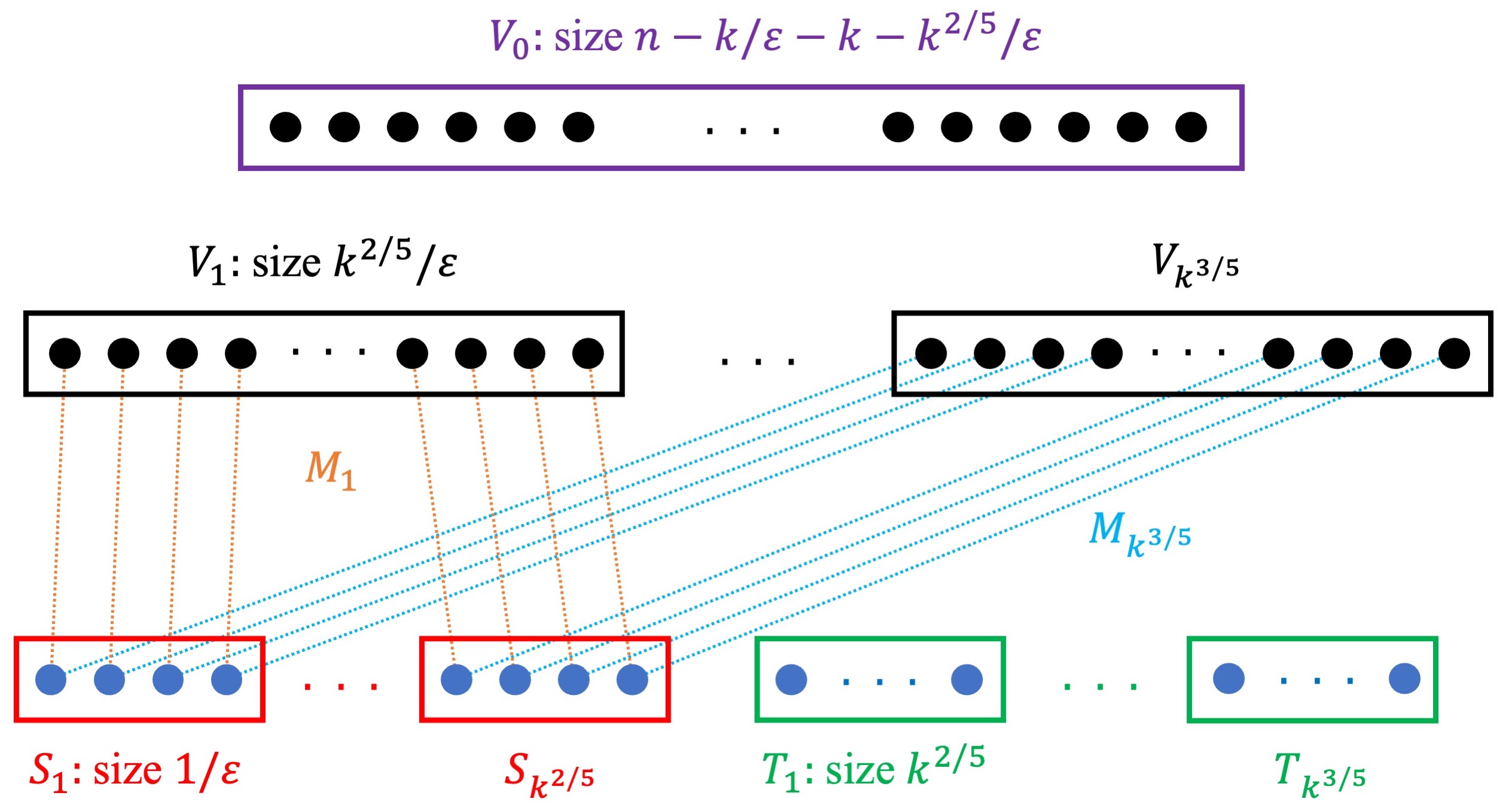}
\caption{An illustration of the metric $w$. All terminals are shown in dark blue and all Steiner vertices are shown in black. Matchings $M_1,\ldots,M_{k^{3/5}}$ are shown in dashed lines and the matched pairs have weight $1$. Pairs of teminals in the same $T_i$ (green box) have weight $0$. All other pairs have weight $2$.
\label{fig: metric_w}}
\end{figure}
See \Cref{fig: metric_w} for an illustration. It is easy to verify that $w$ is indeed a metric.

We now use the auxiliary instance defined above to construct distributions $\DY$ and $\DN$.
Every instance with non-zero probability in either $\DY$ or $\DN$ has the same vertex set $\hat V$ and the same terminal set $\hat T$, where $|\hat V|=|V|$ and $|\hat T|=|T|$. The set $\hat V\setminus \hat T$ of Steiner vertices is further partitioned into subsets $\hat V\setminus \hat T=\bigcup_{0\le i\le d'}\hat V_i$, where $V_0=\hat V_0$, and for each $1\le i\le d'$, $|\hat V_i|=|V_i|$.
We say that a one-to-one mapping $f: \hat V\to V$ is \emph{valid}, iff $f$ maps terminals in $\hat T$ to terminals in $T$, $f$ maps every vertex in $\hat V_0$ to itself in $V_0$, and for each $1\le i\le d'$, $f$ maps vertices in $\hat V_i$ to vertices in $V_i$.
Let $\fset$ be the set of all valid mappings.

We first define the distribution $\DN$.
For each mapping $f\in \fset$, we define an instance $I_f=(\hat V, \hat T, \hat w_f)$, where the metric $\hat w_f$ is defined as follows: for every pair $v,v'\in \hat V$, $\hat w_f(v,v')=w(f(v),f(v'))$.
The distribution $\DN$ is simply defined to be the uniform distribution over all instances in $\isetn=\set{I_f\mid f\in \fset}$.

We now define the distribution $\DY$. Consider a mapping $g: [d]\to [d']$. 
We first define another auxiliary metric $w_{g}$ on $V$ as follows.
\begin{itemize}
\item For each $1\le j\le d$, we consider the matching $M_{g(j)}$ between $S$ and $V_{g(j)}$. If we denote by $v^*_{g(j)}$ the Steiner vertex in $V_{g(j)}$ matched with terminal $s_{j,1}$, then for each $u\in S_j$, $w_{g}(u,v^*_{g(j)})=1$, and the weight in $w_{g}$ between $u$ and its matched Steiner vertex in $M_{g(j)}$ is $2$.
\item For every other pair $v,v'\in V$, $w_{g}(v,v')=w(v,v')$.
\end{itemize}
\begin{figure}[h]
\centering
\includegraphics[scale=0.14]{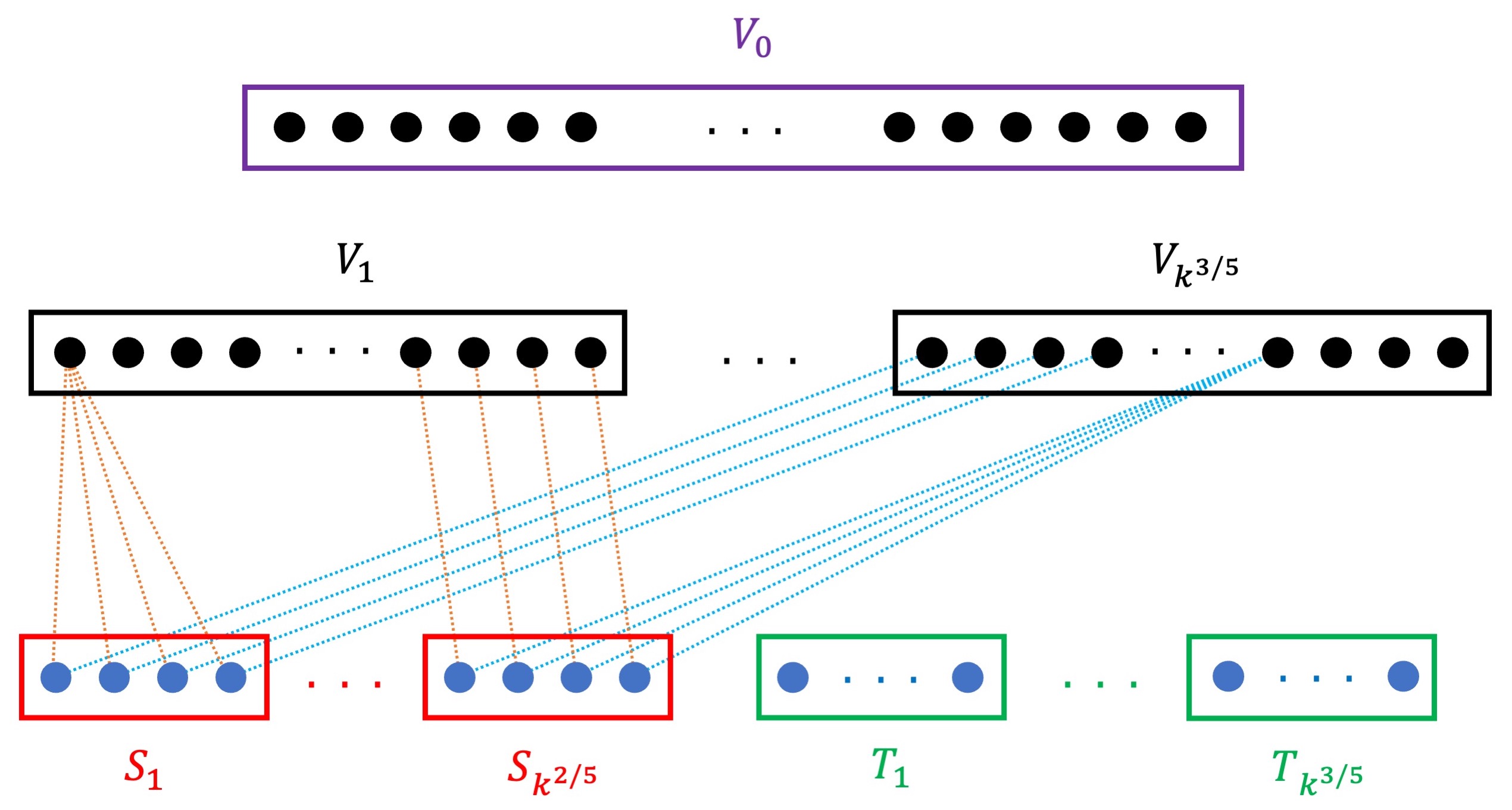}
\caption{An illustration of the metric $w_g$ (where $g(1)=1$ and $g(k^{2/5})=k^{3/5}$). The only difference between metrics $w$ and $w_g$ are the weight-$1$ pairs, which are shown in dashed lines.
\label{fig: metric_wg}}
\end{figure}
See \Cref{fig: metric_wg} for an illustration. It is easy to verify that $w_g$ is a metric. The only difference between metrics $w$ and $w_g$ are the weight-$1$ pairs. In particular, for each $S_i$, there is a star graph consisting of weight-$1$ edges in $w_g$ that spans all terminals in $S_i$, while there is no such graph in $w$. These star graphs are the reason that the costs $\stcost(V,T,w)$ and $\stcost(V,T,w_g)$ are roughly separated by factor $2$.

For every mapping $f\in \fset$, we define a metric Steiner Tree instance $I_{(f,g)}$ as $I_{(f,g)}=(\hat V, \hat T, \hat w_{(f,g)})$ where the metric $\hat w_{(f,g)}$ is defined as: for every pair $v,v'\in \hat V$, $\hat w_{(f,g)}(v,v')=w_g(f(v),f(v'))$.
The distribution $\DY$ is simply defined to be the uniform distribution on all instances in $\isety=\set{I_{(f,g)}\mid f\in \fset, g\in \gset}$, where $\gset$ is the collection of all mappings from $[d]$ to $[d']$.

On the one hand, it is easy to verify that every instance with non-zero probability in $\DN$ is isomorphic to $(V,T,w)$, and so it has the same Steiner Tree cost as $(V,T,w)$. 
Similarly, it is easy to verify that every instance with non-zero probability in $\DY$ is isomorphic to $(V,T,w_g)$ for an arbitrary $g\in \gset$ (and in fact all instances $\set{(V,T,w_g)}_{g\in \gset}$ are isomorphic to each other), and so it has the same Steiner Tree cost as $(V,T,w_g)$. We next show that $\stcost(V,T,w)$ is roughly two times $\stcost(V,T,w_g)$ for every $g$. Recall that $\isetn=\set{I_f\mid f\in \fset}$ and $\isety=\set{I_{(f,g)}\mid f\in \fset, g\in \gset}$.

\begin{claim} \label{prop:st-cost}
    Any instance in $\isetn$ has Steiner Tree cost at least $2k^{2/5}/\eps$. Any instance in $\isety$ has Steiner Tree cost at most $k^{2/5}(1/\eps+4)$. Hence the cost of any instance in $\isetn$ is more than $(2-8\eps)$ times the cost of any instance in $\isety$.
\end{claim}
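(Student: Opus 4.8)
The statement packages together a lower bound on $\stcost$ over $\isetn$, an upper bound on $\stcost$ over $\isety$, and an arithmetic comparison of the two; I would establish them in that order. First observe that a valid relabeling $f\in\fset$ is a metric isomorphism, so every instance in $\isetn$ has the same Steiner Tree cost as the auxiliary instance $(V,T,w)$, and every instance in $\isety$ has the same cost as $(V,T,w_g)$ for an arbitrary fixed $g\in\gset$ (all the $w_g$ being mutually isomorphic). Thus it suffices to bound $\stcost(V,T,w)$ from below and $\stcost(V,T,w_g)$ from above.

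\emph{Lower bound for $\isetn$.} The plan is an edge-counting argument on an arbitrary Steiner tree $\tau$ of $(V,T,w)$. If $\tau$ uses $s$ Steiner vertices then it has exactly $|T|+s-1$ edges, and I would sort these edges by their weight in $\{0,1,2\}$. A weight-$0$ edge joins two regular terminals of a common group $T_i$, so the weight-$0$ edges form a forest inside $\bigcup_i T_i$ and number at most $\sum_i(|T_i|-1)$. A weight-$1$ edge joins a special terminal to the Steiner vertex it is matched with under some $M_i$; the key structural fact is that each Steiner vertex lies in a single $V_i$ and is matched there to a single special terminal, hence is incident to at most one weight-$1$ edge of $\tau$, so $\tau$ has at most $s$ weight-$1$ edges. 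Subtracting, the number of weight-$2$ edges is at least $(|T|+s-1)-\sum_i(|T_i|-1)-s$, and since $|T|=k+k^{2/5}/\eps$ and $\sum_i|T_i|$ equals the number $k$ of regular terminals, this is at least $k^{2/5}/\eps$; therefore $w(\tau)\ge 2k^{2/5}/\eps$.

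\emph{Upper bound for $\isety$.} Here I would just display a cheap Steiner tree of $(V,T,w_g)$. For each special group $S_j$, the definition of $w_g$ makes the Steiner vertex $v^*_{g(j)}$ adjacent to all $1/\eps$ terminals of $S_j$ through weight-$1$ edges; include these $d$ stars, of total weight $d\cdot(1/\eps)=k^{2/5}/\eps$. Inside each regular group include a spanning tree of weight-$0$ edges, at no cost. Finally stitch the resulting connected pieces into one tree using a spanning tree on one representative per piece, exploiting that every relevant pair of representatives is at $w_g$-distance at most $2$; a short calculation with the construction's parameters then bounds this backbone (and hence the whole tree, since the star and zero-cost parts are already accounted for) by $k^{2/5}(1/\eps+4)$.

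\emph{Conclusion and the hard part.} Combining the two estimates, the cost ratio between an $\isetn$ instance and an $\isety$ instance is at least $\tfrac{2k^{2/5}/\eps}{k^{2/5}(1/\eps+4)}=\tfrac{2}{1+4\eps}$, and since $(2-8\eps)(1+4\eps)=2-32\eps^2<2$ this is strictly larger than $2-8\eps$, giving the claimed separation. I expect the upper bound and the final arithmetic to be routine; the delicate step is the lower bound, where one must be sure that the many cheap edges in $w$ — the $k^{3/5}$ matchings and the within-group weight-$0$ edges — cannot be combined to avoid paying for $\Omega(k^{2/5}/\eps)$ weight-$2$ edges. The reason they cannot is precisely the "one Steiner vertex serves at most one special terminal" observation above, which caps the number of weight-$1$ edges by $s$ and thereby neutralizes the matchings in the no-instance.
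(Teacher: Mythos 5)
Your overall plan matches the paper's: observe that all instances in $\isetn$ (resp.\ $\isety$) are isomorphic to $(V,T,w)$ (resp.\ $(V,T,w_g)$), bound $\stcost(V,T,w)$ from below, bound $\stcost(V,T,w_g)$ from above, and finish with arithmetic. Your lower-bound argument is a genuinely different (and cleaner) route: the paper simply asserts that because each Steiner node has at most one weight-$1$ incidence, the Steiner Tree cost equals the terminal MST cost; you instead run an explicit edge count on an arbitrary Steiner tree, using the same ``one weight-$1$ edge per Steiner node'' observation to cap the number of weight-$1$ edges by $s$ and the number of weight-$0$ edges by $\sum_i(|T_i|-1)$, forcing at least $k^{2/5}/\eps$ weight-$2$ edges. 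That calculation is correct and self-contained, and it avoids having to argue that Steiner nodes can be eliminated.

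The problem is the upper bound, which you defer to ``a short calculation.'' That calculation needs to be carried out, and when you do, it does not land where you say. After building the $d=k^{2/5}$ stars (cost $k^{2/5}/\eps$) and the zero-cost spanning trees inside each regular group, the backbone must connect $d+d'=k^{2/5}+k^{3/5}$ pieces (the $d'$ comes from the $d'=k^{3/5}$ regular groups $T_1,\dots,T_{d'}$, which are mutually at $w_g$-distance $2$ and cannot be merged more cheaply via Steiner nodes, since Steiner--regular distances are all $2$). That costs $2(k^{2/5}+k^{3/5}-1)$, so the total is $k^{2/5}/\eps + 2(k^{2/5}+k^{3/5}-1)$; the $2k^{3/5}$ term exceeds the advertised slack $4k^{2/5}$ once $k$ is moderately large, so the bound $k^{2/5}(1/\eps+4)$ does not follow from this tree, and neither does the claimed $(2-8\eps)$ separation. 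The paper's own displayed proof writes $2(k^{2/5}+k^{2/5}-1)$ at this step --- i.e.\ treats the number of regular groups as $k^{2/5}$ rather than the $d'=k^{3/5}$ fixed in the construction --- and your write-up silently reproduces the same final numbers without surfacing or resolving that mismatch. This is exactly the step where the hand-wave hides the difficulty: you need either a different way to connect the $d'$ regular components (there isn't one in this metric) or a reconciliation of $d'$ and $|T_i|$ with the stated bound, and the proof is not complete until you say which.
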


\begin{proof}
    For any instance in $\isetn$, any Steiner node has weight $1$ to at most $1$ terminals, so the Steiner Tree cost equals the spanning tree of terminals, which is $2(k^{2/5} + k^{2/5}/\eps -1) > 2k^{2/5}/\eps$. For any instance in $\isety$, any group of \SP terminnal $S_i$ have weight $1$ to a common Steiner node, thus we can connect them at a cost of $1/\eps$. Thus the Steiner Tree cost is at most $(1/\eps) k^{2/5} + 2(k^{2/5}+k^{2/5}-1) < (1+4\eps)(k^{2/5}/\eps)$. Moreover, it is easy to verify that the ratio of these costs is more than $(2-8\eps)$ for any $\eps > 0$.
\end{proof}

We then define distribution $\dset=(\DY+\DN)/2$, and consider the following problem: Given an instance sampled from $\dset$, estimate its value (Steiner Tree cost) to within a factor of $(2-8\eps)$.
From \Cref{prop:st-cost}, the problem is equivalent to the problem of determining a random instance (sampled from $\dset$) comes from $\DY$ or $\DN$.
If a randomized algorithm reports correctly with probability at least $2/3$, then we say that the algorithm \emph{distinguishes between $\DY$ and $\DN$}.
Therefore, in order to prove \Cref{thm: beat-2-lower-main}, it suffices to prove that any algorithm that distinguishes between $\DY$ and $\DN$ performs $\tilde{\Omega}(k^{6/5})$ queries in the worst case.

The remainder of this section is dedicated to the proof that any algorithm that distinguishes between $\DY$ and $\DN$ performs $\tilde{\Omega}(k^{6/5})$ queries in the worst case. Before we give the detailed proof, we provide some intuition.
From the construction of $\DY$, $\DN$, distinguishing between $\DY$ and $\DN$ is essentially distinguishing between the metric $w$ and the metric $w_g$ (for any $g$), where the identities of vertices are randomized.
The main difference between metrics $w$ and $w_g$ is that, in $w_g$, there exist Steiner vertices that are connected to more than one (actually $1/\eps$) special terminals with weight-$1$ edges. We call such Steiner vertices \emph{secret vertices}.
We now argue from a high level that finding a secret vertex requires $\Omega(k^{6/5})$ queries.
We call edges of weight $0$ or $1$ \emph{crucial} edges, since if a terminal $u$ is found incident to a crucial edge, then we can immediately tell if $u$ is \SP or \NM.

On the one hand, note that in the metric $w_g$, there are $k^{2/5}$ secret vertices. So if we sample a random Steiner vertex $v$ from $\bigcup_{1\le i\le d'}V_i$ (as the vertices in $V_0$ are irrelevant in distinguishing between metrics $w$ and $w_g$), then the probability that $v$ is a secret vertex is $O(k^{-3/5})$ as $|\bigcup_{1\le i\le d'}V_i|=\Omega(k)$, and so it takes $\Omega(k^{3/5})$ random samples to get a secret vertex. However, in order to certify that $v$ is indeed a secret vertex, we need to find at least $2$ crucial edges incident to it, which takes $\Omega(k)$ queries as every Steiner vertex is only incident to $O(1)$ crucial edges both in $w$ and $w_g$. Altogether, it takes $\Omega(k^{3/5})\cdot \Omega(k)=\Omega(k^{8/5})$ queries to discover a secret vertex in this way.

On the other hand, note that there are $k^{2/5}$ special terminals (in both $w$ and $w_g$). So if we sample a random terminal $u$ from $T$, then the probability that $u$ is a special terminal is $O(k^{-3/5})$ as $|T|=\Omega(k)$, and so it takes $\Omega(k^{3/5})$ random samples to get a special terminal.
In order to certify that $u$ is indeed a special terminal, we need to find a crucial edge incident to it. Since each special terminal is only incident to $k^{3/5}$ crucial edges (in both $w$ and $w_g$), this takes $\Omega(n/k^{3/5})=\Omega(k^{2/5})$ queries.
Altogether, it takes $\Omega(k^{3/5})\cdot \Omega(k^{2/5})=\Omega(k)$ queries to discover a special terminal.
If the algorithm performs $o(k^{6/5})$ queries, it is only able to discover $o(k^{1/5})$ special vertices. As the identities of the terminals are randomized, from the Birthday Paradox, with high probability all discovered special vertices come from different groups in $S_1,\ldots,S_d$, so even if the algorithm has queried all edges incident to these discovered terminals, with high probability it will not find any vertex that is incident to more than one discovered special terminal, and so with high probability it will not find any secret vertex.

$\ $

In the remainder of the section, we formalize the ideas described above, in a way similar to \Cref{sec: proof of beat-2-lower-computing} and \Cref{sec: >2-main}. 
We begin with some definitions.
For convenience, we will think of the algorithm as performing queries on $w$, but it does not know the identities of the vertices (that is, it does know the set $T$ and the partition $(V_0,V_1,\ldots,V_{d'})$ of $V\setminus T$, but it does not know which special terminal in set $S_{j}$ is $s_{j,i}$, for any $j,i$).
Over the course of the algorithm, we say a terminal $u$ is \emph{settled} at some step, iff the set of queries performed thus far uniquely identify $u$ to be a \SP terminal or a \NM terminal, i.e., we have discovered a crucial edge incident to it; otherwise we say it is \emph{unsettled}. 
For convenience of the analysis, whenever a terminal $u$ is settled, if $u$ is a regular terminal, then we immediately reveal to the algorithm which $T_i$ group it belongs to; if it is a \SP terminal, we will immediately reveal to the algorithm all crucial (weight-$1$) edges incident to it. 

For each \SP terminal $u\in S$, we denote by $V(u)$ be set of Steiner nodes that are connected to $u$ by some matching in $\set{M_1,\ldots,M_{d'}}$. Clearly, in $w$, vertex $u$ is connected to all vertices in $V(u)$ via crucial (weight-$1$) edges, while in $w_g$, this is not always the case.
We say that an edge is \emph{discovered} if we can uniquely identify the weight of the edge. For any $0 < \alpha < 1$, we say an unsettled terminal $u$ is \emph{$\alpha$-well-discovered} if at least one of the following four conditions is satisfied: 
\begin{properties}{P}
\item There are at least $\alpha k^{2/5}$ groups $T_i$ such that we have discovered at least one edge in $E(u,T_i)$. \label{query_prop_1}
\item There are at least $\alpha k^{2/5}$ \SP terminals $u'$ such that we discovered an edge in $E(u,V(u'))$. \label{query_prop_2}
\item $u$ is a \NM terminal in the $T_i$, and there are at least $\alpha k$ terminals $u'$ such that some edge in $E(u',T_i)$ has been discovered. \label{query_prop_3}
\item $u$ is a \SP terminal, and there are at least $\alpha k$ terminals $u'$ such that some edge in $E(u',V(u))$ has been discovered. \label{query_prop_4}
\end{properties}

The following lemma is the main technical tool for the proof of the $\Omega(k^{6/5})$ lower bound. Intuitively, it shows that, if the algorithm perform $o(k^{6/5})$ queries, then not only it settles or $(\eps/200)$-well-discovers very few terminals, but also it has very limited knowledge upon the edges that it has not queried.

\begin{lemma} \label{lem:settle-sp}
Let $\alg$ be any deterministic algorithm that performs at most $\eps^2 k^{6/5}/10^9$ queries. Then:
\begin{itemize}
\item If the input to $\alg$ is a random instance from $\DN$, then with probability at least $9/10$, the number of terminals that are either settled or $(\eps/200)$-well-discovered is at most $\eps k^{4/5}/10^4$, and at most $\eps k^{1/5}/10^4$ of them are \SP terminals; and conditioned on the query sequence and its answers over the course of the algorithm, for every terminal that is neither settled nor $(\eps/200)$-well-discovered, the probability that it is a special terminal is at most $2/k^{3/5}$.
\item If the input to $\alg$ is a random instance from $\DY$, then with probability at least $9/10$,
\begin{itemize}
\item the number of terminals that are either settled or $(\eps/200)$-well-discovered is at most $\eps k^{4/5}/10^4$;
\item at most $\eps k^{1/5}/10^4$ of them are \SP terminals; and
\item all these special terminals belong to different $S_i$ groups.
\end{itemize}
Moreover, conditioned on the query sequence and its answers over the course of the algorithm, for every unqueried edge between a Steiner vertex and a not-$(\eps/100)$-well-discovered terminal, the probability that the edge is a crucial (weight-$1$) edge and the Steiner vertex is a secret vertex is at most $2/k^{8/5}$.
\end{itemize} 
\end{lemma}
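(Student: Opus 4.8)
The plan is to follow the ``phase decomposition plus symmetry'' strategy of \Cref{clm: crucial edge} and \Cref{lem: settled vertices} (ultimately from \cite{assadi2019sublinear}). Fix the deterministic algorithm $\alg$ making at most $Q:=\eps^2k^{6/5}/10^9$ queries, and regard as random the hidden valid mapping $f$ (for $\DN$) or the pair $(f,g)$ (for $\DY$). The engine of the proof is a single \emph{symmetry lemma}: for every reachable transcript (query sequence together with its answers) of length at most $Q$, the conditional distribution of any secret not yet exposed by the transcript stays within a constant factor (we will get $2$) of its prior. Concretely: (a) for a terminal $u$ that is neither settled nor $(\eps/200)$-well-discovered, the posterior probability that $u$ is \SP is at most twice $|S|/|T|$, which is the asserted bound; and (b) in the $\DY$ case, for an unqueried Steiner--terminal pair $(v,u)$ with $u$ not $(\eps/100)$-well-discovered, the posterior probability that $w(v,u)=1$ and $v$ is a secret vertex is at most twice the corresponding prior, which --- since there are $k^{2/5}$ secret vertices among the $k/\eps$ Steiner vertices of $\bigcup_{i\ge1}V_i$ and each is weight-$1$ to only its own $1/\eps$ terminals --- is at most $1/k^{8/5}$. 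Every remaining assertion of the lemma is a consequence of this symmetry lemma.

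Granting the symmetry lemma, the remaining statements follow by combining it with a query-counting argument and Markov's inequality. The \emph{number} of terminals that become settled or $(\eps/200)$-well-discovered is bounded almost deterministically: a terminal well-discovered via \ref{query_prop_1} or \ref{query_prop_2} has at least $(\eps/200)k^{2/5}$ distinct incident edges discovered, and distinct such terminals charge disjoint edges, giving $O(Q/(\eps k^{2/5}))$ of them; a terminal well-discovered via \ref{query_prop_3} (resp.\ \ref{query_prop_4}) charges an entire group $T_i$ (resp.\ a family $V(u)$) carrying $\Omega(\eps k)$ discovered edges, and since these charged objects are disjoint there are at most $O(Q/(\eps k))$ of them, each responsible for $k^{2/5}$ (resp.\ one) terminals; and a query can newly settle a terminal only by discovering an incident weight-$0$ or weight-$1$ edge, so at most $Q$ settlings occur. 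All told the total stays below $\eps k^{4/5}/10^4$ (the numerical constants are chosen generously). To bound how many of these are \SP, observe that this set of terminals is a function of the transcript, so by the symmetry lemma each \emph{unsettled} member is \SP with posterior probability $O(|S|/|T|)$, and each \emph{settled} member was settled by a query that had to get both endpoints right --- the queried terminal \SP (posterior $O(|S|/|T|)$) and the other endpoint its weight-$1$ partner within one specific group $V_i$ of size $k^{2/5}/\eps$ (conditional posterior $O(\eps k^{-2/5})$), so the expected contribution of settlings is $O(Q/k)$; summing the posterior probabilities and taking expectations bounds the expected number of \SP terminals in the set, and Markov yields the ``$\le \eps k^{1/5}/10^4$ with probability $\ge 9/10$'' statement. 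The regular-terminal count is handled identically (weight-$0$ partners lie among the $k^{2/5}$ members of a single $T_i$). Finally, in the $\DY$ case, once at most $\eps k^{1/5}/10^4$ \SP terminals are settled and each lies, by the symmetry lemma, in a near-uniform one of the $k^{2/5}$ groups $S_j$, a birthday bound puts them in pairwise distinct groups except with probability $O(\eps^2/10^8)$; and since a Steiner vertex can be \emph{certified} secret only after two \SP terminals of its group have been settled, on this event no Steiner vertex is known to be secret --- precisely what makes part (b) of the symmetry lemma yield the last assertion verbatim.

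It remains to prove the symmetry lemma, which is the main obstacle; I would do it by an explicit swap. Consider $\DN$ first. Fix a transcript of length $\le Q$ and an unsettled, not-$(\eps/200)$-well-discovered terminal $u$, and bound the number of consistent valid maps $f$ with $f(u)\in S$ by $O(|S|/|T|)$ times the number with $f(u)\notin S$. Given a consistent $f$ with $f(u)=s_{j,\ell}\in S_j$, construct many consistent maps $f'$ with $u$ regular: pick a regular terminal $u''$ that is \emph{free for $u$} --- no discovered edge incident to $u$ conflicts with placing $u$ into $u''$'s current group $T_i$, and no discovered edge incident to $u''$ conflicts with placing $u''$ into the role $s_{j,\ell}$ --- and let $f'$ agree with $f$ except that it exchanges the images of $u$ and $u''$ and, to keep the matchings $M_1,\dots,M_{d'}$ (which pair $S$ with the $V_i$'s) intact, also exchanges for every $i$ the $V_i$-images of the $M_i$-partners of $u$ and $u''$; this alters only $O(d')$ Steiner images, and a short case analysis over the weights $\{0,1,2\}$ shows $f'$ is valid and consistent with every answered query. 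The quantitative crux is that ``not $(\eps/200)$-well-discovered'' makes the forbidden set tiny: negating \ref{query_prop_1} and \ref{query_prop_3} blocks fewer than $(\eps/200)k^{2/5}$ groups $T_i$ for $u$ and fewer than $(\eps/200)k$ terminals globally, and negating \ref{query_prop_2} and \ref{query_prop_4} does the same for $u''$, so a constant fraction of the $\approx k$ regular terminals are free for $u$, yielding $\Omega(k)$ distinct consistent $f'$; each such $f'$ is obtained from at most $|S_j|=1/\eps$ source maps, and accounting for $|T|=k+k^{2/5}/\eps$ gives the factor-$2$ bound. The $\DY$ case runs on the same template, swapping the coordinate $g$ in tandem and additionally certifying that the swap neither creates nor destroys a secret-vertex pattern recorded in the transcript; the ``no certified secret vertex'' conclusion from above guarantees there is no such pattern, so $\DY$ reduces to the $\DN$ computation plus a check on the single secret edge attached to each \SP terminal. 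The delicate parts throughout are the weight-case analysis certifying consistency of $f'$ and the disjointness bookkeeping making the forbidden sets provably $o(k)$ --- which is exactly where the thresholds $\eps/200,\eps/100$ and the four clauses \ref{query_prop_1}--\ref{query_prop_4} are used.
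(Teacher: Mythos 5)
Your overall strategy matches the paper's: a swap-style "symmetry lemma" bounding posterior probabilities of crucial edges, combined with query-counting and Markov to control the number of settled and well-discovered terminals. The paper's Claims~\ref{prop:DN-random} and~\ref{prop:DY-random} are exactly your single-vertex symmetry bounds, and your charging scheme for terminals that become well-discovered through P1--P4 parallels the paper's case analysis. However, there are a few genuine gaps.

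The most serious one concerns the ``all discovered special terminals belong to different $S_j$ groups'' claim for $\DY$. You derive it by asserting that each settled special terminal ``lies, by the symmetry lemma, in a near-uniform one of the $k^{2/5}$ groups'' and then invoking a birthday bound. But your symmetry lemma, as stated, is a \emph{single-vertex} posterior bound (parts (a) and (b)); it does not control the \emph{joint} distribution of the group memberships of two different discovered terminals, which is precisely what a birthday argument needs. The paper closes this gap with an explicit \emph{pairwise} swap claim (Claim~\ref{prop:DY1}): conditioned on the transcript, for a fixed already-settled special terminal $u^*$ and a not-well-discovered terminal $u$, the probability that $u$ is special and lies in $u^*$'s group and that the queried Steiner edge is crucial is at most $2/k^{7/5}$. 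This is what lets one sum over queries and apply Markov without assuming any independence across the settled terminals. You need to state and prove an analogue of this pairwise bound; the single-vertex version does not suffice.

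Two smaller issues. First, the extra Steiner-vertex exchange in your swap construction --- ``also exchanges for every $i$ the $V_i$-images of the $M_i$-partners of $u$ and $u''$'' --- is ill-defined and unnecessary: $u''$ is a regular terminal and so has no $M_i$-partners (the matchings $M_i$ pair only special terminals of $S$ with $V_i$). The paper's swap only exchanges the terminal images $f(u)\leftrightarrow f(u'')$, keeping the Steiner-vertex assignments fixed, and verifies that the forbidden sets (from negated P1, P3, P4 for $u$) are tiny; nothing more is needed. Second, you write that the posterior probability $u$ is special is ``at most twice $|S|/|T|$, which is the asserted bound''; but $2|S|/|T|\approx 2/(\eps k^{3/5})$, whereas the lemma states $2/k^{3/5}$. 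These differ by a factor $1/\eps$. The prior itself is $\approx 1/(\eps k^{3/5})$, so the lemma's $2/k^{3/5}$ cannot be correct as written (a posterior cannot be uniformly below the prior); the paper's arithmetic appears to have dropped the $1/\eps$, and your bound is actually the right one --- but you should not claim it ``is the asserted bound'' without reconciling the discrepancy (the downstream Markov application is robust to either constant, so nothing downstream breaks).
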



We provide the proof of \Cref{lem:settle-sp} in \Cref{sec: proof of structural_1} and \Cref{sec: proof of structural_2}, after we complete the proof of the $\Omega(k^{6/5})$ lower bound using it.

\subsubsection{Completing the Proof of the $\Omega(k^{6/5})$ Lower Bound}

Recall that $\dset=(\DY+\DN)/2$.
In this subsection we use \Cref{lem:settle-sp} to prove the following lemma. 

\begin{lemma}
\label{lem: k^6/5}
Any randomized algorithm that, given a random instance sampled from $\dset$, reports correctly with probability at least $2/3$ that the instance comes from $\DY$ or $\DN$, performs at least $\eps^2 k^{6/5}/10^9$ queries in the worst case.
\end{lemma}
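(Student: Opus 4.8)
The plan is to apply Yao's minimax principle together with \Cref{lem:settle-sp}: it suffices to show that any deterministic algorithm $\alg$ making at most $\eps^2 k^{6/5}/10^9$ queries cannot distinguish $\DY$ from $\DN$ with probability better than, say, $3/5$, hence certainly not $2/3$. Fix such a deterministic $\alg$. The key observation is that the only way $\alg$ can gain any distinguishing advantage is by ``witnessing'' the difference between the two underlying metrics $w$ and $w_g$, namely by finding a Steiner vertex incident to two or more discovered crucial (weight-$1$) edges leading to terminals in the same group $S_i$ --- i.e.\ a secret vertex under $\DY$, which simply cannot exist under $\DN$. Short of such a witness, I want to argue that the joint distribution of (query sequence, answers) produced by running $\alg$ is statistically close whether the input is drawn from $\DY$ or from $\DN$.

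The core argument is a coupling / hybrid argument run query by query. I would maintain, at each step, the conditional distribution over the hidden data (the valid mapping $f$, and under $\DY$ also the mapping $g:[d]\to[d']$) consistent with the answers so far, and show inductively that as long as no ``bad event'' has occurred, the next answer has nearly the same conditional distribution under $\DY$ as under $\DN$. Concretely, I would invoke \Cref{lem:settle-sp}: with probability $\ge 9/10$ under either distribution, at most $\eps k^{4/5}/10^4$ terminals are settled-or-$(\eps/200)$-well-discovered, at most $\eps k^{1/5}/10^4$ of them are special, and --- crucially, under $\DY$ --- all those discovered special terminals lie in distinct $S_i$ groups. Moreover, for any unqueried edge between a Steiner vertex and a terminal that is not $(\eps/100)$-well-discovered, under $\DN$ the probability that it is a crucial weight-$1$ edge is $O(1/k)$ while under $\DY$ the probability that it is a crucial edge \emph{and} the Steiner endpoint is secret is only $O(1/k^{8/5})$. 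So the chance that any of the $\le \eps^2 k^{6/5}/10^9$ queries ever reveals a second crucial edge into an already-discovered special terminal's group (the only event that can separate a secret vertex from a non-secret one) is at most $O(\eps^2 k^{6/5}/10^9)\cdot O(1/k^{8/5}) = o(1)$ --- here one has to be slightly careful and union-bound over the (few) already-discovered special terminals rather than over all pairs, but the counts from \Cref{lem:settle-sp} keep this negligible. Outside this bad event, the answer to each query is, in both worlds, ``weight $2$'' with overwhelming probability, and when it is not, the structure (which group / which matching) is distributed identically because $f$ (and the part of $g$ not yet pinned down) is still essentially uniform. Summing the per-step total-variation gaps over all queries gives $\tvd{\alg(\DY)}{\alg(\DN)} = o(1)$, so $\alg$ errs with probability close to $1/2$, contradicting success probability $2/3$.

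I would structure the write-up as: (i) reduce to deterministic algorithms via Yao; (ii) define the ``failure event'' $\event$ as the disjunction of the low-probability events in \Cref{lem:settle-sp} together with the event ``some query discovers a crucial edge certifying a secret vertex'', and bound $\Pr[\event]\le 1/10 + o(1) < 1/5$ under each of $\DY,\DN$ using the lemma and the query budget; (iii) show that conditioned on $\neg\event$, the transcript distributions under $\DY$ and $\DN$ coincide (or are within $o(1)$ in total variation) --- this is where one uses that, absent a secret-vertex witness, $w_g$ and $w$ look the same on everything the algorithm has touched, since the matchings $M_i$ themselves are identical in the two metrics and only the ``re-routing'' at secret vertices differs; (iv) conclude that any algorithm distinguishing with probability $\ge 2/3$ must make more than $\eps^2 k^{6/5}/10^9$ queries. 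The main obstacle, and the step deserving the most care, is (iii): making precise that ``no secret-vertex witness'' really does force transcript-equality, because one must track that the \emph{conditional} distribution of $g$ given the transcript remains spread out enough that no individual answer leaks which $S_i$-groups got collapsed --- this is exactly what the distinct-groups and well-discovered-count bounds of \Cref{lem:settle-sp} are engineered to guarantee, so the proof is really an exercise in carefully threading those guarantees through a standard transcript-indistinguishability argument rather than in proving anything new.
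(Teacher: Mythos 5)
Your route is genuinely different from the paper's, and it has a gap at exactly the step you flag as hard. The paper runs no per-query coupling and computes no total-variation distances; it fixes the entire transcript $\sigma$ at once and counts consistent instances. Using \Cref{lem:settle-sp} twice it shows (a) for almost every consistent $I_f\in\isetn$, almost every $g\in\gset$ yields a consistent $I'_{(f,g)}\in\isety$, and (b) for almost every consistent $I'_{(f,g)}\in\isety$, the base instance $I_f$ is also consistent; these two directions sandwich the ratio of consistent $\isety$-instances to consistent $\isetn$-instances into $[0.85\,|\gset|,\ 1.12\,|\gset|]$, which keeps the posterior likelihood ratio near $1{:}1$ and caps the success probability at $\max\{1/1.85,\ 1.12/2.12\}<2/3$. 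Your coupling/bad-event plan could reach the same conclusion, but two things in your sketch do not close as stated. First, the claim in your step (iii) that ``conditioned on no secret-vertex witness the transcript distributions coincide'' is false: even absent a witness, the conditional law of the next answer differs slightly between $\DY$ and $\DN$ — this is exactly why the paper's sandwich is $[0.85,1.12]\cdot|\gset|$ rather than exactly $|\gset|$ — so you must quantify that residual leakage, and doing so rigorously forces a two-sided comparison of how many hidden $(f,g)$-pairs versus hidden $f$'s are consistent with a transcript prefix, i.e.\ it brings you back to the paper's counting argument. Second, \Cref{lem:settle-sp} only promises its conclusions with probability $9/10$, so your bad event has probability a fixed constant (you write $<1/5$), not $o(1)$; consequently you cannot conclude $o(1)$ transcript-TV as your sketch asserts, and you must carry out explicit constant bookkeeping (as the paper does with the $0.9,\,0.98,\,1-10^{-2},\,1-10^{-6}$ factors) to verify the final bound actually drops the success probability below $2/3$.
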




From Yao's minimax principle \cite{yao1977probabilistic}, it suffices to consider only deterministic algorithms that report correctly on at least $2/3$-fraction (in $\dset$) of the instances.

We define a \emph{transcript} to be the union of a query sequence and all its answers. We can define $\alpha$-well-discovered vertices and settled terminals with respect to a transcript similarly.

Recall that each instance in $\isetn$ is determined by a mapping $f\in \fset$ and each instance in $\isety$ is determined by a mapping $f\in \fset$ and a mapping $g\in \gset$, so $|\isety|=|\isetn|\cdot |\gset|$.
Let $\sigma$ be the transcript produced by the algorithm when given a random instance from $\dset$. Since the input is randomized, $\sigma$ is a random variable.
We say that $\sigma$ is \emph{consistent} with an instance $I$ iff all answers to the queries in $\sigma$ are matched with the corresponding weight-values in $I$.
In order to prove \Cref{lem: k^6/5}, it suffices to show that, if $\sigma$ always contains at most $\eps^2 k^{6/5}/10^9$ queries, then with high probability,
the ratio between the number of instances in $\isety$ that are consistent with $\sigma$ (that we call \emph{consistent instances in $\isety$}) and the number of instances in $\isetn$ that are consistent with $\sigma$ (that we call \emph{consistent instances in $\isetn$}) is still roughly $|\gset|$. 
We prove this by showing that (i) for each consistent instance $I_f\in \isetn$, almost all mappings $g\in \gset$ give a consistent instance $I_{(f,g)}$; and (ii) for almost all consistent instances $I_{(f,g)}\in \isety$, the instance $I_f\in \isetn$ is also consistent.

We let $T^*$ be the subset of terminals that are either settled or $(\eps/200)$-well-discovered by $\sigma$ (so $T^*$ is a random variable as well). We will focus on queries incident to vertices in $T^*$ and $S\setminus T^*$.

From \Cref{lem:settle-sp}, there are at most $1/10$-fraction of the consistent instances in $\isetn$ such that the desired properties (the number of settled or $(\eps/100)$-well-discovered terminals is low, etc) do not hold; and there are at most $1/10$-fraction of the consistent instances in $\isety$ such that the desired properties do not hold. We call these instances \emph{bad} instances, and call all other consistent instances \emph{good} instances. 

Consider now any mapping $g\in \gset$, we define the following instance $I'_{(f,g)} \in \isety$ defined by $g$ and the $T^*$ which is slightly different from $I_{(f,g)}$. For each $1 \le j \le d$, we consider the matching $M_{g(j)}$ between $S_j$ and $V_{g(j)}$. If $S_j$ contains more than 1 terminals in $T^*$, then $I_{(f,g)}$ is not well defined. If $S_j$ contains one such terminal and assume it is $s^*_j$, then we first exchange the matching node of $s^*_j$ and $s_{j,1}$ in $M_{g(j)}$. If $S_j$ contains no such terminal, then we do not change the matching. After modifying the matching, for each $u \in S_j$, we make $w_g(u,s^*_{g(j)}) =1$ and the weight in $w_g$ between $u$ and its matched Steiner vertex in $M_{g(j)}$ is $2$. Note that after the change, we guarantee that we do not change the crucial edges incident on any terminal in $T^*$.





\paragraph{From consistent instances in $\isetn$ to consistent instances in $\isety$.}
From \Cref{lem:settle-sp}, for every terminal $u \in T\setminus T^*$, the probability (conditioned on $\sigma$ and its answers) that $u$ is \SP terminal is at most $2/k^{3/5}$. Then from Markov's Bound, with probability at least $0.98$ (i.e., on at least $0.98$-fraction of the good instances in $\isetn$), the number of queries performed on edges in $E(S\setminus T^*,V\setminus T)$ is at most $\eps k^{3/5}/100$. We now show that, for such a good instance $I_f\in \isetn$, almost all mappings $g$ give consistent instance $I'_{(f,g)}\in \isety$.

Now for a random mapping $g \in \gset$. If we view the algorithm as performing queries on $w_g$ (without knowing the identities of the vertices), then from the perspective of the algorithm, the partitioning of the \SP terminals into groups $S_1, \dots, S_{k^{2/5}}$ is random. Since $T^*$ contains at most $\eps k^{1/5}/10^4$ \SP terminals, with probability at least $1/10^4$, $|S_j \cap T^*|\le 1$ for all $1\le j\le d'$ hold. Moreover, if $z$ queries are incident to terminals in $S_j \setminus T^*$, then with probability at least $z/k^{3/5}$, no query has been performed between $S_j \setminus T^*$ and $V_{g(j)}$. Thus with probability at least $1-\eps/100$, for each $1\le j\le d'$, no query has been performed on $E(S_j \setminus T^*,V_{g(j)})$. If this happens, then it is easy to verify that $I'_{(f,g)}$ is a well defined and consistent instance in $\isety$, as the difference between $w$ and $w_g$, in particular a subset of edges in $\bigcup_{1\le j\le d'}E(S_j \setminus T^*,V_{g(j)})$, has not been queried at all.


Altogether, there are at least $0.98$-fraction of the good instances in $\isetn$, such that for each such $I_f$, at least $(1-\eps/100-1/10^4)\le (1-10^{-2})$-fraction of mappings $g\in \gset$ can give consistent instances $I'_{(f,g)}$ in $\isety$.

\paragraph{From consistent instances in $\isety$ to consistent instances in $\isetn$.}
From \Cref{lem:settle-sp}, for every unqueried edge between a Steiner vertex and a not-$(\epsilon/100)$-well-discovered terminal, the probability that the edge is a crucial (weight-$1$) edge and the Steiner vertex is a secret vertex is at most $2/k^{8/5}$.
Therefore, for every $u \in S\setminus T^*$ and every $1\le i\le d'$, the probability that $u$ is a \SP terminal and belongs to a group $S_j$ with $g(j)=i$ is at most $(2/k^{8/5})\cdot (k^{2/5})=2/k^{6/5}$. 
For each $1\le i\le d'$, if we denote by $z_i$ the number of terminals $u\in S\setminus T^*$ such that some edge from $E(u,V_i)$ has been queried, then 
$$\Pr\Big[\exists j, \text{ s.t. }g(j)=i,\text{ and some edge in $E(S_j \setminus T^*,V_i)$ has been queried}\Big]\le \frac{2z_i}{\eps k^{6/5}}.$$
Since the algorithm performs at most $\eps^2 k^{6/5}/10^9$ queries, from Markov's Bound, with probability at least $1-10^{-6}$ (i.e., on at least $(1-10^{-6})$-fraction of the good instances in $\isety$), for all pairs $(i,j)$ with $g(j)=i$, no queries has been perform on $E(S_j \setminus T^*,V_i)$. 
For each consistent instance $I'_{(f,g)}\in \isety$ with the above property, it is easy to verify that the corresponding instance $I_{f}$ in $\isetn$ is a consistent instance in $\isetn$, as the difference between $w$ and $w_g$, in particular a subset of edges in $\bigcup_{1\le j\le d'}E(S_j \setminus T^*,V_{g(j)})$, has not been queried at all.

Altogether, there are at least $(1-10^{-6})$ fraction of the good instances in $\isety$, such that for each $I'_{(f,g)}$ of them, the corresponding instance $I_f$ is a consistent instance in $\isetn$.

Form the above discussion, 
the number of consistent instances in $\isety$ is at least $0.9\cdot 0.98 \cdot (1-10^{-2}) |\gset| > 0.85 |\gset|$ times the number of consistent instances in $\isetn$, and it is at most $|\gset|/(0.9\cdot(1-10^{-6}))\le 1.12|\gset|$ times the number of consistent instances in $\isetn$. Therefore, the algorithm reports correctly with probability at most $\max\set{1/(1+0.85),1.12/(1+1.12)}\le 2/3$.

\subsubsection{Proof of \Cref{lem:settle-sp} for $\DN$}
\label{sec: proof of structural_1}

In this subsection we prove of the first half of \Cref{lem:settle-sp}. 
We start with the following claim.

\begin{claim} \label{prop:DN-random}
Let $\sigma$ be any transcript, and let $\DN(\sigma)$ be the uniform distribution on all instances in $\isetn$ that are consistent with $\sigma$. Then
\begin{itemize}
\item for every unqueried edge between a pair of terminals such that at least one is unsettled and not $(\eps/100)$-well-discovered, the probability in $\DN(\sigma)$ that the edge is a crucial (weight-$0$) edge is at most $2/k^{2/5}$; and
\item for every unqueried edge between a Steiner vertex and a unsettled and not-$(\eps/100)$-well-discovered terminal, the probability in $\DN(\sigma)$ that the edge is a crucial (weight-$1$) edge is at most $2/k$.
\end{itemize}
\end{claim}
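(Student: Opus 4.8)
\textbf{Proof proposal for Claim~\ref{prop:DN-random}.}

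The plan is to analyze the distribution $\DN(\sigma)$ directly, using the fact that the underlying random object is a valid mapping $f\in\fset$ chosen uniformly among those consistent with $\sigma$. First I would set up a ``swap'' argument analogous to \Cref{lem: random edge}: given a candidate crucial edge $e$ that is unqueried and incident to an unsettled, not-$(\eps/100)$-well-discovered terminal $u$, I will show that for every consistent instance in which $e$ is crucial, there are many consistent instances (obtained by a local re-labelling of $f$) in which the corresponding edge at $u$ is \emph{not} crucial, and that these generated instances are essentially distinct across the original ones. The quantitative bounds $2/k^{2/5}$ and $2/k$ will come out of counting how many re-labellings are available, which is controlled precisely by the negations of the well-discovery conditions \eqref{query_prop_1}--\eqref{query_prop_4}.

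Concretely, for the first bullet, a weight-$0$ edge $(u,u')$ means $u,u'$ lie in the same regular group $T_i$. If a consistent instance $I_f$ has $u$ in group $T_i$, I would count the regular groups $T_{i'}$ to which $u$ could be ``reassigned'' consistently: since $u$ is not $(\eps/100)$-well-discovered, condition \eqref{query_prop_1} fails, so there are at least $(1-\eps/100)k^{2/5}$ groups with no discovered edge in $E(u,T_i)$, and similarly the symmetric condition for the group $T_i$ (the failure of \eqref{query_prop_3}) limits how many terminals of $T_i$ have been touched. Swapping $u$ with an untouched regular terminal in such a group $T_{i'}$ yields a consistent $I_{f'}$ in which $u\notin T_i$, hence $(u,u')$ is not weight-$0$ there. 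A standard double-counting (each produced instance arises from at most $O(1)$ originals, since one only needs to identify which terminal was swapped with $u$) then gives: the number of consistent instances with $e$ crucial is at most roughly $\bigl(|T_i| / (\text{number of available target groups} \times \text{group size})\bigr)$ times the total, i.e.\ $O(1/k^{2/5})$; being a bit careful with constants yields the stated $2/k^{2/5}$. For the second bullet, a weight-$1$ edge $(v,u)$ with $v$ a Steiner vertex in $V_i$ means $u$ is the special terminal matched to $v$ by $M_i$; here I would instead swap $u$ with another special terminal $u''$ not touched through $V(u)$ or $E(u'',V_i)$ — the relevant counts being governed by the failure of \eqref{query_prop_2} and \eqref{query_prop_4} — and the available pool has size $\Omega(k)$ out of the $k^{2/5}$ special terminals spread over $\Omega(k)$ Steiner slots, giving probability $O(1/k)$, refined to $2/k$.

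From Claim~\ref{prop:DN-random} the rest of the first half of \Cref{lem:settle-sp} should follow by a potential/phase argument exactly as in \Cref{sec: proof of beat-2-lower-computing}: each query reduces some terminal's ``uncertainty'' by one, a terminal only becomes settled or well-discovered after $\Omega(k^{2/5})$ (resp.\ $\Omega(k)$) relevant queries are spent on it by the claim's probability bounds and a union/Markov argument, and since the algorithm makes only $\eps^2k^{6/5}/10^9$ queries the totals $\eps k^{4/5}/10^4$ and $\eps k^{1/5}/10^4$ drop out; the last assertion (probability $\le 2/k^{3/5}$ that an untouched terminal is special) is immediate from the claim together with the observation that being special but not well-discovered forces few discovered incidences.

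The main obstacle I anticipate is bookkeeping the swap argument so that the generated instances are genuinely consistent with \emph{all} of $\sigma$ (not just the single edge $e$) and are counted without too much multiplicity: one must check that relabelling $u$ does not create an inconsistency at some \emph{other} queried edge incident to $u$ or to its swap partner, which is why the swap partner must be chosen ``untouched'' in the appropriate sense, and why all four well-discovery conditions are needed simultaneously. Getting the constants to land at exactly $2/k^{2/5}$ and $2/k$ (rather than a worse constant) will require being slightly generous in the definition of ``available'' partners and using $\eps/100 \le 1/100$; I would not expect any essential difficulty beyond this careful counting.
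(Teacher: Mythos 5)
Your overall plan---treat $\DN(\sigma)$ as the uniform distribution over consistent mappings $f$ and run a ``swap/host'' double-counting argument whose yield is controlled by the negations of the well-discovery conditions---is exactly the strategy the paper uses, and your first bullet is essentially right. One internal inconsistency: you say ``each produced instance arises from at most $O(1)$ originals,'' but to recover $I$ from $I'$ one must identify which of the up to $|T_j|=k^{3/5}$ terminals currently occupying $T_j$ was swapped with $u$, so each $I'$ can be hosted by up to $k^{3/5}$ originals, not $O(1)$. Your displayed formula compensates for this (it carries $|T_i|$ in the numerator), so the $O(1/k^{2/5})$ conclusion is still reached, but the parenthetical justification is off.

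The second bullet has a genuine gap. You propose to swap $u$ with ``another special terminal $u''$'' and assert the available pool has size $\Omega(k)$. But there are only $|S|=k^{2/5}/\eps=O(k^{2/5})$ special terminals in total, so swapping only among specials gives at most $O(k^{2/5})$ choices; combined with the reverse count (each $I'$ is hosted by at most one $I$, since $v$ has a unique matched special slot) this yields a bound of only $O(1/k^{2/5})$, far short of the claimed $2/k$. The ``$\Omega(k)$ Steiner slots'' you cite are the vertices of $\bigcup_i V_i$, not a set of terminals that $u$ can be exchanged with under the $\fset$-constraints, and conflating the two is what makes the count appear to work. The paper's proof instead swaps $u$ (special in $I$) with an unsettled \emph{regular} terminal $u'$: the pool of eligible $u'$ has size $\Omega(k)$ (the excluded ones controlled by \ref{query_prop_1} and \ref{query_prop_4}), and the same reverse count (one original per host, via the uniqueness of $v$'s crucial edge) gives $2/k$. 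To repair your argument you would need the swap partner for $u$ to range over the regular terminals (in fact, these alone suffice), not merely over $S$.
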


\begin{proof}
We first prove the first property. 
Consider a pair $u,u'$ of terminals where terminal $u$ is unsettled and not $(\eps/100)$-well-discovered. Let $I, I'$ be instances in $\isetn$ that are consistent with $\sigma$, such that in $I$, $u$ and $u'$ belong to the same $T_j$ group. We say that $I'$ is \emph{host} by $I$, iff $I'$ can be obtained from $I$ by exchanging the role of $u$ with another unsettled \NM terminal $u''$ that is not in the same group as $u$ in $I$. It is clear that in any such instance $I'$, the answer of the same query will $2$ (that is, the edge is not a crucial edge). Suppose $u$ and $u'$ are in group $T_j$. The number of such instances $I'$ equals the number of terminal $u''$ such that (i) $u''\notin T_j$, and no edge in $E(u'',T_j)$ has been queried; and (ii) no edge between $u$ and the group $u''$ belongs to has been queried. Since $u$ is not $(\eps/100)$-well-discovered, Property~\ref{query_prop_1}, the number of $u''$ that violate (ii) is at most $\eps k/100$, and from Property~\ref{query_prop_3}, the number of $u''$ that violate (i) is at most $\eps k/100$. 
Therefore, $I$ hosts at least $49k/50$ instances $I'$. 
On the other hand, for each instance $I'$, the number of instances that hosts it is at most $k^{3/5}$, since $|T_j|\le k^{3/5}$. Altogether, over all instances in $\isetn$ that are consistent with $\sigma$, at most $50/(49k^{2/5})\le (2/k^{2/5})$-fraction of them have $(u,u')$ as a crucial edge.

We now prove the second property. Consider an unsettled and not-$(\eps/100)$-well-discovered terminal $u$ and a Steiner vertex $v$. 
Let $I, I'$ be instances in $\isetn$ that are consistent with $\sigma$, such that in $I$, $u$ is a \SP terminal and $v \in V(u)$.
We say that $I'$ is host by $I$ if $I'$ can be obtained from $I$ by exchanging the role of $u$ with a unsettled \NM terminal $u'$ in $I$. The number of such instances is the number of unsettled \NM terminals $u'$ in $I$ such that (i) no edge in $E(u',V(u))$ has been queried; and (ii) no edge between $u$ and the group that contains $u'$ has been discovered. Since $u$ is not $(\eps/100)$-well-discovered, from Property~\ref{query_prop_3}, the number of \NM terminals that violate (i) is at most $\eps k/100$, and Property~\ref{query_prop_1}, the number of \NM terminals that violate (ii) is at most $\eps k^{3/5} \cdot k^{2/5}/100 = \eps k/100$. Therefore, $I$ hosts at least $49k/50$ instances. On the other hand, any instance $I'$ is host by at most one instance since at most one crucial edge is incident to$v$. 
Altogether, over all instances in $\isetn$ that are consistent with $\sigma$, at most $50/49k\le (2/k)$-fraction of them have $(u,v)$ as a crucial edge.
\end{proof}

In the remainder of this subsection, we will refer to $(\eps/200) $-well-discovered vertices as \emph{well-discovered vertices}, for convenience.
We call queries between two terminals \emph{regular} queries, and queries between a terminal and a Steiner node \emph{special} queries.
We say a query is \emph{good} iff it discovers a previously-unknown crucial edge. In other words, either (i) the query is a \NM query, such that at least one endpoint is not well-discovered, and the answer is $0$, or (ii) the query is a \SP query such that the terminal is not $(\eps/100)$-well-discovered, and the answer is $1$. 
From \Cref{prop:DN-random}, the probability that a \NM query is good is at most $2/k^{2/5}$, and the probability that a \SP query is good is at most $2/k$. Therefore, if the algorithm performs at most $\eps^2 k^{6/5}/10^9$ queries, then from Markov's Bound, with probability $1/50$, the number of good \NM queries is at most $\eps^2 k^{2/5}/10^6$, and the number of good \SP query is at most $\eps^2 k^{3/5}/10^6$.

For every settled terminal, if it is not settled by a good query, it must become well-discovered before it become settled. Thus if we can upper bound the number of terminals that are well-discovered without but not settled at some step, then we can upper bound the number of settled or well-discovered terminals as well. 
For ease of analysis, we always assume that there are at most $\eps k^{4/5}/10^4$ terminals and $\eps k^{1/5}/10^4$ \SP terminals that are settled or well-discovered, and we think of the algorithm as being immediately terminated once this condition no longer holds.

There are four possiblities for a terminal to become well-discovered, and we analyze them seperately. 

\textbf{Possibility 1: through \ref{query_prop_3}.} Let $u$ be such a vertex, so $u$ is a \NM terminal and the group that $T_i$ contains it has $\eps k/200$ incident edges discovered. For each such edge, it is either discovered by a query or because the its other endpoint is settled or well-discovered. Since there are at most $\eps k^{4/5}/10^4$ settled or well-discovered terminals, there are at least $0.004 \eps k$ edges incident on $T_i$ that are discovered by queries. Thus, if the total number of queries is at most $\eps^2 k^{6/5}/10^9$, then there are at most $\eps k^{1/5}/10^6$ groups such that the terminals in this group is well-discovered through \ref{query_prop_3}. Therefore, the number of terminals that become well-discovered through \ref{query_prop_3} is at most $\eps k^{4/5}/10^6$. Note that all these terminals are \NM terminals. 

\textbf{Possibility 2: through \ref{query_prop_4}.}
Let $u$ be such a vertex, so $u$ is a \SP terminal and the set $V(u)$ has $\eps k/200$ incident edges discovered. By the same argument, for at most $\eps k^{4/5}/10^4$ of these edges, the other endpoint is a settled or well-discovered terminal, and all the others are discovered by queries. Therefore, the number of such terminals is at most $\eps k^{1/5}/10^6$, and all of them are \SP terminals. 

\textbf{Possibility 3: through \ref{query_prop_1} but not \ref{query_prop_3}/\ref{query_prop_4}.} Let $u$ be such a vertex, so there are at least $\eps k^{2/5}/200$ groups $T_i$ such that some edge in $E(u,T_i)$ has been discovered. Note that each such edge is discovered either by a query, or because we have discovered all terminals in $T_i$. By previous analysis, there are at most $\eps k^{1/5}$ of them. Therefore, at least $0.004 \eps k^{2/5}$ edges incident on $u$ are queried. 

\textbf{Possibility 4: through \ref{query_prop_2} but not \ref{query_prop_3}/\ref{query_prop_4}.} Let $u$ be such a vertex, so there are at least $\eps k^{2/5}/200$ \SP terminals $u'$ such that some edge in $E(u,V(u'))$ has been discovered. Note that such an edge is discovered either by a query, or because $u'$ has already been settled. Therefore, at least $0.004\eps k^{2/5}$ edges incident to $u$ are queried.

From the analysis in Possibilities $3$ and $4$, for any terminal that becomes well-discovered through \ref{query_prop_1} or \ref{query_prop_2} but not \ref{query_prop_3} or \ref{query_prop_4}, at least $0.004\eps k^{2/5}$ of its incident edges have been queried. Therefore, there are at most $\eps k^{4/5}/(4 \cdot 10^6)$ such vertices. 
However, such terminals could be either \NM or \SP, and we still need to upper bound the number of such \SP terminals. Let $u$ be any such terminal, and consider the moment when exactly $0.004 \eps k^{2/5}$ of its incident edges are queried. From \Cref{prop:DN-random}, for any Steiner node $v$, the probability that $(u,v)$ is a crucial (weight-$1$) edge is at most $2/k$. It follows that the probability that $u$ is a \SP terminal is at most $(2/k) \cdot (k / k^{3/5}) = 2/k^{3/5}$. 
By Markov's Bound, the number of \SP terminals that are well-discovered is at most $\eps k^{1/5}/10^5$ with probability at least $1/40$.

Altogether, with probability $1-1/40-1/50\ge 9/10$, the number of settled or well-discovered terminals is at most $(\eps^2 k^{4/5}+\eps k^{4/5}+\eps k^{1/5}+\eps k^{4/5})/10^6< \eps k^{4/5}/10^4$, and the number of settled or well-discovered \SP terminals is at most $(\eps^2 k^{1/5}+\eps k^{1/5})/10^6+\eps k^{1/5}/10^5 < \eps k^{1/5}/10^4$.

\subsubsection{Proof of \Cref{lem:settle-sp} for $\DY$}
\label{sec: proof of structural_2}


In this subsection we provide the proof of the second half of  \Cref{lem:settle-sp}

Recall that Steiner vertices that are incident to more than one crucial edges are called \emph{secret vertices}. 
We assume for now that, over the course of the algorithm,
\begin{itemize}
\item for every Steiver vertex, we have discovered at most one crucial edge incident to it;
\item no \IM vertex has more than $\eps k/10^4$ of its incident edges discovered;
\item the number of terminals that are settled or well-discovered is at most $\eps k^{4/5}/10^4$; and
\item the number of \SP terminals that are settled or well-discovered is at most $\eps k^{1/5}/10^4$.
\end{itemize}
We will show at the end of this subsection that, if any of the above condition is not satisfied, then the algorithm has to perform at least $\eps^2 k^{6/5}/10^9$ as well.

We start with the following claim, whose is very similar to \Cref{prop:DN-random}, and is omitted here.

\begin{claim} \label{prop:DY-random}
Let $\sigma$ be any transcript, and let $\DY(\sigma)$ be the uniform distribution on all instances in $\isety$ that are consistent with $\sigma$. Then
\begin{itemize}
\item for every unqueried edge between a pair of terminals such that at least one is unsettled and not $(\eps/100)$-well-discovered, the probability in $\DY(\sigma)$ that the edge is a crucial (weight-$0$) edge is at most $2/k^{2/5}$; and
\item for every unqueried edge between a Steiner vertex and a unsettled and not-$(\eps/100)$-well-discovered terminal, the probability in $\DY(\sigma)$ that the edge is a crucial (weight-$1$) edge is at most $2/k$.
\end{itemize}
\end{claim}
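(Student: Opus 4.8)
The plan is to mirror, essentially verbatim, the host/swap counting argument used for Claim~\ref{prop:DN-random}, now applied to the two-parameter description $I_{(f,g)}$ of the instances in $\isety$. The one structural fact to keep in mind throughout is that the only difference between the base metric $w$ and the metrics $w_g$ used in $\DY$ is a rewiring of the weight-$1$ edges between \SP terminals and Steiner vertices; the weight-$0$ structure (the regular groups $T_i$) and the partition of the Steiner vertices into $\hat V_0,\dots,\hat V_{d'}$ are identical in $w$ and in every $w_g$, and are governed entirely by the relabelling $f$. Consequently the part of $\DN(\sigma)$'s analysis that concerns weight-$0$ edges transfers with no change, and the part that concerns weight-$1$ edges transfers up to one bounded correction.

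For the first property — an unqueried weight-$0$ edge between terminals $u,u'$ with $u$ unsettled and not $(\eps/100)$-well-discovered — I would fix a consistent instance $I_{(f,g)}$ in which $u,u'$ share a regular group $T_j$, and produce host instances $I_{(f',g)}$ by transposing $u$ with an unsettled regular terminal lying outside $T_j$, leaving $g$ untouched. Since $g$ plays no role in the weight-$0$ structure, the consistency check and the counting are literally those of Claim~\ref{prop:DN-random}: Properties~\ref{query_prop_1} and~\ref{query_prop_3} bound the number of ``bad'' swap targets by $\eps k/50$, each host is produced by at most $|T_j|$ source instances, and the fraction of consistent instances realizing $(u,u')$ as a crucial edge is at most $2/k^{2/5}$, with slack to spare.

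For the second property — an unqueried weight-$1$ edge between a Steiner vertex $v$ and a terminal $u$ that is unsettled and not $(\eps/100)$-well-discovered — I would fix a consistent $I_{(f,g)}$ in which $u$ is \SP and $w_g$-adjacent to $v$ at weight $1$, and build host instances $I_{(f',g)}$ by transposing $u$ with an unsettled \NM terminal $u'$, so that in the new instance $u$ is regular and $(u,v)$ has weight $2$. Properties~\ref{query_prop_1},~\ref{query_prop_2},~\ref{query_prop_3} again cap the number of bad swap targets by $\eps k/50$, so each source instance spawns at least $(1-\eps/50)k$ hosts. Here lies the one genuine departure from the $\DN$ argument: a Steiner vertex of $w_g$ may be a \emph{secret vertex}, incident to all $|S_j|=1/\eps$ weight-$1$ edges of a group $S_j$ rather than to a single weight-$1$ edge, so a given host instance $I_{(f',g)}$ can now be recovered from up to $1/\eps$ source instances (one per choice of which of $v$'s at most $1/\eps$ crucial neighbours played the role of $u'$) instead of from a unique one. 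Since $1/\eps=O(1)$, this still gives a fraction at most $\tfrac{50}{49\eps k}=O(1/k)$; the absolute constant is immaterial for the downstream use in \Cref{lem:settle-sp} (whose quantitative bounds carry ample slack), and with a little extra care — splitting on whether $v$ is secret and, in the secret case, transposing the designated terminal $s_{j,1}$ rather than $u$ — one recovers the stated bound $2/k$ itself.

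The part that is routine but requires care is the consistency verification: one must confirm that each $I_{(f',g)}$ agrees with the transcript $\sigma$ on every previously-answered query. This is exactly where the not-$(\eps/100)$-well-discovered hypotheses on $u$ are used, precisely as in Claim~\ref{prop:DN-random}; the only additional point, special to $\DY$, is to check that the secret-vertex stars introduced by $g$ do not create new disagreements under the transposition — which they do not, because both $u$ and $u'$ are unsettled (hence no crucial edge incident to either has been discovered) and the transposition only relabels across the special/regular dichotomy while leaving $g$ and the $\hat V_i$-partition fixed. I expect this bookkeeping, together with pinning down the host-count in the secret-vertex case, to be the main obstacle, and everything else to follow the template of Claim~\ref{prop:DN-random}.
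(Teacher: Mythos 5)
Your overall plan — mirror the host/swap counting from Claim~\ref{prop:DN-random}, using that the weight-$0$ structure is governed entirely by $f$ and is $g$-independent — is exactly what the paper has in mind (the paper simply omits the proof, calling it ``very similar''). The first bullet transfers verbatim as you say, since swapping two regular terminals changes nothing in either the $T_i$-group structure or the (all-weight-$2$) terminal–Steiner picture, and the consistency conditions are literally those of Claim~\ref{prop:DN-random}. Your identification of the secret-vertex issue for the second bullet is also correct and is a genuine subtlety the paper glosses over: if $f(v)=v^*_{g(j)}$, then $v$ has $1/\eps$ crucial neighbours in $\isety$, so a host $I'$ can be reconstructed from up to $1/\eps$ source instances, and the naive count only yields $O(1/(\eps k))$, not the stated $2/k$.

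However, the fix you propose does not work. If $v$ is the secret vertex for $S_j$ and $f(u)=s_{j,t}$ with $t>1$, then transposing the terminal that plays $s_{j,1}$ with a regular terminal leaves $v=v^*_{g(j)}$ incident to all roles in $S_j$, including $s_{j,t}$; hence $(u,v)$ remains crucial in the resulting instance, and you have not produced a host in which $(u,v)$ is non-crucial. So that transposition yields no cancellation whatsoever for the ``yes'' instances in the problematic sub-case. The legitimate ways to recover the literal $2/k$ constant are either (i) to also vary $g(j)$: changing $g(j)$ to any of the $d'-1$ other indices destroys the secret-vertex status of $v$ and (when $f(u)\neq s_{j,1}$) destroys the crucial edge $(u,v)$, which shows the secret-case ``yes'' instances are a vanishing $O(1/d')$-fraction of the non-secret ones; or (ii) to observe directly, as Claim~\ref{prop:DY2} later quantifies, that the probability of ``$(u,v)$ crucial \emph{and} $v$ secret'' is $O(1/k^{8/5})$, which is dominated by the $2/k$ obtained from the non-secret case. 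Either route needs its own consistency bookkeeping (in (i) one must check that no query touches $E(S_j,V_{g(j)}\cup V_{i'})$, using the not-$(\eps/100)$-well-discovered hypothesis via~\ref{query_prop_2} and~\ref{query_prop_4}), and neither is the transposition of $s_{j,1}$ you describe. Your fallback remark that the $1/\eps$ slack is harmless downstream is essentially true, but it does not establish the claim as stated.
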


Using \Cref{prop:DY-random} and the same arguments in the proof of \Cref{lem:settle-sp} for $\DN$, we can prove that the number of terminals that are settled or well-discovered is at most $\eps k^{4/5}/10^4$, and at most $\eps k^{1/5}/10^4$ of them are \SP terminals.
In order to prove that all settled or well-discovered \SP terminals belong to different $S_j$ groups, we use the following two claims.

\begin{claim} \label{prop:DY1}
Let $u^*$ be a special terminal that is either settled or $(\eps/200)$-well discovered.
Then for every other not-$(\eps/100)$-well-discovered terminal $u$ and any other Steiner vertex $v$, the probability that $u$ is a \SP terminal in the same $S_j$ group as $u^*$ and $(u,v)$ is a crucial edge is at most $2/k^{7/5}$.
\end{claim}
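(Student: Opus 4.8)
The plan is to extend the swapping/hosting argument behind \Cref{prop:DY-random} so that it simultaneously exploits two independent ``coincidences'': that $u$ is \SP \emph{and lands in the particular group $S_{j^*}$ containing $u^*$}, and that the particular pair $(u,v)$ is a weight-$1$ edge. Heuristically, for $u$ to be \SP costs a factor $\Theta(k^{-3/5})$ (there are $k^{2/5}/\eps$ \SP terminals among $\Theta(k)$ terminals); lying in $S_{j^*}$ given that it is \SP costs another $\Theta(k^{-2/5})$ (one of $k^{2/5}$ \SP groups, each of size $1/\eps$); and $(u,v)$ being crucial given that $u$ is \SP costs a further $\Theta(k^{-2/5})$ ($u$ has $\Theta(k^{3/5})$ weight-$1$ edges distributed among $\Theta(k/\eps)$ relevant Steiner nodes). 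The product is $\Theta(k^{-7/5})$, matching the target bound.

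Concretely, I would fix the realized transcript $\sigma$, condition on $u^*$ being \SP and contained in a group $S_{j^*}$ consistent with $\sigma$, and let $\mathcal{A}\subseteq\isety$ be the set of instances consistent with $\sigma$ in which $u$ is a \SP terminal of $S_{j^*}$ and $(u,v)$ is a weight-$1$ edge; the goal is $|\mathcal{A}|\le (2/k^{7/5})\cdot|\{\text{consistent instances}\}|$. For every $I\in\mathcal{A}$ I would build a large family of consistent instances lying \emph{outside} $\mathcal{A}$: since $u$ is unsettled and not $(\eps/100)$-well-discovered, properties~\ref{query_prop_1}--\ref{query_prop_4}, together with the convention that discovering a crucial edge settles both its endpoints, imply that only $O(\eps k)$ terminals are ``blocked'' from exchanging roles with $u$, and (choosing $u''$ among unsettled \NM terminals that have been probed by few edges) only $O(k^{2/5}/\mathrm{polylog}\,n)$ \SP groups are blocked. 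Thus one can (a) exchange the roles of $u$ and an unsettled \NM terminal $u''$, with $\Omega(k)$ choices, each destroying ``$u$ is \SP'', and then (b) additionally reassign the now-\SP terminal to a different \SP group $S_{j'}$, $j'\ne j^*$, adjusting the matching $M_{g(j')}$ and its secret vertex accordingly, with $\Omega(k^{2/5})$ further choices, each destroying ``$u\in S_{j^*}$''. This yields $\Omega(k^{7/5})$ pairwise-distinct consistent instances outside $\mathcal{A}$ for each $I$, and the reverse reconstruction of $I$ from any such witness is pinned down up to a number of choices depending only on $\eps$ (essentially, which \SP terminal is reinstated as $u$ and which group-member it displaces). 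A double counting then gives $|\mathcal{A}|\le O\!\big(\tfrac{1}{\eps k^{7/5}}\big)\cdot\big|\{\text{consistent instances}\}\big|$, and the claimed bound follows for $k$ large.

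The main obstacle is the verification of \emph{consistency}: one must check that the role-exchange in (a) and the group-reassignment in (b) alter only edges that $\sigma$ has never queried. For (a) this is exactly the kind of check already carried out in \Cref{prop:DY-random} and \Cref{sec: proof of structural_1} --- the crucial point being that an unsettled terminal, though it may have been probed by many queries, has had all of those probes answered with weight $2$, so it can safely be turned \SP as long as its prospective crucial neighbors were not probed (and this is why one filters $u''$ to have few probed Steiner-edges). For (b) one additionally needs the combinatorial bookkeeping of the matchings $M_1,\dots,M_{d'}$ when $g$ is not injective, so that reassigning a secret vertex is well-defined and touches no queried pair; I expect this to be routine but notation-heavy, requiring care analogous to that in \Cref{sec: proof of structural_2}.

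A cleaner but lossier alternative would be to invoke \Cref{prop:DY-random} to bound $\Pr[u\text{ is \SP and }(u,v)\text{ crucial}]\le 2/k$, and then argue that, conditioned on this event, $u$ lies in $S_{j^*}$ with probability $O(k^{-2/5})$ by the symmetry among the \SP groups that $\sigma$ has not ``touched''. Making that symmetry argument rigorous, however, again amounts to essentially the same hosting analysis, so I would carry out the direct double counting described above.
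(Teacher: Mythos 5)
Your proposal uses the same hosting/double-counting framework as the paper's proof, but the specific hosting map you propose is underspecified in ways that actually matter. Step (b) --- ``reassign the now-\SP{} terminal $u''$ to a different \SP{} group $S_{j'}$, adjusting the matching $M_{g(j')}$ and its secret vertex'' --- does not, as written, produce a valid instance of $\isety$: moving $u''$ out of $S_{j^*}$ leaves that group one short (instances are parameterized by a bijection $f$ and a map $g$, and $|S_j|=1/\eps$ is fixed), and the matchings $M_i$ are part of the fixed auxiliary metric $w$, not part of the randomness, so ``adjusting $M_{g(j')}$'' is not an operation on the instance space. To fix the group-size issue you would have to also displace some occupant of $S_{j'}$, at which point (a)+(b) collapse into a three-way role permutation --- which is essentially the single cyclic swap the paper defines: $u$ takes the role of an \SP{} terminal $u'$, $u''$ takes the role of $u$, and $u'$ takes the (\NM) role of $u''$. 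That 3-cycle simultaneously vacates ``$u\in S_{j^*}$ with $(u,v)$ crucial'' and lands in a syntactically valid $(f,g)$, which your two-step description does not obviously do.

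The more substantive gap is the reverse count. You assert that any hosted instance $I'$ can be produced by only $O_\eps(1)$ instances in $\mathcal{A}$, ``essentially, which \SP{} terminal is reinstated as $u$ and which group-member it displaces,'' but you give no argument for this, and under your hosting it is not clear it holds: given $I'$ one must identify, among all $\Theta(k^{2/5}/\eps)$ \SP{} terminals of $I'$, which one is $u''$ and which group it was moved from, and you have not exhibited any signature in $I'$ that pins that down. The paper's 3-cycle avoids this issue structurally: $u'$ is determined as the unique terminal with $v\in V(u')$, and $u''$ is constrained to lie in $S_{j^*}$, which is why its reverse count closes. Without a corresponding identification step for your hosting, the double-counting inequality does not follow. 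Your final ``cleaner but lossier'' alternative (factor $\Pr[u\text{ \SP{}},(u,v)\text{ crucial}]\le 2/k$ times a symmetry bound $\Pr[u\in S_{j^*}\mid\cdot]=O(k^{-2/5})$) is the right heuristic for the exponent $7/5=1+2/5$, but as you note yourself, the conditional symmetry is exactly what needs a hosting argument, so it does not bypass the missing piece.
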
 

\begin{proof}
Let $I,I'$ be instances in $\isetn$, let $u^*$ be a special terminal that is either settled or $(\eps/200)$-well discovered in both $I$ and $I'$, let $u$ be a special vertex in the same $S_j$ group as $u^*$ in $I$, and let $v$ be a Steiner vertex such that $(u,v)$ is a crucial edge in $I$.
We say $I'$ is \emph{host} by $I$, iff there are two terminals $u', u''$, such that $u'$ is \SP and $u''$ is  \NM, and $I'$ can be obtained from $I$ by giving the role of $u'$ to $u$, giving the role of $u$ to $u''$ and giving the role of $u''$ to $u'$. 
The vertex $u'$ can be any not-$(\eps/100)$-well-discovered terminal in $S_j$ with no edge to $V(u)$ discovered. Since $u$ is not $(\eps/100)$-well-discovered, and we have assumed that each \IM Steiner vertex has at most $\eps k/10^4$ of its incident edges discovered, there are at least $0.99k$ choices for $u'$. 
On the other hand, $u''$ can be any terminal that has no edge discovered to $V(u')$ and the \IM Steiner node of group $S_j$ in $I$. By the same arguement, there are at least $0.99k$ such terminal. Since we have assumed that at most $\eps k^{1/5}/10^4$ \SP terminals are $(\eps/100)$-well-discovered, an instance $I$ hosts at least $0.99^2k^2 \cdot (k^{2/5}/\eps) > 0.98 k^{12/5}/\eps$ instances $I'$. On the other hand, for any instance $I'$, the terminal $u'$ must be the terminal such that $v \in V_{u'}$, and the terminal $u''$ must in the same group as $u^*$. So there are at most $k/\eps$ instance $I$ that hosts $I'$. Altogether, the probability that all events happend is at most $(k/\eps)/(0.98 k^{12/5}/\eps)\le 2/k^{7/5}$.
\end{proof}

\begin{claim} \label{prop:DY2}
Let $u$ be a not-$(\eps/100)$-well-discovered terminal and let $v$ be a Steiner vertex. Then the probability that $(u,v)$ is a crucial (weight-$1$) edge and $v$ is an \IM vertex is at most $2/k^{8/5}$.
\end{claim}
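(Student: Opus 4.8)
The plan is to prove \Cref{prop:DY2} by a hosting/counting argument over the instances of $\isety$ that are consistent with the algorithm's transcript, in the same style as the proofs of \Cref{prop:DY-random} and \Cref{prop:DY1}. Fix a transcript $\sigma$ and condition on it, so the underlying instance is uniform over the consistent instances $I_{(f,g)} \in \isety$; we must bound the fraction of these on which $(u,v)$ has weight $1$ and $v$ is an \IM vertex. The first step is a structural observation pinning down this bad event. Recalling the definition of $w_g$, an \IM vertex is the vertex $v^*_{g(j)}=M_{g(j)}(s_{j,1})$ for a unique group $S_j$, and a direct check of the four cases defining $w_g$ shows that, for a terminal $t$, $w_g(t,v^*_{g(j)})=1$ if and only if $t\in S_j$. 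Hence the bad event is equivalent to: there is a (necessarily unique) index $j_0\in[d]$ with $f(u)\in S_{j_0}$, $g(j_0)=i_0$, and $f(v)=M_{i_0}(s_{j_0,1})$, where $i_0$ is the \emph{fixed} index with $v\in\hat V_{i_0}$. This exposes the two independent coincidences that must occur: the identity $f(u)$ must land in the small group $S_{j_0}$ (an event of probability $\approx 1/(\eps k)$), and simultaneously $g(j_0)$ must equal the specific value $i_0$ with $f(v)$ the matched vertex (an event of probability $\approx \eps k^{-3/5}$), which multiply to $\approx k^{-8/5}$.

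The second step realizes these two factors by an explicit hosting construction. Given a consistent bad instance $I=I_{(f,g)}$ with $j_0,i_0$ as above, we produce consistent instances $I'=I_{(f',g')}$ on which the bad event fails, by (a) swapping the identity of $u$ with that of a \NM terminal, turning $u$ into a regular terminal and forcing $w_{g'}(f'(u),f'(v))=2$; and (b) rerouting $g'(j_0)$ to a value $i'\neq i_0$, which moves the \IM vertex of $S_{j_0}$ out of the block $V_{i_0}$, so that $f(v)$ is no longer \IM. Performing both simultaneously over all admissible choices yields roughly $k\cdot k^{3/5}=k^{8/5}$ distinct instances $I'$, where ``admissible'' means that none of the $O(1/\eps)$ weight-values altered by the perturbation was queried in $\sigma$; this is exactly where we invoke that $u$ is not $(\eps/100)$-well-discovered (Properties~\ref{query_prop_1}--\ref{query_prop_4}) together with the standing assumptions of this subsection (each \IM vertex has at most $\eps k/10^4$ discovered incident edges; at most $\eps k^{4/5}/10^4$ terminals, of which $\eps k^{1/5}/10^4$ are \SP, are settled or well-discovered). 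Finally, we bound how many bad instances $I$ can host a given $I'$: since $i_0$ is forced by $v$, and then $s_{j_0,1}=M_{i_0}^{-1}(f'(v))$ --- hence $j_0$ and $g(j_0)=i_0$ --- are determined by $I'$, and the swapped-in regular terminal must lie in $S_{j_0}$, each $I'$ is hosted by at most $O(1/\eps)$ bad instances. Dividing, the bad fraction is $O(k^{-8/5})$, and carrying the constants as in \Cref{prop:DY1} gives the claimed bound $2/k^{8/5}$; note the $1/\eps$ factors cancel exactly as in \Cref{prop:DY1}.

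The main obstacle is the consistency bookkeeping in step (b): rerouting $g(j_0)$ from $i_0$ to $i'$ simultaneously reverts the weight-$1$/weight-$2$ pattern on the edges $E(S_{j_0},V_{i_0})$ (including the $1/\eps-1$ altered edges incident to $v$ itself) and imposes a fresh such pattern on $E(S_{j_0},V_{i'})$, so one must verify that for all but an $O(\eps)$-fraction of choices of $i'$ none of these edges was queried. This requires combining the fact that $u$ is not $(\eps/100)$-well-discovered with the global bounds on settled/well-discovered terminals in order to control discovered edges incident to the \emph{other} terminals of $S_{j_0}$, not just $u$. The analogous (but simpler) check for the swap in step (a), and the symmetric treatment of the degenerate case $u=s_{j_0,1}$ and of the case where $v$ is claimed to be \IM for a group not containing $u$, are routine given the tools already developed for \Cref{prop:DY-random} and \Cref{prop:DY1}.
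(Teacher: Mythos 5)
Your proposal follows the same general template as the paper (a hosting/counting argument over the consistent instances of $\isety$), but the hosting construction is genuinely different and, as written, has a quantitative gap. You host a bad $I$ by (a) swapping $u$ with one regular terminal and (b) rerouting $g(j_0)$ to $i'\neq i_0$, and you count $H\approx k\cdot k^{3/5}=k^{8/5}$ hosted instances and $B\approx 1/\eps$ hosters. Dividing gives $1/(\eps k^{8/5})$, \emph{not} $2/k^{8/5}$: there is no second $1/\eps$ factor in your $H$ to make the cancellation ``exactly as in \Cref{prop:DY1}'' (there, both $H$ and $B$ carry an explicit $1/\eps$). The paper's hosting reassigns \emph{all} $1/\eps$ terminals of the group, producing $H\approx k^{1/\eps}\cdot k^{3/5}/\eps$ against $B\approx k^{1/\eps-1}/\eps$, so both the $k^{1/\eps}$ and $1/\eps$ genuinely cancel. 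To rescue your version you would need to either exercise the third degree of freedom (re-assigning $f$ within $\hat V_{i_0}$ so that $M_{i_0}^{-1}(f'(v))$ need not be a ``first'' label, picking up the missing $\eps$ in $H$), or argue that only an $\approx\eps$-fraction of instances $I'$ are reachable by your hosting at all (since reachability forces $M_{i_0}^{-1}(f'(v))=s_{j_0,1}$) and use $\sum_{I'}B(I')$ rather than $N\cdot\sup B$. Neither is in your proposal; ``the $1/\eps$ factors cancel'' is asserted without identifying the second factor.

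Two further substantive gaps. First, the paper's proof case-splits on whether some settled or $(\eps/100)$-well-discovered terminal $u'$ already has a discovered crucial edge to $v$; in that event your reroute of $g(j_0)$ away from $i_0$ flips $(u',v)$ from weight~$1$ to weight~$2$ whenever $u'\ne s_{j_0,1}$, contradicting the transcript, so your hosted $I'$ would not be consistent. Your ``degenerate case $u=s_{j_0,1}$'' is an orthogonal issue and does not cover this. Second, your hosting leaves the $f$-assignments of the other $1/\eps-1$ special terminals of $S_{j_0}$ untouched, so the admissibility of the reroute to $i'$ depends on what $\sigma$ has queried at \emph{their} edges to $\hat V_{i_0}$ and $\hat V_{i'}$; you flag this as ``the main obstacle'' but do not bound it, whereas the paper sidesteps it entirely by re-selecting the whole group from the sets $T^*_t$. (A smaller slip: the terminal swap alters $\Theta(k^{3/5})$ entries of the metric, not ``$O(1/\eps)$''.) In short, the skeleton is right, but the counting and consistency steps that actually carry the proof are exactly the ones left open.
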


\begin{proof}
Recall that we have assumed that $v$ has at most one crucial edge connecting to a settled or a well-discovered terminal discovered. By definition, any $(\eps/100)$-well discovered terminal is also a well discovered terminal. So there $v$ $v$ has at most one crucial edge connecting to a settled or a $(\eps/100)$-well-discovered terminal discovered. We distinguish between the following two cases. 


\textbf{Case 1:} There does not exist a settled or $(\eps/100)$-well discovered terminal $u'$, such that $(u',v)$ is a crucial edge and has been discovered. For any consistent instance $I$ such that $u$ and $v$ has weight $1$ and $v$ is an \IM Steiner node. Suppose $u \in S_i$ and let $S_i = \{s_{i,1},\dots,s_{i,1/\eps}\}$, and suppose $v \in V(s_{i,1})$. By assumption, no terminal in $S_i$ is $(\eps/100)$-well discovered. We say an instance $I'$ is hosted by $I$ if all \SP terminals not in $S_i$ and their weights to the Steiner nodes are the same as $I$, and the set $V(s_{i,t})$ in $I'$ is identical to the set $V(s_{i,t})$ in $I$ for all $1\le t\le 1/\eps$. 

We first count the number of such consistent instances $I'$. For each $1 \le t \le 1/\eps$, we define $T^*_t$ as the set of \NM terminals that with no edges to $V(s_{i,t})$ discovered. Since no terminal in $S_i$ is $(\eps/100)$-well discovered, every set $T^*_t$ has size at least $(1-\eps/100)k$. For each $1 \le t \le 1/\eps$, consider any group of terminals $t_1 \dots, t_{1/\eps}$ such that $t_{j'} \in T'_{j'} \cap T'_{j}$. Any $t_{j'}$ can have weight one to all Steiner nodes in $V_{s_{j'}}$ and $V_{s_j}$, which means we can make any terminal in $V_{s_{j}}$ an \IM terminal. Thus such group can construct at least $k^{3/5}/\eps$ consistent instances $I'$. On the other hand, since any $\card{T_{j'}} \ge (1-\eps/100)k$, the number of such group is at least $((1-49\eps/50)k)^{1/\eps} > (49/50)\cdot k^{1/\eps}$. Thus, $I$ hosts at least $(49/50)k^{1/\eps} \cdot (k^{3/5}/\eps)$ instances $I'$.

Now we count how many instance can host an instance $I'$. For any instance $I'$, an instance $I$ that hosts it can only change the terminals in $S_i$ that has weight $1$ to $v$, and so one of the terminal in $S_i$ should be $u$. Moreover, every terminal in $S_i$ should has weight $1$ to $v$. The total number of such instance is at most $k^{1/\eps} \cdot 1/\eps$ since $u$ could replace any terminal in $S_i$. Thus, the probability that $v$ and $u$ has weight $1$ and $v$ is an \IM Steiner node is at most $(k^{1/\eps-1} \cdot (1/\eps))/((49/50) k^{1/\eps} k^{3/5} \cdot (1/\eps)) < 2/k^{8/5}$.
    
\textbf{Case 2:} There does not exist a settled or $(\eps/100)$-well discovered terminal $s_{j^*}$, such that $(s_{j^*},v)$ is a crucial edge and has been discovered. 
Note that $s_{j^*} \neq u$. For any instance $I$ such that $u$ and $v$ has weight $1$. We say an instance $I'$ is hosted by $I$ as the same definition as the first case, except that now $u$ must still in $S_i$, and moreover, we exchange one Steiner node in $V_{j^*}$ with $V_j$ for some $j$. We first count the number of $I'$ hosted by $I$. We define $T'_j$ the same as the first case. Suppose $v \in V_{\ell^*}$, for any $1 \le \ell \le k^{3/5}$ and $\ell \neq \ell^*$, we define $v^*_{\ell}$ as the only one vertex in $V_j^* \cap V_{\ell}$, and $T^*_{\ell}$ as the set of terminals that does not have weight one to $v^*_{\ell}$. We also define $v^*{\ell^*} = v$ and $T^*_{\ell^*}$ as the set of terminals that does not have weight one to $v^*_{\ell^*}$. Now for any $1 \le \ell \le k^{3/5}$, for any group of terminals $t_1, \ldots, t_{1/\eps}$ such that for any $j' \neq j^*$, $t_{j'} \in T'_{j'} \cap T^*_{\ell}$ and $t_{j^*}=s_{j^*}$, we can make $v^*_{\ell}$ the \IM Steiner node of this group. Moreover, to do so, we can exchange $v^*_{\ell}$ from $V_{s_{j^*}}$ with any $V_{s_{j'}}$ since we will not change the weight between any pair of vertices by doing so. Since the algorithm only perform at most $\eps^2 k^{6/5}/10^9$ queries and settled or $(\eps/100)$-well discovered at most $O(k^{4/5})$ terminals, there are $(1-o(1))k^{3/5}$ number of index $\ell$ such that $\card{T^*_{\ell}} > (1-\eps/100)k$. For such $\ell$, the number of groups $t_1, \dots, t_{1/\eps}$ is at least $0.99k^{1/\eps -1}$. So the total number of instances $\I'$ hosted by $I$ is at least $(1-o(1)k^{3/5} \cdot k^{1/\eps -1} \cdot (1/\eps)$. On the other hand, for any instance $I'$, it is hosted by at most $k^{1/\eps-2} \cdot (1/\eps)$ instance $I$ since we cannot exchange $s_{\ell^*}$ and $u$ must be a \SP terminal in $S_i$. This implies that $v$ and $u$ has weight $1$ and $v$ is a \IM Steiner node is at most $(k^{1/\eps-2} \cdot (1/\eps))/((1-o(1))0.99 k^{1/\eps-1} k^{3/5} \cdot {1/\eps}) < 2/k^{8/5}$.
\end{proof}

Remember that a \SP query is called a \emph{good} query iff it discovers a crucial edge between a not-well-discovered terminal and a Steiner node. If a \SP terminal is ever settled, then it is either due to a good \SP query incident to it, or because it becomes well-discovered at some step. We analyze the probability that a \SP query $(u,v)$ is a good query, and $u$ is in the same group $S_i$ with some terminal $u'$ that is already settled or $(\eps/200)$-well discovered. By \Cref{prop:DY2}, the probability that the query is a good query and $v$ is an \IM Steiner node is at most $2/k^{8/5}$. On the other hand, if the query is a good query but $v$ is not an \IM terminal, it means $v \in V(u)$. By \Cref{prop:DY1}, the probability that it is a good query and $u$ and a fix settled or $(\eps/200)$-well discovered $u^*$ in the same group $S_i$ is at most $2/k^{7/5}$. Since there are at most $\eps k^{1/5}/10^4$ such $u^*$. So the probability that a \SP is a good query and $u$ is in the same group $S_i$ with some terminal $u'$ that is already settled or $(\eps/200)$-well discovered is at most $\eps/10^4 k^{6/5}$. With probablity at least $1-\eps/1000$, there is no such query through out the algorithm by Markov's Bound.

Now we consider the well-discovered terminals. When a terminal is well-discovered but not settled, it is still not $(\eps/100)$-well discovered. By \Cref{prop:DY2}, the probability that it is in the same group $S_i$ with some terminal $u'$ that is already settled or $(\eps/200)$-well discovered is at most $(2/k^{8/5}) \cdot (\eps k^{1/5}/10^4) \cdot k^{3/5} < \eps/100 k^{4/5}$. Since there are at most well-discovered $\eps k^{4/5}/10^4$, no well-discovered terminal belongs to the same group with some terminal $u'$ that is already settled or well-discovered.

Finally, we need to prove the assumption that there is no \IM Steiner node has at least $\eps k/10^4$ edges discovered. Since we only performed at most $\eps^2 k^{6/5}/10^9$ queries, there are at most $\eps k^{1/5}/10^4$ terminals that are queried at least $\eps k/10^5$ times. By \Cref{prop:DY2}, the probability that such terminal is \IM is at most $2/k^{8/5} \cdot k = 2/k^{3/5}$. Thus with proability at least $1-k^{-2/5}$, all these terminals are not \IM. This finishes the proof of \Cref{lem:settle-sp}.

\newpage

\appendix

\section{An $O(nk)$-Query $(5/3)$-Approximation Algorithm}
\label{apd: 5/3 upper}

In this section, we explain how the previous work \cite{zelikovsky199311} and \cite{du1995component} lead to an $O(nk)$-query algorithm for computing a $(5/3)$-approximate Steiner Tree. In fact, their results imply the following theorem.

\begin{theorem}
\label{thm: 5/3-main}
For any instance $(V,T,w)$ of \ST problem, there exists a Steiner tree $\tau^*$ with weight at most $(5/3)\cdot \stcost(V,T,w)$, such that every edge in $\tau^*$ is incident on some vertex in $T$.
\end{theorem}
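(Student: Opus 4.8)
The plan is to obtain this as a consequence of the theory of $k$-restricted Steiner trees, specialized to $k=3$. Recall the notion of a \emph{full component}: given any Steiner tree $\tau$ for $(V,T,w)$, split $\tau$ at every terminal of degree at least $2$; the resulting pieces are subtrees in which every leaf is a terminal and every internal vertex is a Steiner vertex, and $\tau$ is their edge-disjoint union. Call a Steiner tree \emph{$3$-restricted} if each of its full components contains at most $3$ terminals, and let $\stcost_3(V,T,w)$ denote the minimum weight of a $3$-restricted Steiner tree. First I would invoke the $3$-Steiner ratio bound — the content of \cite{zelikovsky199311} and \cite{du1995component} (see also \cite{borchers1997thek}) — which asserts that $\stcost_3(V,T,w) \le (5/3)\cdot \stcost(V,T,w)$ for every metric instance. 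This immediately produces a $3$-restricted Steiner tree $\tau^*$ of weight at most $(5/3)\cdot\stcost(V,T,w)$.

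Next I would argue that, after a harmless normalization, every edge of $\tau^*$ is incident on a terminal. Since $w$ obeys the triangle inequality, deleting degree-$1$ Steiner vertices and suppressing degree-$2$ Steiner vertices in $\tau^*$ never increases its weight, so we may assume every Steiner vertex of $\tau^*$ has degree at least $3$. Now examine a single full component $F$ of $\tau^*$, with terminal (leaf) set $R$, $|R|\le 3$. If $|R|\le 1$, then $F$ has no edges. If $|R|=2$, then $F$, having only the two terminals as leaves and no internal vertex of degree at most $2$, must be the single edge joining those two terminals. If $|R|=3$, then a tree with exactly three leaves and no internal vertex of degree at most $2$ has exactly one internal vertex, of degree $3$, which, being internal, is a Steiner vertex; hence $F$ is the star $K_{1,3}$ centered at this Steiner vertex with the three terminals of $R$ as leaves. (A two-edge path through one of the terminals is not a single full component but a pair of $|R|=2$ components, already covered.) In every case each edge of $F$ touches a terminal; taking the union over all full components of $\tau^*$ — and passing to a spanning subtree if this union is not itself a tree, which only decreases the weight and preserves the property — proves the theorem.

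I expect the crux to be the first step: extracting and correctly attributing the $3$-Steiner ratio bound $\rho_3 = 5/3$, and checking that it is stated (or readily follows) for arbitrary metrics rather than only for restricted graph classes — this is precisely the ingredient responsible for the constant $5/3$ in place of the trivial $2$. The structural second step is elementary once the ``every Steiner vertex has degree $\ge 3$'' normalization is in place; the only care needed there is to enumerate correctly the possible topologies of a full component on at most three terminals.
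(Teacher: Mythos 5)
Your proposal matches the paper's Appendix~A argument in all essentials: both invoke the $3$-Steiner ratio bound $\rho_3 = 5/3$ from \cite{zelikovsky199311,du1995component}, both normalize so that every Steiner vertex has degree at least~$3$, and both then deduce the absence of Steiner--Steiner edges from the structure of full components on at most three terminals. The only cosmetic difference is that the paper argues by contradiction (a Steiner--Steiner edge would force a full component with at least $3+3-1-1=4$ terminals) whereas you enumerate the possible topologies for $|R|\in\{2,3\}$ directly; these are equivalent elementary observations.
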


We refer to such Steiner trees as \emph{good trees}.
From the above theorem, it is not hard to observe that querying all terminal related distances is sufficient to find a $(5/3)$-approximate Steiner Tree, and the query complexity is $O(nk)$.

We now explain how the results in previous work \cite{zelikovsky199311} and \cite{du1995component} imply \Cref{thm: 5/3-main}.

We start by introducing some definitions.
Let $\tau$ be a tree, let $v$ be a vertex of $\tau$, and let $v_1,\ldots,v_d$ be the neighbors of $v$. For each $1\le  i\le d$, we delete edges $(v,v_1),\ldots,(v,v_{i-1}),(v,v_{i+1}),\ldots,(v,v_{d})$, and define $\tau_i$ to be the connected component of the remaining graph that contains $v$, so $\tau_i$ is a subtree of $\tau$ that contains $v$. We say that subtrees $\tau_1,\ldots,\tau_d$ are obtained by \emph{splitting $\tau$ at $v$}.

Consider an instance $(V,T,w)$ and let $\tau$ be a Steiner tree. Let $c>1$ be an integer. We say that $\tau$ is a \emph{$c$-Steiner tree}, iff when we split $\tau$ at all terminals, then each resulting subtree contains at most $c$ terminals. It is easy to verify that any $2$-Steiner Tree is a terminal spanning tree. We now show that every $3$-Steiner tree can be converted into a good tree with at most the same cost. Let $\tau$ be a $3$-Steiner tree. Assume without loss of generality that every Steiner vertex has degree at least $3$ (as otherwise we can suppress such a vertex and get another Steiner tree with at most the same cost). We now claim that $\tau$ does not contain any Steiner-Steiner edge.
Assume not, then such a pair of Steiner vertices must both belong to some subtree obtained by splitting $\tau$ at all terminals, and such a subtree contains at least $(3+3-1-1)=4$ terminals, a contradiction.
It was shown in \cite{zelikovsky199311} and \cite{du1995component} that, for any instance $(V,T,w)$, there exists a $3$-Steiner tree with cost at most $(5/3)\cdot \stcost(V,T,w)$, and the ratio $5/3$ here cannot be improved. \Cref{thm: 5/3-main} now follows.

\section{Proof of \Cref{clm: YN metrics}}
\label{apd: Proof YN metrics}

	Let $v_1,v_2,v_3$ be three vertices in $V$. Assume $v_1\in V_{x_1}$, $v_2\in V_{x_2}$, and $v_3\in V_{x_3}$, where $x_1,x_2,x_3$ are nodes in tree $\rho$. We denote by $\ell_1,\ell_2,\ell_3$ the levels of $x_1,x_2,x_3$, respectively, and assume w.l.o.g. that $\ell_1\ge \ell_2$. Let $x'_1$ be a leaf of $\rho$ that lies in the subtree of $\rho$ rooted at $x_1$, and we define leaves $x'_2,x'_3$ similarly. 
	
	We first show that $\wn$ is a metric on $V$ by showing that $\wn(v_1,v_2)\le \wn(v_1,v_3)+\wn(v_2,v_3)$.
	By definition, 
	$\wn(v_1,v_2)=\dist_{\rho}(x'_1,x_1)+ \dist_{\rho}(x'_1,x_2)$.
	Let $\hat x$ be the lowest common ancestor of nodes $x_1$ and $x_2$ in $\rho$. Assume that $y$ is the unique vertex on the $x'_1-x'_2$ path that is closest (under $\dist_{\rho}$) to $x_3$. We distinguish between the following three cases, depending on the location of $y$.
	
	\textbf{Case 1:} $y$ lies between (excluding) $\hat x$ and (including) $x'_1$. On the one hand, $\wn(v_2,v_3)\ge \dist_{\rho}(x_2,x_3)$; on the other hand, $\wn(v_1,v_3)\ge \dist_{\rho}(x_1,x_3)+2\cdot \min\set{\dist_{\rho}(x'_1,x_1),\dist_{\rho}(x'_3,x_3)}$.
	
	If $y$ lies between (excluding) $\hat x$ and (including) $x_1$, then
	\[
	\begin{split}
	& \wn(v_1,v_3)  +\wn(v_2,v_3)  \ge \dist_{\rho}(x_2,x_3)+\dist_{\rho}(x_1,x_3)+2\cdot\min\set{\dist_{\rho}(x'_1,x_1),\dist_{\rho}(x'_3,x_3)}\\
	& = \dist_{\rho}(x_1,x_2)+2\cdot \dist_{\rho}(y,x_3)+2\cdot\min\set{\dist_{\rho}(x'_1,x_1),\dist_{\rho}(x'_3,x_3)}\\
	& = \dist_{\rho}(x_1,x_2)+2\cdot \dist_{\rho}(x'_1,x_1)+2\cdot \dist_{\rho}(y,x_3)+2\cdot\min\set{0,\dist_{\rho}(x'_3,x_3)-\dist_{\rho}(x'_1,x_1)}\\
	& = \dist_{\rho}(x_1,x_2)+2\cdot \dist_{\rho}(x'_1,x_1)+2\cdot\min\set{\dist_{\rho}(y,x_3),\dist_{\rho}(y,x_3)+\dist_{\rho}(x'_3,x_3)-\dist_{\rho}(x'_1,x_1)}\\
	& \ge  \dist_{\rho}(x_1,x_2)+2\cdot \dist_{\rho}(x'_1,x_1)+2\cdot\min\set{\dist_{\rho}(y,x_3),0}\\
	& \ge  \dist_{\rho}(x_1,x_2)+2\cdot \dist_{\rho}(x'_1,x_1)\\
	& =\dist_{\rho}(x'_1,x_1)+ \dist_{\rho}(x'_1,x_2) = \wn(v_1,v_2).
	\end{split}\]
	If $y$ lies between (excluding) $x_1$ and (including) $x'_1$, then
	\[
	\begin{split}
	& \wn(v_1,v_3)  +\wn(v_2,v_3)  \ge \dist_{\rho}(x_2,x_3)+\dist_{\rho}(x_1,x_3)+2\cdot\min\set{\dist_{\rho}(x'_1,x_1),\dist_{\rho}(x'_3,x_3)}\\
	& = \dist_{\rho}(x_1,x_2)+2\cdot \dist_{\rho}(y,x_1)+2\cdot \dist_{\rho}(y,x_3)+2\cdot\min\set{\dist_{\rho}(x'_1,x_1),\dist_{\rho}(x'_3,x_3)}\\
	& = \dist_{\rho}(x_1,x_2)+2\cdot \dist_{\rho}(x'_1,x_1)+2\cdot \dist_{\rho}(y,x_3)+2\cdot\min\set{\dist_{\rho}(y,x_1),\dist_{\rho}(x'_3,x_3)-\dist_{\rho}(y,x'_1)}\\
	& \ge \dist_{\rho}(x_1,x_2)+2\cdot \dist_{\rho}(x'_1,x_1)+2\cdot\min\set{\dist_{\rho}(y,x_3),\dist_{\rho}(y,x_3)+\dist_{\rho}(x'_3,x_3)-\dist_{\rho}(y,x'_1)}\\
	& =  \dist_{\rho}(x_1,x_2)+2\cdot \dist_{\rho}(x'_1,x_1)+2\cdot\min\set{\dist_{\rho}(y,x_3),0}\\
	& =  \dist_{\rho}(x_1,x_2)+2\cdot \dist_{\rho}(x'_1,x_1)\\
	& =\dist_{\rho}(x'_1,x_1)+ \dist_{\rho}(x'_1,x_2) = \wn(v_1,v_2).
	\end{split}
	\]
	
	\textbf{Case 2:} $y$ lies between (excluding) $\hat x$ and (including) $x'_2$. The analysis in this case is symmetric to that of Case $1$, with an additional observation that $\dist_{\rho}(x_1,x'_1)\le \dist_{\rho}(x_2,x'_2)$ (as $\ell_1\ge \ell_2$).
	
	\textbf{Case 3:} $y=\hat x$. In this case, from the definition of $\wn$, $\wn(v_2,v_3)\ge \dist_{\rho}(x_2,x_3)$, and $\wn(v_1,v_3)\ge \dist_{\rho}(x_1,x_3)+2\cdot \min\set{\dist_{\rho}(x'_1,x_1),\dist_{\rho}(x'_3,x_3)}$. Therefore,
	\[
	\begin{split}
	& \wn(v_1,v_3)  +\wn(v_2,v_3) \ge \dist_{\rho}(x_1,x_2)+2\cdot \dist_{\rho}(\hat x,x_3)+2\cdot\min\set{\dist_{\rho}(x'_1,x_1),\dist_{\rho}(x'_3,x_3)}\\
	& \ge \dist_{\rho}(x_1,x_2)+2\cdot\dist_{\rho}(x'_1,x_1)+2\cdot\min\set{\dist_{\rho}(\hat x,x_3),\dist_{\rho}(x'_3,x_3)+\dist_{\rho}(\hat x,x_3)-\dist_{\rho}(x'_1,x_1)}\\
	& \ge \dist_{\rho}(x_1,x_2)+2\cdot\dist_{\rho}(x'_1,x_1)+2\cdot\min\set{\dist_{\rho}(\hat x,x_3),0}\\
	& =  \dist_{\rho}(x_1,x_2)+2\cdot \dist_{\rho}(x'_1,x_1)\\
	& =\dist_{\rho}(x'_1,x_1)+ \dist_{\rho}(x'_1,x_2) = \wn(v_1,v_2).
	\end{split}
	\]
	This completes the proof that $\wn$ is a metric on $V$.
	
	We now proceed to show that $\wy$ is a metric on $V$ by showing that $\wy(v_1,v_2)\le \wy(v_1,v_3)+\wy(v_2,v_3)$. We distinguish between the following cases.
	
	\textbf{Case 1:} $v_1, v_2,v_3\in S$. In this case,
	\[\wy(v_1,v_2)=\dist_{\rho}(v_1,v_2)\le \dist_{\rho}(v_1,v_3)+\dist_{\rho}(v_2,v_3)= \wy(v_1,v_3)+\wy(v_2,v_3).\]
	
	\textbf{Case 2:} At most one of $v_1, v_2,v_3$ lies in $S$. In this case,
	\[\wy(v_1,v_2)=\wn(v_1,v_2)\le \wn(v_1,v_3)+\wn(v_2,v_3)= \wy(v_1,v_3)+\wy(v_2,v_3).\]
	
	\textbf{Case 3:} Exactly two of $v_1, v_2,v_3$ lie in $S$. We further consider the following subcases.
	
	\textbf{Case 3.1:} $v_1, v_2 \in S$, and $v_3 \notin S$. In this case,
	\[\wy(v_1,v_2)\le\wn(v_1,v_2)\le \wn(v_1,v_3)+\wn(v_2,v_3)= \wy(v_1,v_3)+\wy(v_2,v_3).\]
	
	\textbf{Case 3.2:} $v_2, v_3 \in S$, and $v_1 \notin S$. The analysis in this case uses almost identical arguments as Case 1 for showing that $\wn$ is a metric (since there we only uses the fact that $\wn(x_2,x_3)\ge \dist_{\rho}(x_2,x_3)$).
	
	\textbf{Case 3.3:} $v_1, v_3 \in S$, and $v_2 \notin S$. The analysis in this case uses almost identical arguments as Case 2 for showing that $\wn$ is a metric (since there we only uses the fact that $\wn(x_1,x_3)\ge \dist_{\rho}(x_1,x_3)$).
	
	This completes the proof that $\wy$ is a metric on $V$.

\section{Proof of \Cref{obs: edges love terminal}}
\label{apd: Proof of edges love terminal}

	Let $\tau'$ be an optimal \St of instance $(S,T,\wn)$, and assume for contradiction that $\tau'$ contains an edge $(u_x,u_{x'})$ where $u_x,u_{x'}\notin T$ (or equivalently $x,x'\notin L(\rho)$). Assume without loss of generality that the level of $x$ in $\rho$ is at least the level of $x'$. Let $\tilde x$ be any leaf in $\rho$ that is a descendant of $x$. Consider now the unique path in $\tau'$ connecting $u_{\tilde x}$ to $u_x$, that we denote by $P$.
	
	Assume first that the vertex $u_{x'}$ does not belong to $P$. Let $\tau$ be the tree obtained from $\tau'$ by replacing the edge $(u_x,u_{x'})$ with edge $(u_{\tilde x},u_{x'})$. It is easy to verify that $\tau$ is a \St. Moreover, from the definition of $\wn$, 
	$$\wn(u_x,u_{x'})=\dist_{\rho}(x,x')+2\cdot\dist_{\rho}(x,\tilde x)>\dist_{\rho}(x,x')+\dist_{\rho}(x,\tilde x)= \wn(u_{\tilde x},u_{x'}),$$ 
	which implies that $w(\tau)<w(\tau')$, a contradiction to the assumption that $\tau'$ is an optimal \St of the instance $(S,T,\wn)$.
	
	Assume now that vertex $u_{x'}$ belongs to $P$. Similarly, let $\tau$ be the tree obtained from $\tau'$ by replacing the edge $(u_x,u_{x'})$ with the edge $(u_{\tilde x},u_{x})$. It is easy to verify that $\tau$ is a \St. Moreover, from the definition of $\wn$, 
	$$\wn(u_x,u_{x'})=\dist_{\rho}(x,x')+2\cdot\dist_{\rho}(x,\tilde x)>\dist_{\rho}(x,\tilde x)= \wn(u_{\tilde x},u_{x}),$$ 
	which implies that $w(\tau)<w(\tau')$, again a contradiction to the assumption that $\tau'$ is an optimal \St of the instance $(S,T,\wn)$. This completes the proof of the observation.

\section{Proof of \Cref{obs: OPT_props}}
\label{apd: Proof of OPT_props}

First of all, it is easy to see that there exists an optimal \St such that every Steiner vertex has degree at least $3$, since we can compress degree-$2$ Steiner vertices without increasing the cost.

Consider now a tree $\tau'$ such that:
\begin{enumerate}
\item $\tau'$ is an optimal \St such that every Steiner vertex has degree at least $3$;
\label{prop_1}
\item on top of \ref{prop_1}, $\tau'$ minimizes the number of Steiner vertices; 
\label{prop_2}
\item on top of \ref{prop_1} and \ref{prop_2}, $\tau'$ maximizes the sum of levels of all its Steiner vertices; and
\label{prop_3}
\item on top of \ref{prop_1}, \ref{prop_2}, and \ref{prop_3}, $\tau'$ minimizes the number of edges incident to Steiner vertices.
\label{prop_4}
\end{enumerate}

For each Steiner vertex $u_x$ in $\tau'$, we denote by $d_0(x)$, $d_1(x)$, $d_2(x)$ the number of neighbors of $u_x$ in sets $T_0(x)$, $T_1(x)$, $T_2(x)$, respectively.
Consider now a Steiner vertex $u_x$ in $\tau'$. Let $u_{x'}$ be the parent of $u_x$, and let $u_{x_1}, u_{x_2}$ be the children of $u_x$, where $u_{x_1}\in S_1(x)$ and $u_{x_2}\in S_2(x)$.
We distinguish between the following cases.

\textbf{Case 1:} One of $d_0(x),d_1(x),d_2(x)$ is $0$.
Assume first that $d_0(x)=0$. Then if $d_1(x)\ge d_2(x)$, we can replace Steiner vertex $u_x$ with $u_{x_1}$ (that is, delete from $\tau'$ the vertex $u_x$ and all its incident edges, and replace them with vertex $u_{x_1}$ and edges in $\set{(u,u_{x_1})\mid (u,u_x)\in E(\tau')}$), without increasing the total cost. In this way, we either reduce the number of Steiner vertices by $1$, or increase the sum of levels of all Steiner vertices, a contradiction to either \ref{prop_2} or \ref{prop_3}. The case where $d_1(x)\le d_2(x)$ is symmetric. Assume now that $d_1(x)=0$ (the case where $d_2(x)=0$ is symmetric). Then if $d_2(x)\ge d_0(x)$, we can replace $u_x$ with $u_{x_2}$ either reducing the number of Steiner vertices or increasing the sum of levels of all Steiner vertices, leading to a contradiction to \ref{prop_2} or \ref{prop_3}; if $d_2(x)< d_0(x)$, we can replace $u_x$ with $u_{x'}$, reducing the total cost, leading to a contradiction to \ref{prop_1}.

\textbf{Case 2:} One of $d_1(x),d_2(x)$ is at least $2$. Assume w.l.o.g. that $d_1(x)\ge 2$. Let $u,u'$ be two vertices of $T_1(x)$ that are adjacent to $u_x$ in $\tau'$. We can replace the edge $(u,u_x)$ with edge $(u,u')$, and it is easy to verify from the definition of $\wn$ that this does increase the total cost, leading to a contradiction to \ref{prop_4}.

\textbf{Case 3:} $d_1(x)=d_2(x)=1$ and $d_0(x)> 2$. 
In this case, we can replace $u_x$ with $u_{x'}$, reducing the total cost, leading to a contradiction to \ref{prop_1}.

Altogether, we get that $d_1(x)=d_2(x)=1$ and $d_0(x)$ is either $1$ or $2$, completing the proof of property (i) in the observation. We now focus on proving property (ii).
Let $\tau'$ be the tree defined above. Let $u_1$ ($u_2$, resp.) be the neighbor of $u_x$ in $T_1(x)$ ($T_2(x)$, resp.).
Assume for contradiction that in there exists some vertex $u\in T_1(x)$ such that $u\notin W_1$. From similar arguments in the proof of \Cref{obs: edges love terminal}, we can replace the edge $(u_1,u_{x})$ with edge $(u_1,u)$, obtaining another optimal \St with less edges incident to Steiner vertices, a contradiction to \ref{prop_4}. Assume for contradiction that in there exists some vertex $u\notin T_1(x)$ such that $u\in W_1$. We root tree $W_1$ at $u_1$, and it is easy to see that there exists some pair $u',u''$ of vertices in $W_1$, such that $u'\in T_1(x)$, $u''\notin T_1(x)$ and $u'$ is the parent of $u''$. Similarly, we can replace the edge $(u',u'')$ with edge $(u_x,u'')$, obtaining another \St with strictly lower cost, a contradiction to \ref{prop_1}. Therefore, $T_1(x)\subseteq V(W_1)\subseteq S_1(x)$. The proof of
$T_2(x)\subseteq V(W_2)\subseteq S_2(x)$ is symmetric.

Last, we prove property (iii) in the observation. Let $\tau'$ be the tree defined above.
Consider a Steiner vertex $u_x$ in $\tau'$. We denote by $u_{x'}$ the parent of $u_x$, and denote by $u_{\hat x}$ the other child of $u_{x'}$.
Assume for contradiction that both $u_x$ and $u_{x'}$ belong to $\tau'$. From property (i), $u_{x'}$ has a neighbor in $T(x)$, that we denote by $u$. From property (ii), since $u\in T(x)$, vertex $u$ belongs to either $W_1$ or $W_2$. However, $u_{x'}\notin R(x)$, so $u$ does not belong to either $W_1$ or $W_2$, a contradiction to the fact that $u_{x'}$ and $u$ are connected by an edge.
Assume for contradiction that both $u_x$ and $u_{\hat x}$ belong to tree $\tau'$. Let $u$ be the neighbor of $u_x$ that belongs to the same connected component of $\tau'\setminus\set{u_x}$ as $u_{\hat x}$. From property (i) and (ii) on $u_{\hat x}$, $u$ does not belong to the corresponding subgraphs $\hat W_1,\hat W_2$ for $u_{\hat x}$. Let $\hat u$ be any leaf of $T(\hat x)$. Via similar arguments, we can replace the edge $(u_x,u)$ with the edge $(u_x,\hat u)$, obtaining another \St with strictly lower cost, a contradiction to \ref{prop_1}.
Assume for contradiction that none of $u_x,u_{x'}, u_{\hat x}$ belongs to $\tau'$. Via similar arguments, it is easy to verify that we can add either $u_x$ or $u_{\hat x}$ to $\tau'$ and deleting some edges, obtaining another \St with strictly lower cost, a contradiction to \ref{prop_1}.

\section{Analysis of the Algorithm in \Cref{subsec: >2_upper}}

\label{apd: Analysis of >2_upper}

In this section, we show that the spanning tree $\tau$ output by the algorithm in \Cref{subsec: >2_upper} is with high probability an $\alpha$-approximate Steiner Tree.
We denote by $\mst$ the minimum spanning tree cost on $T$.
Therefore, it suffices to show that, with high probability, 
$w(\tau)\le (\alpha/2)\cdot \mst$, as $\mst$ is at most twice the minimum Steiner Tree cost. Recall that $\beta=\alpha/(100\log n)$.

Let $\tau^*$ be a minimum spanning tree on $T$. Let $\pi=(u_1,u_2,\ldots,u_{2k-2})$ be an Euler-tour of $\tau^*$, and for each $1\le t\le 2k-2$, we let $R_{\pi,t}=\set{u_i\mid t\le i\le t+20\beta\log n}$.
We define a bad event $\xi$ as follows.

\paragraph{Bad event $\xi$.}
Let $\xi$ be the event that there exists some $t$, such that $R_{\pi,t}\cap T'=\emptyset$.
We now show that $\Pr[\xi]=O(n^{-9})$.
Since each edge of $\tau^*$ appears at most twice in the set $\set{(u_i,u_{i+1})\mid u_i\in R_{\pi,t}}$, $R_{\pi,t}$ contains at least $10\beta\log n$ distinct vertices. Therefore, the probability that a random subset of $\ceil{n/\beta}$ vertices in $V$ does not intersect with $R_{\pi,t}$ is at most $(1-(10\beta\log n/k))^{k/\beta}\le n^{-10}$. Taking the union bound over all $1\le t\le 2k-2$, we get that $\Pr[\xi_1]=O(n^{-9})$. 
Note that, if the event $\xi$ does not happen, then every consecutive window of $\pi$ of length $20\beta\log n$ contains at least one element of $T'$.

Let $u_{i_1},u_{i_2},\ldots,u_{i_{t'}}$ be the vertices of $\pi$ that belongs to $T'$ ($t'$ may be larger than $|T'|$ since we keep all copies of the same vertex), then for each $1\le j\le t'$, $|i_j-i_{j+1}|\le 20\beta\log n$.

Let $u$ be any terminal in $T\setminus T'$. Assume that the first appearance of $u$ in $\pi$ is between $u_{i_j}$ and $u_{i_{j+1}}$, so $w(u,T')\le \min\set{w(u,u_{i_j}),w(u,u_{i_{j+1}})}\le w(u_{i_j},u_{i_{j+1}})$, which is at most the total weight of all edges in $\pi$ between $u_{i_j}$ and $u_{i_{j+1}}$. So from triangle inequality, $\dist_{\pi}(u,T')\le \sum_{i_j\le t\le i_{j+1}-1}w(v_t,v_{t+1})$.
As $|i_j-i_{j+1}|\le 20\beta\log n$ holds for all $1\le j\le t'$, $\sum_{u\in T\setminus T'}w(u,f(u))\le 20\beta\log n\cdot w(\tau^*)\le 20\beta\log n\cdot \mst$.
Finally, since $w(\tau')\le \mst$, we conclude that $w(\tau)\le 21\beta\log n\cdot\mst\le (\alpha/2)\cdot \mst$.
Altogether, we conclude that, with probability $1-O(n^{-9})$,
the spanning tree $\tau$ output by the algorithm in \Cref{subsec: >2_upper} is an $\alpha$-approximate Steiner Tree.

\section*{Acknowledgements}

We thank anonymous reviwers for helpful comments and for pointing to us the previous work \cite{du1995component} and \cite{zelikovsky199311}.
We thank Mohammad Roghani, Sepideh Mahabadi, Ali Vakilian, and Jakub Tarnawski for pointing to us an inaccuracy of the previous version of this paper.

\bibliography{REF}

\newcommand{\etalchar}[1]{$^{#1}$}
\begin{thebibliography}{HPIMV16}

\bibitem[ACK19]{assadi2019sublinear}
Sepehr Assadi, Yu~Chen, and Sanjeev Khanna.
\newblock Sublinear algorithms for ($\delta$+ 1) vertex coloring.
\newblock In {\em Proceedings of the Thirtieth Annual ACM-SIAM Symposium on
  Discrete Algorithms}, pages 767--786. SIAM, 2019.

\bibitem[BD97]{borchers1997thek}
Al~Borchers and Ding-Zhu Du.
\newblock The k-steiner ratio in graphs.
\newblock {\em SIAM Journal on Computing}, 26(3):857--869, 1997.

\bibitem[Beh21]{Behnezhad21}
Soheil Behnezhad.
\newblock Time-optimal sublinear algorithms for matching and vertex cover.
\newblock In {\em 62nd {IEEE} Annual Symposium on Foundations of Computer
  Science, {FOCS} 2021, Denver, CO, USA, February 7-10, 2022}, pages 873--884.
  {IEEE}, 2021.

\bibitem[BGRS10]{byrka2010improved}
Jaroslaw Byrka, Fabrizio Grandoni, Thomas Rothvo{\ss}, and Laura Sanita.
\newblock An improved lp-based approximation for steiner tree.
\newblock In {\em Proceedings of the forty-second ACM symposium on Theory of
  computing}, pages 583--592, 2010.

\bibitem[CC08]{chlebik2008steiner}
Miroslav Chleb{\'\i}k and Janka Chleb{\'\i}kov{\'a}.
\newblock The steiner tree problem on graphs: Inapproximability results.
\newblock {\em Theoretical Computer Science}, 406(3):207--214, 2008.

\bibitem[CKT22]{chen2022sublinear}
Yu~Chen, Sanjeev Khanna, and Zihan Tan.
\newblock Sublinear algorithms and lower bounds for estimating mst and tsp cost
  in general metrics.
\newblock {\em arXiv preprint arXiv:2203.14798}, 2022.

\bibitem[CRT05]{chazelle2005approximating}
Bernard Chazelle, Ronitt Rubinfeld, and Luca Trevisan.
\newblock Approximating the minimum spanning tree weight in sublinear time.
\newblock {\em SIAM Journal on computing}, 34(6):1370--1379, 2005.

\bibitem[CS09]{czumaj2009estimating}
Artur Czumaj and Christian Sohler.
\newblock Estimating the weight of metric minimum spanning trees in sublinear
  time.
\newblock {\em SIAM Journal on Computing}, 39(3):904--922, 2009.

\bibitem[DP09]{dubhashi2009concentration}
Devdatt~P Dubhashi and Alessandro Panconesi.
\newblock {\em Concentration of measure for the analysis of randomized
  algorithms}.
\newblock Cambridge University Press, 2009.

\bibitem[Du95]{du1995component}
Ding-Zhu Du.
\newblock On component-size bounded steiner trees.
\newblock {\em Discrete applied mathematics}, 60(1-3):131--140, 1995.

\bibitem[FN20]{FischerN20}
Manuela Fischer and Andreas Noever.
\newblock Tight analysis of parallel randomized greedy {MIS}.
\newblock {\em {ACM} Trans. Algorithms}, 16(1):6:1--6:13, 2020.

\bibitem[GGR98]{goldreich1998property}
Oded Goldreich, Shari Goldwasser, and Dana Ron.
\newblock Property testing and its connection to learning and approximation.
\newblock {\em Journal of the ACM (JACM)}, 45(4):653--750, 1998.

\bibitem[GJ79]{garey1979computers}
Michael~R Garey and David~S Johnson.
\newblock {\em Computers and intractability}, volume 174.
\newblock freeman San Francisco, 1979.

\bibitem[GMRV20]{grunau2020improved}
Christoph Grunau, Slobodan Mitrovi{\'c}, Ronitt Rubinfeld, and Ali Vakilian.
\newblock Improved local computation algorithm for set cover via
  sparsification.
\newblock In {\em Proceedings of the Fourteenth Annual ACM-SIAM Symposium on
  Discrete Algorithms}, pages 2993--3011. SIAM, 2020.

\bibitem[GORZ12]{goemans2012matroids}
Michel~X Goemans, Neil Olver, Thomas Rothvo{\ss}, and Rico Zenklusen.
\newblock Matroids and integrality gaps for hypergraphic steiner tree
  relaxations.
\newblock In {\em Proceedings of the forty-fourth annual ACM symposium on
  Theory of computing}, pages 1161--1176, 2012.

\bibitem[GP68]{gilbert1968steiner}
Edgar~N Gilbert and Henry~O Pollak.
\newblock Steiner minimal trees.
\newblock {\em SIAM Journal on Applied Mathematics}, 16(1):1--29, 1968.

\bibitem[GR97]{goldreich1997property}
Oded Goldreich and Dana Ron.
\newblock Property testing in bounded degree graphs.
\newblock In {\em Proceedings of the twenty-ninth annual ACM symposium on
  Theory of computing}, pages 406--415, 1997.

\bibitem[HK13]{hauptmann2013compendium}
Mathias Hauptmann and Marek Karpi{\'n}ski.
\newblock {\em A compendium on Steiner tree problems}.
\newblock Inst. f{\"u}r Informatik, 2013.

\bibitem[HPIMV16]{har2016towards}
Sariel Har-Peled, Piotr Indyk, Sepideh Mahabadi, and Ali Vakilian.
\newblock Towards tight bounds for the streaming set cover problem.
\newblock In {\em Proceedings of the 35th ACM SIGMOD-SIGACT-SIGAI Symposium on
  Principles of Database Systems}, pages 371--383, 2016.

\bibitem[IMR{\etalchar{+}}18]{indyk2018set}
Piotr Indyk, Sepideh Mahabadi, Ronitt Rubinfeld, Ali Vakilian, and Anak
  Yodpinyanee.
\newblock Set cover in sub-linear time.
\newblock In {\em Proceedings of the Twenty-Ninth Annual ACM-SIAM Symposium on
  Discrete Algorithms}, pages 2467--2486. SIAM, 2018.

\bibitem[KZ97]{karpinski1997new}
Marek Karpinski and Alexander Zelikovsky.
\newblock New approximation algorithms for the steiner tree problems.
\newblock {\em Journal of Combinatorial Optimization}, 1(1):47--65, 1997.

\bibitem[RZ05]{robins2005tighter}
Gabriel Robins and Alexander Zelikovsky.
\newblock Tighter bounds for graph steiner tree approximation.
\newblock {\em SIAM Journal on Discrete Mathematics}, 19(1):122--134, 2005.

\bibitem[TZ22]{traub2022local}
Vera Traub and Rico Zenklusen.
\newblock Local search for weighted tree augmentation and steiner tree.
\newblock In {\em Proceedings of the 2022 Annual ACM-SIAM Symposium on Discrete
  Algorithms (SODA)}, pages 3253--3272. SIAM, 2022.

\bibitem[Yao77]{yao1977probabilistic}
Andrew Chi-Chin Yao.
\newblock Probabilistic computations: Toward a unified measure of complexity.
\newblock In {\em 18th Annual Symposium on Foundations of Computer Science
  (sfcs 1977)}, pages 222--227. IEEE Computer Society, 1977.

\bibitem[Zel93]{zelikovsky199311}
Alexander~Z Zelikovsky.
\newblock An 11/6-approximation algorithm for the network steiner problem.
\newblock {\em Algorithmica}, 9(5):463--470, 1993.

\bibitem[Zel96]{zelikovsky1996better}
Alexander Zelikovsky.
\newblock Better approximation bounds for the network and euclidean steiner
  tree problems.
\newblock Technical report, CS-96-06, Department of Computer Science,
  University of Virginia, 1996.

\end{thebibliography}

\end{document}